\renewcommand*{\backref}[1]{} 
\renewcommand*{\backrefalt}[4]{%
\ifcase #1 %
(Not cited.)%
\or 
 \textbf {Cited} on page~#2.%
\else 
 \textbf{Cited} on pages~#2.%
\fi}
\newcommand{\thesistitle}{}
\title{
{\fontsize{16.07}{15} \selectfont Coalgebraic Tools}\\
{\fontsize{16.07}{15} \selectfont for}\\
{\fontsize{16.07}{15} \selectfont Bisimilarity and Decorated Trace Semantics}\\
\bigskip
\bigskip
\bigskip
\normalsize
een wetenschappelijke proeve op het gebied\\
van de Natuurwetenschappen, Wiskunde en Informatica\\
\bigskip
\bigskip
\bigskip
\bigskip
Proefschrift\\
\bigskip
\bigskip
\bigskip
ter verkrijging van de graad van doctor\\
aan de Radboud Universiteit Nijmegen\\
op gezag van de rector magnificus prof. mr. S.C.J.J. Kortmann,\\
volgens besluit van het college van decanen\\
in het openbaar te verdedigen op maandag 16 december 2013\\
om 12.30 uur precies
\bigskip
\bigskip
\bigskip
\bigskip
door\\
\bigskip
\bigskip
\bigskip
Georgiana Caltais\\
\bigskip
\bigskip
geboren op 20 april 1984\\
te Suceava, Roemeni\"e
}
\newcommand{\setmlength}[3]{\setlength{#1}{#3}\setlength{#1}{#2#1}}
\newlength{\@spacebeforechapterhead}
\setmlength{\@spacebeforechapterhead}{1.2}{2pt}
\newlength{\@spaceinchapterhead}
\setmlength{\@spaceinchapterhead}{1.2}{3.5pt}
\newlength{\@spaceafterchapterhead}
\setmlength{\@spaceafterchapterhead}{1.2}{80pt}
\renewcommand{\DOCH}{%
\CNV\FmN{\@chapapp}\space \CNoV\thechapter\par\nobreak
}
\renewcommand{\DOTI}[1]{%
   \vspace*{\@spacebeforechapterhead}%
   \parindent 0pt \Large\bfseries
   \interlinepenalty\@M
   \hrule
   \vspace*{\@spaceinchapterhead}%
   {\hfill \Large \bfseries #1}%
   \mbox{}\par
   \mbox{}\par
   \mbox{}\par
   \mbox{}\par
   \mbox{}\par
   \mbox{}\par
}
\mathchardef\ls="213C    
\mathchardef\gr="213E    
\mathchardef\uparrow="0222  
\mathchardef\downarrow="0223  
\newcommand{\bb}[1]{\llbracket #1 \rrbracket}
\newcommand{\rules}[2]{\mbox{$\frac%
                      {\mbox{\small \rule[-5pt]{0pt}{14pt} $#1$}}
                      {\mbox{\small \rule[0pt]{0pt}{10pt}$#2$}}$}}
\DeclareSymbolFont{lasy}{U}{lasy}{m}{n}
\DeclareMathSymbol\myDiamond{\mathord}{lasy}{"33}
\newcommand{\myplus}{\mathbin{\rlap{$\myDiamond$}\hspace*{.01cm}\raisebox{.14ex}{$+$}}}
\date{}
\newtheorem{remark}{Remark}
\theoremstyle{change}
\newtheorem{definition}{\textsc{Definition}}[section]
\theoremstyle{change}
\newtheorem{lemma}[definition]{\textsc{Lemma}}
\theoremstyle{change}
\newtheorem{theorem}[definition]{\textsc{Theorem}}
\theoremstyle{change}
\newtheorem{proposition}[definition]{\textsc{Proposition}}
\theoremstyle{change}
\newtheorem{corollary}[definition]{\textsc{Corollary}}
\newtheorem{example}[definition]{\textsc{Example}}
\theoremstyle{nonumberplain} 
\newtheorem{proof}{\textsc{Proof}}
\lstdefinelanguage{NT}{ 
mathescape=true,
%
texcl=false, 
morekeywords=[1]{while, do, if, then, else, for, all},
morekeywords=[2]{return, skip},
%
morecomment=[s]{/*}{*/},
%
showstringspaces=false,
%
morestring=[b]",
morestring=[d],
%
tabsize=3,
%
extendedchars=false,
%
sensitive=true,
%
breaklines=false,
%
basicstyle=\ttfamily,
%
captionpos=b,
%
columns=[l]fixed,
%
identifierstyle={\color{black}},
keywordstyle=[1]{\color{dkviolet}},
keywordstyle=[2]{\color{dkred}},
stringstyle=\ttfamily,
commentstyle={\ttfamily\color{dkblue}},
%
literate=
    {true}{{{\color{dkgreen}$true$}}}3
    {false}{{{\color{dkgreen}$false$}}}4
}[keywords,comments,strings]
\renewcommand{\chaptermark}[1]{ \markboth{#1}{} }
\renewcommand{\sectionmark}[1]{ \markright{#1}{} }
\newcommand\HKC{\texttt{HKC}}
\newcommand\hkc[2]{\HKC\texttt{$(#1,#2)$}}
\newcommand\set[1]{\{#1\}}
\newcommand\dfa[1]{{%
  \newcommand\state[1]{##1}%
  \newcommand\fstate\overline%
  #1}}
\newcommand\nfa\dfa
\newcommand\prog\rightarrowtail
\newcommand\mst{\sqsubseteq_{mst}}
\newcommand\may{\sqsubseteq_{may}}
\newcommand\inclAnti{\subset \subset}
\newcommand\tauFailFin{\sqsubseteq_{\mathcal M}}
\newcommand\tauFailFind{\sqsubseteq_{{\mathcal M}_{2}}}
\newcommand\Fail{\emph{Fail}}
\newcommand\true{{\mathbf{tt}}}
\newcommand\false{{\mathbf{ff}}}
\newcommand{\I}{\textnormal{$\mathcal{I}$}}
\newcommand{\Powf}{\textnormal{${\mathscr P}_{\omega}$}}
\newcommand{\PD}{\mathscr{D}_{\omega}}
\newcommand{\PR}{{\mathcal{R}}_{p}}
\newcommand{\PF}{{\mathcal{F}}_{p}}
\newcommand{\PMF}{{\mathcal{MF}}_{p}}
\newcommand{\PT}{{\mathcal{T}}_{p}}
\newcommand{\PMT}{{\mathcal{MT}}_{p}}
\newcommand{\ZA}{0}
\newcommand{\ZP}{{\mathbf{0}}}
\newcommand{\Rs}{\textnormal{${\mathcal R}$}}
\newcommand{\Fs}{\textnormal{${\mathcal F}$}}
\newcommand{\CTs}{\textnormal{${\mathcal{CT}}$}}
\newcommand{\Ts}{\textnormal{${\mathcal T}$}}
\newcommand{\Tr}{\mathcal{T}}
\newcommand{\Ctr}{\mathcal{CT}}
\newcommand{\Rp}{\mathcal{R}}
\newcommand{\Fp}{\mathcal{F}}
\newcommand{\Pf}{\mathcal{PF}}
\newcommand{\Rtr}{\mathcal{RT}}
\newcommand{\Ftr}{\mathcal{FT}}
\newcommand{\ans}[1]{}
\newcommand{\vdashInd}{\vdash_{\emph NDF}}
\newcommand{\bottom}{\mathop{\perp}}
\newcommand{\ExpS}{\mathsf{ExpStruct}}
\newcommand{\Fun}{\mathsf{Functor}}
\newcommand{\Fixpv}{\mathsf{FixpVar}}
\newcommand{\Alphb}{\mathsf{Alph}}
\newcommand{\Slt}{\mathsf{Slt}}
\newcommand{\AlphbN}{\mathsf{AlphName}}
\newcommand{\SltN}{\mathsf{SltName}}
\newcommand{\Ingred}{\mathsf{Ingredient}}
\newcommand{\IdtrlG}{\textnormal{$\id\triangleleft\mathscr{G}$}}
\newcommand{\BtrlG}{\textnormal{$\Bl\triangleleft\mathscr{G}$}}
\newcommand{\FtrlG}{\textnormal{$\mathscr{F}\triangleleft\mathscr{G}$}}
\newcommand{\GtrlG}{\textnormal{$\mathscr{G}\triangleleft\mathscr{G}$}}
\newcommand{\efr}[1]{{\fbox{$#1$}}}  
\newcommand{\RG}{\textnormal{${R}$}}
\newcommand{\RId}{\textnormal{${R_{\emph id}}$}}
\newcommand{\R}{\textnormal{${R}$}}
\newcommand{\B}{\textnormal{${B}$}}
\newcommand{\G}{\textnormal{${G}$}}
\newcommand{\Pow}{\textnormal{${\mathcal P}$}}
\newcommand{\algspec}{\textnormal{${\mathcal E}_{\itG}$}}
\newcommand{\algspecN}{\textnormal{${\mathcal E}$}}
\newcommand{\behspec}{\textnormal{${\mathcal B}_{\itG}$}}
\newcommand{\behspecACI}{\textnormal{${\mathcal B}_{\itG}^{\emph ACI}$}}
\newcommand{\ct}{\colon}
\newcommand{\cl}[1]{\emph{cl}(#1)}
\newcommand{\CIRC}{\textsf{CIRC}}
\newcommand{\code}[1]{{\fontsize{9}{10}\selectfont {#1}}}
\newcommand{\rTrans}[1]{\mathrel{{\overset{#1}{\Rightarrow}}}}
\newcommand{\eps}{\varepsilon}
\newcommand{\itB}{\mathsf{B}}
\newcommand{\itG}{\mathscr{G}}
\newcommand{\itF}{\mathscr{F}}
\newcommand{\itExp}{\mathsf{Exp}}
\newcommand\Gf{\mathscr G} 
\newcommand\F{\mathcal F}  
\newcommand\Bl{{\mathsf B}} 
\newcommand\id{{\mathsf{Id}}}
\newcommand\pow{\mathcal{P}_{\!\!\omega}}
\newcommand\pf{{\emph NDF}}
\newcommand\Exp{{\mathsf{Exp}}}
\newcommand\E\varepsilon
\newcommand\emp{\underline\emptyset}
\newcommand\D{\mathscr{D}}
\newcommand\Rf{\mathscr{S}}
\newcommand\M{\mathscr{M}}
\newcommand\Pa{\mathscr{Q}}
\newcommand\N{\mathscr{N}}
\newcommand\Lf{\mathscr{L}}
\mathchardef\ls="213C    
\mathchardef\gr="213E    
\begin{document}
\selectlanguage{english}


\pagestyle{headings}


\frontmatter

\pagestyle{empty}
\maketitle


%
%

\newpage
\begin{tabular}{ll}
\textbf{Promotoren:}
\\
\qquad
  Prof.~dr.~Jan~Rutten
\\
\qquad
	Prof.~dr.~Anna~Ing\'olfsd\'ottir &(H\'ask\'olinn \'i Reykjav\'ik, IJsland)
\\\\
\textbf{Copromotoren:}\\
\qquad
 Dr.~Alexandra~Silva
\\
\qquad
 Dr.~Marcello~Bonsangue &(Universiteit Leiden, Nederland)
\\\\
\textbf{ Manuscriptcommissie:}\\
  \qquad
Prof.~dr.~Luca~Aceto
&
(H\'ask\'olinn \'i Reykjav\'ik, IJsland)
\\ \qquad   
Prof.~dr.~Herman~Geuvers
& \\ \qquad
Dr.~Bas~Luttik
&
(Technische Universiteit Eindhoven, Nederland)
\\ \qquad 
Prof.~Ugo~Montanari
&
(Universit\'a di Pisa, Itali\"e)
\\ \qquad 
Dr.~Erik~de~Vink 
&
(Technische Universiteit Eindhoven, Nederland)
\end{tabular}

\newpage









\bibliographystyle{alpha}
\cleardoublepage
\pagenumbering{roman}
\tableofcontents

\chapter{Acknowledgements}
\label{ch:ack}

First of all I want to thank, in alphabetical order, Alexandra Silva, Anna Ing\'olfsd\'ottir, Jan Rutten, Luca Aceto and Marcello Bonsangue who supervised my research during the last three years, inspired me in my work and provided me with the support I needed in order to complete this thesis. I truly appreciate their flexibility, understanding and willingness to initiate the collaboration between the two research groups at Radboud University and  Reykjavik University, making my PhD journey more interesting and prolific.

Many thanks to my collaborators that opened the doors for more applicability in my work: Filippo Bonchi and Damien Pous, co-authors of the results in Chapter 5, and Eugen-Ioan Goriac and Dorel Lucanu, co-authors of the contributions in Chapter 3. I am more than grateful to the members of the reading committee Bas Luttik, Erik de Vink, Herman Geuvers, Luca Aceto and Ugo Montanari for their comments and suggestions on this thesis. Moreover, I want to thank Wan Fokkink for planting the seeds of my interest in modelling and studying the semantics of concurrent reactive systems, during our rather fortuitous meeting in Reykjavik, in 2011.

I am very grateful to my friends in Reykjavik, for making my Icelandic experience brighter, warmer and more comfortable. Special credits go to the best Vikings ever Agnes, Ali, Andrea, Angelo, Claudio, Dario, Eugen, Gabriel, Ghiuseppe, Hamid, Hogni, M\u ad\u alina, Maria, Marijke, Matteo, Niccolo, Nils, Paulo, Pradipta, Robert, Rolanda, Stephan, Ute, Victor, Viky and all the Mjolnir crew, and to my dear Icelandic parents Valla and Skarpi. I want to express my appreciation to Andrei Manolescu and Marjan Sirjani for their support at the university. Whenever I traveled to the Netherlands I enjoyed the excellent company of my colleagues and friends Adam, Afrodita, Alexandra, Antonis, Darya, Fernando, Filippo, Francesco, Giannicola, Helle, Jonce, Jos, Joost, Neko, Michiel, Rita, Robbert, Sarah and Wojtek; they did a great job in cheering up my stays. I thank my Romanian friends Amalia, Anca, Andrei, Ciprian, Cozmin, Liliana, M\u ad\u alina, Marian, M\u ariuca, Mihaela, Mihai, Radu, Romina and Roxana for still remembering me, and making me feel welcome back home. Moreover, I want to thank my professors Gheorghe Marchitan, C\u at\u alin \c Tig\u aeru, and Adriana and Ovidiu Gheorghie\c s for awakening my passion for Mathematics and Computer Science, and for providing me with important advice when most needed.

I wish to express my gratitude to Goran Krist\'ofer for his constant encouragements, care, and uplifting discussions, that invaluably contributed to the successful writing of this thesis, and to my personal growth. Finally, I acknowledge my parents Floarea and Constantin for dedicating their lives to my support, nurturing and guidance, and for being next to me in my good and bad times. Everything I am, I owe them.

The work in this thesis was supported by the project ``Meta-theory of Algebraic Process Theories'' (no.~100014021) of the Icelandic Research Fund, Radboud University and a CWI Internship.

\mainmatter
\pagestyle{headings}
\renewcommand{\chaptermark}[1]{\markboth{Chapter~\thechapter.~ #1}{} }
\renewcommand{\sectionmark}[1]{ \markright{\thesection.~#1}{} }


\pagenumbering{arabic}
\chapter{Introduction}

One of the research areas of great importance in Computer Science is the study of the semantics of concurrent reactive systems~\cite{Harel:1989:DRS:101969.101990}. 
These are systems that compute by interacting with their environment, and typically consist of several parallel components, which execute simultaneously and potentially communicate with each other.
Examples of such systems range from rather simple devices such as calculators and vending machines, to programs controlling mechanical devices such as cars, subways or spaceships. In light of their widespread deployment and complexity, the application of rigorous methods for the specification, design and reasoning on the behaviour of reactive systems has always been a great challenge.

One possible approach to formally handle reactive systems is to use a ``common language" for describing both the actual implementations and their specifications. When following this technique, checking whether an implementation and its specification describe the same behaviour reduces to proving some notion of equivalence/preorder between their corresponding descriptions over the chosen language. This procedure is also referred to as ``equivalence checking''.

Intuitively, we say that an implementation complies to its specification whenever the implementation displays only the behaviour allowed by the specification, and nothing more. However, it is important to notice that system verification can be performed at different levels of abstraction, with respect to non-determinism, for example, depending on the context of application. In this regard, we refer to a suite of semantics that are thoroughly studied throughout this thesis, namely: bisimilarity~\cite{Park81,milner89} -- the standard notion of behavioural equivalence in concurrency --, the spectrum of decorated trace semantics in van Glabbeek's work~\cite{Glabbeek01}, and must and may testing semantics~\cite{CleavelandH89,DBLP:journals/tcs/NicolaH84,Hennessy:1988:ATP:50497}.

Along time, different mathematical frameworks have been exploited for modelling reactive systems and their behaviours, and for deriving efficient verification algorithms for their computer-aided analysis. In the sequel, we provide a short overview on two of such ``dual'' frameworks: algebra~\cite{burris2012course,Hennessy:1988:ATP:50497} and coalgebra~\cite{Jacobs97atutorial,Rutten00}.

\section{Algebra}
\label{intro-algebra}
Algebraic process theories, or ``process algebras", have been successfully used as prototype specification languages for reactive systems. Typically, the definition of process algebras consists in providing a syntax and an operational semantics, usually given in terms of so-called Structural Operational Semantics (SOS) rules~\cite{Plotkin81astructural}.
Intuitively, SOS is a framework used for describing how programs compute step by step, by emphasising the corresponding state-transformations that occur after the execution of certain actions.
Once a desired notion of behavioural equivalence or preorder over processes is fixed, a corresponding sound (and ideally complete) axiomatisation is given. This way, one can establish the conformance of an implementation with its specification in an equational style, without generating the state space of processes, therefore potentially combatting the state explosion problem.
We hint, for example, to the works in~\cite{Aceto:1994:TSR:184662.184663} and \cite{DBLP:journals/jlp/BaetenV04}, where sound and complete axiomatisations for bisimilarity of systems complying to the GSOS~\cite{Bloom:1995:BCT:200836.200876} format and GSOS with termination, respectively, are provided.

Unfortunately, this approach has low flexibility as regards language modifications.
axiomatisations are usually shown sound and complete
by means of proof techniques that take into account the combinators of the language under consideration;
hence new syntactical constructs frequently impose new proofs (from scratch).
Consider, for instance, the work in~\cite{Aceto-GSOS} extending the results in~\cite{Aceto:1994:TSR:184662.184663} to the case of GSOS with predicates such as termination, divergence and convergence.
Even though syntactically trivial, the extension required the construction of a new axiomatisation that had to be proven sound and complete (which is often not a trivial task).

However, as soon as an axiomatisation is identified, the implementation of a verification tool based on equational reasoning is almost straightforward. We refer, for example, to the automated tool in~\cite{DBLP:conf/calco/AcetoCGI11} which can be used for reasoning on bisimilarity of systems complying to the extended GSOS format in~\cite{Aceto-GSOS}.

Moreover, in the algebraic setting, SOS rules can be used not only for specifying the behaviour of systems in an intuitive fashion, but also for imposing a series of (syntactic) constraints to guarantee that a certain notion of behavioural equivalence (or preorder) for systems satisfying the aforementioned restrictions is also a (pre)congruence. Semantics which are also (pre)congruences are important from the practical perspective as well. Intuitively, whenever a subcomponent of a system is replaced, showing the equivalence between the new ``upgraded'' system and the initial one, with respect to a notion of (pre)congruence, reduces to showing the equivalence between the two subsystems that have been interchanged. This way, the complexity of the verification procedure is obviously reduced. In this respect, we refer, for instance, to the GSOS~\cite{Bloom:1995:BCT:200836.200876} format which guarantees that bisimilarity is a congruence. In related work~\cite{Bloom:2004:PFD:963927.963929}, precongruence formats for decorated trace semantics~\cite{Glabbeek01} were established via modal characterisations of the corresponding preorders.

\section{Coalgebra}
\label{intro-coalgebra}
A possible representation of implementations and their specifications is in terms of state machines. These allow for a uniform manipulation of systems such as: streams~\cite{streams:rutten05}, (non)deterministic and probabilistic automata~\cite{Rabin:1959:FAD:1661907.1661909,DBLP:journals/siamcomp/Rabin80}, Moore~\cite{Moore56} and\linebreak Mealy~\cite{mea55} machines, and labelled transition systems~\cite{Keller:1976:FVP:360248.360251}.

Coalgebra~\cite{Jacobs97atutorial,Rutten00} is a recent unifying theory combining ideas from the mathematical theory of dynamical systems and from the theory of state-based computation, and has been successfully applied as a mathematical framework for the study of state-based systems.
Intuitively, from the coalgebraic perspective, systems with (possibly) infinite behaviour are represented as \emph{black-box} machines analyzed only according to their observable behaviour. Mathematically, one can describe such a machine in terms of a coalgebra $(X, \delta \colon X \rightarrow \F(X))$ consisting of a set (of states) $X$, and a map $\delta$ encapsulating the corresponding behaviour based on a functor $\F$.
This map represents the set of observers, or destructors, allowing one to ``break'' (infinite) system behaviour into analyzable fragments.

Coalgebraic analysis on the behaviours of systems can be performed as follows. First, identify the appropriate functor associated with the class of systems under analysis. Then, 
reason on the corresponding notion of behavioural equivalence by coinduction~\cite{sangiorgi2011advanced}, a proof technique based on bisimulation,
already implemented in automated tools~\cite{BONCHI:2012:HAL-00639716:4,DBLP:conf/tacas/CranenGKSVWW13,Cleaveland:1993:CWS:151646.151648,DBLP:conf/tacas/GaravelLMS11,rosu-lucanu-2009-calco}.

All the systems mentioned above can be coalgebraically modelled in a uniform way, by simply varying the behaviour functor $\F$. For instance, for the case of streams ({\it i.e.}, infinite words) over an alphabet $A$, the functor $\F(X) = A \times X$ provides the head of the stream, which is an element of $A$, and its tail, which is again a stream. Labelled transition systems are intuitively defined by the functor $\F(X) = (\Pow X)^{A}$, which for an action labelled in $A$ returns the set of states that can be (non-deterministically) reached after executing that action. More interestingly, note that each functor induces a notion of behavioural equivalence~\cite{Rutten00}. For streams, for example, this coincides with stream equality, whereas for deterministic automata and labelled transition systems, the corresponding notions of behavioural equivalence are language equivalence and bisimilarity~\cite{Park81,milner89}, respectively.

As already stated, verification of systems can be performed at different levels of abstraction, depending on the context of application. The work in this thesis is closely related to the results in~\cite{fsttcs}. There it is shown how the generality and modularity of coalgebras can be exploited (via a coalgebraic subset construction) in order to uniformly reason about the behaviour of labelled transition systems in terms of trace, ready or failure equivalence~\cite{Glabbeek01}, rather than bisimilarity. Moreover, reasoning on the aforementioned equivalences follows ``for free'' by coinduction, and can be performed in a fully automated fashion using the tool in~\cite{BONCHI:2012:HAL-00639716:4}.

Even though the coalgebraic setting abstracts from the syntax in process description languages, its generality and uniformity enables also the interplay with syntax-based characterisations of systems. For example, we refer to the works in~\cite{Klin:2009:BMM:1512997.1513245,Turi:1997:TMO:788019.788864}, where bialgebraic frameworks for deriving congruence rule formats and proving compositionality of various kinds of semantics (such as bisimilarity and decorated trace semantics~\cite{Glabbeek01}) were provided based on the so-called \emph{distributive laws} of syntax over behaviour. From a simpler perspective, note that the dynamics of transition systems for process algebras can be coalgebraically characterised (in terms of states and transitions between states) according to the SOS rules expressing their behaviours.

\section{Aim and approach}
\label{intro-our-aim}


Along the research lines mentioned so far, the \emph{aim} of our work is to exploit the strengths of the (co)algebraic framework in modelling reactive systems and reasoning on several types of associated semantics, in a uniform fashion. In particular, we are interested in handling notions of behavioural equivalence/preorder ranging from bisimilarity for systems that can be represented as non-deterministic coalgebras~\cite{brs_lmcs}, to decorated trace semantics for labelled transition systems and probabilistic systems, and testing semantics for labelled transition systems with internal behaviour.
Moreover, we aim at deriving a suite of corresponding verification algorithms suitable for implementation in automated tools.

\medskip
The \emph{approach} we adopt is based on the following steps.

\begin{itemize}
\item[$\bullet$]
First, we focus on the results in~\cite{BonsangueRS09} introducing a language of expressions for specifying a large class of systems that can be modelled as non-deterministic coalgebras, and a sound and complete axiomatisation for bisimilarity of such systems. The latter include, for example, streams, (non)deterministic automata, Mealy, Moore and labelled transition systems.
In~\cite{BonsangueRS09}, systems which are coalgebras of non-deterministic functors are described in a rather algebraic fashion, in terms of a language of expressions derived according to the functor of interest. Then, expressions are shown to have a coalgebraic structure, hence further enabling reasoning on their equivalence by coinduction.

In our approach, we exploit a combination of algebra and coalgebra, based on interplays such as constructors -- destructors, induction -- coinduction (both as definition and as proof principles), and congruence -- bisimilarity~\cite{Jacobs97atutorial}. Building on these associations and on the strength of coalgebras in deriving algorithms and tools for the automatic verification of systems, we
construct a decision procedure for the bisimilarity of generalised regular expressions~\cite{BonsangueRS09} (and therefore, of their corresponding non-deterministic systems). This is achieved by providing an algebraic specification for the coalgebra of expressions, and reducing coinduction to an entailment relation between this specification and a suitable set of equations.

The theory was implemented in CIRC~\cite{goguen-lin-rosu-2000-ase,rosu-lucanu-2009-calco} -- an automated theorem prover based on coinduction, successfully used for reasoning on properties of infinite data structures such as streams --, and can be tested online at:\linebreak
\url{http://goriac.info/tools/functorizer/}.

\item[$\bullet$]
Although bisimilarity~\cite{Park81,milner89} is the standard notion of behavioural equivalence in concurrency theory, considerable amount of work has been dedicated to the treatment of decorated trace semantics~\cite{Glabbeek01,GPS-Jou-Smolka}, and may and must testing semantics~\cite{CleavelandH89,DBLP:journals/tcs/NicolaH84,Hennessy:1988:ATP:50497}, for instance.

Studying semantics other than bisimilarity is not only an interesting research subject per se, but is also important from the applicability perspective.

For example, bisimilarity, which belongs to the class of the so-called ``branching time'' semantics, can be sometimes too fine for system verification. Therefore, coarser semantics such as the ``linear time'' semantics might be more appropriate. In this respect, we refer to the work in~\cite{vanGlabbeek:2001:BTS:377786.377855} for a survey on the aforementioned semantic equivalences (and preorders), and for a study on their context of application and advantages.

Semantics coarser than bisimilarity, for example, that are also\linebreak (pre)congruences, can play an important role in system reduction as well. Consider a scenario in which the correctness of concurrent systems is established according to a property expressed by a set of logical formulae.  It would be desirable to use a (pre)congruence preserving such a property for deriving a smaller (reduced) labelled transition system whose components are eventually checked for the aforementioned property. Hence, the coarser the (pre)congruence, the coarser the refinement of the original system.

We refer, for example, to the work in~\cite{Valmari-fail}, where it is shown that trace equivalence~\cite{Glabbeek01} is the weakest congruence preserving the property ``$P$ may ever execute action $a$'', whereas the so-called ``stable failure equivalence'' is the  weakest deadlock-preserving congruence with respect to any set of Basic Lotos~\cite{DBLP:journals/cn/BolognesiB87} operators containing parallel composition.

We also hint to the work in~\cite{fsttcs}, where trace, failure and readiness semantics~\cite{Glabbeek01} were recovered in a coalgebraic setting by applying the generalised powerset construction~\cite{gen-pow}, which is reminiscent of the determinisation of non-deterministic automata.

Also of interest in concurrency, are must and may testing semantics \cite{DBLP:journals/tcs/NicolaH84,Hennessy:1988:ATP:50497}.
Unlike weak bisimilarity, must testing distinguishes between livelock and deadlock, for instance.
This can be useful in practice as, even though internal behaviour of systems do not provide any information to an external observer, it can be desirable to set apart infinite internal computations from the impossibility of performing any further move.
In~\cite{CleavelandH89}, an alternative characterisation of may and must testing semantics is based on sequences of observable actions processes can execute. Hence, it is of interest studying a possible connection with the approach in~\cite{fsttcs}, for a coalgebraic modelling of these semantics.

Motivated by these results and observations, as a second step we provide a uniform coalgebraic modelling of decorated trace, may and must testing semantics via the generalised powerset construction.

\item[$\bullet$] Last, but not least, we exploit the coalgebraic modelling of decorated trace and must testing semantics (which is more interesting than may testing semantics, as it is sensitive to the non-determinism of processes), and devise algorithms for reasoning on the corresponding equivalences and preorders.

Existing algorithms for the automated checking of these behavioural semantics over finite-state systems rely on the following idea. First, non-deterministic systems are transformed into the so-called (deterministic) ``acceptance graphs'', by applying a technique which is reminiscent of the determinisation of non-deterministic automata~\cite{Rabin:1959:FAD:1661907.1661909}. Then, reasoning on the aforementioned semantics on the original non-deterministic systems is reduced to the equivalent problem of reasoning on bisimilarity of the associated acceptance graphs.
We refer to ~\cite{cleaveland1993concurrency,DBLP:conf/cav/CleavelandS96,DBLP:journals/topnoc/CalzolaiNLT08} for examples of automated tools implementing such algorithms.

In our work, however, the coalgebraic setting enables the construction of verification algorithms which are not available for bisimilarity.
More precisely, we build an algorithm based on (Moore-) bisimulations (up-to)~\cite{BONCHI:2012:HAL-00639716:4,sangiorgi2011advanced,San98MFCS}, which follows as a consequence of the determinisation procedure previously mentioned.
Moreover, we provide a variation of Brzozowski's algorithm~\cite{Brzozowski}, by exploiting the abstract coalgebraic theory in~\cite{Bonchi:2012:BAA:2340820.2340823}.

Our approach is uniform and modular: once the ``recipe'' for handling failure semantics is established, the almost straightforward extension to non-deterministic systems with internal behaviour enabled shifting to must testing semantics. This is also a consequence of the fact that failure semantics coincides with must testing in the absence of divergence~\cite{CleavelandH89,DBLP:journals/acta/Nicola87}.
Furthermore, the algorithms for reasoning on failure semantics can be easily adapted also for other decorated trace semantics studied in this thesis.

Both the bisimulation-based and Brzozowski's minimisation techniques were implemented in an automated tool, and can be tested online at:\newline
\url{http://perso.ens-lyon.fr/damien.pous/brz/}.
\end{itemize}

\newpage
\section{Thesis outline}
\label{sec:thesis-struct}

We summarise the content and the main contributions of the thesis.

\bigskip
\textbf{Chapter 2} provides the basic definitions from coalgebra and recalls the generalised powerset construction, which we will use in our work.

\bigskip
\textbf{Chapter 3} presents an algorithm to decide whether two (generalised regular) expressions defining systems that can be modelled as non-deterministic coalgebras are bisimilar or not. The aforementioned expressions and an analogue of Kleene's theorem and Kleene algebra, were recently proposed by Silva, Bonsangue and Rutten in~\cite{brs_lmcs}.
Examples of systems we handle include infinite streams, deterministic automata, Mealy machines and labelled transition systems. The procedure is implemented in the automatic theorem prover CIRC, by reducing coinduction to an entailment relation between an algebraic specification and an appropriate set of equations.

The main contributions are summarised in the table below.
\[
\begin{tabular}{ll}
\hline
\multicolumn{2}{c}{A decision procedure for bisimilarity}\\
\hline
\hline
\textnormal{Algebraic modelling of expressions} & Figure~\ref{fig:CoVsAlg}\\[0.5ex]
\textnormal{Algebraic encoding of bisimilarity} & Corollary~\ref{cor:ii}\\[0.5ex]
\textnormal{Soundness}& Theorem~\ref{thm:soundnessCirc}\\[0.5ex]
\textnormal{Decision procedure} & Theorem~\ref{thm:decProc}\\
\hline
\hline
\end{tabular}
\]

\smallskip
This chapter is based on the following papers:\\
\emph{\cite{sbmf} Marcello M. Bonsangue, Georgiana Caltais, Eugen-Ioan Goriac, Dorel Lucanu, Jan J. M. M. Rutten, Alexandra Silva. A decision procedure for bisimilarity of generalised regular expressions. Proc. 13'th Brazilian Symposium on Formal Methods, 2011:226--241.}\\[0.5ex]
\emph{\cite{DBLP:journals/corr/abs-1303-1994} Marcello M. Bonsangue, Georgiana Caltais, Eugen-Ioan Goriac, Dorel Lucanu, Jan J. M. M. Rutten, Alexandra Silva. Automatic equivalence proofs for non-deterministic coalgebras. Science of Computer Programming, 2013:1324--1345.}

\bigskip
\textbf{Chapter 4} provides the coalgebraic handling of a series of semantics on transition systems in a uniform modular fashion, by employing the generalised powerset construction introduced by Silva, Bonchi, Bonsangue and Rutten in~\cite{gen-pow}.
As we shall see, this construction yields a notion of minimal representatives for (i) decorated trace equivalences for labelled transition systems (LTS's)~\cite{Keller:1976:FVP:360248.360251} and generative probabilistic systems (GPS's)~\cite{Glabbeek01,GPS-Jou-Smolka} and, (ii) must and may testing semantics for non-deterministic systems with internal behaviour~\cite{CleavelandH89,DBLP:journals/tcs/NicolaH84,Hennessy:1988:ATP:50497}.
As a consequence, reasoning on the aforementioned notions of behavioural equivalence/preorder can be performed in terms of (Moore-) bisimulations.
Moreover, we show how the spectrum of decorated trace semantics can be recovered from the coalgebraic modelling.

The main contributions are listed in the following table.

\[
\begin{tabular}{ll}
\hline
\multicolumn{2}{c}{Decorated traces and testing semantics coalgebraically}\\
\hline
\hline
\text{Correctness of the coalgebraic modelling of:}&\\[0.5ex]
\text{Ready \& failure semantics for LTS's}& Theorem~\ref{thm:eqiv-ready} \\[0.5ex]
\text{(Complete) trace semantics for LTS's}& Theorem~\ref{thm:eqiv-trace} \\[0.5ex]
\text{Possible-futures semantics for LTS's}& Theorem~\ref{thm:eqiv-poss-fut} \\[0.5ex]
\text{Ready \& failure trace semantics for LTS's}& Theorem~\ref{thm:eqiv-ready-tr} \\[0.5ex]
\text{Ready \& (maximal) failure semantics for GPS's}& Theorem~\ref{thm:equiv-r-f-mf} \\[0.5ex]
\text{(Maximal) trace semantics for GPS's}& Theorem~\ref{thm:equiv-trace} \\[0.5ex]
\text{May testing semantics}& Theorem~\ref{thm:may-tr} \\[0.5ex]
\text{Must testing semantics}& Theorem~\ref{thm:fail-mst-conn} \\[0.5ex]
\hline
\text{Recovering the spectrum}& Lemma~\ref{lm:rec-fail=ct}\\
&Lemma~\ref{lm:rec-ready-trace}\\
\hline
\hline
\end{tabular}
\]

\smallskip
This chapter is based on the papers:\\
\emph{\cite{dec-tr-MFPS12} Filippo Bonchi, Marcello Bonsangue, Georgiana Caltais, Jan Rutten, Alexandra Silva. Final semantics for decorated traces.
Electronic Notes in Theoretical Computer Science, 2012:73--86. Proc. Mathematical Foundations of Programming Semantics 2012. 
}

\bigskip
\textbf{Chapter 5} focuses on checking language equivalence (or inclusion) of finite automata. This is a classical problem in computer science, which has recently received a renewed interest and found novel and more effective solutions, such as the approaches based on antichains~\cite{tacas10,CAV06} or bisimulations up-to~\cite{BONCHI:2012:HAL-00639716:4,DBLP:conf/sofsem/RotBR13,sangiorgi2011advanced,San98MFCS}. Several notions of equivalence (or preorder) have been proposed for the analysis of concurrent systems. Some approaches reduce the problem of checking these equivalences to the problem of checking bisimilarity. In this chapter, we tackle this challenge differently, and propose to ``adapt'' algorithms for language semantics. 
More precisely, we introduce an analogue of Brzozowski's algorithm and \HKC\ -- an optimisation of Hopcroft and Karp's algorithm~\cite{hoka71} based on bisimulations up-to --, for checking must testing equivalence and preorder as well as failure equivalence. 
To achieve this transfer of technology (from language to must/failure semantics), we take a coalgebraic look at the problem at hand.

The table below summarises the main contributions of this chapter.
\[
\begin{tabular}{ll}
\hline
\multicolumn{2}{c}{Algorithms for decorated trace and must testing semantics}\\
\hline
\hline
\text{\HKC\ for failure semantics} & Sections~\ref{ssec:hkcFailure},~\ref{app:proofofcorrectness}\\
\text{Brzozowski for failure semantics} & Sections~\ref{sec:brzozowski-failure},~\ref{sec:proof:Brzozowski}\\
\text{\HKC\ for must testing semantics} & Sections~\ref{ssec:HKCmust},~\ref{app:proofofcorrectness}\\
\text{Brzozowski for must testing semantics} & Sections~\ref{sec:brzoz-must},~\ref{sec:proof:Brzozowski-must}\\
\hline
\hline
\end{tabular}
\]

\smallskip
This work is based on the paper:\\
\emph{Filippo Bonchi, Georgiana Caltais, Damien Pous, Alexandra Silva. Brzozowski's and Up-to Algorithms for Must Testing. To appear in volume 8301 of the Lecture Notes in Computer Science series.}

\section{Related work}
\label{sec:rel-work}

The contributions of the thesis stem from the underlying idea of formally specifying and verifying concurrent reactive systems in a uniform fashion, both in theory and practice, by exploiting the (co)algebraic framework.

\medskip
\emph{On the one hand}, we build our work based on previous results originating from the correspondence between regular expressions and finite deterministic automata (DFA's) -- two of the most basic structures in Computer Science --. Kleene's theorem~\cite{Kleene61} gives a fundamental correspondence between these two structures: each regular expression denotes a
language that can be recognised by a DFA and, conversely, the language accepted by a DFA
can be specified by a regular expression.
A sound and complete axiomatisation
(later refined by Kozen in~\cite{kozen91,kozen-nerode}) for proving the equivalence of regular expressions was introduced by Salomaa~\cite{salomaa}, and an extension for the case of LTS's modulo bisimilarity was derived by Milner in~\cite{milner}.

%

For coalgebras of a large class of functors, a language of regular
expressions, a corresponding generalisation of Kleene's theorem, and a
sound and complete axiomatisation for the associated notion of
behavioural equivalence were  introduced in~\cite{brs_lmcs}.
Both the language of expressions and their axiomatisation were derived, in a modular fashion, from the functor defining the type of the system.


One of the contributions of the thesis consists in a decision procedure for bisimilarity of generalised regular expressions in~\cite{brs_lmcs}, implemented in the coinductive theorem prover {\CIRC}~\cite{goguen-lin-rosu-2000-ase,rosu-lucanu-2009-calco}.
More explicitly, we derived an encoding of generalised regular expressions and their coalgebraic structure into CIRC-compatible constructs, and implemented a tool allowing this translation automatically, hence enabling the automated reasoning on bisimilarity of non-deterministic coalgebras. 

%
%

We further mention some of the existing coalgebraic based tools for proving bisimilarity and the main differences with our tool. CoCasl~\cite{DBLP:conf/fase/HausmannMS05} and CCSL~\cite{ccsl} are tools that can generate proof obligations for 
theorem provers from coalgebraic specifications. In~\cite{DBLP:conf/fase/HausmannMS05} several tactics for interactive and 
automatic bisimulation building are implemented in Isabelle/HOL and are used 
to derive bisimilarities for translated specifications from CoCasl. The main difference between our tool and CoCasl or CCSL is that, given a functor, the tool derives a  specification language for which equivalence is decidable (that is, it is automatic and not interactive). CIRC~\cite{goguen-lin-rosu-2000-ase,rosu-lucanu-2009-calco}, on top of which the current tool is built, is based on hidden logic~\cite{rosu-thesis} and uses a partial decision procedure for proving bisimilarities via implicit construction of bisimulations. Our tool can be seen as an extension of CIRC to a fully automatic theorem prover for the class of non-deterministic coalgebras. We stress the fact that the focus of our work is on a language for which equivalence is decidable. Tools such as CoCasl, CCSL or {\CIRC} have a more expressive language, where one can, for instance, specify streams, which in our language could not be specified (intuitively, the streams we can specify in our language are eventually periodic). In those tools decidability of equivalence can, however, not be guaranteed.

\medskip
\emph{On the other hand}, we exploit the coalgebraic framework in order to provide a uniform handling of a suite of semantics, other than bisimilarity. More explicitly, we are interested in deriving coalgebraic characterisations and algorithms suitable for implementation for: decorated trace semantics in the context of LTS's and GPS's as introduced in~\cite{Glabbeek01,GPS-Jou-Smolka}, and testing semantics for LTS's with internal behaviour as given in~\cite{CleavelandH89}.

In the recent past, some of the decorated trace semantics in van Glabbeek's spectrum
have been cast in the coalgebraic framework. Notably,
trace semantics of LTS's was widely studied~\cite{HJS07,DBLP:journals/entcs/LenisaPW00,fsttcs} and, more recently, (complete) trace, ready and failure
semantics were recovered in~\cite{gen-pow} via a coalgebraic generalisation of the classical powerset construction~\cite{CancilaHL03,Lenisa19992,fsttcs}.
A coalgebraic characterisation of the spectrum was also attempted in~\cite{Monteiro08}

Since the introduction of process calculi, a lot of research has also been
devoted to the analysis of testing semantics~\cite{DBLP:journals/tcs/NicolaH84}. Intuitively, with respect to a fixed set of tests, two systems are deemed to be equivalent if they pass exactly the same tests.

In~\cite{CleavelandH89}, a trace-based alternative characterisation of may and must testing was given.
Based on this approach, we provide a coalgebraic modelling of the aforementioned semantics via the generalised powerset construction.
Another coalgebraic outlook on must testing is presented in~\cite{DBLP:journals/tcs/BorealeG06} which introduces a fully abstract coalgebraic semantics for CSP. The main difference with our work consists in the fact that~\cite{DBLP:journals/tcs/BorealeG06} builds a coalgebra from the syntactic terms of CSP, while here we build a coalgebra starting from LTS's.
As a further coalgebraic approach to testing, it is worth mentioning test-suites~\cite{DBLP:journals/entcs/Klin04}, which tackle the semantics in van Glabbeek's spectrum~\cite{Glabbeek01}, but not must testing.

The problem of automatically reasoning on decorated trace and testing semantics of LTS's is an interesting research topic per se. One possible approach, which is reminiscent of the determinisation of non-deterministic automata, consists in deriving deterministic-like systems for which checking bisimilarity coincides with reasoning on the aforementioned semantics in the original LTS's. Several bisimulation-based algorithms are implemented in tools such as the ones in~\cite{cleaveland1993concurrency,DBLP:conf/cav/CleavelandS96,DBLP:journals/topnoc/CalzolaiNLT08}. We also refer to the more recent work in~\cite{BONCHI:2012:HAL-00639716:4}, where the determinised automata are related based on bisimulations up-to~\cite{sangiorgi2011advanced,San98MFCS}. The advantage of this procedure is that, in most cases, building the bisimulations up-to requires visiting only portions of the automata.
The partial exploration is also the key feature of the antichain algorithm~\cite{CAV06} for reasoning on language equivalence of non-deterministic finite automata.

The best-known algorithm for minimising LTS's with respect to bisimilarity is the so-called partition refinement~\cite{Kanellakis:1983:CEF:800221.806724,Paige:87:SIAM}, which is analogous to Hopcroft's minimisation algorithm~\cite{minimisation} for deterministic automata with respect to language equivalence.
Last, but not least, we refer to Brzozowski's minimisation algorithm~\cite{Brzozowski}, which has been provided with a coalgebraic understanding in~\cite{Bonchi:2012:BAA:2340820.2340823}.

Along this line of research, in Chapter~\ref{ch:algorithms} we introduce an analogue of Brzozowski's algorithm and an algorithm based on bisimulations up-to for failure and must testing semantics.

\chapter{Preliminaries}

In this chapter we recall the basic definitions for sets and coalgebras that are needed in the rest of the thesis. We also introduce the coalgebraic modelling of the (generalised) powerset construction. We assume the reader is familiar with basic notions from category theory. We refer the interested reader to~\cite{Rutten00} and~\cite{awodey} for more information on coalgebras and category theory, respectively.

\section{Sets}
\label{prelim:sets}

Let \textbf{Set} denote the category of sets (represented by capital letters $X, Y, \ldots$)
and functions (represented by lower case letters $f, g, \ldots$).
We write $Y^X$ for the family of functions from $X$ to $Y$ and $\pow(X)$ for the collection of finite subsets of a set $X$.
The product of two sets $X, Y$ is written as $X \times Y$
and has the projections functions $\pi_1$ and $\pi_2$:
$X
\mathrel{{\overset{\pi_1}{\longleftarrow}}}
X\times Y
\mathrel{{\overset{\pi_2}{\longrightarrow}}}
Y$.
We define
\(
X\myplus Y = X\uplus Y \uplus \{\bot,\top\}
\)
where $\uplus$ is the disjoint
union of sets, with injections
$
X \mathrel{{\overset{\kappa_1}{\longrightarrow}}} X
\uplus
Y \mathrel{{\overset{\kappa_2}{\longleftarrow}}} Y$.
Note that the set $X\myplus Y$ is different from the classical
coproduct of $X$ and $Y$ (which we shall denote by $X+Y$), because of the
two extra elements $\bot$ and
$\top$.
These extra elements are used to represent, respectively,
underspecification and inconsistency in the specification of some
systems.

For each of the operations defined above on sets, there is an analogous
one on functions. For the sake of brevity, we first introduce the notation $i \in \overline{1,n}$ as a shorthand for $i \in \{1, \ldots, n\}$. Let $f\colon X\to Y$, $f_1\colon X\to Y$ and $f_2\colon Z\to W$. We
define the following operations:
\[
\begin{array}{l@{\hspace{1.0cm}}l}
f_1\times f_2 \colon X\times Z \to Y\times W & f_1\myplus f_2 \colon X\myplus
Z \to Y\myplus W\\[.6ex]
(f_1\times f_2)(x,z) = <f_1(x),f_2(z)> &
(f_1\myplus f_2)(c) = c,\ c\in\{\bot,\top\}\\[.6ex]
 &
(f_1\myplus f_2)(\kappa_i(x)) = \kappa_i(f_i(x)),\ i\in\overline{1,2}\\[.6ex]
f^A \colon X^A \to Y^A  & \pow(f) \colon \pow(X) \to \pow(Y)\\
f^A (g) = f\circ g & \pow(f)(X_1) = \{y \in Y \mid f(x) = y, x\in X_1\}
\end{array}
\]

Note that in the definition above we are using the same symbols introduced for
the
operations on sets. It will always be clear from the context which
operation is being used.

\section{Coalgebras}
\label{prelim:coalg}

The examples handled throughout this thesis
live in the standard setting of sets and functions. We therefore define our formal frameworks for modelling and reasoning on behavioural equivalence of systems based on coalgebras of functors on $\mathbf{Set}$.

\begin{definition}[Coalgebra]
\label{def:coalgebra}
A \emph{coalgebra} is a pair $(S, f\colon
S\to \F(S))$, where $S$
is a set of states and $\F\colon \mathbf{Set}\to \mathbf{Set}$ is a functor.
\end{definition}

The functor $\F$, together with the
function $f$, determines the {\em transition structure} (or
dynamics) of the coalgebra~\cite{Rutten00}, also referred to as \emph{$\F$-coalgebra}.

A coalgebra $(S,f)$ is \emph{finite} if $S$ is a finite set.

\begin{definition}[Coalgebra homomorphism]
\label{def:coalg-hom}
A \emph{homomorphism} $h\colon(S,f) \rightarrow (T,g)$ from an $\F$-coalgebra $(S,f)$ to an $\F$-coalgebra $(T,g)$, is a function $h \colon S \rightarrow T$ making the following diagram commute:

\[
\xymatrix@C=1.2cm@R=.25cm{
S\ar[r]^{h}\ar[dd]_{f} & T\ar[dd]^{g} &\\
& & g \circ h = \F(h) \circ f\\
\F(S)\ar[r]_{\F(h)} & \F(T) &
}
\]
\end{definition}

\begin{definition}[Coalgebra isomorphism]
\label{def:coalg-iso}

A coalgebra homomorphism\linebreak $i \colon S \rightarrow T$ is a \emph{coalgebra isomorphism} if there exists a coalgebra homomorphism $j \colon T \rightarrow S$ such that $i\circ j={id}_{T}$ and $j\circ i={id}_{S}$.

\end{definition}

\begin{definition}[Final coalgebra]
\label{def:fin-coalg}
An $\F$-coalgebra $(\Omega, \omega)$ is \emph{final} if for any $\F$-coalgebra $(S,f)$ there exists a unique $\F$-coalgebra homomorphism 
\[\bb{-}\colon (S,f) \rightarrow (\Omega, \omega):\]

\[
\xymatrix@C=1.5cm@R=.25cm{
S\ar@{-->}[r]^{\bb{-}}\ar[dd]_{f} & \Omega\ar[dd]^{\omega} &\\
& & \omega \circ \bb{-} = \F(\bb{-}) \circ f\\
\F(S)\ar@{-->}[r]_{\F(\bb{-})} & \F(\Omega) &
}
\]
\end{definition}

Note that not all functors admit final coalgebras. However, it was shown in~\cite{Rutten00} that such coalgebras exist for the class of bounded functors~\cite{DBLP:journals/mscs/GummS02}. A functor $\F$ is \emph{bounded} if there are sets $B$ and $A$ and a surjective natural transformation from $B \times (-)^{A}$ to $\F$ (Theorem 4.7 in ~\cite{DBLP:journals/mscs/GummS02}). Moreover, final coalgebras, if they exist, are unique up to isomorphism.

Intuitively, a final $\F$-coalgebra $(\Omega, \omega)$ represents the universe of all possible \emph{behaviours} of $\F$-coalgebras $(S,f)$. The unique homomorphism $\bb{-}$ maps each element of $S$ to its behaviour. Using this mapping, behavioural equivalence can be defined as follows.

\begin{definition}[Behavioural equivalence]
\label{def:beh-equiv}
Let $\F$ be a functor that admits final coalgebras. For any two $\F$-coalgebras $(S,f)$ and $(T,g)$, $s \in S$ and $t \in T$ are \emph{behaviourally equivalent}, written $s \sim_{\F} t$, if and only if they have the same behaviour, that is:
\begin{equation}
\label{eq:beh-equiv}
s \sim_{\F} t \textnormal{  iff  } \bb{s} = \bb{t}.
\end{equation}
\end{definition}

Coalgebras provide a useful technique for proving behavioural equivalence, namely, bisimulation~\cite{DBLP:conf/ctcs/AczelM89}.

\begin{definition}[Bisimulation]
\label{def:bisimulation}
Let $(S,f)$ and $(T,g)$ be two $\F$-coalgebras. A relation $R \subseteq S \times T$ is a \emph{bisimulation} if there exists a map $\alpha \colon R \rightarrow \F(R)$ such that the projections $\pi_{1}\colon R \rightarrow S$ and $\pi_{2}\colon R \rightarrow T$ are coalgebra homomorphisms, \emph{i.e.}, they make the following diagram commute:
\[
\xymatrix@C=1.2cm@R=.8cm{
S\ar[d]_{f} & R\ar[d]_{ \alpha}\ar[l]_{\pi_{1}}\ar[r]^{\pi_{2}} & T\ar[d]^{g}\\
\F(S) & \F(R)\ar[l]^{\F(\pi_{1})}\ar[r]_{\F(\pi_{2})} & \F(T)
}
\]
\end{definition}

The following alternative definition of bisimulation, sometimes more appropriate for the proofs, was given in~\cite{DBLP:journals/iandc/HermidaJ98}:
a relation
$R \subseteq S\times T$ is a bisimulation
if and only if
\[
(s,t)\in R \Rightarrow (f(s), g(t))\in \overline \F(R)
\]
where $\overline \F(R)$ is defined as
\begin{equation}
\label{eq:bisim-alt-def}
\overline \F(R) = \{ (\F(\pi_1)(x),\F(\pi_2)(x)) \mid x \in \F(R) \}
\end{equation}

If two states are bisimilar, and a final coalgebra exists, then they are behaviourally equivalent.

In \cite{Rutten00}, it was shown that under certain conditions on $\F$ (which are met by all the functors considered in this thesis), bisimulations
are a \emph{sound and complete proof technique} for behavioural equivalence. Namely, by \textbf{coinduction} it holds that:
\begin{equation}
\label{eq:bisim-behEquiv}
s \sim_\F t\text{ iff there exists a bisimulation } R \text{ such that } s\,R\,t.
\end{equation}

For simplicity, we abuse the notation and write $s \sim_{\F} t$ whenever there exists a bisimulation relation containing $(s,t)$, and we call $\sim_{\F}$ the bisimilarity relation.

Note that different functors $\F$ induce different notions of behavioural equivalence. For the case of streams, deterministic automata, and finite labelled transition systems, for example, behavioural equivalence corresponds to stream equality, language equivalence and the standard notion of bisimilarity by Milner and Park~\cite{Park81,milner89}, respectively.

For more insight on the coalgebraic framework introduced in this section, we further provide the coalgebraic modelling of deterministic and Moore automata (extensively used in Chapter~\ref{ch:dec-trace-testing} and Chapter~\ref{ch:algorithms}).

\begin{example}
\label{eg:det-aut}
A \emph{deterministic automaton} (DA) is a pair
$(X,<o,t>)$, where $X$ is a (possibly infinite) set of states and $<o,t> \colon X \to
2\times X^A$ is a function with two components: $o$, the output
function, determines if a state $x$ is final ($o(x) = 1$) or not
($o(x) = 0$); and $t$, the transition function, returns for each letter $a$ in the input alphabet $A$ the next state. Note that here $2$ stands for the set with two elements $\set{0,1}$.

DA's are coalgebras for the functor
$\D(X) = 2\times X^A$.
The final coalgebra of this functor is
$(2^{A^*},<\epsilon, (-)_a>)$ where $2^{A^*}$ is the set of
languages over $A$ and $<\epsilon, (-)_a>$, given a language $L$,
determines whether or not the empty word $\eps$ is in the language
($\epsilon(L) =1$ or $\epsilon(L)=0$, respectively) and, for each input
letter $a$, returns the {\em derivative} of $L$: $L_a = \{ w \in A^*
\mid aw\in L\}$.

From any DA, there is a unique map $\bb{-}$ into $2^{A^*}$ which assigns
to each state its behaviour (that is, the language that the state
recognises)~\cite{DBLP:conf/concur/Rutten98}.
\[
\xymatrix@C=2cm@R=.5cm{X \ar@{-->}[r]^{\llbracket-\rrbracket}\ar[d]_{<o,t>} &
2^{A^*}\ar[d]^{<\epsilon, (-)_a>}\\
2\times X^A\ar@{-->}[r]_-{{id}\times \llbracket-\rrbracket^A} & 2\times (2^{A^*})^A }
\quad \quad \raisebox{-0.5cm}{$
\begin{array}{l}
\llbracket x\rrbracket(\eps) = o(x)\\
\llbracket x\rrbracket(aw) = \llbracket t(x)(a) \rrbracket(w)\\
\end{array}$}
\]

Behavioural equivalence for the functor $\D$ coincides with the classical language equivalence of automata: given a deterministic automaton $(S,\langle o,t\rangle)$, two states $x,y \in S$ are said to be \emph{language equivalent} if and only if they accept the same language.
\end{example}

We further provide the coalgebraic modelling of Moore automata -- a generalisation of DA's -- which, as we shall later see, will enable shifting from language equivalence to the context of decorated trace semantics.

\begin{example}
\label{eg:Moore}
\emph{Moore automata} with inputs in $A$
and outputs in $B$ are coalgebras for the functor
$\M(X) = B \times X^A$, that is pairs $(X,<o,t>)$ where $X$ is a set, $t\colon X \to X^A$ is the transition function (like for DA) and
$o\colon X \to B$ is the output function which maps every state to its output. Thus DA can be seen as a special case of Moore automata where
$B=2$.

The final coalgebra for $\M$ is
$(B^{A^*},<\epsilon, (-)_a>)$ where $B^{A^*}$ is the set of all functions $\varphi \colon A^* \to B$,
$\epsilon\colon B^{A^*}\to B$ maps each $\varphi$ into $\varphi(\varepsilon)$
and $(-)_a\colon B^{A^*} \to (B^{A^*})^A$ is defined for all $\varphi\in B^{A^*}$, $a\in A$ and $w\in A^*$ as $(\varphi)_a (w)= \varphi(aw)$.
\[
\xymatrix@C=2cm@R=.5cm{X \ar@{-->}[r]^{\llbracket-\rrbracket}\ar[d]_{<o,t>} &
B^{A^*}\ar[d]^{<\epsilon, (-)_a>}\\
B\times X^A\ar@{-->}[r]_-{{id}\times \llbracket-\rrbracket^A} & B\times (B^{A^*})^A }
\quad \quad \raisebox{-0.5cm}{$
\begin{array}{l}
\llbracket x\rrbracket(\eps) = o(x)\\
\llbracket x\rrbracket(aw) = \llbracket t(x)(a) \rrbracket(w)\\
\end{array}$}
\]

Hence, reasoning on behavioural equivalence of Moore automata reduces to checking equality of functions.
\end{example}

\section{The generalised powerset construction}
\label{prelim:gen-pow}

Sometimes, it is interesting to consider other equivalences than $\sim_{\F}$ for reasoning about $\F$-coalgebras. This is the case for non-deterministic automata (NDA's), for which language equivalence is often the intended semantics, instead of bisimilarity.
NDA's are coalgebras for the functor $\N(X) =2 \times (\Powf(X))^A$, where $\Powf$ stands for the finite powerset, and bisimilarity, which we denote by $\sim_\N$, is strictly included in language equivalence.
This can be achieved by applying the classical \emph{powerset construction}~\cite{Rabin:1959:FAD:1661907.1661909} for determinising non-deterministic automata, which can be briefly summarised as follows.

Consider an NDA, which is a coalgebra
\[
(X, \langle o, t\rangle \ct X \rightarrow 2 \times (\pow X)^{A})
\]
where (similarly to the case of DA's in Example~\ref{eg:det-aut}): $o$ is the output
function and determines if a state $x$ is final ($o(x) = 1$) or not
($o(x) = 0$), $t$ is the transition function returning for each
input letter $a$ the set of next states, and $2$ stands for the set with two elements $\set{0,1}$.

The powerset construction derives a DA
\[
(\Powf X, \langle o^{\sharp}, t^{\sharp} \rangle \ct \pow X \rightarrow 2 \times (\pow X)^{A}),\]
by associating to each state $x \in X$ of the NDA, a state $\set{x} \in \Powf X$. The new output and transition functions are:
\begin{equation}
\label{eq:op-DA}
\begin{array}{rcl}
o^{\sharp}(Y) & = & \bigsqcup\limits_{y \in Y} o(y)\\
t^{\sharp}(Y)(a) & = & \bigsqcup\limits_{y\in Y} t(y)(a)
\end{array}
\end{equation}
where $\bigsqcup$ is used to represent both the ``Boolean or'' and the set union. Intuitively, $\bigsqcup$ stands for the join operation corresponding to the semilattice with carrier $\set{0,1}$, and the one with carrier $\Powf S$, with $S\in {\bf Set}$, respectively.
The final coalgebra of the DA is the set of languages $2^{A^{*}}$ over $A$, and the semantic map
\[
\bb{-} \ct \Powf X \rightarrow 2^{A^{*}}
\]
associates to each $\set{x}$ the language $\bb{\set{x}}$ accepted by $x$, and is defined as introduced in Example~\ref{eg:det-aut} in the previous section.
Consequently, reasoning on language equivalence of two states $x_{1}$ and $x_{2}$ of an NDA reduces to identifying a bisimulation $R$ relating $\set{x_{1}}$ and $\set{x_{2}}$ in the corresponding DA:
\begin{equation}
\label{eq:lang-equiv-powerset}
\bb{\set{x_{1}}} = \bb{\set{x_{2}}} \textnormal{ iff } \set{x_{1}}\,\, R \,\, \set{x_{2}}.
\end{equation}
Based on these observations, we refer to the \emph{generalised powerset construction}~\cite{CancilaHL03,Lenisa19992,fsttcs} for coalgebras $f\colon X \to \F T(X)$ for a functor $\F$ and a
monad $T$.
Intuitively, this construction applies to the context of NDA's by simply instantiating $T$ with $\Powf$ and $\F$ with $2\times (-)^{A}$ (more details are provided later on in this section, in Example~\ref{ex:pow-constr-NDA}).

Monads are used to encompass computational effects such as non-determinism ($T(X) = \pow(X)$) or partiality ($T(X) = 1 + X$, where $1 = \{*\}$ stands for termination). They come equipped with two operations: unit ($\eta$) and multiplication ($\mu$). Intuitively, $\eta$  enables the embedding of any value into the monad structure, whereas $\mu$ allows to collapse several levels of computational effects.
For instance, the unit and multiplication of the powerset monad $T = (\pow, \eta, \mu)$ are defined as follows:
\begin{equation}
\label{eq:pow-unit-mult}
\begin{array}{l@{\hspace{1.5cm}}l}
\eta_{X}\colon X \rightarrow \pow X & \mu_{X}\colon \pow(\pow X) \rightarrow \pow X\\
\eta_{X}(x) = \{x\} & \mu_{X}(U) = \bigcup\limits_{\substack{S \in U}} S.
\end{array}
\end{equation}

We further provide an overview of the notions of a monad and algebras of a monad, and a series of intuitions for their integration into the context of the generalised powerset construction.

First recall that, given two functors $\F$ and $\Gf$ on $\textbf{Set}$,
a \emph{natural transformation} $\lambda \colon \F \Rightarrow \Gf$ is a family of functions $\lambda_{X} \colon \F(X) \rightarrow \Gf(X)$ such that, for all functions $f \colon X \rightarrow Y$, the following holds:
\[
\lambda_{Y} \circ \F(f) = \Gf(f) \circ \lambda_{X}.
\]

\begin{definition}[Monad]
\label{def:monad}
Let $T$ be a functor on $\textbf{Set}$. A \emph{monad} is a triple $(T, \eta, \mu)$ where $\eta \colon Id \Rightarrow T$ and $\mu \colon T^{2} \Rightarrow T$ are two natural transformations, called \emph{unit} and \emph{multiplication}, respectively, such that the following diagrams commute:

\[
\xymatrix@C=1.5cm@R=.7cm{
T\ar@{=>}[r]^{T\eta}\ar@{=>}[dr]_{id} & T^{2}\ar@{=>}[d]_{\mu} & T\ar@{=>}[l]_{\eta T}\ar@{=>}[dl]^{id} & T^{3}\ar@{=>}[r]^{T\mu}\ar@{=>}[d]_{\mu T} & T^{2}\ar@{=>}[d]^{\mu}\\
& T & & T^{2}\ar@{=>}[r]_{\mu} & T
}
\]
\end{definition}

\begin{definition}[Algebra of a monad]
\label{def:alg-of-monad}
An \emph{algebra} of a monad $(T, \eta, \mu)$, or a \emph{$T$-algebra}, is a pair $(X, h\colon T(X) \rightarrow X)$ satisfying the laws 
\[
\begin{array}{lr}
h \circ \eta = id & h \circ \mu = h \circ T h.
\end{array}
\]
\end{definition}

Intuitively, these laws show how to eliminate the computational effects by propagating the operation $h$ throughout the monadic structure.

For the case of the powerset monad defined in~(\ref{eq:pow-unit-mult}), for example, $T$-algebras are semilattices (with bottom).
Consider a join semilattice $(S, \bigsqcup)$ with $0$ the least element.
Showing that $S$ carries an algebra structure consists in proving that there exists 
$h \ct \Powf(S) \rightarrow S$
satisfying the laws in Definition~\ref{def:alg-of-monad}.
It is easy to check that by taking
\[
h(U) = \bigsqcup\limits_{u \in U} u,
\]
with $U \subseteq S$, we get the appropriate map.

The proof is as follows. Consider $u \in S$ and $\Psi \subseteq \Powf(\Powf S)$. Then:
\begin{align*}
(h \circ \eta)(u) & = h(\set{u})\\
&= u\\[0.5ex]
(h \circ \mu)(\Psi) &= h(\mu(\Psi))\\
&= h(\bigcup\limits_{U_{i} \in \Psi} U_{i})\\
&=\bigsqcup\limits_{\footnotesize\begin{array}{c}U_{i} \in \Psi\\ u_{j} \in U_{i} \end{array}} u_{j}\\[0.5ex]
(h\circ \Powf h)(\Psi) & = h(\Powf h(\Psi))\\
&= h(\set{ h(U_{i}) \mid U_{i} \in \Psi})\\
&= h(\set{ \bigsqcup\limits_{u_{j} \in U_{i}} u_{j} \mid U_{i} \in \Psi})\\
&=\bigsqcup\limits_{\footnotesize\begin{array}{c}U_{i} \in \Psi\\ u_{j} \in U_{i} \end{array}} u_{j}
\end{align*}

The first set of equalities is associated to the law $h \circ \eta = id$ in Definition~\ref{def:alg-of-monad} and, intuitively, states that eliminating the non-determinism from a singleton set $\set{u}$ consists in simply considering the value $u$.
The last two sets of equalities correspond to the law $h \circ \mu = h \circ T h$. Intuitively, they show that eliminating two levels of non-determinism captured within a set $\Psi \subseteq \Powf(\Powf S)$ can be performed in two different ways: (a) first flatten $\Psi$ and then return the join of the elements in the resulted set, or (b) first compute the joins $h(U_{i})$ of the elements of sets $U_{i} \in \Psi$ and then return the join of all such $h(U_{i})$'s.

\begin{definition}[Algebra homomorphism]
Let $(T, \eta, \mu)$ be a monad. A function\linebreak $f \colon X \rightarrow Y$ is a \emph{homomorphism} between two $T$-algebras $(X, h \colon T(X) \rightarrow X)$ and $(Y, g \colon T(Y) \rightarrow Y)$ if it makes the following diagram commute:
\[
\xymatrix@C=1cm@R=.6cm{
T(X)\ar[r]^{T(f)}\ar[d]_h & T(Y)\ar[d]^g\\
X\ar[r]_{f} & Y 
}
\]
\end{definition}

These are the key ingredients exploited in~\cite{gen-pow} in order to derive the {\bf generalised powerset construction} for coalgebras $f\ct X \rightarrow \F T(X)$ for a functor $\F$ and a monad $T$, with the proviso that $\F T(X)$ is a $T$-algebra, and $\F$ has a final coalgebra 
$(\Omega, \omega)$, as summarised in the following commuting diagram:
\begin{eqnarray}
\label{F-final}
\xymatrix@C=1.5cm@R=.7cm{
X \ar[d]_{f}\ar[r]^-{\eta} & T(X)\ar[dl]^-{f^\sharp}\ar@{-->}[r]^{\bb{-}} &  \Omega \ar[d]^{\omega}\\
\F T(X)\ar@{-->}[rr]_-{\F (\bb{-})}  && \F(\Omega)
}
\end{eqnarray}
We refer the interested reader to~\cite{gen-pow} where all the technical details are explored and many instances of the construction are shown.

At an intuitive level, the coalgebra $f\colon X \rightarrow \F T(X)$ is extended to $f^{\sharp} \colon T(X) \rightarrow \F T(X)$ which, for two elements $x_{1}, x_{2} \in X$, enables checking their ``$\F$-equivalence with respect to the monad $T$'' ($\eta(x_{1}) \sim_{\F} \eta(x_{2})$) rather than checking their $\F T$-equivalence.
Formally, assuming that $\F T(X)$ is a $T$-algebra, $f^\sharp$ is the unique algebra map between $(T(X),\mu)$ and $(\F T(X), h)$ (where $h$ is a given algebra structure on $\F T(X)$) such that 
\[ f^\sharp = h \circ Tf.
\]

\begin{remark}
\label{rem:bisim-up-to}
Based on~(\ref{eq:beh-equiv}) and~(\ref{eq:bisim-behEquiv}), verifying $\F$-behavioural equivalence of two states $x_1, x_2$ in a coalgebra $(T(X), f^{\sharp})$ consists in identifying a bisimulation $R$ relating $\eta(x_{1})$ and $\eta(x_{2})$:
\begin{equation}
\label{eq:s-c-upto}
\llbracket \eta(x_{1}) \rrbracket = \llbracket \eta(x_{2}) \rrbracket \textnormal{  iff  } \eta(x_{1})\,\, R\,\, \eta(x_{2}).
\end{equation}
\end{remark}

\begin{example}
\label{ex:pow-constr-NDA}
Consider again the case of NDA's which are coalgebras
\[
(X, \langle o, t\rangle \ct X \rightarrow 2 \times (\pow X)^{A}),
\]
as introduced in the beginning of this section.
Observe that $\Pow (X)$ and $2 \cong \Pow(1)$ are (join) semilattices, which are algebras of the powerset monad (here $1$ stands for the singleton set $\{*\}$). Moreover, product and exponentiation preserve the algebra structure, hence guaranteeing that $2 \times (\Powf (X))^{A}$ is an algebra for $\Powf$ as well.

At this point it is easy to see that the generalised powerset construction applies to the context of NDA's by simply instantiating $T$ with $\Powf$ and $\F$ with $2\times (-)^{A}$.
It follows that the operation $\langle o,t\rangle$ of the NDA can be uniquely extended to $\langle o^{\sharp}, t^{\sharp}\rangle$, as in~(\ref{eq:op-DA}), in a deterministic setting.
The language recognised by a non-deterministic state $x$ can be defined by precomposing the unique morphism
$\llbracket - \rrbracket \colon \Powf X \to 2^{A^*}$ with the unit $\eta$ of $\Powf$. This enables reasoning on language equivalence of states of NDA's in terms of bisimulations, as in~(\ref{eq:lang-equiv-powerset}).
Recall that for the case of deterministic LTS's, language equivalence and bisimilarity coincide~\cite{DBLP:journals/tcs/Engelfriet85}.

As a last aspect, note that the set of languages $2^{A^*}$ can be provided a join semilattice structure by considering the union of languages as the binary operation and the empty language as the least element. It can be easily shown (by induction on words $w \in A^{*}$) that the semantic map $\bb{-}$ is a join semilattice homomorphism (or, equivalently, a $\Powf$-algebra homomorphism).

More generally, the semantic map $\bb{-}$ in~(\ref{F-final}) is a $T$-algebra homomorphism whenever there exists a distributive law $T \F \Rightarrow \F T$, which guarantees that the carrier $\Omega$ of the final coalgebra is a $T$-algebra as well (see Proposition 4 in~\cite{DBLP:conf/cmcs/JacobsSS12}).

\end{example}

\chapter{Deciding bisimilarity}
\label{ch:dec-bisim}


The results in this chapter are based on the work in~\cite{brs_lmcs}, where a language of regular expressions for specifying a large class of systems that can be modelled as non-deterministic coalgebras, and a sound and complete axiomatisation for the corresponding notions of behavioural equivalence were introduced.

Our contribution consists in a novel method for checking bisimilarity of generalised regular expressions using the coinductive theorem prover {\CIRC} \cite{goguen-lin-rosu-2000-ase,rosu-lucanu-2009-calco}.
The main novelty of the method lies in the generality of the systems it can handle; examples include streams of real numbers, Mealy machines and labelled transition systems. More precisely, our approach deals with systems that can be represented as locally finite coalgebras or, equivalently, coalgebras for which the smallest subcoalgebra generated by a state is finite~\cite{Rutten00}.

%

{\CIRC} is a metalanguage application implemented in Maude~\cite{DBLP:conf/maude/2007}, and its target is to prove properties over infinite data structures. It has been successfully used for checking the equivalence of programs, and trace equivalence and strong bisimilarity of processes.
The tool may be tested online and downloaded from:\\
\url{https://fmse.info.uaic.ro/tools/Circ/}.

Determining whether two expressions are equivalent is important in order to be able to compare behavioural specifications.  
In the presence of a sound and complete axiomatisation one can determine equivalence using algebraic reasoning. 
A coalgebraic perspective on regular expressions has however provided a more operational/algorithmic way of checking equivalence: one constructs a bisimulation relation containing both expressions. The advantage of the bisimulation approach is that it enables automation since the steps of the construction are fairly mechanic and require almost no ingenuity.
%
We illustrate this with an example, to give the reader the feeling of the more algorithmic nature of bisimulation. We want to stress however that we are not underestimating the value of an algebraic treatment of regular expressions: on the contrary, as we will show later, the axiomatisation plays an important role in guaranteeing termination of the bisimulation construction and is therefore crucial for the main result of this chapter.

We show below a proof of the sliding rule: $a(ba)^* \equiv (ab)^*a$. The algebraic proof, using the rules and equations of Kleene algebra, needs to show the two containments
\[
{a(ba)^* \leq (ab)^*a} \qquad \text{ and } \qquad (ab)^*a \leq a(ba)^*
\]
and it requires some ingenuity in the choice of the equation applied in each step. We show the proof for the first inequality, the other follow a similar proof pattern.
\[
\begin{array}{@{}lcl@{\qquad}l}
&&a(ba)^* \leq (ab)^*a \\
&\Leftarrow& a + (ab)^*a (ba) \leq (ab)^*a &\text{right-star rule~\cite{kozen91}: $\begin{array}{ll}b + xa \leq x \Rightarrow\\ ba^{*} \leq x\end{array}$}\\
&\iff & (1+ (ab)^*ab)a \leq (ab)^*a &\text{associativity and distributivity}\\
&\iff & (ab)^*a \leq (ab)^*a &\text{right expansion rule: $1+r^*r = r^*$}
\end{array}
\]

For the coalgebraic proof, we build incrementally, and rather mechanically, a bisimulation relation containing the pair  $(a(ba)^*, (ab)^*a)$.  We start with the pair we want to prove equivalent and then we close the relation  with respect to syntactic language derivatives, also known as  {\em  Brzozowski derivatives}. In the current example, the bisimulation relation would contain three pairs: 
\[
R = \{ (a(ba)^*, (ab)^*a), ((ba)^*, b(ab)^*a+1), (0,0)\}
\]
where $1$ and $0$ are, respectively, the regular expressions denoting the language containing only the empty word and the empty language. In constructing this relation, no decisions were made, and hence the suitability of bisimulation construction as an automatic technique to prove equivalence of regular expressions. 

The main contributions of this chapter can be summarised as follows. We present a decision procedure to determine equivalence of generalised regular expressions, which specify behaviours of many types of transition systems, including Mealy machines, labelled transition systems and infinite streams.
We illustrate the decision procedure we devised by applying it to several examples. As a vehicle of implementation, we choose \CIRC, a coinductive theorem prover which has already been explored for the construction of bisimulations. To ease the implementation in \CIRC,  we present the algebraic specifications' counterpart of the coalgebraic framework of the generalised regular expressions mentioned above.
This enables us to automatically derive algebraic specifications that model the language of expressions, and to define an appropriate equational entailment relation which mimics our decision procedure for checking behavioural equivalence of expressions. The implementation of both the algebraic specification and the entailment relation in {\CIRC} allows for automatic reasoning on the equivalence of expressions.

\textit{Organisation of the chapter.} Section~\ref{SCP-sec:prelim} recalls the basic definitions of the language associated with a non-deterministic functor. Section~\ref{sec:dp} describes the decision procedure to check equivalence of regular expressions. Section~\ref{sec:algSpec} formulates the aforementioned language as an algebraic specification, which paves the way to implement in {\CIRC} the procedure to decide equivalence of expressions. The implementation of the decision procedure and its soundness are described in Section~\ref{sec:dec-proced}.
In Section~\ref{sec:caseStudy} we show, by means of several examples,
how one can check bisimilarity, using {\CIRC}.
In Section~\ref{SCP-sec:concl} we briefly wrap up the contributions of this chapter.

\section{generalised regular expressions}
\label{SCP-sec:prelim}

In this section we briefly recall the basic definitions in
\cite{brs_lmcs}.

\emph{Non-deterministic functors} are functors $\Gf \colon {\mathbf{Set}} \rightarrow {\mathbf{Set}}$
built inductively from the identity,
and constants, using $\times$,
$\myplus$, $(-)^A$ and $\pow$:

\begin{equation}\label{eq:fun-gram}
\pf\ni \Gf \,::\!=\, \id \mid \Bl \mid \Gf\myplus \Gf \mid \Gf\times
\Gf \mid \Gf^A \mid \pow \Gf
\end{equation}
where $\Bl$ is a finite join-semilattice and $A$ is a finite set.
Typical examples of such functors include $\Rf=\Bl\times \id$, $\M =
(\Bl\times \id)^A$,
$\D = 2 \times \id^A$, $\Pa = (1 \myplus \id)^A$,  $\N = 2 \times \pow(\id)^A$ and $\Lf = 1 \myplus \pow(\id)^A$. These
functors represent, respectively, the type of streams, Mealy,
deterministic, partial deterministic automata, non-deterministic automata and labelled transition systems with explicit termination. $\Rf$-bisimulation is stream equality, whereas $\D$-bisimulation coincides with language equivalence.


Next, we give the definition of  the ingredient relation, which
relates a non-deterministic functor $\Gf$ with its {\em ingredients}, {\em
i.e.}, the functors used in its inductive construction. We shall use
this relation later for typing our expressions.

\begin{definition}\label{def:ingred}

Let $\lhd\subseteq \pf\times \pf$ be the least reflexive and transitive relation on
non-deterministic functors
such that
\[
\begin{array}{lll}
\Gf_1\triangleleft \Gf_1\times \Gf_2 & \Gf_2\lhd \Gf_1\times \Gf_2 & \Gf_1\lhd \Gf_1\myplus \Gf_2\\
\Gf_2\lhd \Gf_1\myplus \Gf_2 &
\Gf\lhd \Gf^A & \Gf \lhd \pow\Gf.
\end{array}
\]
%
\end{definition}
Throughout this chapter we use $\F \lhd \Gf$ as a shorthand
for $(\F,\Gf)\in \lhd$. If $\F \lhd \Gf$, then $\F$ is said to  be an
\emph{ingredient} of $\Gf$. For example, $2$, $\id$, $\id^A$ and $\D$
itself are all the ingredients of the deterministic automata functor
$\D$.

A language of expressions $\Exp_\Gf$ is associated with each non-deterministic functor $\Gf$:

\begin{definition}[Expressions]\label{def:expr}
Let $A$ be a finite set, $\Bl$ a finite join-semilattice and $X$ a set
of fixed-point variables. The set $\Exp$ of all {\em (generalised regular) expressions\/} is given
by the following grammar, where $a\in A$, $b\in \Bl$ and $x\in X$:
\begin{equation}\label{eq:grammar}
\begin{array}{lcl}
\E &::\!=&  x \mid \E \oplus \E \mid \gamma\\
\end{array}
\end{equation}
where $\gamma$ is a {\em guarded expression} given by:
\begin{equation}\label{eq:grammar2}
\begin{array}{lcl}
\gamma &::\!=& \emp \mid \gamma \oplus \gamma \mid \mu
x.\gamma
   \mid b \mid l<\E> \mid r<\E> \mid l[\E] \mid r[\E] \mid a(\E) \mid \{\E\} \\
\end{array}
\end{equation}
\end{definition}
In the expression $\mu x.\gamma$, $\mu$ is a binder for
all the free occurrences of $x$ in $\gamma$. Variables that are not bound are
free. A {\em closed expression} is an expression
without free occurrences of fixed-point variables $x$. We denote the set of closed
expressions by $\Exp^c$.

The language of expressions for non-deterministic coalgebras
is a generalisation of the classical notion of regular expressions:
$\emp$, $\eps_1 \oplus \eps_2$ and $\mu x.\gamma$
play similar roles to the regular expressions denoting empty language, the union of languages and the
Kleene star.
Moreover, note that, not unexpectedly, in~\cite{brs_lmcs}, $\oplus$ was axiomatised as an associative, commutative and idempotent operator, with $\emp$ as a neutral element. 
The expressions $l\langle\eps\rangle$, $r\langle\eps\rangle$,
$l[\eps]$, $r[\eps]$, $a(\eps)$ and $\{\E \}$ specify the left and right-hand side
of products and sums, function application and singleton sets, respectively.

Next we present a type assignment system for associating
expressions to non-deterministic functors. This will allow us to associate with each functor
$\Gf$ the expressions $\E\in \Exp^c$ that are valid specifications of
$\Gf$-coalgebras.

\begin{definition}[Type system]\label{def:ts}
We define a typing relation $\mathbin\vdash\subseteq \Exp \times \pf
\times \pf $ that will associate an expression $\E$ with two non-deterministic functors $\F$ and $\Gf$, which are related by the
ingredient relation ($\F$ is an ingredient of $\Gf$). We shall write
$\vdash \E\colon \F\lhd \Gf$ (read ``$\E$ is of type $\F\lhd \Gf$'') for $(\E,\F,\Gf) \in \;\vdash$.  The rules that define $\vdash$ are the following:
\[
\renewcommand{\arraystretch}{0.5}
\begin{array}{@{}ccc@{}}
\rules{}{\vdash \emp \colon \F\lhd \Gf }&
\rules{}{\vdash b\colon \Bl\lhd \Gf}\,\, (b \in \Bl)&
\rules{}{\vdash x \colon \Gf\lhd \Gf }\,\, (x \in X)\\\\
\rules{\vdash\E\colon \Gf\lhd \Gf}
     {\vdash \mu x.\E \colon \Gf\lhd \Gf}&
\rules{\vdash \E_1 \colon \F\lhd \Gf\;\;\;\; \vdash\E_2\colon \F\lhd \Gf}{\vdash \E_1\oplus\E_2 \colon \F\lhd \Gf} &
\rules{\vdash \E \colon \Gf\lhd \Gf}
     {\vdash \E \colon \id\lhd \Gf}\\\\
     \rules{\vdash \E\colon \F_2\lhd \Gf}
     {\vdash r[\E] \colon \F_1\myplus \F_2\lhd \Gf}
     &
\rules{\vdash \E\colon \F\lhd \Gf}
     {\vdash a(\E) \colon \F^A\lhd \Gf}\,\, (a \in A)
&
\rules{\vdash \E\colon \F_1\lhd \Gf}
     {\vdash l<\E> \colon \F_1\times \F_2\lhd \Gf}\\\\
\rules{\vdash \E\colon \F_2\lhd \Gf}
     {\vdash r<\E> \colon \F_1\times \F_2\lhd \Gf}&
\rules{\vdash \E\colon \F_1\lhd \Gf}
     {\vdash l[\E] \colon \F_1\myplus \F_2\lhd \Gf} & \rules{\vdash \E\colon \F_1\lhd \Gf}{\vdash \{\E\} \colon \pow \F_1 \triangleleft \Gf
    }
\end{array}
\]
\end{definition}
We can now formally define the set of $\Gf$-expressions: well-typed
expressions associated with a non-deterministic functor $\Gf$.

\begin{definition}[$\Gf$-expressions]\label{def:g-expr}
Let $\Gf$ be a non-deterministic functor and $\F$ an ingredient of $\Gf$.
We define $\Exp_{\F\lhd \Gf}$ by:
\[
\Exp_{\F\lhd \Gf} = \{\E \in \Exp^c \mid\ \vdash\E \colon \F\lhd \Gf\}\,.
\]
We define the set $\Exp_\Gf$ of well-typed {\em
$\Gf$-expressions\/} by $\Exp_{\Gf\lhd \Gf}$.
\end{definition}

In \cite{brs_lmcs}, it was proved that
the set of $\Gf$-expressions for a given non-deterministic functor $\Gf$
has a coalgebraic structure:
\[
\delta_{\Gf} \colon {\Exp}_{\Gf} \to {\Gf}({\Exp}_{\Gf})
\]
More precisely, in \cite{brs_lmcs}, which we
refer to for the complete definition of $\delta_{\Gf}$, the authors defined a function
$
\delta_{\F \lhd \Gf} \colon {\Exp}_{\F\lhd \Gf} \to {\F}({\Exp}_{\Gf})
$
and then set $\delta_\Gf = \delta_{\Gf\lhd \Gf}$.

The coalgebraic structure on the set of expressions enabled the proof of a Kleene-like theorem:
\begin{theorem}[Theorems~3.12~and~3.14~in~\cite{brs_lmcs}]\label{thm:kleene}
Consider $\Gf$ a non-deterministic functor.
\begin{enumerate}
\item For any $\E\in \Exp_\Gf$, there exists a finite $\Gf$-coalgebra $(S,g)$  and $s\in S$ such that $\E\sim s$.
\item For every \ans{C.1.6, C.2.3} finite $\Gf$-coalgebra $(S,g)$  and $s\in S$ there exists an expression $\E_s\in \Exp_\Gf$ such that $\E_s\sim s$.
\end{enumerate}
\end{theorem}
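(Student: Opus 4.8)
The plan is to prove the two parts of the Kleene-like theorem separately, essentially recalling the construction from~\cite{brs_lmcs} (Theorems~3.12 and~3.14). For part~(1), given an expression $\E \in \Exp_\Gf$, I would build the subcoalgebra of $(\Exp_\Gf, \delta_\Gf)$ generated by $\E$: since $(\Exp_\Gf, \delta_\Gf)$ is itself a $\Gf$-coalgebra (by the coalgebraic structure recalled just before the statement), taking $S$ to be the smallest subset of $\Exp_\Gf$ containing $\E$ and closed under the ``derivatives'' induced by $\delta_\Gf$, and $g$ the restriction of $\delta_\Gf$ to $S$, immediately gives $\E \sim s$ with $s = \E$ via the identity-style inclusion. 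The genuine content is that this $S$ is \emph{finite}. Here I would rely precisely on the fact that $\oplus$ is axiomatised as associative, commutative, idempotent with unit $\emp$ (noted explicitly in the excerpt): working modulo these ACI axioms, the set of iterated derivatives of a fixed expression can be shown to be finite, by an argument analogous to the finiteness of Brzozowski derivatives modulo ACI for classical regular expressions, proceeding by induction on the structure of the guarded expression and handling the $\mu x.\gamma$ case by unfolding and observing that only finitely many syntactically distinct (modulo ACI) sub-derivatives arise. This finiteness argument is the main obstacle: it is the step that genuinely needs the ACI normalisation and a careful structural induction, and it is what makes the whole theorem non-trivial (it is the reason the axiomatisation is ``crucial'', as the chapter introduction stresses).

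For part~(2), given a finite $\Gf$-coalgebra $(S,g)$ and $s \in S$, the plan is to synthesise, for each state $t \in S$, an expression $\E_t$ capturing its behaviour, using the finiteness of $S$ to turn the mutually-recursive system of ``transition equations'' of $(S,g)$ into closed expressions via the fixed-point binder $\mu$. Concretely, enumerate $S = \{t_1,\dots,t_n\}$ and define, by induction on the structure of $\Gf$, a family of expressions $\varepsilon_{\F \lhd \Gf}(t)$ reading off the one-step behaviour $g(t) \in \Gf(S)$ into the syntax $\Exp_{\F\lhd\Gf}$ — translating a pair in $\Gf_1 \times \Gf_2$ into $l\langle-\rangle \oplus r\langle-\rangle$, an element of $\Bl$ into $b$, an $A$-indexed family into $\bigoplus_{a\in A} a(-)$, a finite subset into $\bigoplus \{-\}$, and each recursive reference to a state $t_j$ into its fixed-point variable $x_j$ guarded by a surrounding $\mu x_j$. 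One then closes off the $n$ simultaneous definitions using nested $\mu$-binders (Bekič-style), obtaining a closed, well-typed expression $\E_s \in \Exp_\Gf$. The verification that $\E_s \sim s$ is by exhibiting an explicit bisimulation: the relation pairing each constructed expression with the state it was built from, checking that $\delta_\Gf$ applied to $\E_t$ matches $g(t)$ up to this relation — this amounts to the defining equations of $\delta_\Gf$ for each syntactic construct, which is routine given the definition in~\cite{brs_lmcs}.

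I expect part~(2) to be essentially bookkeeping (type-correctness of the constructed expressions, guardedness so that $\mu$ is well-formed, and the bisimulation check), whereas part~(1)'s finiteness claim is where the real work lies and where I would invest the proof effort; in a self-contained write-up I would likely cite~\cite{brs_lmcs} for the detailed ACI-finiteness lemma rather than reprove it, since the surrounding chapter only needs the statement (local finiteness of the coalgebra of expressions) as an input to the CIRC-based decision procedure.
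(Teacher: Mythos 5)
Your proposal is correct and matches what the paper does: the thesis does not reprove this result but imports it from~\cite{brs_lmcs}, illustrating exactly the two constructions you describe (the generated subcoalgebra of $(\Exp_\Gf,\delta_\Gf)$ taken modulo ACI of $\oplus$ for part~(1), and the synthesis of closed expressions from a finite coalgebra via fixed-point variables and substitution for part~(2)) in Example~\ref{ex:streams1}, and later explicitly attributes the ACI-finiteness argument to the proof in~\cite{brs_lmcs}. Your identification of the ACI normalisation as the crux of part~(1) is precisely the point the chapter stresses when justifying termination of the decision procedure.
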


In order to provide the reader with intuition regarding the notions presented above, we illustrate them with an example.
\begin{example}\label{ex:streams1}
Let us instantiate the definition of $\Gf$-expressions to the functor of
streams \ans{C.1.13} $\Rf = \Bl\times \id$ (the ingredients of this functor are $\Bl$, $\id$ and $\Rf$ itself).
Let $X$ be a set of
(recursion or) fixed-point variables. The set $\Exp_\Rf$ of {\em stream
expressions\/} is given by the set of closed, guarded expressions generated by the following BNF grammar. For $x \in X$: \ans{C.1.7}
\begin{equation}\label{eq:grammarStr}
\begin{array} {l@{\;}l@{\;}c@{\;}l}
\Exp_\Rf\ni &\E &::\!=&  \emp \mid \E \oplus \E \mid
        \mu x.\E \mid
x \mid l<\tau> \mid r<\E> \\
&\tau &::\!=& \emp \mid b \mid \tau\oplus\tau\\
\end{array}
\end{equation}
\end{example}
Intuitively, the expression $l\langle b \rangle$
is used to specify that the head of the stream is $b$, while
$r\langle \E \rangle$ specifies a stream whose tail behaves
as specified by $\E$.
For the two element join-semilattice $\Bl=\{0,1\}$ (with $\bot_\Bl=0$) examples of well-typed expressions include
$\emp$, $l<1>\oplus r<l<\emp>>$ and $\mu x. r<x> \oplus
l<1>$. The expressions $l[1]$, $l<1>\oplus 1$ and $\mu x. 1$ are examples of
non well-typed
expressions for $\Rf$, because the functor $\Rf$ does not involve $\myplus$, the
subexpressions in the sum have different type, and
recursion is not at the outermost level ($1$ has type $\Bl\lhd \Rf$),
respectively.

By applying the definition in \cite{brs_lmcs}, the coalgebra structure on expressions $\delta_\Rf$ is given by:
\[
\begin{array}{lcl}
\multicolumn{3}{l}{\delta_\Rf \colon \Exp_\Rf \to \Bl\times \Exp_\Rf}\\
\delta_\Rf(\emp) &=& <\bot_{\Bl}, \emp>\\
\delta_\Rf(\E_1 \oplus \E_2) &=& <b_1\vee b_2, \E_1' \oplus\E_2'>\text{ where } <b_i, \E_i'> = \delta_\Rf(\E_i), \ i\in\overline{1,2}\\
\delta_\Rf(\mu x.\E) &=&  \delta_\Rf(\E[\mu x.\E/x]) \\
\delta_\Rf(l<\tau> ) &=& <\delta_{\Bl\lhd \Rf}(\tau),\emp>\\
\delta_\Rf(r<\E> ) &=&  <\bot_\Bl,\E>\\
\delta_{\Bl\lhd \Rf}(\emp) &=& \bot_B\\
\delta_{\Bl\lhd \Rf}(b) &=& b\\
 \delta_{\Bl\lhd \Rf}(\tau\oplus\tau') &=&  \delta_{\Bl\lhd \Rf}(\tau)  \vee \delta_{\Bl\lhd \Rf}(\tau')\\
\end{array}
\]
The proof of Kleene's theorem provides algorithms to go from expressions to streams and
vice-versa. We illustrate it by means of examples.

Consider the following stream:
\[
\xymatrix@C=1.2cm@R=.25cm{
& & &\\
*++[o][F]{s_1} \ar@{<-}[u]\ar[r]\ar@{=>}[d] & *++[o][F]{s_2}\ar@/^/[r]\ar@{=>}[d] & *++[o][F]{s_3}\ar@/^/[l]\ar@{=>}[d]\\
1&0&1
}
\]
We draw the stream with an automata-like flavor. The transitions indicate the tail of the stream represented by a state and the output value the head. In a more traditional notation, the above automata represents the infinite stream $(1,0,1,0,1,0,1,\ldots)$. 

\ans{C.2.4} To compute expressions $\E_1$, $\E_2$ and $\E_3$ equivalent to $s_1$, $s_2$ and $s_3$ we associate with each state $s_i$ a variable $x_i$ and get the equations:
\[
\eps_1 = \mu x_1. l<1> \oplus r<x_2>\ \ \ \eps_2 = \mu x_2 .l<0> \oplus r<x_3>\ \ \
\eps_3 = \mu x_3. l<1> \oplus r<x_2>
\]
As our goal is to remove all the occurrences of free variables in our expressions, we proceed as follows.
First we substitute $x_2$ by $\eps_2$ in $\eps_1$, and $x_3$ by $\eps_3$ in $\eps_2$, and obtain the following expressions:
\[
\eps_1 = \mu x_1. l<1> \oplus r<\E_2>\ \ \
\eps_2 = \mu x_2 .l<0> \oplus r<\E_3>\ \ \
\]
Note that at this point $\eps_1$ and $\eps_2$ already denote closed expressions. Therefore, as a last step, we replace $x_2$ in $\eps_3$ by $\eps_2$ and get the following closed expressions:
\[
\eps_1 = \mu x_1. l<1> \oplus r<\E_2>\ \ \
\eps_2 = \mu x_2 .l<0> \oplus r<\E_3>\ \ \
\eps_3 = \mu x_3. l<1> \oplus r< \mu x_2 .l<0> \oplus r<x_3>>
\]
satisfying, by construction, $\E_1\sim s_1$,  $\E_2\sim s_2$ and  $\E_3\sim s_3$.

For the converse construction, consider the expression $\E = (\mu x. r<x>) \oplus l<1>$. We construct an automaton by repeatedly applying the coalgebra structure on expressions $\delta_\Rf$, modulo associativity, commutativity and idempotency (ACI) of $\oplus$ in order to guarantee finiteness.

\ans{C.1.8} First, note that $\delta_\Rf(\mu x . r <x>) = \delta_\Rf(r<\mu x.r<x>>) = <\bottom_\Bl, \mu x . r <x>>$.
Applying the definition of $\delta_\Rf$ above, we have:
\[
\delta_\Rf(\E) = <1,  (\mu x. r<x>) \oplus \emp> \text{ and } \delta_\Rf((\mu x. r<x>) \oplus \emp) = <0, (\mu x. r<x>) \oplus \emp>
\]
which leads to the following stream (automaton):
\[
\xymatrix@C=1.5cm@R0.25cm{
*++[o][F]{\E}\ar[r]\ar@{=>}[d]  &
*+[l][F-:<3pt>]{(\mu x. r<x>) \oplus \emp}\ar@(dr,ur)\ar@{=>}[d] \\
1&0
}
\]

At this point, we want to remark that the direct application of $\delta_\Rf$, without ACI, might generate infinite automata. Take, for instance, the expression $\E = \mu x. r<x\oplus x>$ . Note that  $\delta_\Rf(\mu x . r <x\oplus x>) = <0,\E\oplus \E>$, $\delta_\Rf(\E\oplus \E) = <0,(\E\oplus \E)\oplus (\E\oplus \E)>$, and so on. This would generate the infinite automaton
\[
\xymatrix@C=1.5cm@R0.25cm{
*++[o][F]{\E}\ar[r]\ar@{=>}[d]  &
*+[l][F-:<3pt>]{\E\oplus\E}\ar@{=>}[d] \ar[r] & *+[l][F-:<3pt>]{(\E\oplus \E)\oplus (\E\oplus \E)}\ar@{=>}[d] \ar[r] & \ldots \\
0&0&0& \ldots
}
\]
\noindent instead of the intended, simple and very finite, automaton
\[
\xymatrix@C=1.5cm@R0.25cm{
*+[l][F-:<3pt>]{\E}\ar@(dr,ur)\ar@{=>}[d] \\
0
}
\]
In order to guarantee finiteness, one needs to identify the expressions modulo ACI, as we will discuss further in this chapter. Moreover, the axiom $\E\oplus \emp \equiv \E$ could also be used in order to obtain smaller automata, but it is not crucial for termination. 

Streams will be often used as a
basic example to illustrate the definitions. It should be remarked
that the framework is general enough to include more complex examples,
such as deterministic automata, automata on guarded strings, Mealy machines and labelled transition systems.
The latter two will be used as examples in Section~\ref{sec:caseStudy}.

\section{Deciding equivalence of expressions}\label{sec:dp}

In this section, we briefly describe the decision procedure to determine whether two generalised regular expressions are equivalent or not. 

The key observation is that point $1.$ of Theorem~\ref{thm:kleene}  above guarantees that each expression in the language for a given system can always be associated with a \emph{finite} coalgebra. Given two expressions $\E_1$ and $\E_2$ in the language $\Exp_\Gf$ of a given functor $\Gf$ we can decide whether they are equivalent by constructing a \emph{finite} bisimulation between them. This is because the finite coalgebra generated from an expression contains precisely all states that one needs to construct the equivalence relation. Even though this might seem like a trivial observation, it has very concrete consequences: for (all well-typed) generalised regular expressions we can always either determine that they are bisimilar, and exhibit a proof in the form of a bisimulation, or conclude that they are not bisimilar and pinpoint the difference by showing why the bisimulation construction failed. Hence, we have a decision procedure for equivalence of generalised regular expressions.

We will give the reader a brief example on how the equivalence check works. Further examples, for different types of systems, including examples of non-equivalence, will appear in Section~\ref{sec:caseStudy}. 

We will show that the stream expressions 
\[\eps_1 = \mu x . r<x> \oplus l<0>\]
and
\[\eps_2 =  r< \mu x . r<x> \oplus l<0> > \oplus l<0>\]
are equivalent. In order to do that, we have to build a bisimulation relation  $R$ on expressions for the stream functor $\Rf$, defined above, such that  $(\eps_1, \eps_2) \in R$. We do this in the following way: we start by taking $R=\{(\E_1,\E_2)\}$ and we check whether this is already a bisimulation, by applying  $\delta_\Rf$ to each of the expressions and checking whether the expressions have the same output value and, moreover,  that no new pairs of expressions (modulo associativity, commutativity and idempotency, for more details see page~\pageref{just:ACI}) appear when taking transitions. Note that, for simplicity, we also use the sound axiom $\E\oplus \emp \equiv \E$. If new pairs of expressions appear we add them to $R$ and repeat the process. Intuitively, for this particular example, the transition structure can be depicted as in Figure~\ref{fig:streams-BC}.

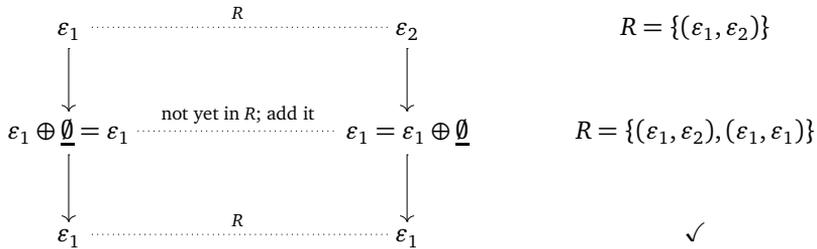
\begin{figure}[h]
\centering
$\xymatrix@C=1.2cm@R=0.8cm{
& {\eps_1}\ar[d]\ar@{..}[rr]^ R & & {\eps_2}\ar[d] & R = \{(\eps_1, \eps_2)\} \\
&{\eps_1 \oplus \emp = \eps_1 }  \ar[d]& &   {\eps_1 = \eps_1 \oplus \emp}\ar[d]\ar@{..}_{\text{not yet in }\R \text{; add it}} [ll]& R = \{(\eps_1, \eps_2), (\eps_1, \eps_1)\}\\
& {\eps_1}\ar@{..}[rr]^ R & & {\eps_1}& \checkmark
}$
\caption{Bisimulation construction}
\label{fig:streams-BC}
\end{figure}

In Figure~\ref{fig:streams-BC}, we omit the output values of the expressions, which are all $0$, and 
use the notation $\xymatrix@C=0.7cm@R=0.7cm{
 {\eps_1}\ar@{..}[r]^R&\eps_2}$ to denote $(\eps_1,\eps_2)\in R$.
Note that $R = \{(\eps_1, \eps_2), (\eps_2, \eps_2)\}$ is closed under transitions and is therefore a bisimulation. Hence, $\eps_1$ and $\eps_2$ are bisimilar and specify the same infinite stream (concretely, the stream with only zeros).

\section{An algebraic view on the coalgebra of expressions}
\label{sec:algSpec}

Recall that our goal is to reason about equality of generalised regular expressions in a fully automated manner. Obtaining this equality can be achieved in two distinct ways: either algebraically, reasoning with the axioms, or coalgebraically, by constructing a bisimulation relation. The latter, because of its algorithmic nature, is particularly suited for automation. Automatic constructions of bisimulations have been widely explored in {\CIRC} and we will use this tool to implement our algorithm. This section contains material that enables us to soundly use {\CIRC}. We want to stress however that the main result of this chapter is the description of a {\em decision procedure} to determine whether two expressions are equivalent or not. This procedure in turn could be implemented in any other suitable tool or even as a standalone application. Choosing {\CIRC} was natural for us, given the pre-existent work on bisimulation constructions.

In short, {\CIRC} is a behavioural extension of Maude~\cite{DBLP:conf/maude/2007} enabling the coinductive definition of infinite data structures by means of the so-called ``derivatives'' ($\delta_{\Gf}$ for the case of generalised regular expressions). The prover allows the (automated) reasoning on properties of such structures by coinduction (or bisimulation construction). The coinductive definitions are fed to {\CIRC} in the shape of algebraic specifications which are closely related to the original mathematical representations. Once a proof obligation is set, {\CIRC} starts the proving mechanism which repeatedly applies the derivatives, and (potentially) stops when a bisimulation containing the initial obligation is reached. For more insight on {\CIRC} we refer to~\cite{goguen-lin-rosu-2000-ase,rosu-lucanu-2009-calco} and Section~\ref{sec:dec-proced}.

In Section~\ref{sec:dec-proced}, we show that the process of generating the $\Gf$-coalgebras associated with expressions by repeatedly applying $\delta_{\Gf}$ and normalising the expressions obtained at each step is closely related to the proving mechanism already existent in {\CIRC}.

In Section~\ref{SCP-sec:prelim}, we have introduced a (theoretical) framework which, given a functor $\G$, allows
for the uniform derivation of 1) a language $\itExp_\itG$ for specifying behaviours of
$\itG$-systems, and 2) a coalgebraic structure on $\itExp_\itG$, which provides an
operational semantics to the set of expressions.
In this context, given that {\CIRC} is based on algebraic specifications, we need two things in order to reach our final goal:
\begin{itemize}\itemsep2pt
\item extend and adapt the framework of Section~\ref{SCP-sec:prelim} in order to enable the implementation of a tool which allows the automatic derivation of \emph{algebraic specifications} that model 1) and 2) above, to deliver to {\CIRC};
\item provide a decision procedure, implemented in {\CIRC} based on an \emph{equational entailment relation}, in order to check
bisimilarity of expressions.
\end{itemize}

In the rest of this chapter we will present the algebraic setting for reasoning on bisimilarity of generalised regular expressions.
A brief overview on the parallel between the coalgebraic concepts in~\cite{brs_lmcs} and their algebraic correspondents introduced in this section is provided later, in Figure~\ref{fig:CoVsAlg}.

An {\em algebraic specification} is a triple
${\mathcal E} = (S,\Sigma,E)$, where $S$ is a set of {\em sorts}, $\Sigma$ is a
{\em $S$-sorted signature} and $E$ is a set of
{\em conditional equations} of the form
$(\forall X)\,t=t'{ ~\textit{if}~}(\bigwedge_{i\in I} u_i=v_i)$, where
$t$, $t'$, $u_i$, and $v_i$ ($i\in I$ -- a set of indices for the conditions) are $\Sigma$-terms with variables in $X$.
We say that the \textit{sort of the equation} is $s$
whenever $t, t' \in {\mathcal T}_{\Sigma, s}({X})$.
Here, ${\mathcal T}_{\Sigma, s}({X})$ denotes the set of terms of sort $s$ of the
$\Sigma$-algebra freely generated by $X$.
If $I=\emptyset$ then the equation is \emph{unconditional} and may
be written as $(\forall X)\,t=t'$.

Let $\vdash$ be the
{\em equational entailment (deduction) relation} defined as in \cite{Goguen92order-sortedalgebra}.
We write $\mathcal E \vdash e$ whenever
equation $e$ is deducible from the equations $E$ in $\mathcal E$ by reflexivity, symmetry, transitivity, congruence or (conditional) substitutivity ({\emph{i.e.}}, whenever $E \vdash e$).

\medskip

The algebraic specification of generalised regular expressions is built on top of definitions based on grammars in Backus-Naur form (BNF), such as (\ref{eq:fun-gram}) and (\ref{eq:grammar}). Next we introduce the general technique for transforming BNF notations into algebraic specifications.

\medskip
The general rule used for translating definitions based on BNF grammars into algebraic specifications is as follows: each syntactical category and vocabulary is considered as a sort and each production is considered as a constructor operation or a subsort relation.

For instance, according to the grammar (\ref{eq:fun-gram}) of non-deterministic functors, we have a sort $\SltN$ -- representing the vocabulary of join-semilattices $\itB$, a sort $\AlphbN$ -- for the vocabulary of the alphabets $A$, a sort $\Fun$ -- associated with the syntactical category of the non-deterministic functors $\Gf$, a subsort relation
$\SltN \ls \Fun$ representing the production $\Gf \,::\!\!= \itB$, and constructor operations for the other productions.

\ans{C.1.14} Generally, each production $A \,::\!\!= {\emph{rhs}}$ gives rise to a constructor $({\emph{rhs}}) \rightarrow (A)$, the direction of the arrow being reversed.
For instance, for grammar (\ref{eq:fun-gram}), the production $\Gf \,::\!\!= \id$ is represented by a constant (nullary operation) $\id : {}\to  \Fun$, and
the sum construction by the binary operation 
\[\_\myplus\!\_ \,\,:\,  \Fun~\Fun \to \Fun.\]

\begin{remark}
\ans{C.3.16}
Note that the above mechanism for translating BNF grammars into algebraic specifications makes use of subsort relations for representing productions such as $\Gf\,::\!\!=\Bl$. This is because {\CIRC} works with order-sorted algebras, and we want to keep the algebraic specifications of non-deterministic functors as close as possible to their implementation in {\CIRC}.
\end{remark}

\medskip
The algebraic specifications of coalgebras of generalised regular expressions are defined in a modular fashion, based on the specifications of:
\begin{itemize}\itemsep0pt
\item non-deterministic functors ($\Gf$);
\item generalised regular expressions ($\eps \in \Exp_\Gf$);
\item ``transition" functions ($\delta_\Gf$);
\item ``structured" expressions ($\sigma \in \F(\Exp_\Gf)$, for all $\F$ ingredients of $\Gf$).
\end{itemize}

Moreover, recall that for a non-deterministic functor $\Gf$, bisimilarity of $\Gf$-expressions is decided based on the relation lifting $\overline{\Gf}$ over ``structured" expressions in $\Gf(\Exp_\Gf)$ (see~(\ref{eq:bisim-alt-def}) in Section~\ref{prelim:coalg}). Therefore, the deduction relation $\vdash$ has to be extended to allow a restricted contextual reasoning over ``structured" expressions in $\F(\Exp_\Gf)$, for all ingredients $\F$ of $\Gf$.

The aforementioned algebraic specifications and the extension of $\vdash$ are modelled as follows.

\medskip
The algebraic specification
of a non-deterministic functor $\Gf$ includes:
\begin{itemize}
\item the translation of the BNF grammar~(\ref{eq:fun-gram}), as presented above;
\item the specification of the functor ingredients, given by a sort $\Ingred$ and a constructor $\_\triangleleft\_{} \ct \Fun~\Fun \to \Ingred$ (according to Definition~\ref{def:ingred});
\item the specification of each alphabet $A=\{a_1,\ldots,a_n\}$ occurring in the definition of $\Gf$: this consists of a subsort $A \ls \Alphb$, a constant $a_i:{}\to A$ for $i\in\overline{1,n}$, and a distinguished constant $A$ of sort $\AlphbN$ used to refer the alphabet in the definition of the functor;
\item the specification of each semilattice  $\itB=(\{b_1,\ldots,b_n\},\lor,\bot_B)$ occurring in the definition of $\Gf$: this consists of a subsort $\itB \ls \Slt$, a constant $b_i:{}\to \itB$ for $i\in\overline{1,n}$, a distinguished constant $\itB$ of sort $\SltN$ used to refer the corresponding semilattice in the definition of the functor, and the equations defining $\lor$ and $\bottom_B$ (this should be one of the $b_i$'s);
\item an equation defining $\Gf$ (as a functor expression).
\end{itemize}

\medskip
The algebraic specification of generalised regular expressions consists of:
\begin{itemize}
\item (according to the BNF grammar
in Definition~\ref{def:expr}) a sort $\Exp$ representing expressions $\eps$, $\Fixpv$ the sort for the vocabulary of the fixed-point variables, and $\Slt$ the sort for the elements of semilattices. Moreover, we consider constructor operations for all the productions. For example, the production $\eps\, ::\!\!= \eps \oplus \eps$ is represented by an operation $\_\oplus\_\,\ct\Exp\,\,\Exp \to \Exp$, \ans{C.4.3} and $\eps\, ::\!\!= \mu x . \gamma$ is represented by $\mu\_.\_\,\ct\Fixpv\,\,\Exp\to\Exp$. (We chose not to provide any restriction to guarantee that $\gamma$ is a guarded expression, at this stage in the definition of $\mu\_.\_$. However, guards can be easily checked by pattern matching, according to the grammars in Definition~\ref{def:expr});

\item the specification of the substitution of a fixed-point variable with an expression, given by an operation
$\textbf{\_[\_\,/\_]} \ct \Exp~\Exp~\Fixpv \to \Exp$ and a set of equations, one for each constructor. \ans{C.4.3} For example, the equations associated with $\emp$ and $\oplus$ are: $\emp[\eps/x] = \emp$, and respectively, $(\eps_1 \oplus \eps_2)[\eps/x] = (\eps_1[\eps/x]) \oplus (\eps_2[\eps/x])$, where $\eps, \eps_1, \eps_2$ are $\Gf$-expressions and $x$ is a fixed-point variable;

\item the specification of the type-checking relation in Definition~\ref{def:ts},
given by an operation 
$\_\,\ct\!\_ \,\,\ct \Exp~\Ingred\to {\mathsf{Bool}}$ and an equation for each inference rule defining this relation. For example the rule
\[
\rules{\vdash \E_1 \colon \F\lhd \Gf\;\;\;\; \vdash\E_2\colon \F\lhd \Gf}{\vdash \E_1\oplus\E_2 \colon \F\lhd \Gf}
\]
is represented by the equation
$
\E_1 \oplus \E_2 \ct \FtrlG =
\E_1 \ct \FtrlG \land \E_2 \ct \FtrlG
$. The type-checking operator is used in order to verify whether the expressions checked for equivalence are well-typed (Definition~\ref{def:g-expr}). Moreover, note that for the consistency of notation, algebraically we write
$\eps \ct \FtrlG$ to represent expressions $\eps$ of type $\FtrlG$.
\end{itemize}

\medskip
The algebraic specification of $\delta_\Gf$ consists of:
\begin{itemize}
\item the specification of the coalgebra of $\Gf$-expressions $\delta_\Gf$ given by three operations
\[
\begin{array}{l}
\delta\_(\_) \ct \Ingred~\Exp\to \ExpS\\
\emph Empty \ct \Ingred \to \ExpS\\
\emph Plus\_(\_,\_) \ct \Ingred~\ExpS~\ExpS\to \ExpS;
\end{array}
\]

\item equations describing the definitions of these operations as in~\cite{brs_lmcs}.
\end{itemize}

\medskip
As mentioned above, the set of $\Gf$-expressions is provided with a coalgebraic structure given by the function $\delta_{\Gf} \colon {\Exp}_{\Gf} \to {\Gf}({\Exp}_{\Gf})$, where ${\Gf}({\Exp}_{\Gf})$ can be understood as the set of expressions with structure given by $\Gf$ (and its ingredients). The set of structured expressions is defined by the following grammar:
\begin{equation}\label{eq:struct-expr}
\sigma \;::\!=\; \eps\mid b \mid \langle\sigma,\sigma\rangle\mid k_1(\sigma)\mid
               k_2(\sigma)\mid\bot\mid\top\mid\lambda {.}(a,\FtrlG,\sigma) \mid \{\sigma\}
\end{equation}
where $\eps \in \Exp_\Gf$ and $b\in \Bl$. The typing rules below give precise meaning to these expressions. Note that $\bot,\top$ are two expressions coming from $\Gf = \Gf_1 \myplus \Gf_2$, used to denote underspecification and overspecification, respectively.

The associated algebraic specification includes:
\begin{itemize}
\item a sort $\ExpS$ representing expressions $\sigma$ (from {$\F(\Exp_\Gf)$}, with $\FtrlG$), and one operation for each production in the BNF grammar~(\ref{eq:struct-expr}). Note that the construction $\lambda {.}(a,\FtrlG,\sigma)$ has as coalgebraic correspondent a function
$f \in \itF^{A}(\Exp_{\Gf})$, and is defined by cases as follows:

\[\lambda {.}(a,\FtrlG,\sigma)(a') = \textnormal{ if } (a=a') \textnormal{ then } \sigma \textnormal{ else } {\emph{Empty}}_{\FtrlG};\]

\item the extension of the type-checking relation to structured expressions, defined by:
\\[0.5ex]
$
\begin{array}{@{\hspace{-0.1cm}}l@{\hspace{0.3cm}}l}
\dfrac{\vdash b\ct \BtrlG}
       {\vdash b\in \Bl({\Exp}\,\Gf)}
&
\dfrac{\vdash \eps\ct \IdtrlG}
       {\vdash \eps\in {\id}({\Exp}\,\Gf)}
\\[3ex]
\dfrac{}
       {\vdash \bot\in \F_1{\myplus}\F_2({\Exp}\,\Gf)}
&
\dfrac{}
       {\vdash \top\in \F_1{\myplus}\F_2({\Exp}\,\Gf)}
\\[3ex]
\dfrac{\vdash \sigma\in \F_i({\Exp}\,\Gf)}
       {\vdash k_i(\sigma)\in \F_1{\myplus}\F_2({\Exp}\,\Gf)}~i\in\overline{1,2}
&
\dfrac{\vdash \sigma_1\in \F_i({\Exp}\,\Gf)\,\,\,\,\,\vdash \sigma_2\in \F_i({\Exp}\,\Gf)}
       {\vdash \langle\sigma_1,\sigma_2\rangle\in
\F_1{\times}\F_2({\Exp}\,\Gf)}\\[3ex]
\dfrac{\vdash \sigma\in \F({\Exp}\,\Gf),~a\in A}
       {\vdash \lambda.(a,\FtrlG,\sigma)\in \F^A({\Exp}\,\Gf)}
&
\dfrac{\vdash \sigma\in \F({\Exp}\,\Gf)}
       {\vdash \{\sigma\}\in \pow \F({\Exp}\,\Gf)}
\end{array}
$\\[1.5ex]
and specified by an operation $\_\in\!\_({\Exp}\,\_) \,\ct \ExpS~\Fun~\Fun\to {\mathsf{Bool}}$
(where we used a mix-fix notation) and an equation for
each of the above inference rules.
For example, the first rule has associated the equation $b\in \Bl({\Exp\,\Gf}) = b\ct \BtrlG$.
For consistency of notation, we write
$\sigma \in \F({\Exp}_\Gf)$ to denote that $\sigma$ is an element
of $\F({\Exp}_\Gf)$.
\end{itemize}

\begin{remark}
In terms of membership equational logic (MEL)~\cite{Bouhoula-Jouannaud-Meseguer00},
both $\FtrlG$ and\linebreak $\F({\Exp}\,\Gf)$ can be thought of as being sorts and,
for example, $\eps\ct\FtrlG$ as a membership assertion. Even if MEL is an elegant theory,
we prefer not to use it here because this implies the dynamic declaration of sorts and a set of assertions for such a sort.  The above approach is generic and therefore more flexible.
\end{remark}

\medskip
As previously hinted at the beginning of this section, in order to algebraically reason on bisimilarity of $\Gf$-expressions in {\CIRC}, one has to extend the deduction relation $\vdash$ to allow a restricted contextual reasoning on expressions in $\F(\Exp_\Gf)$, for all ingredients $\F$ of a non-deterministic functor $\Gf$. We call the extended entailment $\vdashInd$.

The aforementioned restriction refers to inhibiting the use of congruence during equational reasoning, in order to guarantee the soundness of {\CIRC} proofs. This is realised by means of a \emph{freezing operator}, which intuitively behaves as a wrapper on the expressions checked for equivalence, by changing their sort to a fresh sort ${\mathsf{Frozen}}$. This way, the hypotheses collected during a {\CIRC} proof session cannot be used freely in contextual reasoning, hence preventing the derivation of untrue equations (as illustrated in Example~\ref{ex:streams}).

We further show how the freezing mechanism is implemented in our algebraic setting, and define $\vdashInd$.

Let ${\mathcal E}$ be an algebraic specification. We extend $\mathcal E$ by adding the freezing operation
$\efr{-} \ct s \rightarrow {\mathsf{Frozen}}$ for each sort $s \in \Sigma$,
where $\mathsf{Frozen}$ is a fresh sort.
By $\efr{t}$ we represent the \textit{frozen} form of a $\Sigma$-term $t$,
and by
$\efr{e}$ a \textit{frozen equation} of the shape
$(\forall X)\, \efr{t\phantom{\!'}} = \efr{t'} \textnormal{ if } c$. Note that, according to~\cite{rosu-lucanu-2009-calco}, conditions $c$ need not to be frozen, as their (so-called visible) sort does not allow their collection into the set of {\CIRC} hypotheses. The entailment relation $\vdash$ is defined over frozen equations following~\cite{rosu-lucanu-2009-calco}; \ans{C.1.11, C.2.5} more details are provided in Section~\ref{sec:dec-proced}.

Recall that a relation
$\R \subseteq \Exp_\Gf \times \Exp_\Gf$ is a bisimulation if and only if $(s, t) \in \R \Rightarrow (\delta_{\GtrlG}(s), \delta_{\GtrlG}(t))\in \overline \Gf(\R)$. Here, $\overline{\Gf}(\R) \subseteq \Gf(\Exp_\Gf) \times \Gf(\Exp_\Gf)$ is the lifting of the relation $\R \subseteq \Exp_\Gf \times \Exp_\Gf$, defined as 

$$\overline \Gf(\R) = \{ (\Gf(\pi_1)(x),\Gf(\pi_2)(x)) \mid x \in \Gf(\R) \}\ .$$

 So, intuitively, reasoning on bisimilarity of two expressions $(\eps, \eps')$ in $\R$ reduces to checking whether the application of $\delta_\Gf$ maps them into $\overline{\Gf}(\R)$.
 
Therefore, checking whether a pair $(s^{\delta}, t^{\delta})$ is in $\overline{\Gf}(\R)$ consists in checking, for example for the case of $\Gf = \Gf_1 \times \Gf_2$, whether $(s_{1}^{\delta}, t_{1}^{\delta}) \in \overline{\Gf_1}(\R)$ and $(s_{2}^{\delta}, t_{2}^{\delta}) \in \overline{\Gf_2}(\R)$, where $s^{\delta} = <s^{\delta}_{1}, s^{\delta}_{2}>$ and $t^{\delta} = <t_{1}^{\delta}, t_{2}^{\delta}>$.
In an algebraic setting, this would reduce to building an algebraic specification $\algspecN$ and defining an entailment relation $\vdashInd$ such that one can infer $\algspecN \vdashInd \efr{<s^{\delta}_1, s^{\delta}_2>} = \efr{<t^{\delta}_1, t^{\delta}_2>}$ (this is the algebraic correspondent we consider for $(<s^{\delta}_1, s^{\delta}_2>, <t^{\delta}_1, t^{\delta}_2>) \in \overline{\Gf}(\R)$) by showing $\algspecN \vdashInd \efr{s^{\delta}_1} = \efr{t^{\delta}_1}$ (or $(s^{\delta}_1, t^{\delta}_1) \in \overline{\Gf_1}(\R)$) and 
$\algspecN \vdashInd \efr{s^{\delta}_2} = \efr{t^{\delta}_2}$ (or $(s^{\delta}_2, t^{\delta}_2) \in \overline{\Gf_2}(\R)$).
We hint that the aforementioned algebraic specification $\algspecN$ consists of $\algspec$ and a set of frozen equations (see Corollary~\ref{cor:ii}).

The entailment relation $\vdashInd$ for reasoning on bisimilarity of $\Gf$-expressions is based on the definition of $\overline{\Gf}$.
%
%
\begin{definition}
\label{def:PF}
The entailment relation $\vdashInd$ is the extension of $\vdash$ with the following inference rules,
which allow a  restricted contextual reasoning over the frozen equations of structured expressions:
\begin{equation}
\frac{\algspec  \vdashInd \efr{\sigma_1 \phantom{\hspace{-2.5ex}\sigma_{1}'}} = \efr{\sigma'_1} \,\,\,\,\,\,
\algspec  \vdashInd \efr{\sigma_2\phantom{\hspace{-2.5ex}\sigma_{2}'}} = \efr{\sigma_2'}}
{\algspec  \vdashInd \efr{\langle \sigma_1, \sigma_2\rangle} =
\efr{\langle \sigma_1', \sigma_2'\rangle}}
\label{rl:times}
\end{equation}
\vspace{-1ex}
\begin{equation}
\frac{\algspec  \vdashInd \efr{\sigma\phantom{\hspace{-2.1ex}\sigma'}} = \efr{\sigma'}}
{\algspec  \vdashInd \efr{k_i(\sigma)} = \efr{k_i(\sigma')}}~i \in \overline{1,2}
\label{rl:plus}
\end{equation}
\begin{equation}
\frac{\algspec  \vdashInd \efr{f(a)} = \efr{g(a)}\, , ~{\emph for~all~} a \in {\emph A}}
{\algspec  \vdashInd \efr{f} = \efr{g\phantom{\hspace{-1.1ex}f}}}\label{rl:expo}
\end{equation}
\begin{equation}
\frac{\algspec  \vdashInd \efr{\sigma_{i_1}\phantom{\hspace{-3.0ex}\sigma'_{j_1}}} = \efr{\sigma'_{j_1}}\,,
\ldots,\,
\algspec  \vdashInd \efr{\sigma_{i_k}\phantom{\hspace{-3.4ex}\sigma'_{j_k}}} = \efr{\sigma'_{j_k}}}
{\algspec  \vdashInd \efr{\{\sigma_1, \ldots, \sigma_n\}} = \efr{\{\sigma'_1, \ldots, \sigma_m'\}}}{\small
\begin{array}{l} \{i_1,\ldots, i_k\} = \{1,\ldots, n\}\\  \{j_1,\ldots, j_k\} = \{1,\ldots, m\}\end{array}
}\label{rl:pow}
\end{equation}

\end{definition}

\begin{remark}
Note that the extension of the entailment relation $\vdash$ to $\vdashInd$ implies that\linebreak
${\algspec  \vdash e} \textnormal{\emph{ iff }} {\algspec  \vdashInd e}$ holds, for any equation $e$ of shape $\,\efr{\eps_1} = \efr{\eps_2}$ or $\eps_1 = \eps_2$, with $\eps_1, \eps_2$ non-structured expressions. Below, we will use the notation $\algspec  \vdashInd E$, where $E$ is a set of possibly frozen equations, to denote $\forall_{e\in E}\cdot\algspec  \vdashInd e$. 
\end{remark}

It is interesting to recall the relation lifting for the powerset functor which is encoded in the last rule of Definition~\ref{def:PF}. A pair $(U,V)$ is in  $\overline{\pow\Gf}(\R)$ if and only if for every $u\in U$ there exists a $v\in V$ such that $(u,v)$ belongs to $\overline\Gf(\R)$ and, conversely, for every $v\in V$, there exists a $u\in U$ such that $(u,v)$ belongs to $\overline\Gf(\R)$.

\begin{remark}
\ans{C.2.9}
As already hinted (and proved in Corollary~\ref{cor:ii}), reasoning on bisimilarity of expressions in a binary relation $\R \subseteq \Exp_\Gf \times \Exp_\Gf$ reduces to showing that 
$\efr{\delta_\Gf(s)} = \efr{\delta_\Gf(t)}$ is a $\vdashInd$-consequence, for all $(s,t) \in \R$.
The equational proof is performed in a ``top-down" fashion, by reasoning on the subsequent equalities between the components of the corresponding structured expression $\delta_\Gf(s)$, $\delta_\Gf(t)$ in an inductive manner. This is realised by applying the inverted rules (\ref{rl:times})--(\ref{rl:pow}).

Moreover, note that rule (\ref{rl:pow}) is not invertible in the usual sense; rather any statement matching the form of the conclusion can only be proved by some instance of the rule.
\end{remark}


We will further formalise the connection between the inductive definition of $\overline{\Gf}$ (on the coalgebraic side)
and $\vdashInd$ (on the algebraic side) in Theorem~\ref{thm:i}, hence enabling the
definition of bisimulations in algebraic terms,
in Corollary~\ref{cor:ii}.

\begin{remark}\label{rem:eqRed}
Equations in $\algspec$ (built as previously described in this section) are used in the equational reasoning only for reducing terms of shape ${\mathsf{op}}(t_1, \ldots, t_n)$ according to the definition of the operation ${\mathsf{op}}$. For the simplicity of the proofs of Theorem~\ref{thm:i} and Corollary~\ref{cor:ii}, whenever we write ${\mathsf{op}}(t_1, \ldots, t_n)$, we refer to the associated term reduced according to the definition of ${\mathsf{op}}$.
\end{remark}

First we introduce some notational conventions.
Let $\Gf$ be a non-deterministic functor and $\R \subseteq \Exp_{\Gf} \times \Exp_{\Gf}$.
We write:
\begin{itemize}\itemsep1pt
\item $\RId$ to denote the set $\RG \cup \{(\eps, \eps) \mid\algspec \vdash \eps \ct \GtrlG = {\emph true}\}$;
\item $\cl{\R}$ for the closure of $\R$ under transitivity, symmetry and reflexivity;
\item $\efr{R}$ to represent the set $\bigcup_{e \in R}\{\efr{e}\}$;
(application of the freezing operator to all elements of $\R$)
\item $\delta_{\GtrlG}(\eps=\eps')$ to represent the equation $\delta_{\GtrlG}(\eps)=\delta_{\GtrlG}(\eps')$;

\item $\algspec \cup \efr{\R}$ as a shorthand for
$(S, \Sigma, E \cup \{\efr{\E\phantom{\hspace{-1.5ex}\E'}} = \efr{\E'} \mid (\E, \E') \in \R \})$, where $\algspec = (S, \Sigma, E)$;

\item \ans{C.2.6} $(\sigma, \sigma') \in  \overline\Gf(\RG)$ as a shorthand for: $(\sigma, \sigma')$ is among the enumerated elements of a set $S$ explicitly constructed as an enumeration of the finite set $\overline\Gf(\RG)$ (in the algebraic setting, $\overline\Gf(\RG)$ is a subset of ${\mathcal T}_{\Sigma, \ExpS\!\!} \times {\mathcal T}_{\Sigma, \ExpS\!\!}$ and $\algspec \vdash \overline\Gf(\RG) = S$).

\end{itemize}

\begin{theorem}\label{thm:i}
Consider a non-deterministic functor $\Gf$. Let $\F$ be an ingredient of $\Gf$, $\RG$ a binary relation on
the set of $\Gf$-expressions,
and $\sigma, \sigma' \in {\F}(\Exp_{\itG})$.
\begin{itemize}\itemsep2pt
\item[a)] If $\Gf$ is not a constant functor, then
$(\sigma, \sigma') \in \overline{\F}(\cl{\RId})$ iff
$\algspec \cup \efr{\RG} \vdashInd \efr{\sigma} = \efr{\sigma'}$;

\item[b)] If $\Gf$ is a constant functor $\Bl$, then
$(\sigma, \sigma') \in \overline{\Bl}(\cl{\RId})$ iff
$\algspec \vdashInd \efr{\sigma\phantom{\hspace{-1.7ex}\sigma'}} = \efr{\sigma'}$.
\end{itemize}
\end{theorem}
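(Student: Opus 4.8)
The plan is to proceed by structural induction on the ingredient $\F$ of $\Gf$, following the grammar~(\ref{eq:fun-gram}): the base cases are $\F=\id$ and $\F=\Bl$, and the inductive cases are $\F=\F_1\times\F_2$, $\F=\F_1\myplus\F_2$, $\F=\F_1^A$ and $\F=\pow\F_1$ (since whenever $\F$ is an ingredient of $\Gf$ so are $\F_1$ and $\F_2$, the induction is well-founded). In each case I would unfold the definition of the relation lifting $\overline\F$ according to~(\ref{eq:bisim-alt-def}) and match it against the corresponding inference rule of $\vdashInd$ from Definition~\ref{def:PF}. Throughout, by Remark~\ref{rem:eqRed} I may assume that all structured expressions appearing are already in the reduced form obtained by applying the (unfrozen) equations of $\algspec$, so that the only deduction steps available for deriving a \emph{frozen} equation are reflexivity, symmetry, transitivity, the frozen hypotheses in $\efr{\RG}$, and the rules~(\ref{rl:times})--(\ref{rl:pow}); this is precisely the point of the freezing operator, which changes the sort to $\mathsf{Frozen}$ and thereby blocks arbitrary congruence on frozen terms. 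The ``$\Rightarrow$'' direction of each equivalence then shows that these rules \emph{suffice} to capture the lifting, while the harder ``$\Leftarrow$'' direction shows that they derive \emph{nothing more}, and is carried out by a secondary induction on the derivation tree.

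For the base case $\F=\id$ (relevant when $\Gf$ is not constant), note that $\overline{\id}(\cl{\RId})=\cl{\RId}$. For ``$\Rightarrow$'' I would induct on the construction of $\cl{\RId}$: a pair in $\RG$ is a frozen axiom of $\algspec\cup\efr{\RG}$; a reflexive pair $(\eps,\eps)$ with $\algspec\vdash\eps\ct\GtrlG$ holding gives $\efr{\eps}=\efr{\eps}$ by reflexivity; and closure under symmetry and transitivity is matched by the symmetry and transitivity of $\vdashInd$. For ``$\Leftarrow$'' I would induct on the $\vdashInd$-derivation of $\efr{\sigma}=\efr{\sigma'}$, observing that the rules~(\ref{rl:times})--(\ref{rl:pow}) conclude frozen equations only between structured expressions whose top symbol is $\langle-,-\rangle$, $k_i(-)$, a function, or $\{-\}$, and hence cannot conclude an equation between two plain expressions, so every step is reflexivity, symmetry, transitivity, or a frozen hypothesis in $\efr{\RG}$ --- exactly the generators of $\cl{\RId}$. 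The base case $\F=\Bl$ (which also yields part~b) is similar but simpler: $\overline{\Bl}(\cl{\RId})$ is the diagonal of $\Bl$, and after reduction both $\sigma$ and $\sigma'$ are constants of $\Bl$, so $\algspec\vdashInd\efr{\sigma}=\efr{\sigma'}$ iff $\sigma=\sigma'$ iff $(\sigma,\sigma')\in\overline{\Bl}(\cl{\RId})$; adding the frozen hypotheses $\efr{\RG}$ (in the part-a instance) changes nothing, since these relate plain expressions of a different sort and cannot enter a derivation of a $\Bl$-equation.

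For the inductive cases I would combine the definition of $\overline\F$, the inductive hypothesis, and the matching rule. For $\F=\F_1\times\F_2$: a pair lies in $\overline{\F_1\times\F_2}(\cl{\RId})$ iff it is $(\langle\sigma_1,\sigma_2\rangle,\langle\sigma_1',\sigma_2'\rangle)$ with $(\sigma_i,\sigma_i')\in\overline{\F_i}(\cl{\RId})$; by the inductive hypothesis this is equivalent to $\algspec\cup\efr{\RG}\vdashInd\efr{\sigma_i}=\efr{\sigma_i'}$ for $i\in\overline{1,2}$, and by rule~(\ref{rl:times}) together with the fact that a frozen equation between two pairing terms can only be produced by reflexivity or by~(\ref{rl:times}), this is equivalent to $\algspec\cup\efr{\RG}\vdashInd\efr{\langle\sigma_1,\sigma_2\rangle}=\efr{\langle\sigma_1',\sigma_2'\rangle}$. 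The case $\F=\F_1\myplus\F_2$ uses rule~(\ref{rl:plus}) analogously, with the extra observation that $(\bot,\bot)$ and $(\top,\top)$ are handled by reflexivity while $\efr{\bot}=\efr{\top}$ (and $\efr{k_1(\sigma)}=\efr{k_2(\sigma')}$) is not $\vdashInd$-derivable, matching its absence from the lifting. The case $\F=\F_1^A$ uses rule~(\ref{rl:expo}), with $A$ finite so that its premise is a finite set of equations, one per $a\in A$. The case $\F=\pow\F_1$ uses rule~(\ref{rl:pow}): its side conditions $\{i_1,\dots,i_k\}=\{1,\dots,n\}$ and $\{j_1,\dots,j_k\}=\{1,\dots,m\}$ are exactly the transcription of the powerset relation lifting (every element of the first set matched, and every element of the second), so the inductive hypothesis applied component-wise yields the equivalence; here, as noted after Definition~\ref{def:PF}, (\ref{rl:pow}) is not invertible in the usual sense, but a frozen equation between two set terms can only be obtained from some instance of~(\ref{rl:pow}) (or reflexivity), which is what the ``$\Leftarrow$'' direction needs. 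In the mixed cases where a component $\F_i$ is a constant functor $\Bl$, the inductive hypothesis is the $\Bl$-base case, stated with $\algspec$ rather than $\algspec\cup\efr{\RG}$; the transfer is immediate, since the additional frozen hypotheses are plain-sorted and irrelevant to $\Bl$-equations. The main obstacle is precisely this uniform ``$\Leftarrow$'' argument: ruling out ``exotic'' derivations. The key lemma underlying every case is that any $\vdashInd$-derivable frozen equation $\efr{u}=\efr{v}$ has $u$ and $v$ of the same sort and is obtained by reflexivity, symmetry, transitivity, a frozen hypothesis in $\efr{\RG}$, or an instance of~(\ref{rl:times})--(\ref{rl:pow}) whose conclusion matches the top constructor of $u$ --- nothing else being available because freezing forbids general congruence and, by Remark~\ref{rem:eqRed}, the unfrozen equations of $\algspec$ have already been consumed in normalising the terms and contribute only reflexivity instances at the frozen level. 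Once this structural pinning-down of derivations is in place, the induction on $\F$ above goes through routinely, and Corollary~\ref{cor:ii} follows by specialising to $\F=\Gf$.
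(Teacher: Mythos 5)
Your proposal is correct and follows essentially the same route as the paper: structural induction on the ingredient $\F$, with the $\id$ base case handled by the closure of $\RId$ under reflexivity, symmetry and transitivity (the paper's Lemma~\ref{lm:idir}), the $\Bl$ case reduced to the diagonal, and the inductive cases matched one-to-one against rules~(\ref{rl:times})--(\ref{rl:pow}), using freezing and Remark~\ref{rem:eqRed} to pin down the shape of derivations in the ``$\Leftarrow$'' direction. Your explicit statement of the ``no exotic derivations'' lemma makes precise what the paper leaves implicit, but it is the same argument.
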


\noindent
In order to prove Theorem~\ref{thm:i}.$a)$ we introduce the following lemma:

\begin{lemma}\label{lm:idir}
Consider $\Gf$ a non-deterministic functor and
$\RG$ a binary relation on
the set of $\Gf$-expressions.
If $(\eps, \eps') \in \cl{\RId}$ then
$\algspec \cup \efr{\RG} \vdashInd \efr{\eps\phantom{\hspace{-1.7ex}\eps'}} = \efr{\eps'}$.
\end{lemma}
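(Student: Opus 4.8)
The plan is to prove this by induction on a chain that witnesses $(\eps,\eps') \in \cl{\RId}$, using only the reflexivity, symmetry and transitivity rules of the equational deduction relation. First I would recall that, by definition, $\cl{\RId}$ is the least equivalence relation containing $\RId = \RG \cup \{(\zeta,\zeta) \mid \algspec \vdash \zeta \ct \GtrlG = \mathit{true}\}$; hence $(\eps,\eps') \in \cl{\RId}$ holds precisely when there is a finite sequence $\eps = \eps_0, \eps_1, \ldots, \eps_n = \eps'$ (with $n \geq 0$) such that for every $i$ the pair $(\eps_i,\eps_{i+1})$ lies in $\RId$ or in its converse. I would then induct on the length $n$ of such a chain.

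For $n = 0$ we have $\eps = \eps'$, and $\algspec \cup \efr{\RG} \vdashInd \efr{\eps} = \efr{\eps}$ holds by reflexivity of $\vdash$ (which $\vdashInd$ extends), applied to the frozen term $\efr{\eps}$. For the inductive step, the induction hypothesis applied to the prefix $\eps_0,\ldots,\eps_{n-1}$ gives $\algspec \cup \efr{\RG} \vdashInd \efr{\eps_0} = \efr{\eps_{n-1}}$, so by transitivity it suffices to establish $\algspec \cup \efr{\RG} \vdashInd \efr{\eps_{n-1}} = \efr{\eps_n}$. Up to an application of symmetry we may assume $(\eps_{n-1},\eps_n) \in \RId$, and there are two sub-cases: if $(\eps_{n-1},\eps_n) \in \RG$, then $\efr{\eps_{n-1}} = \efr{\eps_n}$ is literally one of the frozen equations adjoined to $\algspec$ in forming $\algspec \cup \efr{\RG}$, so it is derivable in a single step; if instead $(\eps_{n-1},\eps_n)$ belongs to the diagonal part of $\RId$ (so $\eps_{n-1} = \eps_n$, and $\eps_{n-1}$ is well typed), we conclude again by reflexivity. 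Chaining these derivations finishes the induction.

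The only point that really needs care — and the reason the statement is phrased with $\vdashInd$ rather than $\vdash$ throughout — is that this equivalence-closure reasoning must be legitimate at the fresh sort $\mathsf{Frozen}$. This is immediate: $\vdashInd$ is by construction an extension of $\vdash$, and $\vdash$ provides reflexivity, symmetry and transitivity uniformly for equations of every sort, in particular for frozen equations; the freezing operator $\efr{-}$ only inhibits the use of congruence and (conditional) substitutivity to propagate equalities into surrounding contexts, and none of those rules is invoked above. So there is no substantive obstacle here — the argument is essentially the routine observation that an entailment relation which subsumes the axioms $\efr{\RG}$ and is closed under the equivalence-relation rules must contain the equivalence closure of $\RId$, with the diagonal pairs of $\RId$ dispatched by reflexivity.
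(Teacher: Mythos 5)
Your proof is correct and matches the paper's argument: the paper dispatches this lemma as "trivial, as equality is reflexive, symmetric and transitive," which is exactly the chain/equivalence-closure induction you spell out (your base cases for $\RG$-pairs and diagonal pairs, and your use of symmetry and transitivity of $\vdashInd$, coincide with the paper's reasoning). Your closing remark that freezing only inhibits congruence and substitutivity, not the equivalence-relation rules, is consistent with how the paper defines $\vdashInd$ over frozen equations.
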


\begin{proof}
\ans{C.2.7}
The proof is trivial, as equality is reflexive, symmetric and transitive.
\qed
\end{proof}

We are now ready to prove Theorem~\ref{thm:i}.
\begin{proof}[Theorem~\ref{thm:i}]
\verb##
\begin{itemize}
\item Proof of Theorem~\ref{thm:i}.$a)$.\\
\begin{itemize}\itemsep10pt
\item {$``\Rightarrow"$.}
The proof is by induction on the structure of $\itF$.\\
\textit{Base case}:
\begin{itemize}\itemsep6pt
\item $\itF = \itB$. It follows that $(\sigma,\sigma')$ is of shape
$(b, b)$ where $b\in \itB$, therefore
$\algspec \cup \efr{\RG} \vdashInd \efr{b} = \efr{b}$ holds
by reflexivity.

\item $\itF = \id$. In this case $(\sigma, \sigma') \in cl(\RId) = \overline{\id}(cl(\RId))$,
so the result follows immediately by Lemma~\ref{lm:idir}.
\end{itemize}
\textit{Induction step}:
\begin{itemize}\itemsep6pt
\item ${\itF} = {\itF}_1 \times {\itF}_2$. Obviously,
$\sigma = < \sigma_1, \sigma_2>$ and $\sigma' = < \sigma'_1, \sigma'_2>$,
where $(\sigma_1, \sigma'_1) \in \overline{{\itF}_1}(cl(\RId))$
and $(\sigma_2, \sigma'_2) \in \overline{{\itF}_2}(cl(\RId))$.
Therefore, by the induction hypothesis, both
$\algspec \cup \efr{\RG} \vdashInd \efr{\sigma_1\phantom{\hspace{-2.3ex}\sigma'_{1}}} = \efr{\sigma'_1}$
and
$\algspec \cup \efr{\RG} \vdashInd \efr{\sigma_2\phantom{\hspace{-2.3ex}\sigma'_{2}}} = \efr{\sigma'_2}$ hold.
Hence, according to the definition of $\vdashInd$ (see~(\ref{rl:times})),
we conclude that
$\algspec \cup \efr{\RG} \vdashInd \efr{<\sigma_1, \sigma_2>} = \efr{<\sigma'_1, \sigma'_2>}$ holds.

\item The cases
${\itF} = {\itF}_1 \myplus {\itF}_2$, ${\itF} = {\itF}_1^{\emph A}$ and ${\itF} = \pow{\itF'}$
are handled in a similar way.
\end{itemize}

\item {$``\Leftarrow"$.}
We proceed also by induction on the structure of $\itF$. Moreover, recall that the observations in Remark~\ref{rem:eqRed} hold (for each of the subsequent cases).
\\
\textit{Base case}:
\begin{itemize}\itemsep6pt
\item $\itF = \Bl$. In this case $(\sigma, \sigma')$ is of shape
$(b, b')$, where $b, b'$ are two elements of the
semilattice $\Bl$.
Also, recall that $\Gf \not = \Bl$, therefore, the equations (of type $\GtrlG \not = \F(\Exp_\Gf)$) in $\R$ are not involved in the equational reasoning. We deduce that $\efr{b\phantom{\hspace{-1.3ex}b'}} = \efr{b'}$
is proved by reflexivity, hence $(b, b') = (b, b) \in \overline{\Bl}(\cl{\RId})$.

\item $\itF = \id$. Note that for this case, $\sigma, \sigma'$ are expressions of the same type with the expressions in $\R$. We further identify two possibilities:
\begin{itemize}\itemsep2pt
\item $\efr{\sigma\phantom{\hspace{-1.7ex}\sigma'}} = \efr{\sigma'}$ is proved by reflexivity.
Therefore $(\sigma, \sigma') \in \{(\eps, \eps) \mid \eps : \GtrlG\} \subseteq
\RId \subseteq \cl{\RId} = \overline{\id}(\cl{\RId})$.

\item The equations in
$\efr{\RG}$ are used in the equational reasoning
$\algspec \cup \efr{\RG} \vdashInd \efr{\sigma\phantom{\hspace{-1.7ex}\sigma'}} = \efr{\sigma'}\,$.
In addition, the freezing operator inhibits contextual reasoning, therefore $\efr{\sigma\phantom{\hspace{-1.7ex}\sigma'}} = \efr{\sigma'}$ is proved according to the equations in $\efr{\R}$, based on the symmetry and transitivity of $\vdashInd$.
In other words, $(\sigma, \sigma') \in \cl{\RId} = \overline{\id}(\cl{\RId})$.

\end{itemize}
\end{itemize}
\textit{Induction step}:
\begin{itemize}\itemsep6pt
\item ${\itF} = {\itF}_1 \times {\itF}_2$.
Obviously, due to their type, the equations in $\R$ are not involved in the equational reasoning.
Therefore, $\algspec \cup \efr{\RG} \vdashInd \efr{<\sigma_1, \sigma_2>} = \efr{<\sigma'_1, \sigma'_2>}$ is a consequence of the inverted rule~(\ref{rl:times}). More explicitly, it follows that 
$\algspec \cup \efr{\RG} \vdashInd \efr{\sigma_1\phantom{\hspace{-2.3ex}\sigma'_{1}}} = \efr{\sigma'_1}$
and $\algspec \cup \efr{\RG} \vdashInd \efr{\sigma_2\phantom{\hspace{-2.3ex}\sigma'_{2}}} = \efr{\sigma'_2}$ must hold.
By the induction hypothesis, we deduce that
$(\sigma_1, \sigma'_1) \in \overline{\F}_1(\cl{\RId})$
and $(\sigma_2, \sigma'_2) \in \overline{\F}_2(\cl{\RId})$.
So by the definition of $\overline{{\itF}_1 \times {\itF}_2}$
we conclude that
$(\langle \sigma_1, \sigma_2 \rangle, \langle \sigma'_1, \sigma'_2 \rangle) =
(\sigma, \sigma') \in \overline{{\itF}_1 \times {\itF}_2}(\RG)$.

\item The cases
${\itF} = {\itF}_1 \myplus {\itF}_2$, ${\itF} = ({\itF}_1)^{\emph A}$
and ${\itF} = \pow{\itF'}$
follow a similar reasoning.
\end{itemize}
\end{itemize}

\item Proof of Theorem~\ref{thm:i}.$b)$. It follows immediately by the definition of $\overline{\Bl}$ and Remark~\ref{rem:eqRed}.
\end{itemize}
\qed
\end{proof}

\begin{remark}
\label{rmk:theorem-cases}
For a more intuitive justification on the distinction of constant/non-constant functor in Theorem~\ref{thm:i},
note that in {\CIRC}, proof obligations $\efr{\eps} = \efr{\eps'}$ of a type (sort) that serves as ``base case'' in the co-recursive definitions are not collected as hypotheses during a proof session. Hence, in the context of $\Gf$-expressions, whenever $\Gf = \Bl$, the hypotheses set $\RG$ is empty. Consequently, a corresponding obligation $\efr{\eps} = \efr{\eps'}$ of type $\Bl$ is proved only according to the equations in $\algspec$, by applying transitivity, symmetry and reflexivity.
\end{remark}

\begin{corollary}\label{cor:ii}
Let $\Gf$ be a non-deterministic functor and 
$\RG$ a binary relation on the set of $\Gf$-expressions.
\begin{itemize}\itemsep2pt
\item[a)] If $\Gf$ is a non-constant functor, then
$\cl{\RId}$ is a bisimulation iff
$\algspec \cup \efr{\RG} \vdashInd \efr{\delta_{\GtrlG}(\RG)}$;

\item[b)] If $\Gf$ is a constant functor $\Bl$, then
$\cl{\RId}$ is a bisimulation iff
$\algspec \vdashInd \efr{\delta_{\GtrlG}(\RG)}$.
\end{itemize}
\end{corollary}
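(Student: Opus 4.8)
The plan is to obtain Corollary~\ref{cor:ii} directly from Theorem~\ref{thm:i}, applied with $\F = \Gf$ (legitimate since $\lhd$ is reflexive, so $\Gf$ is an ingredient of itself), together with the observation that $\vdashInd$ inherits reflexivity, symmetry and transitivity from the underlying equational entailment $\vdash$. First I would unfold the definition of bisimulation: $\cl{\RId}$ is a bisimulation precisely when, for every $(s,t) \in \cl{\RId}$, one has $(\delta_{\GtrlG}(s), \delta_{\GtrlG}(t)) \in \overline{\Gf}(\cl{\RId})$. Since $\delta_{\GtrlG}(s), \delta_{\GtrlG}(t) \in \Gf(\Exp_\Gf)$, Theorem~\ref{thm:i}.$a)$ (in the non-constant case) turns this membership into $\algspec \cup \efr{\RG} \vdashInd \efr{\delta_{\GtrlG}(s)} = \efr{\delta_{\GtrlG}(t)}$. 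So ``$\cl{\RId}$ is a bisimulation'' becomes ``$\algspec \cup \efr{\RG} \vdashInd \efr{\delta_{\GtrlG}(s)} = \efr{\delta_{\GtrlG}(t)}$ for all $(s,t) \in \cl{\RId}$'', and it remains only to replace the quantification over $\cl{\RId}$ by one over $\RG$, that is, by $\algspec \cup \efr{\RG} \vdashInd \efr{\delta_{\GtrlG}(\RG)}$.

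For that reduction, the forward implication is immediate from $\RG \subseteq \RId \subseteq \cl{\RId}$. For the converse I would induct on the generation of $\cl{\RId}$ as the closure of $\RId = \RG \cup \{(\eps,\eps) \mid \algspec \vdash \eps \ct \GtrlG = {\emph true}\}$ under reflexivity, symmetry and transitivity: the base case $(s,t) \in \RG$ is the hypothesis; the base case $(s,t) = (\eps,\eps)$ with $\eps$ well-typed follows by reflexivity of $\vdashInd$, since $\efr{\delta_{\GtrlG}(\eps)} = \efr{\delta_{\GtrlG}(\eps)}$; and the symmetry and transitivity closure steps are discharged by symmetry and transitivity of $\vdashInd$ applied to the relevant equations $\efr{\delta_{\GtrlG}(\cdot)} = \efr{\delta_{\GtrlG}(\cdot)}$ — here I use that $\cl{\RId} \subseteq \Exp_\Gf \times \Exp_\Gf$, so $\delta_{\GtrlG}$ is defined on every component occurring, including the intermediate $u$ in a transitivity step $(s,u),(u,t) \mapsto (s,t)$. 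This yields $\algspec \cup \efr{\RG} \vdashInd \efr{\delta_{\GtrlG}(s)} = \efr{\delta_{\GtrlG}(t)}$ for all $(s,t) \in \cl{\RId}$, and a final appeal to Theorem~\ref{thm:i}.$a)$, now its ``$\Leftarrow$'' half, converts this back into $(\delta_{\GtrlG}(s), \delta_{\GtrlG}(t)) \in \overline{\Gf}(\cl{\RId})$ for all $(s,t) \in \cl{\RId}$, i.e.\ into the statement that $\cl{\RId}$ is a bisimulation. Chaining the equivalences proves part~$a)$.

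Part~$b)$ runs identically, with Theorem~\ref{thm:i}.$b)$ in place of part~$a)$: since $\Gf = \Bl$ is constant, the relevant hypothesis set is empty (cf.\ Remark~\ref{rmk:theorem-cases}), so $\efr{\RG}$ contributes nothing and the characterisation reads $\algspec \vdashInd \efr{\delta_{\GtrlG}(\RG)}$, with plain $\vdash$ supplying the reflexivity, symmetry and transitivity used in the induction. The step I expect to need the most care is exactly the passage from $\cl{\RId}$ to $\RG$: one must check that the induction invokes nothing beyond reflexivity, symmetry and transitivity of $\vdashInd$ — equivalently, that $\overline{\Gf}$ preserves these closure properties, which is the abstract reason why verifying the bisimulation condition on the generators $\RG$ already suffices — and that Theorem~\ref{thm:i} is genuinely applicable at each use, every expression in sight being a well-typed $\Gf$-expression on which $\delta_{\GtrlG}$ acts.
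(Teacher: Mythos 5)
Your proposal is correct and follows essentially the same route as the paper: unfold the bisimulation condition, convert membership in $\overline{\Gf}(\cl{\RId})$ into $\vdashInd$-entailment via Theorem~\ref{thm:i}, and then reduce the quantification from $\cl{\RId}$ to $\RG$ using the reflexivity, symmetry and transitivity of $\vdashInd$ (the paper compresses this last step into the tag ``$\cl{\RId}, \vdashInd$'', whereas you spell out the induction on the closure). Part~$b)$ is likewise handled the same way, the paper merely remarking that it is immediate from the definitions and Remark~\ref{rem:eqRed}.
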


\begin{proof}
\verb##
\begin{itemize}\itemsep2pt
\item Proof of Corollary~\ref{cor:ii}.$a)$. We reason as follows:
\begin{align*}
&\cl{\RId} \text{ is a bisimulation }\\
\Leftrightarrow~&  (\forall (\eps, \eps') \in \cl{\RId}).((\delta_{\GtrlG}(\eps),
\delta_{\GtrlG}(\eps')) \in \overline{\mathscr G}(\cl{\RId})\\
\Leftrightarrow~& \algspec \cup \efr{\RG} \vdashInd \efr{\delta_{\GtrlG}(\cl{\RId})}\tag{{Thm.~\ref{thm:i}}}\\
\Leftrightarrow~&\algspec \cup \efr{\RG} \vdashInd \efr{\delta_{\GtrlG}(\RG)}\tag{$\cl{\RId}, \vdashInd$}
\end{align*}

\item Proof of Corollary~\ref{cor:ii}.$b)$. It follows immediately by the definition of bisimulation relations and according to the observations in Remark~\ref{rem:eqRed}.
\end{itemize}
\qed
\end{proof}


In Figure~\ref{fig:CoVsAlg} we briefly summarise the results of the current section, namely, the algebraic encoding of the coalgebraic setting presented in~\cite{brs_lmcs}.

\begin{figure}[h]
\centering
\renewcommand{\arraystretch}{1.3}
\begin{tabular}{|c|cr|}
\hline
coalgebraic & algebraic &\\
\hline\hline

$\vdash \eps \ct \FtrlG$ & $\algspec \vdash \eps \ct \FtrlG = {\emph true}$&\\
\hline

${\Exp_{\FtrlG}}$
&
\multicolumn{2}{|c|}{$\{\eps \in {\mathcal T}_{\Sigma, \Exp\!\!} \mid \algspec \vdash \eps \ct \FtrlG = {\emph true}\}$}\\
\hline

${\Exp_{\Gf}}$
&
\multicolumn{2}{|c|}{$\{\eps \in {\mathcal T}_{\Sigma, \Exp\!\!} \mid \algspec \vdash \eps \ct \GtrlG = {\emph true}\}$}\\
\hline

${\F(\Exp_\Gf)}$ &
\multicolumn{2}{|c|}{$\{\sigma \in {\mathcal T}_{\Sigma, \ExpS\!\!} \mid
\algspec \vdash \sigma\in \F({\Exp}\,\Gf) = {\emph true} \}$}\\
\hline

$\delta_{\FtrlG} \ct {\Exp_{\FtrlG}} \rightarrow {\F(\Exp_\Gf)}$ &
\multicolumn{2}{|c|}{
$\delta \_ ( \_ ) \ct {\Ingred~\Exp \rightarrow \ExpS}$}\\
\hline

&
$
\begin{array}{c}
\algspec \vdash \sigma\in \F({\Exp}\,\Gf) = {\emph true},\\
\algspec \vdash \sigma'\in \F({\Exp}\,\Gf) = {\emph true}
\end{array}
$
&\\

$(\sigma, \sigma') \in \overline{\F}(\cl{\RId})$ &
$\algspec \cup \efr{\R} \vdashInd \efr{\sigma\phantom{\hspace{-1.8ex}\sigma'}} = \efr{\sigma'}$\,\, if $\Gf \not= \Bl$ &\\

& or &\\

& $\algspec \vdashInd \efr{\sigma\phantom{\hspace{-1.8ex}\sigma'}} = \efr{\sigma'}$\,\, if $\Gf = \Bl$ & (Thm.~\ref{thm:i})\\

\hline


$\cl{\RId}$ is a bisimulation &
$\algspec \cup \efr{\R} \vdashInd \efr{\delta_{\GtrlG}(\R)}$\,\, if $\Gf \not= \Bl$&\\

& or &\\

& $\algspec \vdashInd \efr{\delta_{\GtrlG}(\R)}$\,\, if $\Gf = \Bl$&(Cor.~\ref{cor:ii})\\
\hline
\end{tabular}
\caption{Non-deterministic functors - coalgebraic vs. algebraic approach}
\label{fig:CoVsAlg}
\end{figure}

\section{Deciding bisimilarity in \CIRC}
\label{sec:dec-proced}

We next describe how the coinductive theorem prover {\CIRC}~\cite{lucanu-etal-2009-calco} can be used to implement the decision procedure for the bisimilarity of generalised regular expressions, which we discussed above.

{\CIRC} can be seen as an extension of Maude
with behavioural features and its implementation
is derived from that of Full-Maude.
In order to use the prover, one needs to provide a specification (a {\CIRC} theory)
and a set of goals.
A {\CIRC} theory ${\mathcal B}=(S, (\Sigma,\Delta),(E,{\mathcal I}))$ consists of an algebraic specification $(S,\Sigma,E)$, a set $\Delta$ of \emph{derivatives}, and a set $\mathcal I$ of equational interpolants, which are expressions of the form $e \Rightarrow \{ e_i \mid i \in I \}$ where $e$ and $e_i$ are equations.
\ans{C.3.26}
The intuition for this type of expressions is simple: $e$ holds whenever for any $i$ in $I$ the equation $e_i$ holds. In other words, to prove $E \vdash e$ one can chose to instead prove $E \vdash \{e_i \mid i \in I\}$.  For the particular case of non-deterministic functors, we use equational interpolants to extend the initial entailment relation in a consistent way with rules~(\ref{rl:times})--(\ref{rl:pow}).
(For more information on equational interpolants see \cite{acca}). A derivative $\delta\in\Delta$ is a $\Sigma$-term containing a special variable ${*}{:}s$ \ans{C.1.20} ({\emph i.e.}, a $\Sigma$-context), where $s$ is the sort of the variable $*$. If $e$ is an equation $t=t'$ with $t$ and $t'$ of sort $s$, then $\delta[e]$ is $\delta[t/{*}{:}s]=\delta[t'/{*}{:}s]$.
We call this type of equation a \emph{derivable equation}. The other equations are \emph{non-derivable}. We write $\delta[\R]$ to represent $\{\delta[e] \mid e \in \R\}$, where $\R$ is a set of derivable equations, and $\Delta[e]$ for the set $\{\delta[e]\mid\delta\in\Delta{\textnormal{~appropriate~for~}}e\}$.

\ans{C1.11} Moreover, note that {\CIRC} works with an extension of the entailment relation $\vdash$ over frozen equations (introduced in Section~\ref{sec:algSpec}), with two more axioms, as in~\cite{rosu-lucanu-2009-calco}:
\begin{equation}
\label{A1}
E \cup \R \vdash \efr{e}\,\,\,\, \textnormal{\emph{iff}}\,\,\, E \vdash e
\end{equation}
\begin{equation}
\label{A2}
E \cup \R \vdash \G\,\,\,\, \emph{implies}\,\,\, E \cup \delta[\R] \vdash \delta[\G] \textnormal{\emph{ for each }} \delta \in \Delta 
\end{equation}

Above, $E$ ranges over unfrozen equations, $e$ over non-derivable unfrozen equations, and $\R,\G$ over derivable frozen equations.

\begin{remark}
Note that the new entailment $\vdashInd$ extended over frozen equations (in Definition~\ref{def:PF}) satisfies the assumptions~(\ref{A1}) and~(\ref{A2}).
\end{remark}

{\CIRC} implements the coinductive proof system
given in~\cite{rosu-lucanu-2009-calco} using a set of reduction rules
of the form
$({\mathcal B}, {F}, {G}) \Rightarrow ({\mathcal B}, {F'}, {G'})$, where
${\mathcal B}$ represents a specification,
${F}$ is the coinductive hypothesis (a set of frozen equations) and
${G}$ is the current set of goals. The freezing operator is defined
as described in Section~\ref{sec:algSpec}.
Here is a brief description of these rules:
\begin{itemize}\itemsep 5pt
\item[]
\textsf{[Done]}:
$({\mathcal B}, { F}, \{\})
\Rightarrow
\cdot
$\\
Whenever the set of goals is empty, the system terminates with success.
\item[]
\textsf{[Reduce]}:
$({\mathcal B}, { F}, { G} \cup \{\efr{e}\})
\Rightarrow
({\mathcal B}, { F}, { G})
\textit{\;if\;} {\mathcal B} \cup { F} \vdash \efr{e}
$\\
If the current goal is a $\vdash$-consequence
of ${\mathcal B} \cup { F}$ then $\efr{e}$ is removed from
the set of goals.
\item[]
\textsf{[Derive]}:
$({\mathcal B}, { F}, { G} \cup \{\efr{e}\})
\Rightarrow
({\mathcal B}, { F} \cup \{\efr{e}\}, { G} \cup \efr{\Delta[e]})$
$\textit{if\;} {\mathcal B} \cup { F}\, {\not \vdash}\,\, \efr{e} $\\
When the current goal $e$ is derivable and it is not a
$\vdash$-consequence, it is added
to the hypothesis and its derivatives to the set of goals.
\ans{C.1.21}

\item[]
\textsf{[Simplify]}:
$({\mathcal B}, { F}, { G} \cup \{\efr{\theta(e)}\})
\Rightarrow
({\mathcal B}, { F}, { G} \cup \{\efr{\theta(e_i)} \mid i \in I\})$
\\
\verb##{\hspace{38pt}}
$\textit{\;if\;}  e  \Rightarrow \{ e_i \mid i \in I \}$
is an equational interpolant
from the\\
\verb##{\hspace{41pt}}
specification and $\theta \ct X \rightarrow {\mathcal T}_{\Sigma}(Y)$ is a substitution. \ans{C.2.12}
\item[]
\textsf{[Fail]}:
$({\mathcal B}, { F}, { G} \cup \{\efr{e}\})
\Rightarrow
{\emph{failure}} \textit{\;if\;} {\mathcal B} \cup { F} \,\, {\not \vdash}\, \efr{e} \,\land\,
e$ {\emph{is non-derivable}}\\
This rule stops the reduction process  with failure whenever the current goal
$e$ is non-derivable and is not a $\vdash$-consequence of ${\mathcal B} \cup { F}$.
\end{itemize}

It is worth noting that there is a strong connection between a {\CIRC} proof and the construction of a bisimulation relation. We illustrate this fact and the importance of the freezing operator with a simple example.

\begin{example}\label{ex:streams}
Consider the case of infinite streams. The set $\itB^\omega$ of infinite streams over a set $\itB$ is the final coalgebra of the functor $\Rf =  \itB \times \id$, with a coalgebra structure given by {\emph hd} and {\emph tl}, the functions that return the head and the tail of the stream, respectively. Our purpose is to prove that $0^\infty = (00)^\infty$. Let $z$ and $zz$ represent the stream on the left-hand side and, respectively, on the right-hand side. These streams are defined by the equations: ${\emph hd}(z) = 0, {\emph tl}(z) = z, {\emph hd}(zz) = 0, {\emph tl}(zz) = 0:zz$. 
Note that equations over $\itB$ like ${\emph hd}(z) = 0$ are not derivable and equations over streams like ${\emph tl}(z) = z$ are derivable.

In Figure~\ref{fig:parallel} we present the correlation between the {\CIRC} proof and the construction of the bisimulation relation. Note how {\CIRC} collects the elements of the bisimulation as frozen hypotheses.

\begin{figure}[h]
\centering
\renewcommand{\arraystretch}{1.9}
\begin{tabular}{|c|c|}
\hline
{\CIRC} proof & Bisimulation construction \\
\hline\hline

\raisebox{-7pt}{\code{(add goal z = zz .)}} &
$\xymatrix@R=.2cm@C=1cm{
*+[o][F]{z}\ar@(dr,ur) \ar@{=>}[d] &
*+[o][F]{zz}\ar@/^/[r] \ar@{=>}[d]&
*+[o][F]{(zz)'}\ar@/^/[l] \ar@{=>}[d]\\
0 & 0 & 0
}$\\
\hline
$({\mathcal B}, \emptyset, \{\efr{z} = \efr{zz}\})$ &
${ F} = \emptyset; \,\, z \sim zz~?$
\\
\hline

\raisebox{-4pt}{$
\mathrel{{\overset{\textsf{[Derive]}}{\longrightarrow}}}
\left({\mathcal B}, \{\efr{z} = \efr{zz}\},
\left\{{\efr{{\emph hd}(z)} = \efr{{\emph hd}(zz)}} \atop
{\efr{{\emph tl}(z)} = \efr{{\emph tl}(zz)}}\right\}\right)
$}
&
\raisebox{-4pt}
{$
{ F} = \{(z, zz)\};
{
\,\, z \mathrel{{\overset{0}{\longrightarrow}}} z
\atop
\,\, zz \mathrel{{\overset{0}{\longrightarrow}}} (zz)'
}
$}
\\
\hline
$
\mathrel{{\overset{\textsf{[Reduce]}}{\longrightarrow}}}
({\mathcal B}, \{\efr{z} = \efr{zz}\},
\{\efr{z} = \efr{0 : zz}\})
$
&
{
$
{ F} = \{(z, zz)\};
\,\, z \sim (zz)'~?
$
}
\\
\hline

\raisebox{-4pt}{$
\mathrel{{\overset{\textsf{[Derive]}}{\longrightarrow}}}
\left({\mathcal B},
\left\{{\efr{z} = \efr{zz}} \atop
{\efr{z} = \efr{0 : zz}}\right\}
,
\left\{{\efr{{\emph hd}(z)} = \efr{{\emph hd}(0:zz)}} \atop
{\efr{{\emph tl}(z)} = \efr{{\emph tl}(0:zz)}}\right\}\right)
$}
&
\raisebox{-3pt}{$
{ F} = \{(z, zz), (z, (zz)')\};
{
\,\, z \mathrel{{\overset{0}{\longrightarrow}}} z
\atop
\,\, (zz)' \mathrel{{\overset{0}{\longrightarrow}}} zz
}
$}
\\
\hline
\raisebox{-3pt}{$\mathrel{{\overset{\textsf{[Reduce]}}{\longrightarrow}}}
\left({\mathcal B},
\left\{{\efr{z} = \efr{zz}} \atop
{\efr{z} = \efr{0 : zz}}\right\}
,
\{\}\right)
$}
&
$
{ F} = \{(z, zz), (z, (zz)')\}~\checkmark
$
\\
\hline
\end{tabular}
\caption{Parallel between a {\CIRC} proof and the bisimulation construction}
\label{fig:parallel}
\end{figure}

Let us analyze what would happen if the freezing operator $\efr{-}\,$ was not used. Suppose the circular coinduction algorithm would add the equation $z = zz$ in its unfrozen form to the hypotheses. After applying the derivatives we obtain the goals
${\emph hd}(z) = {\emph hd}(zz), {\emph tl}(z) = {\emph tl}(zz)$.
At this point, the prover could use the freshly added equation \ans{C.1.22} $z = zz$, and according to the congruence rule, both goals would be proven directly, though we would still be in the process of showing that the hypothesis holds. By following a similar reasoning, we could \ans{C.1.23} also prove that $0^\infty = 1^\infty$! In order to avoid these situations, the hypotheses are frozen, ({\emph{i.e.}}, their sort is changed from {$\mathsf{Stream}$} to {$\mathsf{Frozen}$}) and this stops the application of the congruence rule, forcing the application of the derivatives according to their definition in the specification. Therefore, the use of the freezing operator is vital for the soundness of circular coinduction.

\end{example}

Next, we focus on using {\CIRC} for automatically reasoning on the equivalence of $\Gf$-expressions. As we will show, the implementation of both
the algebraic specifications associated with non-deterministic functors
and the equational entailment relation described
in Section~\ref{sec:algSpec} is immediate.
Given a non-deterministic functor $\Gf$, we define a {\CIRC} theory
$\behspec=(S, (\Sigma,\Delta),(E,{\mathcal I}))$ as follows:\\[-4ex]
\begin{itemize}
\item $(S,\Sigma,E)$ is $\algspec$
\item $\Delta=\{\delta_{\GtrlG}(*{:}\Exp)\}${, so the only derivable equations are those of sort $\mathsf{Exp}$. As we have already seen for the example of streams, equations of sort $\mathsf{Slt}$ must not be derivable. Since we have the subsort relation $\mathsf{Slt} \ls  \mathsf{Exp}$, we avoid the application of the derivative $\delta_{\GtrlG}(*{:}\Exp)$ over equations of sort $\mathsf{Slt}$ by means of an interpolant (see below).}
\item $\mathcal I$ consists of the following equational interpolants \ans{C.1.25}, whose role is to replace current proof obligations 
{over non-trivial structures with simpler ones:}
\begin{align}
\langle  \sigma_1, \sigma_2  \rangle  =  \langle  \sigma'_1, \sigma'_2  \rangle \,\, \Rightarrow \,\, &
\,\{ \sigma_1 = \sigma'_1,\,\, \sigma_2 = \sigma'_2 \} \label{srl:times}\\
k_i(\sigma) = k_i(\sigma') \,\, \Rightarrow \,\, & \,\{\sigma = \sigma'\} \label{srl:plus}\\
 f = g \,\, \Rightarrow \,\, & \,\{ f(a) = g(a)\mid a \in A\} \label{srl:expo}\\
 \cup_{i\in\overline{1,n}}\{\sigma_i\} = \cup_{j\in\overline{1,m}}\{\sigma'_j\}
\,\, \Rightarrow \,\, & \,\{ \land_{i\in\overline{1,n}}(\lor_{j\in\overline{1,m}}\, \sigma_i = \sigma'_j)\notag\\
& \hspace{4.5pt}\land_{j\in\overline{1,m}}(\lor_{i\in\overline{1,n}}\, \sigma_i = \sigma'_j)\}
\label{srl:pow}
\end{align}
{
together with an equational interpolant
\begin{align}
& t=t' \,\, \Rightarrow \,\, \{t\simeq t'= \mathsf{true}\}\label{srl:slt}
\end{align}
where $\simeq$ is the equality predicate equationally defined over the sort $\mathsf{Slt}$. The last interpolant transforms the equations of sort $\mathsf{Slt}$ from derivable (because of the subsort relation $ \mathsf{Slt}\ls \mathsf{Exp}$) into non-derivable and equivalent ones.\\ 
}
\end{itemize}


The interpolants 
{(\ref{srl:times}--\ref{srl:slt})}
in $\mathcal I$ extend the entailment relation $\vdashInd$ (introduced in Definition~\ref{def:PF}) as follows:
\[
\dfrac{E\vdashInd \{e_i\mid i\in I\}}
         {E\vdashInd e}~{\mathrm if~}e\Rightarrow\{e_i\mid i\in I\}{\mathrm~in~}{\mathcal I}
\]

\begin{theorem}[Soundness]\label{thm:soundnessCirc}
Let $\Gf$ be a non-deterministic functor, and $\G$ a binary relation on the
set of $\mathscr G$-expressions.\\
If $({\behspec}, { F}_0 = \emptyset, G_0 = \efr{\G}) \rTrans{*}
({\behspec}, { F}_n, G_n = \emptyset)$
using \textsf{[Reduce]}, \textsf{[Derive]} and \textsf{[Simplify]},
then $\G \subseteq \sim_{\Gf}$.
\end{theorem}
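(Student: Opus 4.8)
The plan is to show that the final hypotheses set $F_n$, read back as a binary relation on $\Gf$-expressions, generates a bisimulation containing $\G$, so that by coinduction~(\ref{eq:bisim-behEquiv}) we get $\G \subseteq \sim_\Gf$. Concretely, write $\RG$ for the relation consisting of all pairs $(\eps,\eps')$ such that $\efr{\eps}=\efr{\eps'}$ occurs (possibly after unfreezing) in $F_n$; note $\G \subseteq \RG$ since $G_0 = \efr{\G}$ and every goal that is discharged by \textsf{[Reduce]} or decomposed by \textsf{[Simplify]} was first added to the hypotheses by \textsf{[Derive]} (or was already a $\vdash$-consequence of $\behspec$ alone, in which case the two expressions are provably equal and the pair can be added harmlessly). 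The goal is then to prove that $\cl{\RId}$ is a bisimulation, after which Corollary~\ref{cor:ii} and the definition of $\sim_\Gf$ finish the argument.

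First I would analyse the shape of a terminating \CIRC\ derivation. Since $G_n = \emptyset$, every equation that ever entered the goal set was eventually removed, and inspecting the three allowed rules, removal happens only via \textsf{[Reduce]} (the goal is a $\vdash$-consequence of $\behspec \cup F_n$) or via \textsf{[Simplify]} (the goal is replaced by the instantiated right-hand side of an interpolant, which is in turn eventually removed). Crucially, \textsf{[Derive]} is the only rule that enlarges $F$, and it does so by adding the current goal $\efr{e}$ together with its derivatives $\efr{\Delta[e]} = \efr{\delta_{\GtrlG}(e)}$ as new goals. Hence at the end of the run, for every pair $(\eps,\eps')\in\RG$, the derived equation $\efr{\delta_{\GtrlG}(\eps)} = \efr{\delta_{\GtrlG}(\eps')}$ was at some point a goal and was subsequently discharged, which by axioms~(\ref{A1}),~(\ref{A2}) and the fact that $\vdashInd$ subsumes $\vdash$ (together with the interpolants~(\ref{srl:times})--(\ref{srl:slt}) extending $\vdashInd$ exactly by rules~(\ref{rl:times})--(\ref{rl:pow})) yields
\[
\algspec \cup \efr{\RG} \vdashInd \efr{\delta_{\GtrlG}(\eps)} = \efr{\delta_{\GtrlG}(\eps')},
\]
i.e. $\algspec \cup \efr{\RG} \vdashInd \efr{\delta_{\GtrlG}(\RG)}$ (in the constant-functor case, $\RG = \emptyset$ by Remark~\ref{rmk:theorem-cases} and one gets $\algspec \vdashInd \efr{\delta_{\GtrlG}(\RG)}$ vacuously). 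Then Corollary~\ref{cor:ii} applies directly and tells us $\cl{\RId}$ is a bisimulation; since $\G \subseteq \RG \subseteq \RId \subseteq \cl{\RId}$ and bisimilarity is the union of all bisimulations, $\G \subseteq \sim_\Gf$.

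The step I expect to be the main obstacle is the bookkeeping that justifies ``every discharged goal yields a $\vdashInd$-consequence relative to the \emph{final} hypothesis set $\efr{\RG}$, not merely relative to the intermediate set present when it was discharged.'' This requires a monotonicity observation: the reduction rules only ever add to $F$, so $F_i \subseteq F_n$ for all $i$, and $\vdash$ (hence $\vdashInd$) is monotone in the hypotheses; thus any $\vdash$-consequence of $\behspec \cup F_i$ is a $\vdash$-consequence of $\behspec \cup F_n$. One must also handle the interpolant/\textsf{[Simplify]} steps carefully: a goal removed by \textsf{[Simplify]} is not itself shown to be a $\vdash$-consequence, but its interpolant children are eventually discharged, and the soundness of the interpolants~(\ref{srl:times})--(\ref{srl:slt}) as extensions of $\vdashInd$ (matching precisely rules~(\ref{rl:times})--(\ref{rl:pow}) of Definition~\ref{def:PF}, plus the equality-predicate interpolant~(\ref{srl:slt}) which merely re-expresses a $\mathsf{Slt}$-equation) guarantees that the parent equation is then also a $\vdashInd$-consequence of $\algspec \cup \efr{\RG}$. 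Assembling these pieces — monotonicity of hypotheses, soundness of the interpolants, and the identification of $\efr{\Delta[e]}$ with $\efr{\delta_{\GtrlG}(e)}$ — is the technical heart; once it is in place, Corollary~\ref{cor:ii} does the rest.
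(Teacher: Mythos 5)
Your overall route is the paper's route: collect the hypotheses accumulated during the run into a relation, show that all the derivative equations are $\vdashInd$-consequences of $\behspec$ together with the frozen final hypothesis set (using monotonicity of $F$ and a backwards treatment of \textsf{[Simplify]}, exactly as the paper does by induction on the proof step), invoke Corollary~\ref{cor:ii} to conclude that the closure is a bisimulation, and finish by coinduction. You also correctly isolate the two technical pressure points (monotonicity $F_i \subseteq F_n$, and the fact that a goal removed by \textsf{[Simplify]} is only justified once its interpolant children are discharged).

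There is, however, one step that fails as stated: the claim $\G \subseteq \RG$ where $\RG$ is read off from $F_n$. A goal $(\eps,\eps')\in\G$ can be removed by \textsf{[Reduce]} because it is a $\vdash$-consequence of $\behspec \cup F_i$ \emph{through other frozen hypotheses} (e.g.\ by transitivity via pairs added earlier by \textsf{[Derive]}), in which case it is neither added to $F$ nor provable from $\behspec$ alone — so your two escape clauses do not cover it, and adding such a pair to $\RG$ ``harmlessly'' would then break your subsequent claim that every pair of $\RG$ had its derivatives posted as goals. The repair is what the paper actually proves: one only gets $\G \subseteq \cl{\RId}$ (every initial goal is a $\vdashInd$-consequence of $\behspec\cup\efr{F_n}$), and since $\cl{\RId}$ is the bisimulation this suffices; your final chain should end with that inclusion rather than pass through $\RG$. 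Separately, your parenthetical for the constant functor $\Gf=\Bl$ is incoherent: there $\RG=\emptyset$ by Remark~\ref{rmk:theorem-cases}, so $\G\subseteq\RG$ cannot hold and the vacuous reading of $\algspec\vdashInd\efr{\delta_{\GtrlG}(\RG)}$ proves nothing about $\G$. The paper instead takes the bisimulation to be $\cl{\G_{\emph id}}$ built from the \emph{initial goals}, and checks directly (via the interpolant for sort $\mathsf{Slt}$ and reflexivity of the reduced semilattice values) that $\behspec \vdashInd \efr{\delta_{\Bl\triangleleft\Bl}(\eps)} = \efr{\delta_{\Bl\triangleleft\Bl}(\eps')}$ for each goal, then applies Corollary~\ref{cor:ii}.b).
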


\begin{proof}
The idea of the proof is to find a bisimulation relation
$\widetilde{ F}$ s.t. $G
\subseteq \widetilde{ F}$.\\
First let ${F}$ be the set of hypotheses (or derived goals) collected during the proof session.
We distinguish between two cases:

\begin{itemize}\itemsep6pt
\item[a)] $\Gf = \Bl$.
For this case, the set of expressions in $\G$ is given by the following grammar:
\begin{equation}
\eps\,::= \emp \mid b \mid \eps \oplus \eps \mid \mu x . \eps\,.
\label{eq:const}
\end{equation}

{
Note that the goals $\eps = \eps'$ in $\G$ are proven
\begin{enumerate}
\item either according to \textsf{[Simplify]}, applied in the context of the equational interpolant~(\ref{srl:slt}). If this is the case, then $\eps = \eps'$ holds by reflexivity, therefore
\begin{equation}\label{eq:s}
\behspec \vdashInd \efr{\delta_{\Bl \triangleleft \Bl}(\eps)} = \efr{\delta_{\Bl \triangleleft \Bl}(\eps')}
\end{equation}
also holds;
\item or after the application of \textsf{[Derive]}, case in which $\behspec \cup \efr{ F} \vdashInd \efr{\delta_{\Bl \triangleleft \Bl}(\eps)} = \efr{\delta_{\Bl \triangleleft \Bl}(\eps')}$ holds. Note that $\delta_{\Bl \triangleleft \Bl}(\eps)$ and $\delta_{\Bl \triangleleft \Bl}(\eps')$ are reduced to $b$, respectively $b' \in \Bl$, according to~(\ref{eq:const}) and the definition of $\delta_{\Bl \triangleleft \Bl}$. Consequently, the non-derivable (due to the subsort relation $\Bl\ls \mathsf{Slt}$) goal $\efr{b} = \efr{b'}$ holds by reflexivity, so the following is a sound statement:
\begin{equation}\label{eq:ss}
\behspec \vdashInd \efr{\delta_{\Bl \triangleleft \Bl}(\eps)} = \efr{\delta_{\Bl \triangleleft \Bl}(\eps')}.
\end{equation}
\end{enumerate}
Based on~(\ref{eq:s}),~(\ref{eq:ss}) and Corollary~\ref{cor:ii}.b), we conclude that $\widetilde{ F} = {\emph cl}({\G_{\emph id}})$ is a bisimulation, hence $\G \subseteq {\emph cl}({\G_{\emph id}}) \subseteq {\sim_\Gf}$.
}

\item[b)] $\Gf \not = \Bl$.
Based on the reduction rules implemented in {\CIRC},
it is quite easy to see that the initial set of goals $\G$ is a
$\vdashInd$-consequence of $\behspec \cup \efr{{F}}$.
In other words, $\G \subseteq \cl{{{F}}_{\emph id}}$.
So, if we anticipate a bit, we should show that
$\widetilde{ F}=\cl{{{F}}_{\emph id}}$ is a bisimulation,
\textit{i.e.}, according to Corollary~\ref{cor:ii},
$\behspec \cup \efr{{F}} \vdashInd \efr{\delta_{\GtrlG}({F})}$.
This is achieved by proving that
$\behspec \cup \efr{{F}} \vdashInd G_i\, (i \in \overline{0,n})$
(note that $\efr{\delta_{\GtrlG}({F})} \subseteq \bigcup_{i \in \overline{0,n}} \G_i$,
according to \textsf[Derive]).
The proof is by induction on $j$, where $n-j$ is the current
proof step, and by
case analysis on the {\CIRC} reduction rules applied at each step.

We further provide a sketch of the proof.\\
The \emph{base case} $j = n$ follows immediately, as $\behspec \cup \efr{{F}} \vdashInd \G_{n} = \emptyset$.\\
For the \emph{induction step} we proceed as follows. Let $\efr{e} \in \G_{j}$. If $\efr{e} \in \G_{{j+1}}$ then $\behspec \cup \efr{{F}} \vdashInd \efr{e}$ by the induction hypothesis. If $\efr{e} \not \in \G_{{j + 1}}$ then, for example, if \textsf{[Reduce]} was applied then it holds that $\behspec \cup { F_{j}} \vdashInd \efr{e}$. Recall that ${ F_{j}} \subseteq \efr{ F}$, so $\behspec \cup \efr{ F} \vdashInd \efr{e}$ also holds. The result follows in a similar fashion for the application of \textsf{[Derive]} or \textsf{[Simplify]}.
\end{itemize}
\qed
\end{proof}

\begin{remark}
The soundness of the proof system we describe in this chapter does not follow directly from Theorem 3 in \cite{rosu-lucanu-2009-calco}.
This is due to the fact that
we do not have an experiment-based definition of bisimilarity.
So, even though the mechanism we use for proving
$\behspec \cup \efr{{F}} \vdashInd \efr{\delta_{\GtrlG}({F})}$ (for the case $\Gf \not = \Bl$)
is similar to the one described in
\cite{rosu-lucanu-2009-calco},
the current soundness proof is conceived in terms of bisimulations
(and not experiments).
\end{remark}

\begin{remark}
The entailment relation $\vdashInd$ that {\CIRC} uses
for checking the equivalence of generalised regular expressions
is an instantiation of the parametric entailment
relation $\vdash$ from the proof system in \cite{rosu-lucanu-2009-calco}.
This approach allows {\CIRC} to reason automatically on a large class
of systems which can be modelled as non-deterministic coalgebras.
\end{remark}

As already stated, our final goal is to use {\CIRC} as a
decision procedure for the bisimilarity of generalised regular expressions.
That is, whenever provided a set of expressions, the prover stops
with a yes/no answer with respect to their equivalence.
In this context,
an important aspect is that the sub-coalgebra generated
by an expression $\eps \in {\Exp_{\Gf}}$ by repeatedly applying
$\delta_{\Gf}$ is, in general, infinite. Take for example
the non-deterministic functor \ans{C.1.26} $\Rf = \itB \times \id$ associated with infinite streams, and
consider the property
\(
\mu x . \emp \oplus r\langle x \rangle =
\mu x . r\langle x \rangle
\).
In order to prove this, {\CIRC} builds an infinite proof sequence
by repeatedly applying  $\delta_{\Rf}$ as follows:
\begin{center}
\begin{tabular}{rcl}
$\delta_{\Rf}(\mu x . \emp \oplus r\langle x \rangle)$ &
$=$ &
$\delta_{\Rf}(\mu x . r\langle x \rangle)$\\
& $\downarrow$ &\\
$\langle 0, \emp \oplus (\mu x . \emp \oplus r\langle x \rangle)\rangle$ &
$=$ &
$\langle 0, \mu x . r\langle x \rangle \rangle$\\[2ex]

$\delta_{\Rf}(\emp \oplus (\mu x . \emp \oplus r\langle x \rangle))$ &
$=$ &
$\delta_{\Rf}(\mu x . r\langle x \rangle)$\\
& $\downarrow$ &\\
$\langle 0, \emp \oplus \emp \oplus (\mu x . \emp \oplus r\langle x \rangle)\rangle$ &
$=$ &
$\langle 0, \mu x . r\langle x \rangle \rangle$ [\ldots\!]
\end{tabular}
\end{center}
\label{just:ACI}
In this case, the prover never stops. We observed in Section~\ref{sec:dp} that Theorem~\ref{thm:kleene} guarantees we can associate a finite coalgebra to a certain expression. In the proof of the aforementioned theorem, which is presented in \cite{brs_lmcs}, it is shown 
that the axioms for associativity, commutativity and idempotency (ACI) of $\oplus$
guarantee finiteness of the generated sub-coalgebra (note that these axioms have also been proven
sound with respect to bisimulation).
ACI properties can easily be specified in {\CIRC}
as the prover is an extension of Maude, which has a
powerful matching modulo ACUI (ACI plus unity) capability.
The idempotence is given by the equation $\eps \oplus \eps = \eps$, and
the commutativity and associativity are specified as attributes
of $\oplus$. It is interesting to remark that for the powerset functor termination is guaranteed without the axioms, because the coalgebra structure on the expressions for the powerset functor already includes ACI (since $\pow(\Exp)$ is itself a join-semilattice).

\begin{theorem}\label{thm:decProc}
Let $\G$ be a set of proof obligations over generalised regular expressions.
{\CIRC} can be used as a decision procedure for the equivalences in $\G$,
that is, it can \ans{C.1.27} decide whenever a goal $(\E_1, \E_2) \in \G$ is a true or false equality.
\end{theorem}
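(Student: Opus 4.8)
The plan is to establish two facts about a {\CIRC} proof session started from the singleton goal set $\G=\{(\E_1,\E_2)\}$ with the theory $\behspec$: \textbf{(i)} every such session terminates, and \textbf{(ii)} it terminates with \textsf{[Done]} if and only if $\E_1\sim_\Gf\E_2$. Granting (i) and (ii), {\CIRC} is indeed a decision procedure: a successful run witnesses that the goal is a true equality (this is Theorem~\ref{thm:soundnessCirc}), while a terminating run that is not successful can only have stopped via \textsf{[Fail]}, and we will argue that this happens exactly when $\E_1\not\sim_\Gf\E_2$, so the goal is a false equality. The proof is therefore a matter of assembling Theorem~\ref{thm:kleene}.1, the ACI discussion on page~\pageref{just:ACI}, Corollary~\ref{cor:ii}, and Theorem~\ref{thm:soundnessCirc}.

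\textbf{Termination.} By Theorem~\ref{thm:kleene}.1 together with the soundness and finiteness-enforcing role of the ACI axioms for $\oplus$, the sub-coalgebra of $\Exp_\Gf$ generated from $\E_1$ and $\E_2$ by repeatedly applying $\delta_\Gf$ \emph{modulo ACI} (and the sound auxiliary axiom $\E\oplus\emp\equiv\E$) has a finite state set, call it $W$. Each application of \textsf{[Derive]} adds to the hypothesis set a frozen equation $\efr{\E}=\efr{\E'}$ with $\E,\E'\in W$, and it fires only when that equation is not already a $\vdashInd$-consequence of the current theory plus hypotheses; hence the hypothesis set is monotonically increasing and bounded (up to ACI) by the finite set $W\times W$, so \textsf{[Derive]} is applied only finitely often. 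Between consecutive \textsf{[Derive]} steps the remaining rules only decrease complexity: \textsf{[Reduce]} removes a goal, and \textsf{[Simplify]}, driven by the interpolants~(\ref{srl:times})--(\ref{srl:slt}), replaces a goal on a structured expression $\sigma\in\F(\Exp_\Gf)$ by goals on its immediate sub-components, whose structural depth is bounded by the fixed functor $\Gf$ (the interpolant~(\ref{srl:slt}) in particular converts a residual goal of sort $\mathsf{Slt}$ into a non-derivable one). A well-founded measure combining the size of the hypothesis set with the structural depth of the goals then shows the session halts, necessarily with \textsf{[Done]} or \textsf{[Fail]}.

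\textbf{Correctness.} If the session ends with \textsf{[Done]}, Theorem~\ref{thm:soundnessCirc} gives $\{(\E_1,\E_2)\}\subseteq\,\sim_\Gf$, so the goal is true. Conversely, assume $\E_1\sim_\Gf\E_2$. The canonical saturation of $\{(\E_1,\E_2)\}$ under $\delta_\Gf$ modulo ACI --- which is exactly the relation that \textsf{[Derive]} builds --- stays inside $\sim_\Gf$, since $\delta_\Gf$ sends $\sim_\Gf$-related expressions into $\overline\Gf(\sim_\Gf)$; by the termination argument it is finite. Because all matched pairs are bisimilar, every semilattice-valued goal produced by \textsf{[Simplify]} asserts equality of equal elements of $\Bl$ and is discharged by \textsf{[Reduce]}, so \textsf{[Fail]} never fires and, by termination, the run reaches \textsf{[Done]}. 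Finally, if $\E_1\not\sim_\Gf\E_2$, the saturation cannot close consistently: were every matched pair's $\delta_\Gf$-images $\overline\Gf(\cl{\RId})$-related for the collected relation $\RG$, Corollary~\ref{cor:ii}.$a)$ would exhibit a bisimulation containing $(\E_1,\E_2)$, a contradiction; hence after finitely many steps {\CIRC} reaches a non-derivable goal $\efr{b}=\efr{b'}$ with distinct $b,b'\in\Bl$, which is not a $\vdashInd$-consequence, and \textsf{[Fail]} applies, correctly reporting a false equality.

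\textbf{Main obstacle.} The delicate part is the termination argument, i.e.\ making precise that working modulo ACI yields only finitely many distinct hypotheses and goals --- this rests squarely on the finiteness of the generated sub-coalgebra from Theorem~\ref{thm:kleene} --- and that the \textsf{[Simplify]} steps on structured expressions, including the non-invertible powerset rule~(\ref{rl:pow})/(\ref{srl:pow}) and the Boolean combinations it produces, cannot cause nontermination. Once termination is in place, the correctness half is essentially a repackaging of Theorem~\ref{thm:soundnessCirc} (for the success case) and Corollary~\ref{cor:ii}.$a)$ (for ruling out spurious failures and for detecting genuine non-bisimilarity).
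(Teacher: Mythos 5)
Your proposal is correct and follows essentially the same route as the paper: termination is derived from the ACI-induced finiteness of the sub-coalgebra generated by $\delta_\Gf$ (Theorem~\ref{thm:kleene} / the result of~\cite{brs_lmcs}), success via \textsf{[Done]} is handled by Theorem~\ref{thm:soundnessCirc}, and failure via \textsf{[Fail]} is tied to non-bisimilarity through Corollary~\ref{cor:ii}. The paper's own argument is considerably terser — it simply invokes the ACI finiteness and then asserts the correctness of the two terminating outcomes — so your write-up is a faithful, more detailed elaboration of the same idea rather than a different proof.
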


\begin{proof}
\ans{C.3.31}
Note that as proven in~\cite{brs_lmcs}, the ACI axioms for $\oplus$ guarantee that $\delta_\Gf$ is applied for a finite number of times in the generation of the sub-coalgebra associated with a $\Gf$-expression. Therefore, it straightforwardly follows that by implementing the ACI axioms in {\CIRC} (as attributes of $\oplus$), the set of new goals obtained by applying $\delta_\Gf$ is finite.
In these circumstances, whenever {\CIRC} stops according to the
reduction rule \textsf{[Done]},
the initial proof obligations are bisimilar. On the other hand, whenever it terminates
with \textsf{[Fail]}, the goals are not bisimilar.
\qed
\end{proof}

\section{A {\CIRC}-based Tool}
\label{sec:caseStudy}

We have implemented a tool that, when provided with a functor $\Gf$,
automatically generates a specification for {\CIRC} which can then be used in order to automatically check whether two $\Gf$-expressions are bisimilar.

The tool is implemented as a metalanguage application in Maude. It can be downloaded from the address \url{http://goriac.info/tools/functorizer/}. In order to start the tool, one needs to launch Maude along with the extension Full-Maude and load the downloaded file using the command \code{in functorizer.maude .}

The general use case consists in providing the join-semilattices, the alphabets and the expressions. After these steps, the tool automatically checks if the provided expressions are guarded, closed and correctly typed. If this check succeeds, then it outputs a specification that can be further processed by {\CIRC}. In the end, the prover outputs either the bisimulation, if the expressions are equivalent, or a negative answer, otherwise.

We present two case studies in order to emphasise the high degree of generality for the types of systems we can handle, and show how the tool is used. 

\begin{example}
\label{eg:mealy}
We consider the case of Mealy machines, which are coalgebras
for the functor $(\itB \times \id)^A$. 

Formally, a Mealy machine is a pair $(S,\alpha)$ consisting of a set
$S$ of states and a transition function $\alpha\colon S\to (\itB \times
S)^A$, which for each state $s\in S$ and input $a\in A$ associates an
output value $b$ and a next state $s'$. Typically, we write
$\alpha(s)(a) = (b,s') \Leftrightarrow
\xymatrix{*+[o][F]{s}\ar[r]^{a|b} &*+[o][F]{s'}}$.

\medskip
In this example and in what follows we will consider for the output the two-value join-semilatice  $\itB = \{0,1\}$ (with $\bottom_\Bl = 0$) and for the input alphabet $A = \{a,b\}$. 
The expressions for Mealy machines are given by the grammar:
\[
\begin{array}{rl}
E_{\phantom{0}} &::\!= \emp \mid x \mid E \oplus E \mid \mu x . E_{2} \mid a(r<E>) \mid b(r<E>) \mid a(l<E_1>) \mid b(l<E_1>) \\
E_{1} &::\!= \emp \mid E_1 \oplus E_1 \mid 0 \mid 1\\
E_{2} &::\!= \emp \mid E_2 \oplus E_2 \mid \mu x . E_{2} \mid a(r<E>) \mid b(r<E>) \mid a(l<E_1>) \mid b(l<E_1) \\
\end{array}
\]

Intuitively, an expression of shape $a(l<E_1>)$ specifies a state that for an input $a$ has an output value specified by $E_1$. For example, the expression $a(l<1>)$ specifies a state that for input $a$ outputs $1$, whereas in the case of $a(l<\emp>)$ the output is $0$. An expression of shape $a(r<E>)$ specifies a state that for a certain input $a$ has a transition to a new state represented by $E$. For example, the expression $\mu x. a(r<x>)$ states that for input $a$, the machine will perform a ``$a$-loop" transition, whereas $a(r<\emp>)$ states that for input $a$ there is a transition to the state denoted by $\emp$. It is interesting to note that a state will only be fully specified in what concerns transitions and output (for a given input $a$ if both $a(l<E_1>)$ and $a(r<E>)$ appear in the expression (combined by $\oplus$). In the case only transition (respectively, output) are specified, the underspecification is solved by setting the target state (respectively, output) to $\emp$ (respectively, $\bot_B = 0$). 
\end{example}

Next, to provide the reader with intuition, we will explain how one can reason on the bisimilarity of two simple expressions, by constructing bisimulation relations. Later on, we show how {\CIRC} can be used in conjunction with our tool in order to act as a decision procedure when checking equivalence of two expressions, in a fully automated manner.

%
%


We will start with the expressions $\eps_1 = \mu x . a(r<x>)$ and $\eps_2 = \emp$. We have to build a bisimulation relation  $\R$ on $\Gf$-expressions, such that  $(\eps_1, \eps_2) \in \R$. We do this in the following way: we start by taking $\R=\{(\E_1,\E_2)\}$ and we check whether this is already a bisimulation, by considering the output values and transitions and check whether no new expressions appear in this process. If new pairs of expressions appear we add them to $\R$ and repeat the process. Intuitively, this can be represented as follows:

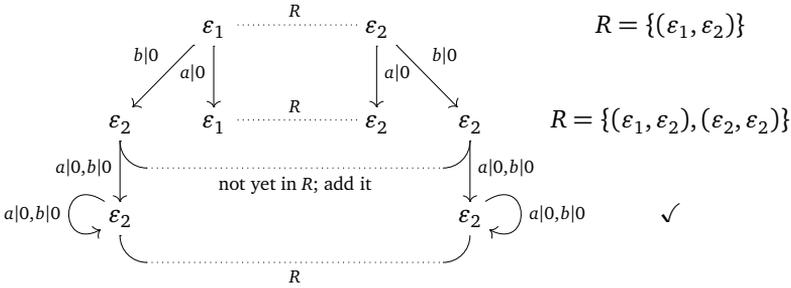
\begin{figure}[h]
\centering
$\xymatrix@C=0.7cm@R=0.7cm{
& {\eps_1}\ar[d]_{a|0}\ar@{..}[rr]^\R\ar[dl]_{b|0} & & {\eps_2}\ar[d]^{a|0}\ar[dr]^{b|0} & & \R = \{(\eps_1, \eps_2)\} \\
{\eps_2 }\ar[d]_{a|0,b|0} & {\eps_1}\ar@{..}[rr]^\R & & {\eps_2} & {\eps_2}\ar@{..} `d_l[llll] `[llll]^{\text{not yet in }\R \text{; add it}} [llll] \ar[d]^{a|0,b|0}& \R = \{(\eps_1, \eps_2), (\eps_2, \eps_2)\}\\
{\eps_2}\ar@(ul,dl)_{a|0,b|0}& & & & {\eps_2}\ar@(ur,dr)^{a|0,b|0} \ar@{..} `d_l[llll] `[llll]^\R [llll] & \checkmark
}$
\caption{Bisimulation construction}
\label{fig:mealy-a-noTrans-BC}
\end{figure}

In the figure above, and as before, we use the notation $\xymatrix@C=0.7cm@R=0.7cm{
 {\eps_1}\ar@{-}[r]^\R&\eps_2}$ to denote $(\eps_1,\eps_2)\in\R$.
As illustrated in Figure~\ref{fig:mealy-a-noTrans-BC}, $\R = \{(\eps_1, \eps_2), (\eps_2, \eps_2)\}$ is closed under transitions and is therefore a bisimulation. Hence, $\eps_1 \sim_\Gf \eps_2$.

The proved equality $\emp = \mu x . a(r< x >)$ might seem unexpected, if the reader is familiar with labelled transition systems. The equality is sound because these are expressions specifying behaviour of a Mealy machine and, semantically, both denote the function that for every non-empty word outputs $0$ (the semantics of Mealy machines is given by functions $B^{A^+}$, intuitively one can think of these expressions as both denoting the empty language). This is visible if one draws the automata corresponding to both expressions (say, for simplicity, the alphabet is $A=\{a\}$):
\[
\xymatrix{\emp\ar@(d,l)^{a|0} &  \mu x . a(r< x >) \ar@(d,l)^{a|0}}
\]
Note that (i) the $\emp$ expression for Mealy machines is mapped with $\delta$ to a function that for input $a$ gives $<0, \emp>$, which represents a state with an $a$-loop to itself and output $0$; (ii) the second expression specifies explicitly an $a$-loop to itself and it also has output $0$, since no output value is explicitly defined.  
Now, also note that similar expressions for labelled transition systems (LTS's), or coalgebras of the functor $\pow(-)^A$, would not be bisimilar since one would have an a-transition and the other one not. This is because the $\emp$ expression for LTS's really denotes a deadlock state. In operational terms they would be converted to the systems
\[
\xymatrix{\emp &  \mu x . a( x ) \ar@(d,l)^a}
\] 
which now have an obvious difference in behaviour. 

By performing a similar reasoning as in the example above one can show that the expressions $\eps_1 = \mu x . a(r<x>) \oplus b(r<x>)$ and $\eps_2 = \mu x . a(r<x>)$ are bisimilar, and the bisimulation relation is built as illustrated in Figure~\ref{fig:mealy-ab-trans-BC}:

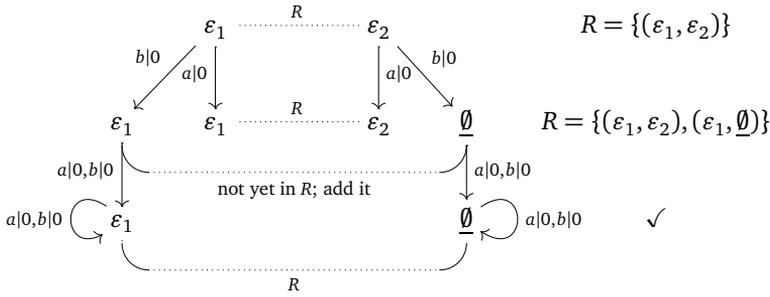
\begin{figure}[H]
\centering
$\xymatrix@C=0.7cm@R=0.7cm{
& {\eps_1}\ar[d]_{a|0}\ar@{..}[rr]^\R\ar[dl]_{b|0} & & {\eps_2}\ar[d]^{a|0}\ar[dr]^{b|0} & & \R = \{(\eps_1, \eps_2)\} \\
{\eps_1 }\ar[d]_{a|0,b|0} & {\eps_1}\ar@{..}[rr]^\R & & {\eps_2} & {\emp}\ar@{..} `d_l[llll] `[llll]^{\text{not yet in }\R \text{; add it}} [llll] \ar[d]^{a|0,b|0}& \R = \{(\eps_1, \eps_2), (\eps_1, \emp)\}\\
{\eps_1}\ar@(ul,dl)_{a|0,b|0}& & & & {\emp}\ar@(ur,dr)^{a|0,b|0} \ar@{..} `d_l[llll] `[llll]^\R [llll] & \checkmark
}$
\caption{Bisimulation construction}
\label{fig:mealy-ab-trans-BC}
\end{figure}

Let us further consider the Mealy machine depicted in Figure~\ref{fig:mealy1}, where all states are bisimilar.

\begin{figure}[H]
\centering
$\xymatrix@C=1.9cm@R=0.5cm{
*+[o][F]{s_1}\ar@/^/[r]_{a|0}\ar@/_/[d]^{b|1} &
*+[o][F]{\phantom{s_3}}\ar@(dr,ur)_{a|0}\ar@/^/[d]_{b|1} \\
*+[o][F]{\phantom{s_4}}\ar@(dr,ur)_{b|1}\ar@(dl,ul)^{a|0} &
*+[o][F]{s_2}\ar@(dr,ur)_{b|1}\ar@(dl,ul)^{a|0}
}$
\caption{Mealy machine: $s_1 \sim s_2$}
\label{fig:mealy1}
\end{figure}
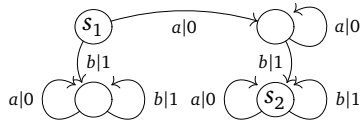

We show how to check the equivalence of two expression characterising the states $s_1$ and $s_2$, in a fully automated manner, using {\CIRC}. These expressions are
\ans{C.1.30}
$\eps_1 = \mu x . b(l<1>) \oplus b(r<\eps_2>) \oplus a(\mu y . a(r<y>) \oplus b(r<\eps_2>) \oplus b(l<1>))$ and $\eps_2 = \mu x . b(l<1>) \oplus b(r<x>) \oplus a(r<x>)$, respectively.

In order to check bisimilarity of $\eps_1$ and $\eps_2$ we
load the tool and define the semilattice $\Bl = \{0,1\}$ and the alphabet
${\emph A} = \{a, b\}$:
\\[0.5ex]
\code{
(jslt B is 0 1
  bottom 0 .
  0 v 0 = 0 .
  0 v 1 = 1 .
  1 v 1 = 1 .
endjslt)}\\
\code{(alph A is a b endalph)}

\smallskip

We provide the functor $\Gf$ using the command \code{(functor (B x Id)}\verb#^#\code{A .)}.
The command \code{(set goal ... .)} specifies the goal we want to prove:
\begin{alltt}
\fontsize{9}{10}
\selectfont(set goal
 \verb#\#mu X:FixpVar . b(l<1>) (+) a(l<0>) (+) b(r<X:FixpVar>) (+)
                 a(r<X:FixpVar>) =
 \verb#\#mu X:FixpVar . b(l<1>) (+) b(<\verb#\#mu X:FixpVar . b(l<1>) (+)
                 b(r<X:FixpVar>) (+) a(r<X:FixpVar>)>) (+)
                 a(\verb#\#mu Y:FixpVar . a(r<Y:FixpVar>) (+)
                 b(<\verb#\#mu X:FixpVar . b(l<1>) (+) a(l<0>) (+)
                 b(r<X:FixpVar>) (+) a(r<X:FixpVar>)>) (+) b(l<1>)) .)
\end{alltt}

\smallskip
In order to generate the {\CIRC} specification
we use the command \code{(generate coalgebra .)}.
Next we need to load {\CIRC} along with the resulting specification
and start the proof engine using the command
\code{(coinduction .)}.

As already shown, behind the scenes, {\CIRC} builds a bisimulation relation
that includes the initial goal.
The proof succeeds and the output consists of (a subset of) this bisimulation:
\begin{alltt}
\fontsize{9}{10}\selectfont{}Proof succeeded.
  Number of derived goals: 2
  Number of proving steps performed: 50
  Maximum number of proving steps is set to: 256

Proved properties:
- phi (+) (\verb#\#mu X . a(l<0>) (+) a(r<X>) (+) b(l<1>) (+) b(r<X>)) =
  phi (+) (\verb#\#mu Y . a(r<Y>) (+) b(l<1>) (+)
  b(r<\verb#\#mu X . a(l<0>) (+) a(r<X>) (+) b(l<1>)(+)b(r<X>)>))

- \verb#\#mu X . a(l<0>) (+) a(r<X>) (+) b(l<1>) (+) b(r<X>) =
  \verb#\#mu Z . a(r<\verb#\#mu Y . a(r<Y>) (+) b(l<1>) (+)
          b(r<\verb#\#mu X . a(l<0>) (+) a(r<X>) (+) b(l<1>) (+) b(r<X>)>)>) (+)
          b(l<1>) (+) b(r<\verb#\#mu X . a(l<0>) (+) a(r<X>) (+)
          b(l<1>) (+) b(r<X>)>)
 \end{alltt}

For the ease of understanding, here we printed a readable version of the proved properties. In Section~\ref{sec:code}, however, we show that internally each expression is brought to a canonical form by renaming the variables.
Moreover, note that in our tool, $\emp$ is represented by the constant \code{phi}. All the examples provided in the current section make use of this convention.

As previously mentioned, {\CIRC} is also able to detect when two
expressions are not equivalent. Take, for instance, the expressions
$\mu x . a(l<0>) \oplus a(r<a(l<1>) \oplus a(r<x>)>)$ and $a(l<0>) \oplus a(r<a(r< \mu x . a(r<x>) \oplus a(l<0>)>) \oplus a(l<1>)>)$, characterising the states $s_1$ and $s_3$ from the Mealy machines in Figure~\ref{fig:mealy2}. After following some steps similar to the ones previously enumerated, the proof fails and the output message is
\code{Visible goal [...] failed during coinduction}.

\begin{figure}[H]
\centering
$\xymatrix@C=1cm{
*+[o][F]{s_1}\ar@/^1pc/[r]^{a|0} &
*+[o][F]{s_2}\ar@/^1pc/[l]_{a|1} &
*+[o][F]{s_3}\ar@/^/[r]^{a|0} &
*+[o][F]{s_4}\ar@/^/[r]^{a|1} &
*+[o][F]{s_5}\ar@(dr,ur)_{a|0} & \\
}$
\caption{Mealy machines: $s_1 \not\sim s_3$}
\label{fig:mealy2}
\end{figure}
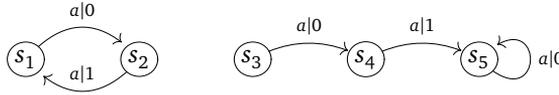

\begin{example}
\label{eg:ccs}


Next we show how to check strong bisimilarity of
non-deterministic processes of a non-trivial CCS-like language
with termination, deadlock, and divergence, as studied in \cite{Aceto:1992:TDD:147508.147527}. A process is a guarded, closed term defined by the following grammar:

\newcommand{\success}{\checkmark}
\newcommand{\empsum}{\textnormal{$\Omega$}}

\begin{eqnarray}\label{eq:ccs-gram}
P &::\!=& \success \mid \delta \mid \empsum \mid  a.P \mid P + P \mid x \mid \mu x . P
\end{eqnarray}

where:

\begin{itemize}
\item $\success$ is the constant for successful termination,
\item $\delta$ denotes deadlock,
\item $\empsum$ is the divergent computation (\emph{i.e.}, the undefined process),
\item $a.P$\, is the process executing the action $a$ and then continuing as the process $P$, for any action $a$ from a given set $A$,
\item $P_1 + P_2$ is the non-deterministic process behaving as either $P_1$ or $P_2$, and
\item $\mu x . P$ is the recursive process $P[\mu x . P / x]$.
\end{itemize}

In~\cite{brs_lmcs} is is shown that, up to strong bisimilarity, the above syntax of processes is equivalent to the canonical set of (guarded, closed) regular expressions derived for the functor $1 \myplus {\mathcal P}_{\omega}(\id)^A$,

\newcommand{\unu}{1}

\[
\begin{array}{lcl}
E   & ::= & \emptyset \mid E \oplus E \mid x \mid \mu x.E \mid l[E_1] \mid r[E_2]\\
E_1 & ::= & \emptyset \mid  E_1 \oplus E_1 \mid \unu \\
E_2 & ::= & \emptyset \mid  E_2 \oplus E_2 \mid a(E_3)\\
E_3 & ::= & \emptyset \mid  E_3 \oplus E_3 \mid \{ E\}
\end{array}
\]

The translation map $(-)^\dagger$ from processes to expressions is defined by
induction on the structure of the process:
\[
\begin{array}{lcl@{\hspace{1.3cm}}lcl}
(\success)^\dagger &=& l[\unu] & (a.P)^\dagger &=& r[a(\{P^\dagger\})]\\
(\delta)^\dagger &=& r[\emptyset] & (P_1+P_2)^\dagger &=&  (P_1)^\dagger \oplus (P_2)^\dagger\\
(\empsum)^\dagger &=& \emptyset & (\mu x. P)^\dagger &=& \mu x. P^\dagger\\
x^\dagger  &=& x \,.
\end{array}
\]

Consider now two processes $P$ and $Q$ over the alphabet $A = \{ a, b \}$:
\[
\begin{array}{lcl}
P & = & \mu x. (a.x + a.P_1 + b.b.\success + b.(\delta + \empsum))\\
Q & = & \mu z. (a.z + b.(\delta + b.\success) + b.\delta)
\end{array}
\]
where $P_1 = \mu y. (a.(y+\delta)+b.\delta+b.(\delta+b.\success)+\delta)$.
Graphically, the two processes can be represented by the following labelled transition systems (for simplicity we omit annotating states with information regarding the satisfiability of successful termination, divergence, and deadlock):

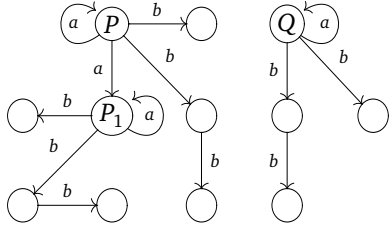
\begin{figure}[H]
\centering
$\xymatrix@C=.7cm@R=.7cm{
&
*+[o][F]{P}\ar@(dl,ul)_{a}\ar[r]^{b}\ar[d]_{a}\ar[dr]^{b} &
*+[o][F]{\phantom{A} } &
*+[o][F]{Q}\ar@(dr,ur)^{a}\ar[d]_{b}\ar[dr]^{b}\\
*+[o][F]{\phantom{A}} &
*+[o][F]{P_1}\ar@(dr,ur)^{a}\ar[l]_{b}\ar[dl]_{b} &
*+[o][F]{\phantom{A}}\ar[d]^{b} &
*+[o][F]{\phantom{A}}\ar[d]_{b} &
*+[o][F]{\phantom{A}}  \\
*+[o][F]{\phantom{A}}\ar[r]^{b} &
*+[o][F]{\phantom{A}} &
*+[o][F]{\phantom{A}} &
*+[o][F]{\phantom{A}} \\
}$
\caption{Non-deterministic processes: $Q \sim P$}
\label{fig:pa}
\end{figure}

We want to check if the process $P$ is strongly bisimilar to the process $Q$. By using the above translation, process $P$ is represented by the expression
\[
\begin{array}{l@{}l@{}l}
\mu x .
(&r[a(\{\mu y. (&r[a(\{ y \oplus r[\emptyset]\})] \oplus
                 r[b(\{r[\emptyset]\})] \oplus\\
 &              &r[b(\{r[\emptyset] \oplus r[b(\{l[\unu]\})] \})]\oplus
                r[\emptyset]
               )
   \})] \oplus \\
 & \multicolumn{2}{l}{ r[a(\{x\})] \oplus
   r[b(\{r[b(\{ l[\unu]\})] \})] \oplus
   r[b(\{r[\emptyset] \oplus \emptyset\})]
)}\\[1ex]
\end{array}
\]
whereas process $Q$ is represented by the expression
\[
\begin{array}{l@{}l@{}l}
\mu z .
 (&r[a(\{z\})] \oplus
        r[b(\{r[\emptyset] \oplus r[b(\{l[\unu]\})] \})] \oplus
        r[b(\{r[\emptyset]\})]
       ).
\end{array}
\]


In order to use the tool, one needs to specify the semilattice, 
the alphabet, the functor, and the goal in a manner similar to the one previously presented:
\\[0.5ex]
\code{
(jslt B is 1
  bottom 1 .
  1 v 1 = 1 .
endjslt)}\\
\code{(alph A is a b endalph)}\\
\code{(functor B + (P Id)}\verb#^#\code{A .)}
\begin{alltt}
\fontsize{9}{10}\selectfont(set goal \verb#\#mu X:FixpVar .
          r[ a( \{ X:FixpVar \} ) ] (+)
          r[ a( \{ \verb#\#mu Y:FixpVar .
                  r[ a( \{ Y:FixpVar (+) r[ phi ] \} ) ] (+)
                  r[ b( \{ r[ phi ] \} ) ] (+)
                  r[ b( \{ r[ phi ] (+) r[ b( \{ l[ 1 ] \} ) ] \} ) ] (+)
                  r[ phi ]
              \} )
          ] (+)
          r[ b( \{ r[ b( \{ l[ 1 ] \} ) ] \} ) ] (+)
          r[ b( \{ r[ phi ] (+) phi \} ) ]
          =
          \verb#\#mu Z:FixpVar .
          r[ a( \{ Z:FixpVar \} ) ] (+)
          r[ b( \{ r[ phi ] (+) r[ b( \{ l[ 1 ] \} ) ] \} ) ] (+)
          r[ b( \{ r[ phi ] \} ) ]  .)
\end{alltt}

For the generated specification {\CIRC} terminates and outputs a positive result:

\begin{alltt}
\fontsize{9}{10}\selectfont{}Proof succeeded.
  Number of derived goals: 15
  Number of proving steps performed: 58
  Maximum number of proving steps is set to: 256

Proved properties:
- r[phi] (+) (\verb#\#mu Y. r[phi] (+) r[a(\{r[phi] (+) Y\})] (+) r[b(\{r[phi]\})]
  (+) r[b(\{r[phi] (+) r[b(\{l[1]\})]\})])
  =
  \verb#\#mu Z. r[a(\{Z\})] (+) r[b(\{r[phi]\})] (+) r[b(\{r[phi] (+) r[b(\{l[1]\})]\})]
- r[b(\{l[1]\})] = r[phi] (+) r[b(\{l[1]\})]
- \verb#\#mu Y. r[phi] (+) r[a(\{r[phi] (+) Y\})] (+) r[b(\{r[phi]\})] (+)
  r[b(\{r[phi] (+) r[b(\{l[1]\})]\})]
  =
  \verb#\#mu Z. r[a(\{Z\})] (+) r[b(\{r[phi]\})] (+) r[b(\{r[phi] (+) r[b(\{l[1]\})]\})]
- \verb#\#mu X. r[a(\{X\})] (+) r[a(\{\verb#\#mu Y. r[phi] (+) r[a(\{r[phi] (+) Y\})] (+)
  r[b(\{r[phi]\})] (+) r[b(\{r[phi] (+) r[b(\{l[1]\})]\})]\})] (+)
  r[b(\{r[phi] + phi\})] (+) r[b(\{r[b(\{l[1]\})]\})]
  =
  \verb#\#mu Z. r[a(\{Z\})] (+) r[b(\{r[phi]\})] (+) r[b(\{r[phi] (+) r[b(\{l[1]\})]\})]
\end{alltt}

\end{example}

\subsection{Implementation}
\label{sec:code}

In this section we present details on the implementation of the algebraic specification given in Section~\ref{sec:algSpec}, based on the examples from Section~\ref{sec:caseStudy}.

In order to generate the algebraic specifications for {\CIRC} when provided a functor and two expressions, we used the Maude system \cite{DBLP:conf/maude/2007}. We choose it for its suitability for performing equational and rewriting logic based computations, and because of its reflective properties allowing for the development of advanced metalanguage applications. As the technical aspects on how to work at the meta-level are beyond the scope of this paper, we refrain from presenting them and show, instead, what the generated specifications consist of.

Most of the algebraic specifications from Section~\ref{sec:algSpec} have a straightforward implementation in Maude. Consider, for instance, the case of Mealy machines presented in Example~\ref{eg:mealy}. The generated grammars for functors (\ref{eq:fun-gram}) and expressions (Definition~\ref{def:expr}) are coded as:
\begin{alltt}
\fontsize{9}{10}\selectfont{}sort Functor .                          sorts Exp ExpStruct Alph Slt .
sorts AlphName SltName .                subsort Exp < ExpStruct .
subsort SltName < Functor .             enum A is a b . enum B is 0 1 .
                                        subsort A < Alph .
op A : -> AlphName .                    subsort B < Slt .
op B : -> SltName .                     
op G : -> Functor .                     op _`(+`)_ : Exp Exp -> Exp .
op Id : -> Functor .                    op _`(_`) : Alph Exp -> Exp .
op _+_ : Functor Functor -> Functor .   op \verb#\#mu_._ : FixpVar Exp -> Exp .
op _^_ : Functor AlphName -> Functor .  ops l<_> r<_> : Exp -> Exp .
op _x_ : Functor Functor -> Functor .   op phi : -> Exp .

                          eq G = (B x Id) ^ A .
\end{alltt}

Most of the syntactical constructs are Maude-specific: \code{sorts} and \code{subsort} declare the sorts we work with and, respectively, the relations between them; \code{op} declares operators; \code{eq} declares equations (the equation in our case defines the shape of the functor \code{G}). The only {\CIRC}-specific construct, \code{enum}, is syntactic sugar for declaring enumerable sorts, \emph{i.e.}, sorts that consist only of the specified constants. As a side note, if brackets (\code{(}, \code{[}, \code{\{})  are used in the declaration of an operation, then they must be preceded by a backquote (\code{`}).

As mentioned in Section~\ref{SCP-sec:prelim}, in order to guarantee the finiteness of our procedure, one needs to include the ACI axioms for \code{(+)}. \ans{C.1.5} Moreover, we have observed that the unity axiom for \code{(+)} plays an important role in decreasing the number of states generated by the repeated application of $\delta_{\Gf}$, therefore improving the overall time performance of the tool.
\ans{C.2.23} For example, the number of rewritings {\CIRC} performed in order to prove the bisimilarity of $\E_1$ and $\E_2$ in Figure~\ref{fig:mealy-ab-trans-BC} was halved when the unity axiom was used.

By turning on the axiomatisation flag using the command \code{(axioms on .)}, the following code is generated:

\begin{alltt}
\fontsize{9}{10}\selectfont{}op _`(+`)_ : Exp Exp -> Exp [assoc comm] .
eq E:Exp (+) E:Exp = E:Exp .
eq E:Exp (+) phi = E:Exp .
\end{alltt}

It is an obvious question why not to add other axioms to the tool, since the unity axiom has improved performance. At this stage we have not studied in detail how much adding other axioms would help. It is in any case a trade-off on how many extra axioms one should include, which will get the automaton produced from an expression closer to the minimal automaton, and how much time the tool will take to reduce the expressions in each step modulo the axioms. For classical regular expressions, there is an interesting empirical study on this~\cite{derivatives-jfp09}. We leave it as future work to carry on a similar study for our expressions and axioms.

The process of substituting fixed-point variables has a natural implementation. We present the equations handling the basic expressions $\emp$ and $x$, and the operation \code{(+)}:
\begin{alltt}
\fontsize{9}{10}\selectfont{}op _`[_/_`] : Exp Exp FixpVar -> Exp .
eq phi [ E:Exp / X:FixpVar ] = phi .
ceq Y:FixpVar [ E:Exp / X:FixpVar ] = E:Exp if (X:FixpVar == Y:FixpVar) .
eq Y:FixpVar [ E:Exp / X:FixpVar ] = Y:FixpVar [owise] .
eq (E1:Exp (+) E2:Exp) [ E:Exp / X:FixpVar ] = 
   (E1:Exp [E:Exp / X:FixpVar]) (+) (E2:Exp [E:Exp / X:FixpVar]) .
\end{alltt}
  
In order to avoid matching problems and to overcome the fact that in Maude one cannot handle an equation that has fresh variables in its right-hand-side (\emph{i.e.}, they do not appear in the left-hand-side), we replace expression variables with parameterised constants: \code{op var : Nat -> FixpVar .} The operation that obtains this canonical form has an inductive definition on the structure of the given expression and makes use of the substitution operation presented above. For this reason, the bisimulation {\CIRC} builds contains parameterised constants instead of the user declared variables. The property proved in Example~\ref{eg:ccs} is, therefore, written as:

\begin{alltt}
\fontsize{9}{10}\selectfont{}\verb#\#mu var(2) . r[a(\{var(2)\})] (+) r[a(\{\verb#\#mu var(1) . r[phi] (+)
r[a(\{r[phi] (+) var(1)\})] (+) r[b(\{r[phi]\})] (+) r[b(\{r[phi] (+)
r[b(\{l[1]\})]\})]\})] (+) r[b(\{r[phi] (+) phi\})] (+) r[b(\{r[b(\{l[1]\})]\})]
=
\verb#\#mu var(1) . r[a(\{var(1)\})] (+) r[b(\{r[phi]\})] (+)
r[b(\{r[phi] (+) r[b(\{l[1]\})]\})]
\end{alltt}

The most important part of the algebraic specification consists of the equations defining the operations $\delta\_(\_)$, $\emph{Plus}\_(\_,\_)$, and $\emph{Empty}$. Most of these equations are implemented as presented in \cite{brs_lmcs}. The only difficulties we encountered were for the exponentiation case, as Maude does not handle higher-order functions. Without entering into details, as a workaround, we introduced a new sort \code{Function < ExpStruct} and an operation \code{${\mathbf \textbackslash}$. : ExpoCase Alph Functor ExpStruct -> Function} in order to emulate function-passing. The first argument is used to memorize the origin where the exponentiation ingredient is encountered: $\delta$, $\emph{Plus}$, or $\emph{Empty}$. Its purpose is purely technical -- we use it in order to avoid some internal matching problems. The other three parameters are those of the structured expression $\lambda {.}(a,\FtrlG,\sigma)$ presented in Section~\ref{sec:algSpec}: a letter in the alphabet, an ingredient, and some other structured expression.

Another thing worth describing is the way we enable {\CIRC} to prove equivalences when the powerset functor occurs. Namely, we present how interpolant (\ref{srl:pow}) is implemented. Recall that we want to show that two sets of expressions are equivalent, which means that for each expression in the first set there must be an equivalent one in the second set and vice-versa.

\newcommand{\largeusq}{{\fontsize{12}{13}\selectfont\_}}
In order to handle sets of structured expressions we introduce a new sort, \code{ExpStructSet} as a supersort for \code{ExpStruct}. We also consider the set separator \code{\largeusq,\largeusq \,\,: ExpStructSet ExpStructSet -> ExpStructSet [assoc,comm]}, the empty set \code{emptyS : -> ExpStructSet}, and the set wrapping operation \code{\{\largeusq\} : ExpStructSet -> ExpStruct}. In order to mimic universal quantification over a set, we use a special constant referred to as token ``\code{[/]}''. In what follows, we consider two variables of sort \code{ExpStructSet}: \code{ES} and \code{ES'}, and two variables of sort \code{ExpStructSet}: \code{ESS} and \code{ESS'}. We now describe the process of finding the equivalence between two sets:
\begin{itemize}
\item whenever encountering two wrapped expression sets we add the universal quantification token to each of them in two distinct goals:
\begin{alltt}
\fontsize{9}{10}\selectfont srl \{ESS\} = \{ESS'\} => \{[/] ESS\} = \{ESS'\} /\verb#\# \{ESS\} = \{[/] ESS'\} .
\end{alltt}
\item iterate through the expressions on the left-hand-side (similarly for the other direction):
\begin{alltt}
\fontsize{9}{10}\selectfont srl \{[/] (ES , ESS)\} = \{ESS'\} =>
     \{[/] ES\} = \{ESS'\} /\verb#\# \{[/] ESS\} = \{ESS'\} .
 srl \{ESS\} = \{[/] (ES' , ESS')\} =>
     \{ESS\} = \{[/] ES'\} /\verb#\# \{ESS\} = \{[/] ESS'\} .
\end{alltt}
\item when left with one expression on the left-hand-side, start iterating through the expressions on the right-hand-side until finding an equivalence (similarly for the other direction):
\begin{alltt}
\fontsize{9}{10}\selectfont srl \{[/] ES\} = \{ES' , ESS'\} => ES = ES' \verb#\#/ \{[/] ES\} = \{ESS'\} .
 srl \{ES , ESS\} = \{[/] ES'\}  => ES = ES' \verb#\#/ \{ESS\} = \{[/] ES'\} .
\end{alltt}
\item if no equivalence has been found, transform the current goal into a visible failure:
\begin{alltt}
\fontsize{9}{10}\selectfont srl \{ESS\} = emptyS => true = false .
 srl emptyS = \{ESS\} => true = false .
\end{alltt}

\end{itemize}

Finally, the type checker for structured expressions has a straightforward implementation. Its code does not appear in the generated specification as it is only used when the tool receives the expressions as input. This prevents obtaining the specification and starting the prover in case invalid expressions are provided.

\section{Discussion}
\label{SCP-sec:concl}
In this chapter we provided a decision procedure for the bisimilarity of generalised regular expressions. In order to enable the implementation of the decision procedure, we have exploited an encoding of coalgebra into algebra, and we formalised the equivalence between the coalgebraic concepts associated with non-deterministic coalgebras \cite{brs_lmcs} and their algebraic correspondents. This led to the definition of algebraic specifications (\algspec) that model both the language and the coalgebraic structure of expressions.
Moreover, we defined an equational deduction relation ($\vdashInd$), used on the algebraic side for reasoning on the bisimilarity of expressions.

The most important result of the parallel between the coalgebraic and algebraic approaches is given in Corollary~\ref{cor:ii}, which formalises the definition of the bisimulation relations in algebraic terms. Actually, this result is the key for proving the soundness of the decision procedure implemented in the automated prover {\CIRC} \cite{lucanu-etal-2009-calco}. As a coinductive prover, {\CIRC} builds a relation $ F$ closed under the application of $\delta_\itG$ with respect to $\vdashInd$ ($\algspec \cup \efr{ F} \vdashInd \efr{\delta_\itG({ F})}$), hence automatically computing a bisimulation the initial proof obligations belong to.

The approach we present in this chapter enables {\CIRC} to perform
reasoning based on bisimulations (instead of experiments
\cite{rosu-lucanu-2009-calco}). This way, the prover is extended to
checking bisimilarity in a large class of systems that can be
modelled as non-deterministic coalgebras. Note that the constructions above are all automated -- the (non-trivial) {\CIRC} algebraic specification describing $\algspec$, together with the interpolants implementing $\vdashInd$ are generated with the Maude tool presented in Section~\ref{sec:caseStudy}.

\newpage\null\thispagestyle{empty}\newpage
\chapter{Decorated trace and testing semantics coalgebraically}
\label{ch:dec-trace-testing}

The study of behavioural equivalence of systems has been a research
topic in concurrency for many years now. For different kinds of systems, several types of behavioural equivalences and preorders have been proposed
throughout the years, each suitable for use in different
contexts of application.

In Chapter~\ref{ch:dec-bisim} we showed how (co)algebras can be used in order to model and reason on bisimilarity of expressions describing non-deterministic systems.

The focus of this chapter is on a suite of other semantics of interest for labelled transition systems (LTS's), generative probabilistic systems (GPS's) and labelled transition systems with divergence. More explicitly, we consider \emph{decorated trace semantics} including ready, failure, (complete) trace, possible-futures, ready trace and failure trace for LTS's, as described in~\cite{Glabbeek01} and ready, (maximal) failure and (maximal) trace for GPS's, as introduced in~\cite{GPS-Jou-Smolka}. For the case of divergent LTS's, the emphasis is on \emph{must} and \emph{may testing semantics}~\cite{CleavelandH89}.

In short, our approach consists in providing a coalgebraic modelling of the aforementioned systems and their semantics. The latter are derived by employing the generalised powerset construction~\cite{gen-pow} and proved equivalent with their counterparts as defined in~\cite{CleavelandH89,Glabbeek01,GPS-Jou-Smolka}. This further allows reasoning on the corresponding notions of behavioural equivalence/preorder in terms of (Moore-) bisimulations.

We further provide the intuition behind decorated trace and testing semantics.

At the left-hand side of Figure~\ref{fig:lattice} we illustrate the hierarchy (based on the coarseness level) among bisimilarity, ready, failure, (complete) trace, possible-futures, ready trace and failure trace semantics for LTS's, as introduced in ~\cite{Glabbeek01}.
On the right-hand side a similar hierarchy is depicted for bisimilarity, ready, (maximal) failure and (maximal) trace semantics for GPS's, as in~\cite{GPS-Jou-Smolka}.
For example, for both types of systems, bisimilarity (the standard behavioural equivalence on $\F$-coalgebras) is the finest of the semantics, whereas trace semantics is the coarsest one. Moreover, note that for the case of GPS's, maximality does not yield more 
distinguishing power and ready and failure semantics are equivalent.
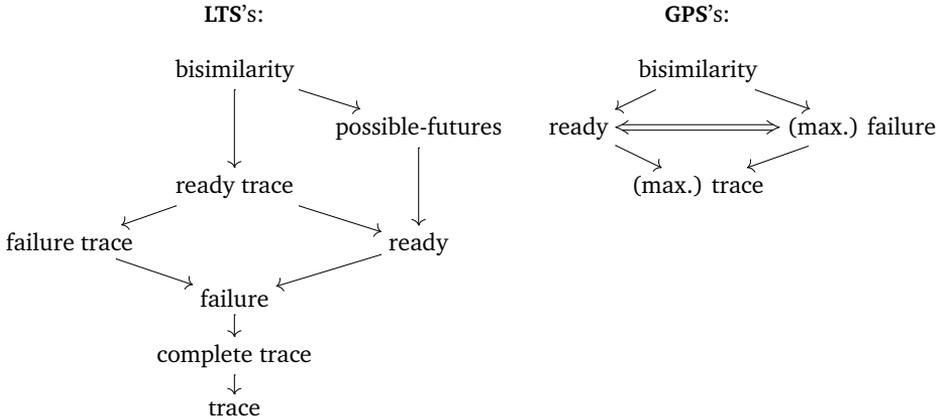
\begin{figure}
\centering
\[
\hspace{-9pt}
\small
\xymatrix@C=.1cm@R=.25cm{
& \textnormal{{$\mathbf {LTS}$}'s:} & & & & \textnormal{{$\mathbf {GPS}$}'s:} &\\
& \textnormal{bisimilarity}\ar[dd]\ar[dr] & & & & \textnormal{bisimilarity}\ar[dr]\ar[dl] &\\
& & \textnormal{possible-futures}\ar[dd] && \textnormal{ready}\ar[dr] \ar@{<=>}[rr]&& \textnormal{(max.) failure}\ar[dl]\\
& \textnormal{ready trace}\ar[dl]\ar[dr] & && & \textnormal{(max.) trace} &\\
\textnormal{failure trace}\ar[dr] & & \textnormal{ready}\ar[dl] && &  &\\
& \textnormal{failure}\ar[d] & && & &\\
& \textnormal{complete trace}\ar[d] & &&\\
& \textnormal{trace} & &&
}
\]
\caption{Lattices of semantic equivalences for LTS's and GPS's.}
\label{fig:lattice}
\end{figure}


In order to get some intuition on the type of distinctions the
equivalences above encompass, consider the following
LTS's: 
\vspace*{-.15cm}
\[
\vspace*{-.15cm}
\xymatrix@C=.6cm@R=.4cm{& p \ar[d]^{a} \ar[dl]_{a} & && q \ar[d]^{a} & && r \ar[dr]^{a}\ar[dl]_{a} & && s \ar[d]_a \ar[dr]^{a}\ar[dl]_{a} \\
\bullet &\bullet\ar[dr]^{c}\ar[dl]_{b}& & & \bullet\ar[dr]^{c}\ar[dl]_{b}& &\bullet \ar[d]_{b}&& \bullet\ar[d]^{c} &   \bullet\ar[d]_b&\bullet \ar[rd]_b\ar[ld]^c& \bullet\ar[d]^c\\
 \bullet&& \bullet & \bullet&&\bullet& \bullet&  & \bullet&\bullet&&\bullet }
\]
None of the top states of the systems above are bisimilar. The state $p$ is the only one among the four in which an action $a$ can lead to a deadlock state, whereas $q, r$ and $s$ have a different branching structures.

The traces of the states $p, q, r$ and $s$ are $\{a, ab, ac\}$, and therefore they are
all trace equivalent.
Of the four states above, $q$ and $r$ and $s$ are complete trace equivalent as they can execute the same traces that lead to states where no further action are possible, whereas $p$ is the only state that can trigger $a$ and terminate.

Ready (respectively, failure) semantics identifies states according to the set of
actions
they can (respectively, fail to) trigger immediately after a certain trace has been executed. None of the states above are ready equivalent; for example, after the execution of action $a$, process $p$ can reach a deadlock state whereas $q$ has always to choose between actions $b$ and $c$. Orthogonally, only $r$ and $s$ are failure equivalent.

Possible-futures semantics identifies states that can perform the same traces $w$ and, moreover, the states reached by executing such $w$'s are trace equivalent. None of the states above are possible-futures equivalent. For example, after triggering action $a$, $p$ can reach a deadlock state (with no further behaviour) whereas $q$ can execute the set of traces $\{b, c\}$.

Ready (respectively failure) trace semantics identifies states that can trigger the same traces $w$ and the (pairwise-taken) intermediate states determined by such $w$'s are ready (respectively refuse) to trigger the same sets of actions.
None of the systems above is ready trace equivalent. For example, after performing action $a$, process $q$ reaches a state that is ready to trigger both $b$ and $c$, whereas $r$ cannot. The analysis on failure trace equivalence follows a similar reasoning, but different results.

The corresponding semantic equivalences in Figure~\ref{fig:lattice} distinguish between $p, q, r$ and $s$ as summarised in the table below:
\[
\begin{array}{r | c | c | c | c | c | c |}
& p,q & p, r & p,s & q,r & q,s & r,s\\
\hline

\textnormal{bisimilarity} & \times & \times & \times & \times & \times & \times\\ 
\hline

\textnormal{trace} & \checkmark & \checkmark & \checkmark & \checkmark & \checkmark & \checkmark\\
\hline

\textnormal{complete trace} & \times & \times & \times & \checkmark & \checkmark & \checkmark\\
\hline

\textnormal{ready} &  \times & \times & \times & \times & \times & \times\\ 
\hline

\textnormal{failure} & \times & \times & \times & \times & \times & \checkmark\\ 
\hline

\textnormal{possible-futures} & \times & \times & \times & \times & \times & \times\\ 
\hline

\textnormal{ready trace} & \times & \times & \times & \times & \times & \times\\ 
\hline

\textnormal{failure trace} & \times & \times & \times & \times & \times & \checkmark\\ 
\hline

\end{array}
\]
where $\checkmark$ to stands for a ``yes'' answer with respect to the behavioural equivalence of two of the states $p,q,r$ and $s$, whereas $\times$ represents a ``no'' answer.

Intuitively, GPS's resemble LTS's, with the difference that each transition is labelled by both an action and the probability of that action being executed.
For more insight on decorated trace semantics for GPS's, consider the following systems:
\[
\hspace{-3.5pt}
\xymatrix@C=.72cm@R=.3cm{
& p'\ar[d]^{a[1]} && \qquad\qquad& &  q'\ar[dl]_{a[x]}\ar[dr]^{a[1-x]}
\\
& \bullet\ar[dl]_{b[x]}\ar[dr]^{b[1-x]} && & \bullet\ar[d]_{{b[1]}} & & \bullet\ar[d]^{b[1]} &
\\
\bullet\ar[d]_{c[1]} & & \bullet\ar[d]^{d[1]} && \bullet\ar[d]_{c[1]} & & \bullet\ar[d]^{d[1]} 
\\
\bullet & & \bullet && \bullet & & \bullet
}
\]

In the setting of GPS's, decorated trace semantics take into consideration paths $w$ which can be executed by a probabilistic process $p$. Reasoning on the corresponding equivalences is based on the sum of probabilities of occurrence of such $w$'s that, for example, lead $p$ to a set of processes, for the case of trace semantics, or to a set of processes that (fail to) trigger the same sets of actions as a first step, for ready (respectively, failure) semantics.

In~\cite{GPS-Jou-Smolka} a notion of \emph{maximality} was introduced for the case of trace and failure semantics. Intuitively, the former takes into consideration the probability of a process $p$ to execute a certain trace $w$ and terminate, whereas the latter takes into consideration the largest set of actions $p$ fails to trigger as a first step after the execution of $w$.
However,
it has been proven in~\cite{GPS-Jou-Smolka} that maximality does not increase the distinguishing power of decorated trace semantics and, moreover, ready and failure equivalence of GPS's coincide.

With respect to (maximal) trace semantics, amongst the systems above, $p'$ and $q'$ are equivalent: they have the same probability of executing traces $w \in \{\varepsilon, a, ab, abc, abd\}$. Moreover, each such $w$ leads $p'$ and $q'$ to sets of processes $S_{1}, S_{2}$ ready to fire the same actions. Consequently, $S_{1}$ and $S_{2}$ fail to trigger the same sets of actions as a first step. Hence, $p'$ and $q'$  are both ready and maximal failure equivalent at the same time. None of the processes above are bisimilar: the corresponding states reached via transitions labelled $a$ (with total probability $1$) display different behaviour as they either have different branching structure, or can trigger different actions.

Orthogonally, as previously stated, in this chapter we also focus on providing a coalgebraic modelling of must and may testing semantics for divergent LTS's.

Intuitively, in the setting of testing semantics, fixed a set of tests, two systems are deemed to be equivalent if they
pass exactly the same tests. With concurrent non-deterministic
processes, a system may pass a test in some, but not all, its
executions. This leads to the definitions of \emph{may testing} (a
system may pass a test in some execution) and \emph{must testing} (a
system must pass a test in all its executions).

However, alternative trace-based characterisations of must and may testing were provided in~\cite{CleavelandH89,DBLP:journals/tcs/NicolaH84,Hennessy:1988:ATP:50497}. Intuitively, must testing preorder abstracts from infinite internal computations. It relates two processes p and q only if, for each trace w, whenever p
does not engage in divergent behaviour in its attempt to execute w, then 
so does q. Moreover, $q$ has to be ``less non-deterministic'' than $p$ -- a property established based on the inclusion of the acceptance (ready) sets associated with $q$ and $p$, respectively.
Two processes are must equivalent whenever the must preorder relates them in both directions.
May testing preorder (respectively, equivalence) coincides with the usual language inclusion (respectively, equality).

Consider for an example the following two systems, where $\tau$ is used to represent an internal computation step:

\[
\xymatrix@C=.72cm@R=.5cm{
& & p\ar[dl]_{a}\ar[dr]^{d} & & & & q\ar[dl]_{a}\ar[d]^{a}\ar[dr]^{d} &\\
& \bullet\ar[dl]_{b}\ar[dr]^{c} & & \bullet\ar@(d,r)_\tau & & \bullet\ar[d]_{b} & \bullet\ar[d]^{c} & \bullet\\
\bullet & & \bullet & & & \bullet & \bullet & &\\
}
\]

Processes $p$ and $q$ cannot be related in terms of the must testing semantics. On the one hand, $q$ does not diverge with respect to action $d$, whereas $p$ diverges. On the other hand, $p$ is less non-deterministic than $q$, as the ready set $\{\{b, c\}\}$ of $p$ after performing action $a$ is not included in the ready set $\{\{b\}, \{c\}\}$ of $q$. However, $p$ and $q$ are may testing equivalent as they both execute the same sets of (visible) traces $\{\eps, a, d, ab, ac\}$.

In this chapter we show how decorated trace, must and may testing semantics can be recovered in a coalgebraic setting by employing the generalised powerset construction in~\cite{gen-pow}. The derived coalgebraic characterisations leads to canonical representatives in terms of final Moore automata which further enabled reasoning by constructing bisimulations witnessing the desired notion of behavioural equivalence/preorder. Moreover, as we also saw in the previous chapter, this result is interesting from the point of view of tool development as well: 
construction of bisimulations is known to be particularly suitable for automation.

It is also interesting to observe that the spectrum of decorated trace semantics in Figure~\ref{fig:lattice} can be recovered from our coalgebraic modelling. The procedure is briefly summarised in Section~\ref{sec:recover-spectrum}, for the case of failure and complete trace semantics for LTS's, and ready and trace semantics for GPS's, respectively.
 
\textit{Organisation of the chapter.}
In Section~\ref{sec:dec-tr-lts} and Section~\ref{sec:dec-tr-gps}, we show how the powerset construction can be applied for determinising LTS's and GPS's, respectively, in terms of Moore automata $(X, f\colon X \to B \times X^A)$, in order to coalgebraically characterise the corresponding decorated trace semantics. Here we also prove that the obtained coalgebraic models are equivalent to the original definitions, and 
illustrate how one can reason about decorated trace equivalence by constructing (Moore) bisimulations.
A compact overview on the uniform coalgebraic framework is given in Section~\ref{sec:nuthsell}.
Section~\ref{sec:canon} discusses that the canonical representatives of LTS's and GPS's we obtain coalgebraically coincide with the corresponding minimal automata one would obtain by identifying all states equivalent with respect to a particular decorated trace semantics.
In Section~\ref{sec:recover-spectrum} we show that the spectrum of decorated trace semantics can be obtained from the coalgebraic modelling.
A coalgebraic modelling of may and must testing semantics, respectively, is provided in Section~\ref{sec:testing-semantics} by exploiting extensions of trace and failure semantics, respectively, to the context of LTS's with internal behaviour.
Finally, Section~\ref{sec:concl-traces} contains a summary of the results in this chapter.

\section{Decorated trace semantics of LTS's}
\label{sec:dec-tr-lts}

In this section, our aim is to provide a coalgebraic view on decorated trace equivalences of labelled transition systems (LTS's).
We  use the generalised powerset construction and
show how one can determinise arbitrary LTS's obtaining particular instances of Moore automata
(with different output sets) in order to model ready, failure, (complete) trace, possible-futures, ready trace and failure trace equivalences.
This paves the way to building a general framework for reasoning on decorated trace equivalences in a uniform fashion, in
terms of bisimulations (up-to context).

\def\tr#1{\stackrel{#1}{\to}}

Note that our results are derived in the context of image finite LTS's, in accordance with the setting proposed in~\cite{Glabbeek01}.
An LTS is a pair $(X, \delta)$ where $X$ is a set
of states and $\delta\colon X \to (\pow X)^A$ is a function assigning to
each state $x\in X$ and to each label $a\in A$ a finite set of possible
successors states. We write $x \xrightarrow{a} y$ whenever $y \in \delta(x)(a)$. We extend the notion of transition to words $w = a_1\ldots a_n \in A^*$ as follows: $x \xrightarrow{w} y$ if and only if $x \xrightarrow{a_1} \ldots \xrightarrow{a_n} y$. For $w = \eps$, we have $x \xrightarrow{\eps} y$ if and only if $y = x$.

The coalgebraic characterisation of ready, failure and (complete) trace was originally obtained in~\cite{gen-pow}. We recall it here, with a slight adaptation which will be useful for the generalisations we will explore. Given an arbitrary LTS
\[
(X, \delta \colon  X \rightarrow (\Powf X)^A),
\]
one constructs a \emph{decorated} LTS, which is a coalgebra of the functor $\F_{\I}(X) = \B_\I \times (\Powf X)^A$. More precisely, we construct 
\[
(X, <\overline{o}_\I, \delta> \colon  X \rightarrow \B_\I \times (\Powf X)^A),
\]
where the output operation
\[
\overline{o}_\I \colon X \rightarrow \B_\I
\]
provides the observations  of interest (the decorations) corresponding to the original LTS and depending on the equivalence ($\I$) we want to study.
Note that both the output operation and its codomain are parameterised by $\I$.

Then, the decorated LTS is determinised as depicted in Figure~\ref{fig:GenSettMoore}, according to the powerset construction summarised in diagram~(\ref{F-final}) in Section~\ref{prelim:gen-pow}.
Recall that the {generalised powerset construction} is applied in the framework of coalgebras $f\colon X \to \F T(X)$ for a functor $\F$ and a
monad $T$, with $\F T(X)$ a $T$-algebra. Intuitively, monads are used to hide computational effects such as non-determinism, whereas the requirement that $\F T(X)$ is an algebra for $T$ guarantees the unique extension of $f$ to a $T$-algebra homomorphism $f^{\sharp}$ representing a new coalgebra with state space hiding the computational effects.
Consequently, this extension enables reasoning on $\F$-equivalence in the coalgebra $f^{\sharp}$, rather than reasoning on the (finer) $\F T$-equivalence in the coalgebra $f$. 
%

For the case of decorated LTS's, we instantiate $T$ with the powerset monad $(\pow, \eta, \mu)$ such that $\eta(x) = \{x\}$ and $\mu(U) = \bigcup_{S_{i}\in U}S_{i}$, and $\F$ with $\F_{\I} = B_{\I} \times (\pow(-))^{A}$. Moreover, $\F T(X)$ carries a $T$-algebra structure, that is a semilattice, as $\Pow(X)$ and $B_\I$ are a semilattices (as we shall see later, for each of the semantics $\I$), and product and exponentiation preserve the algebra structure. We will see that this extension enables shifting from reasoning on bisimilarity of decorated LTS's to reasoning on the (coarser) language (trace) equivalence.
Note that the semilattice structures ensure the existence of least upper bounds, which further enable the definition of $f^{\sharp} = \langle o, t\rangle$ and $\llbracket - \rrbracket$ as semilattice morphisms.
In Figure~\ref{fig:GenSettMoore} we use $\bigsqcup$ to denote both the operation of $\B_{\I}$ and the union of subsets in $\Powf X$.
\begin{figure}[ht]
\centering\vspace{-.7cm}
\[
\xymatrix@C=1cm@R=.5cm{
X \ar[r]^{\{-\}}\ar[dd]_{<\overline{o}_\I, \delta>} & \Powf X \ar@{-->}[rr]^{\llbracket - \rrbracket}\ar[ddl]^{<o,t>} && (\B_\I)^{A^*} \ar[dd]^{<\epsilon, (-)_a>}\\
 & &\\
\F_\I X = \B_\I \times (\Powf X)^A \ar@{-->}[rrr]_{id_{\B_\I} \times {\llbracket - \rrbracket}^A} & && \B_\I \times ((\B_\I)^{A^*})^A
}
\]
\[
{\small\begin{array}{l}
o(Y) = \bigsqcup_{y \in Y} \overline{o}_\I (y)\\
t(Y)(a) = \bigsqcup_{y \in Y} \delta(y)(a)
\end{array} \qquad\qquad \begin{array}{l}
\llbracket{Y}\rrbracket(\varepsilon) = o(Y)\\
\llbracket{Y}\rrbracket(a w) = \llbracket{\bigsqcup_{y\in Y} \delta(y)(a)}\rrbracket(w)\\
\end{array}  }
\]
\caption{The powerset construction for decorated LTS's.}
\label{fig:GenSettMoore}
\end{figure}

The coalgebraic modelling of possible-futures semantics could easily be recovered by following a similar approach. However, for the case of ready and failure trace semantics the transition structure of the LTS also needs to be slightly modified before the determinisation. This consists in changing  the alphabet $A$ to include additional information represented by sets of actions ready to be triggered as a first step. Consequently, to each LTS $(X, \delta \colon X\rightarrow(\Powf X)^{A})$ a unique coalgebra $(X, <\overline{o}_\I, \bar{\delta} \colon X\rightarrow(\Powf X)^{\bar{A}}>)$ is associated, defined in a natural fashion, as we will present later on. The construction in Figure~\ref{fig:GenSettMoore} is then applied on $(X, <\overline{o}_\I,\bar{\delta}>)$.

The explicit instantiations of $\overline{o}_\I$ and $\B_\I$ are provided later in this section, where we will also show that the coalgebraic modelling in fact coincides with the original definitions of the corresponding equivalences. This was not formally shown in~\cite{gen-pow}, for any of the aforementioned semantics.

The coalgebraic modelling of decorated trace semantics enables the definition of the corresponding equivalences as Moore bisimulations~\cite{Rutten00} ({\emph i.e.}, bisimulations for a functor $\M = \B_\I \times X^A$).
This way, checking behavioural equivalence of $x_1$ and $x_2$ reduces to checking the equality of their unique representatives in the final coalgebra: $\llbracket \{x_1\} \rrbracket$ and $\llbracket \{x_2\} \rrbracket$ .

In the subsequent sections we a) prove the details on the coalgebraic modelling of ready, failure, (complete) trace, possible-futures, ready trace and failure trace semantics, b) show that the corresponding representations coincide with their original definitions in\linebreak \cite{Glabbeek01} and c) demonstrate, by means of examples, how the associated coalgebraic framework can be used in order to reason on (some of) the aforementioned equivalences in terms of Moore bisimulations.

%

%
\subsection{Ready and failure semantics}
\label{sec:ready-fail}
In this section we show how the ingredients of Figure~\ref{fig:GenSettMoore} can be instantiated in order to provide a coalgebraic modelling of ready and failure semantics. We also prove that the resulting coalgebraic characterisations of these semantics are equivalent to their original definitions in~\cite{Glabbeek01}. Moreover, we provide an optimisation that can be used when reasoning on failure equivalence, based on the isomorphism of downsets and antichains.

Consider an LTS $(X, \delta\colon  X \rightarrow (\Powf X) ^ A)$ and define, for a function $\varphi\colon A \rightarrow \Powf X$, the set of \emph{actions enabled by $\varphi$}:
\begin{equation}
\label{eq:def-ini}
\begin{array}{l}
I(\varphi) = \{a \in  A \mid \varphi(a) \not= \emptyset\},
\end{array}
\end{equation}
and the set of \emph{actions $\varphi$ fails to enable}:
\[
Fail(\varphi) = \{Z \subseteq A \mid Z \cap I(\varphi) = \emptyset\}.
\]
For the particular case $\varphi = \delta(x)$, $I(\delta(x))$ denotes the set of all (initial) actions ready to be fired by $x \in X$, and $Fail(\delta(x))$ represents the set of subsets of all (initial) actions that cannot be triggered by such $x$.

A \emph{ready pair} of $x$ is a pair $(w, Z) \in A^* \times \Powf A$ such that $x \xrightarrow{w} y$ and $Z = I(\delta(y))$.
A \emph{failure pair} of $x$ is a pair $(w, Z) \in A^* \times \Powf A$ such that $x \xrightarrow{w} y$ and $Z \in Fail(\delta(y))$.
We denote by $\Rs(x)$ and $\Fs(x)$, respectively, the sets of \emph{all ready pairs} and \emph{failure pairs}, respectively, associated with $x$.

Intuitively, ready semantics identifies states in $X$ based on the actions $a \in A$ they can immediately trigger after performing a certain action sequence $w \in A^*$, {\emph i.e.}, based on their ready pairs. It was originally defined as follows:
\begin{definition}[Ready equivalence~\cite{DBLP:journals/acta/OlderogH86,Glabbeek01}]
\label{def:R-equiv}
Let $(X, \delta \colon X \rightarrow (\Powf X)^A)$ be an LTS and $x, y \in X$ two states. States $x$ and $y$ are \emph{ready equivalent} ($\Rs$-equivalent) if and only if they have the same set of ready pairs, that is $\Rs(x) = \Rs(y)$, where
\[
\Rs(x) = \{(w, Z) \in A^* \times \Powf A \mid \exists x' \in X .\, x \xrightarrow{w} x' \land Z = I(\delta(x'))\}.
\]
\end{definition}

Failure semantics identifies behaviours of states in $X$ according to their failure pairs.
\begin{definition}[Failure equivalence~\cite{Glabbeek01}]
\label{def:F-equiv}
Let $(X, \delta \colon X \rightarrow (\Powf X)^A)$ be an LTS and $x, y \in X$ two states. States $x$ and $y$ are \emph{failure equivalent} ($\Fs$-equivalent) if and only if $\Fs(x) = \Fs(y)$, where
\[
\Fs(x) = \{(w, Z) \in A^* \times \Powf A \mid \exists x' \in X .\, x \xrightarrow{w} x' \land Z \in Fail(\delta(x'))\}.
\]
\end{definition}

The coalgebraic modelling of ready, respectively, failure semantics is obtained in a uniform fashion, by instantiating the ingredients of Figure~\ref{fig:GenSettMoore} as follows. For $\I \in \{\Rs, \Fs\}$,
$\overline o_{\I} \colon X \rightarrow \Powf(\Powf A)$ is defined as:
\[
\overline{o}_\Rs(x)  =   \{I(\delta(x))\}\qquad\qquad
\overline{o}_\Fs(x)  =  Fail(\delta(x)).
\]
Intuitively, in the setting of ready semantics, the observations provided by the output operation refer to the sets of actions ready to be executed by the states of the LTS. Similarly, for failure semantics, the output operation refers to the sets of actions the states of the LTS cannot immediately fire.

\begin{remark}
\label{rem:pow-pow-ready}
Observe that the codomain of $\bar{o}_{\Rs}$ is $\Powf(\Powf A)$, and not $\Powf A$, as one might expect. However, this is consistent with the intended semantics. For $\B_{\mathcal I} = \B_\Rs = \B_{\Fs} = \Powf(\Powf A)$, the final Moore coalgebra has carrier $(\Powf(\Powf A))^{A^*}$ which is isomorphic to $\mathscr P(A^* \times \Powf(A))$ the type of $\Rs(x)$ and $\Fs(x)$. The unique homomorphism into the final coalgebra will associate to each state $\{x\}$ a function that for each $w \in A^*$ returns a set containing all sets $R_{x'}$ of ready (resp. failed) actions triggered by all $x'$ such that $x \xrightarrow{w} x'$, for $x, x' \in X$. 
\end{remark}
Next, we will prove the equivalence between the coalgebraic modelling of ready and failure semantics and their original definitions, presented above. More explicitly, given an arbitrary LTS $(X, \delta\colon X \rightarrow (\Powf X)^A)$ and a state $x\in X$, we want to show that $\llbracket \{x\} \rrbracket$ is equal to $\I(x)$, for $\I \in \{\Rs, \Fs\}$, depending on the semantics of interest. However, note that the definition of $\llbracket - \rrbracket$ is independent of $\I$; the difference is (implicitly) made by the output function $\overline{o}_{\I}$.

The behaviour of a state $x\in X$ is a function $\llbracket \{x\} \rrbracket\colon A^* \rightarrow \Powf(\Powf A)$, whereas $\I(x)$ is defined as a set of pairs in $A^* \times \Powf A$. We represent the set $\I(x) \in \Pow(A^* \times \Powf A)$ by a function $\varphi^{\I}_{x}\colon \Powf(\Powf A)^{A^*}$, where, for $w \in A^*$,
\[
\begin{array}{l}
\varphi^\Rs_{x}(w)=\{I(\delta(y)) \mid x \xrightarrow{w} y\}\\
\varphi^\Fs_x(w) = \{Z \subseteq A \mid x \xrightarrow{w} y \land Z \in Fail(\delta(y))\}.
\end{array}
\]
Showing the equivalence between the coalgebraic and the original definitions of ready, respectively, failure semantics reduces to proving that
\begin{equation}
\label{eq:interp-iso}
(\forall x \in X)\,.\,\llbracket \{x\} \rrbracket = \varphi^\I_{x}.
\end{equation}
\begin{theorem}
\label{thm:eqiv-ready}
Let $(X, \delta\colon X \rightarrow (\Powf X)^A)$ be an LTS. Then for all $x\in X$, $w\in A^*$, and $\I \in \{\Rs, \Fs\}$, $\llbracket \{x\} \rrbracket(w) = \varphi^\I_{x}(w)$.
\end{theorem}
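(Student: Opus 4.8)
The plan is to prove the identity by induction on the length of the word $w$, but with the induction hypothesis strengthened so that it ranges over arbitrary finite sets of states rather than over singletons. Concretely, for $\I\in\{\Rs,\Fs\}$ I would extend the function $\varphi^{\I}$ to finite sets $Y\in\Powf X$ by $\varphi^{\I}_{Y}(w)=\bigcup_{y\in Y}\varphi^{\I}_{y}(w)$, and prove
\[
\llbracket Y\rrbracket(w)=\varphi^{\I}_{Y}(w)\qquad\text{for all }Y\in\Powf X,\ w\in A^{*}.
\]
The theorem is then the special case $Y=\{x\}$. This strengthening is the only real subtlety: the recursive clause defining $\llbracket-\rrbracket$ in Figure~\ref{fig:GenSettMoore} sends a singleton $\{x\}$, after one step, to the set $\bigsqcup_{y\in\{x\}}\delta(y)(a)=\delta(x)(a)$, which is in general not a singleton, so an induction confined to singletons does not close. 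Equivalently, one could invoke the fact recorded in the preliminaries that $\llbracket-\rrbracket$ is a join-semilattice homomorphism into the final Moore coalgebra $(\B_\I)^{A^*}$ with its pointwise semilattice structure, hence $\llbracket Y\rrbracket=\bigsqcup_{y\in Y}\llbracket\{y\}\rrbracket$; then an induction on $w$ for singletons alone suffices.

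For the base case $w=\eps$ I would just unfold definitions: $\llbracket Y\rrbracket(\eps)=o(Y)=\bigsqcup_{y\in Y}\overline{o}_{\I}(y)$, where $\bigsqcup$ denotes union in $\Powf(\Powf A)$. For $\I=\Rs$ this is $\bigcup_{y\in Y}\{I(\delta(y))\}$, and since $y\xrightarrow{\eps}z$ holds exactly when $z=y$, $\varphi^{\Rs}_{Y}(\eps)=\bigcup_{y\in Y}\{I(\delta(z))\mid y\xrightarrow{\eps}z\}=\bigcup_{y\in Y}\{I(\delta(y))\}$, so the two sides coincide. The case $\I=\Fs$ is identical, with $\overline{o}_{\Fs}(y)=Fail(\delta(y))$ in place of $\{I(\delta(y))\}$.

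For the inductive step, write $w=aw'$. The recursive clause of $\llbracket-\rrbracket$ gives $\llbracket Y\rrbracket(aw')=\llbracket\,\bigsqcup_{y\in Y}\delta(y)(a)\,\rrbracket(w')$, and since $Y$ is finite and each $\delta(y)(a)$ is finite, the set $\bigcup_{y\in Y}\delta(y)(a)$ is again a finite set of states; so the induction hypothesis applies to it and the shorter word $w'$, yielding $\bigcup_{z\in\bigcup_{y\in Y}\delta(y)(a)}\varphi^{\I}_{z}(w')$. It then remains to match this with $\varphi^{\I}_{Y}(aw')$, for which I would use the definition of $\xrightarrow{aw'}$ from Section~\ref{sec:dec-tr-lts}, namely that $y\xrightarrow{aw'}u$ iff there is $z\in\delta(y)(a)$ with $z\xrightarrow{w'}u$. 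For $\I=\Rs$ this gives
\[
\varphi^{\Rs}_{y}(aw')=\{I(\delta(u))\mid y\xrightarrow{aw'}u\}=\bigcup_{z\in\delta(y)(a)}\{I(\delta(u))\mid z\xrightarrow{w'}u\}=\bigcup_{z\in\delta(y)(a)}\varphi^{\Rs}_{z}(w'),
\]
and taking the union over $y\in Y$ and reassociating produces exactly $\bigcup_{z\in\bigcup_{y\in Y}\delta(y)(a)}\varphi^{\Rs}_{z}(w')$, closing the induction. The $\Fs$ case runs along the same lines, replacing the predicate ``$Z=I(\delta(u))$'' by ``$Z\in Fail(\delta(u))$'' throughout, which is also why the two semantics are handled uniformly. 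I do not anticipate a genuine obstacle beyond keeping the index bookkeeping in these union manipulations straight, and beyond being careful that the codomain here is $\Powf(\Powf A)$ rather than $\Powf A$ (cf.\ Remark~\ref{rem:pow-pow-ready}), so that $o(Y)$ is the union of the one-element families $\{I(\delta(y))\}$ and not some flattened set.
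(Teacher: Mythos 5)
Your proof is correct and follows essentially the same route as the paper's: induction on words $w\in A^*$, with the same base case and the same unfolding of $\xrightarrow{aw'}$ into an $a$-step followed by $\xrightarrow{w'}$ in the inductive step. The only difference is that you strengthen the induction hypothesis to arbitrary finite sets $Y$ (or, equivalently, invoke the semilattice-homomorphism property of $\llbracket-\rrbracket$), whereas the paper's proof keeps singletons and passes through the step $\llbracket t(\{x\})(a)\rrbracket(w)=\bigcup_{x\xrightarrow{a}z}\llbracket\{z\}\rrbracket(w)$ without comment --- precisely the point your strengthening makes explicit, so your version is the more careful of the two.
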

\begin{proof}
For $\I$ ranging over $\{\Rs, \Fs\}$, the proof is
by induction on words $w \in A^*$.
We provide the details for the case of ready semantics. A similar reasoning can be applied for failure semantics.
\begin{itemize}\itemsep2pt
\item {\emph Base case.} $w = \eps$. We have:
$$
\begin{array}{r c l}
\llbracket \{x\} \rrbracket(\eps) & = & o(\{x\})=  \overline{o}_\I (x) = \{I(\delta(x))\}\\
\varphi^\Rs_{x}(\eps) & = & \{I(\delta(y)) \mid x \xrightarrow{\eps} y\} =  \{I(\delta(x))\}
\end{array}
$$
\item {\emph Induction step.} 
Consider $w \in A^*$ and assume, for all $x\in X$, $\llbracket \{x\} \rrbracket(w) = \varphi^\Rs_{x}(w)$. We want to prove that $\llbracket \{x\} \rrbracket(aw) = \varphi^\Rs_{x}(aw)$, where $a\in A$.
$$
\begin{array}{r c l}
\llbracket \{x\} \rrbracket(aw) & = &  \llbracket \delta(x)(a)\rrbracket(w) = \llbracket t(\{x\})(a) \rrbracket(w)\\
& = & \bigcup\limits_{x \xrightarrow{a} z} \llbracket \{z\} \rrbracket(w) \stackrel{\mathrm{IH}}=  \bigcup\limits_{x \xrightarrow{a} z} \varphi^\Rs_{z}(w) \\[2ex]
\varphi^\Rs_x(aw) & = & \{I(\delta(y)) \mid x \xrightarrow{aw} y\}\\
& = & \{I(\delta(y)) \mid  x \xrightarrow{a} z\land  z \xrightarrow{w} y \}\\
& = & \bigcup\limits_{x \xrightarrow{a} z} \{I(\delta(y)) \mid z \xrightarrow{w} y\}\\
& = & \bigcup\limits_{x \xrightarrow{a} z} \varphi^\Rs_{z}(w)
\end{array}
$$
\end{itemize}\vspace{-.7cm}
\end{proof}
\begin{example}
\label{ex-ready-equiv}
In what follows we illustrate the equivalence between the coalgebraic and the original definitions of ready semantics by means of an example.
Consider the following LTS.
\[\xymatrix@C=.7cm@R=.35cm{
&& {p_0}\ar[d]^{a}\ar@(dr,ur)_{a}&\\
{p_4} &{p_2}\ar[l]_{c}& {p_1}\ar[r]^{b}\ar[l]_{b} &{p_3}\ar[r]^{d} & {p_5}
}\]
We write $a^n$ to represent the action sequence $aa\ldots a$ of length $n \geq 1$, with $n \in {\mathbb N}$. The set $\Rs(p_0)$ of all ready pairs associated with $p_0$ is:
\[
\{(\eps, \{a\}),
(a^n, \{a\}),
(a^n, \{b\}),
(a^n b, \{c\}),
(a^n b, \{d\}), (a^n bc, \emptyset),
(a^n bd, \emptyset)
\mid n \geq 1
\}.
\]
We can construct a Moore automaton, for $S = \{p_0, p_1, \ldots, p_5\}$,
\[
(\pow S, <o,t> \colon  \Powf S \rightarrow \Powf(\Powf A) \times (\Powf S)^A)
\]
 by applying the generalised powerset construction on the LTS above. The automaton will have $2^6=64$ states. We depict the accessible part from state $\{p_0\}$, where the output sets are indicated by double arrows:
\begin{figure}[ht]
\centering
$\xymatrix@R=.4cm@C=.3cm{
 &&&\{p_0\}\ar[d]^{a} \ar@{=>}[r] & {\{\{a\}\}\phantom{\{a,b\}}}\\
&& & \ar@{=>}[r]  \{p_0, p_1\} \ar[d]^{b}
\ar@(d,l)^{a}& \{\{a\}, \{b\}\}\\
\{\emptyset\}&{\{p_4\}}\ar@{=>}[l]&&{\{p_2, p_3\}}\ar@{=>}[r] \ar@/_{1pc}/[rr]|{d}\ar[ll]_-{c}&  \{\{c\}, \{d\}\}&  {\{p_5\}}\ar@{=>}[r]& \{\emptyset\}\hspace{1cm}
}$
\caption{Ready determinisation when starting from $\{p_0\}$.}
\label{fig:ready-proc1-det}
\end{figure}
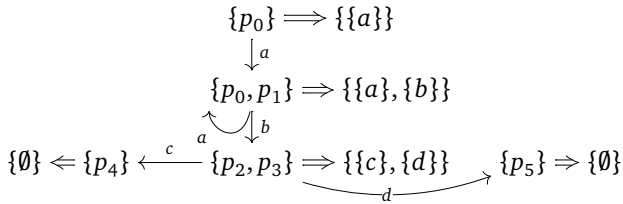
The output sets of a state $Y$ of the Moore automaton in Figure~\ref{fig:ready-proc1-det} is the set of actions associated with a certain state $y\in Y$ which can immediately be performed. For example, process $p_0$ in the original LTS above is ready to perform action $a$, whereas $p_1$ can immediately perform $b$. Therefore it holds that $o(\{p_0\}) = \{\{a\}\}$ and $o(\{p_0, p_1\}) = \{\{a\}, \{b\}\}$.

By simply looking at the automaton in Figure~\ref{fig:ready-proc1-det}, one can easily see that the set of action sequences $w \in A^*$ the state $\{p_0\}$ can execute, together with the corresponding possible next actions equals $\Rs(p_0)$. Therefore, the automaton generated according to the generalised powerset construction captures the set of all ready pairs of the initial LTS.
\end{example}

\begin{example} The last example considered in this section shows how the coalgebraic framework can be applied in order to reason on failure equivalence of LTS's. (Checking ready equivalence follows a similar approach.)
Consider the following two systems.
$$\xymatrix@R=.3cm@C=0.5cm{
p_1 & & p_0\ar@(ul,ur)|{a}\ar[ll]_{b}\ar[rr]^{c}\ar[dl]_{a}\ar[dr]^{a} & & p_2
& \hspace{10pt} &
q_1 & & q_0\ar@(ul,ur)|{a}\ar[ll]_{b}\ar[rr]^{c}\ar[dl]_{a}\ar[dr]^{a} & & q_2
\\
& p_3\ar@(dl,ul)^{a}\ar[dl]^{b}\ar[d]^{c} & & p_4\ar@(dr,ur)_{a}\ar[d]_{c}\ar[dr]_{f} &
& \hspace{10pt} &
& q_3\ar[dl]^{b}\ar[d]^{c}\ar@/_/[ur]|{a} & &
q_4\ar[d]_{c}\ar[dr]^{f}\ar@/^/[ul]|{a} &\\
p_5 & p_6\ar[d]^{d} & & p_7\ar[d]_{e} & p_8
& \hspace{10pt} &
q_5 & q_6\ar[d]^{e} & & q_7\ar[d]_{d} & q_8\\
& p_9 & & p_{10} &
& \hspace{10pt} &
& q_9 & & q_{10} &
}
$$
Let $Z = \{a_1, a_2, \ldots, a_n\}$ be the set of actions a process fails executing as a first step. For the simplicity of notation, we write $[{a_1a_2\ldots a_n}]$ to denote the set of all non-empty subsets $Z' \subseteq Z$.
For example, if $Z = \{a_1, a_2\}$, then $[{a_1a_2}]$ stands for $\{\{a_1\}, \{a_2\}, \{a_1, a_2\}\}$.

Note that $p_0$ and $q_0$ are $\Fs$-equivalent, according to Definition~\ref{def:F-equiv}, as they have the same sets of failure pairs $\Fs(p_0)$ and $\Fs(q_0)$, respectively, equal to:
$$
\begin{array}{l}
\{
(\eps, [def]),
(b, [abcdef]),
(c, [abcdef])\}\cup
\{
(a^n, [def]),
(a^n, [bde]),\\
\phantom{\{}(a^nb, [abcdef]),
(a^nc, [abcdef]),
(a^nc, [abcef]),
(a^nc, [abcdf]),\\
\phantom{\{}(a^nf, [abcdef]),
(a^ncd, [abcdef]),
(a^nce, [abcdef])\mid n \in \mathbb{N}, n\geq 1
\}.
\end{array}
$$
The same conclusion can be reached by checking behavioural equivalence of the two Moore automata generated according to the powerset construction, starting with $\{p_0\}$ and $\{q_0\}$. The fragments of the two automata starting from the states $\{p_0\}$ and $\{q_0\}$ are depicted in Figure~\ref{fig:failure-proc-det} at page~\pageref{fig:failure-proc-det}.
%
The states $\{p_0\}$ and $\{q_0\}$ are Moore bisimilar, 
since their corresponding automata are isomorphic.

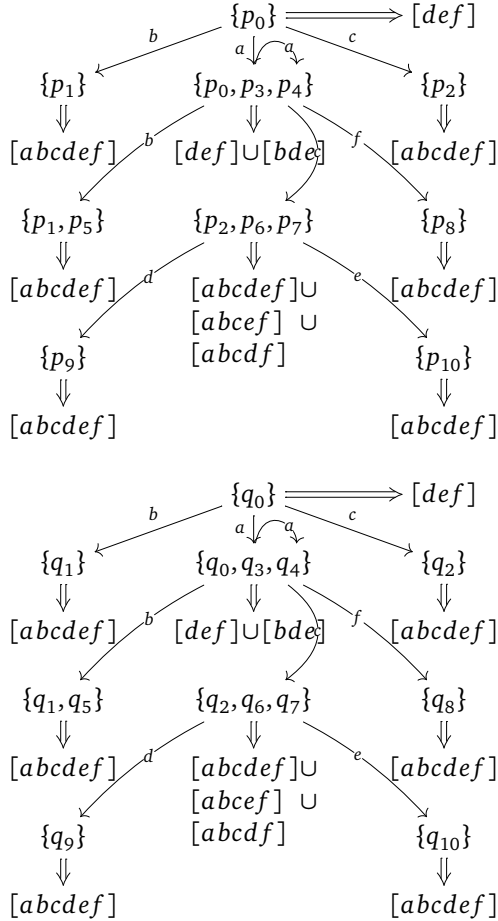
\begin{figure}[ht]
\centering
$\xymatrix@C=.5cm@R=.35cm{
& \{p_0\}\ar@{=>}[r]\ar[dl]_{b}\ar[d]_{a}\ar[dr]^{c} &  [def]
&  &\\
\{p_1\} \ar@{=>}[d]& \{p_0, p_3, p_4\}\ar@{=>}[d]\ar@(u,r)|{a}\ar@/_/[ddl]|{b}\ar@/^/[ddr]|{f}\ar@/^{2pc}/[dd]|{c} & {\{p_2\}}\ar@{=>}[d] &&\\
{[abcdef]} & [def]\!\cup\![bde]\ & [abcdef] &&\\
\{p_1, p_5\}\ar@{=>}[d] & \{p_2, p_6, p_7\}\ar@{=>}[d]\ar@/_/[ddl]|{d}\ar@/^/[ddr]|{e} & {\{p_8\}}\ar@{=>}[d]
&  &\\
[abcdef] &   {[abcdef]} \cup\ar@{}[d]|{\begin{array}{lr} \\{[abcef]} & \!\!\!\!\cup\\ {[abcdf]} &\end{array}} & [abcdef] &&\\
\{p_9\}\ar@{=>}[d]&    &\{p_{10}\}\ar@{=>}[d]& &\\
 [abcdef]&& [abcdef] &&\\
& \{q_0\}\ar@{=>}[r]\ar[dl]_{b}\ar[d]_{a}\ar[dr]^{c} &[def]\\
{\{q_1\}}\ar@{=>}[d]  & \{q_0, q_3, q_4\}\ar@{=>}[d]\ar@(u,r)|{a}\ar@/_/[ddl]|{b}\ar@/^/[ddr]|{f}\ar@/^{2pc}/[dd]|{c} & \{q_2\}\ar@{=>}[d]\\
[abcdef]&[def]\!\cup\![bde]\ & [abcdef]& \\
\{q_1, q_5\} \ar@{=>}[d]&\{q_2, q_6, q_7\}\ar@{=>}[d]\ar@/^/[ddr]|{e}\ar@/_/[ddl]|{d} & \{q_8\}\ar@{=>}[d]\\
[abcdef]&{[abcdef]} \cup\ar@{}[d]|{\begin{array}{lr} \\{[abcef]} & \!\!\!\!\cup\\ {[abcdf]} &\end{array}}& [abcdef]& \\
\{q_9\}\ar@{=>}[d]&&\{q_{10}\} \ar@{=>}[d]&\\
 [abcdef]&& [abcdef]
}
$
\caption{Failure determinisation when starting from $\{p_0\}$ and $\{q_0\}$.}
\label{fig:failure-proc-det}
\end{figure}

\end{example}

\paragraph{optimisation for failure semantics.}
In this section we showed how failure semantics can be modelled in a coalgebraic setting, by employing the generalised powerset construction. More explicitly, given a state $p$ of an LTS $(S, \delta\,:\,S \rightarrow (\pow S)^{A})$, we showed how to build a (final) Moore coalgebra $(\pow S, \langle o,t\rangle\,:\,\pow S \rightarrow (\pow(\pow))^{A^{*}})$ ``capturing'' the corresponding set of failure pairs $\Fs(p)$, hence enabling reasoning on failure equivalence in terms of Moore bisimulations.

An optimised, equivalent modelling of failure semantics can be provided by exploiting the standard isomorphism between downsets and antichains. As we shall see, this enables reasoning on the corresponding equivalence more effectively, based on bisimulations of Moore automata with ``smaller'' output sets consisting of ready actions.

A \emph{downset} in $\pow(A)$ is a set $D\subseteq \pow(A)$ such that if $Z \in D$ and $Z' \subseteq Z$ then $Z'\in D$.
We use $\overline{\mathcal{D}}(\pow(A))$ denote the set of downsets of $\pow(A)$. 
Note that we can define a semilattice $(\overline{\mathcal{D}}(\pow(A)), \sqcup, 0)$ by taking $\sqcup$ as being the union and $0$ as the empty set.

An \emph{antichain} on $\pow(A)$ is a set $I\subseteq \pow(A)$ such that if $Z \in I$ then there exists no $Z'\in I$ such that $Z' \subset Z$. 
We use $\overline{\mathcal{A}}(\pow(A))$ denote the set of antichains of $\pow(A)$. 
Note that the union of antichains is not necessarily an antichain.
However, we can define a semilattice on $\overline{\mathcal{A}}(\pow(A))$ by taking the $\sqcup$ defined as $I_1 \sqcup I_2 = min (I_1 \cup I_2)$ where
\begin{equation}
\label{eq:min}
min(I) = \set{Z \in I \mid (\not \exists Z' \in I)\,.\, Z' \subset Z}.
\end{equation}
Now consider the homomorphisms $i \colon \overline{\mathcal{D}}(\pow(A)) \rightarrow \overline{\mathcal{A}}(\pow(A))$ defined as 
\begin{equation}
\label{eq:iso-i}
i(F) = min(\cup_{F_{i} \in F} \set{A - F_{i}})
\end{equation}
and $j \colon \overline{\mathcal{A}}(\pow(A)) \rightarrow \overline{\mathcal{D}}(\pow(A))$
defined as 
\begin{equation}
\label{eq:iso-j}
j(I) = \downarrow(\cup_{I_{i} \in I} \{A - I_{i}\}),
\end{equation}
where $\downarrow S$ denotes the downward closure of a set $S$. It is easy to see that one homomorphism is the inverse of the other and thus the 
semilattices $\overline{\mathcal{D}}(\pow(A))$ and $\overline{\mathcal{A}}(\pow(A))$ are isomorphic.

At this point, it is worth to observe that for all $X\in \pow(S)$, the Moore output function $o(X)$ is a downset 
(since $\overline{o}_{\Fs}(x)$ is a downset for all $x$, and since the union of downset is a downset). 
Therefore we can safely restrict the codomain of $o \colon \pow(S) \to \pow(\pow(A))$, 
to $o \colon \pow(S) \to \overline{\mathcal{D}}(\pow(A))$. By exploiting the isomorphism discussed above, 
we can instead define the function 
$o_{1}
\colon \pow(S) \to \overline{\mathcal{A}}(\pow(A))$ as follows: for all $X\in \pow(S)$
\begin{align*}
  o_{1} (X) &=
  \begin{cases}
    \set{I(\delta(x))} & \text{ if } X= \set x \text{ with } x\in S\\
    0 & \text{ if } X= 0\\
    min(o_{1}(X_1) \sqcup o_{1}(X_2)) & \text{ if } X= X_1 \sqcup X_2\\
  \end{cases}
\end{align*}

\begin{proposition}
\label{prop:iso-fail-ini}
 For all $X,Y\in \pow(S)$, $o(X)=o(Y)$ iff $o_{1}(X)=o_{1}(Y)$.
\end{proposition}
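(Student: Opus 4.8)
The plan is to show that $o_{1}$ factors through $o$ via the isomorphism $i$ of~(\ref{eq:iso-i}), that is, $o_{1} = i \circ o$ as maps $\pow(S) \to \overline{\mathcal{A}}(\pow(A))$; Proposition~\ref{prop:iso-fail-ini} then follows at once, since $i$ is a bijection. First I would record the two ingredients that are already available: (a) by the construction in Figure~\ref{fig:GenSettMoore}, $o(X) = \bigsqcup_{x \in X} \overline{o}_{\Fs}(x)$, and, as observed just before the proposition, $o(X)$ is a downset for every $X$, so $o$ restricts to a semilattice homomorphism $(\pow(S),\sqcup,0) \to (\overline{\mathcal{D}}(\pow(A)),\sqcup,0)$, where $\sqcup$ is union on both sides; (b) by~(\ref{eq:iso-i})--(\ref{eq:iso-j}), $i$ and $j$ are mutually inverse semilattice homomorphisms, hence $i$ is an isomorphism (in particular injective).

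Next I would prove $o_{1} = i \circ o$ by induction on the structure of $X \in \pow(S)$, viewing $\pow(S)$ as generated by $0$ and the singletons under $\sqcup$. For $X = 0$: $o(0) = 0$ and $i(0) = min(\emptyset) = \emptyset = o_{1}(0)$. For $X = \set{x}$: here $o(\set x) = \overline{o}_{\Fs}(x) = Fail(\delta(x)) = \set{Z \subseteq A \mid Z \subseteq A \setminus I(\delta(x))}$, so the set $\set{A - Z \mid Z \in Fail(\delta(x))}$ occurring in~(\ref{eq:iso-i}) is exactly the upward closure $\set{W \subseteq A \mid W \supseteq I(\delta(x))}$, whose unique minimal element is $I(\delta(x))$; hence $i(o(\set x)) = \set{I(\delta(x))} = o_{1}(\set x)$. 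For $X = X_{1} \sqcup X_{2}$: using that $o$ and $i$ are homomorphisms together with the induction hypothesis,
\[
i(o(X_{1} \sqcup X_{2})) = i\bigl(o(X_{1}) \sqcup o(X_{2})\bigr) = i(o(X_{1})) \sqcup i(o(X_{2})) = o_{1}(X_{1}) \sqcup o_{1}(X_{2}) = o_{1}(X_{1} \sqcup X_{2}),
\]
where the last equality is the defining clause of $o_{1}$ (the outer $min$ being redundant once its arguments are already antichains, equivalently absorbed into the antichain join $\sqcup$ on $\overline{\mathcal{A}}(\pow(A))$).

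Having established $o_{1} = i \circ o$, the proof concludes by injectivity of $i$: $o(X) = o(Y)$ iff $i(o(X)) = i(o(Y))$ iff $o_{1}(X) = o_{1}(Y)$. The only mildly delicate step is the singleton base case — checking that complementing the downset $Fail(\delta(x))$ and taking minimal elements collapses to the single ready set $I(\delta(x))$ — and keeping the bookkeeping of $\sqcup$ versus $min$ on antichains consistent throughout the inductive step; everything else is a routine consequence of $o$, $i$ and $j$ being semilattice homomorphisms.
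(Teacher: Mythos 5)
Your proof is correct and follows essentially the same route as the paper: the paper's own argument is exactly the identity $o_{1} = i \circ o$ together with the fact that $i$ and $j$ are mutually inverse semilattice isomorphisms, which you have simply spelled out in more detail (in particular the induction establishing $o_{1}=i\circ o$ and the singleton computation $i(Fail(\delta(x)))=\set{I(\delta(x))}$, both of which check out).
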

\begin{proof}
The proof follows from the fact that $o_{1} = i \circ o$ and that $i \colon \overline{\mathcal{D}}(\pow(A)) \to \overline{\mathcal{A}}(\pow(A))$ and 
$j\colon \overline{\mathcal{A}}(\pow(A)) \to \overline{\mathcal{D}}(\pow(A))$ are isomorphic.
\end{proof}

This optimisation can be applied also for the case of failure trace semantics in Section~\ref{sec:ready-fail-tr}. Moreover, as presented in Section~\ref{sec:fail-must}, the isomorphism of downsets and antichains is used for the coalgebraic modelling of must testing semantics.

\subsection{(Complete) trace semantics}
\label{sec:complete-tr}

In this section we model coalgebraically trace and complete trace semantics.
Similar to the previous section, we also show that the corresponding coalgebraic representations
of these semantics are equivalent to their original definitions.

Consider an LTS $(X, \delta\colon  X \rightarrow (\Powf X)^A)$. Trace semantics identifies states in $X$ according to the set of words $w \in A^*$ they can execute, whereas
complete trace semantics identifies states $x \in X$ based on their
set of complete traces. A trace $w \in A^*$ of $x$ is complete if and only if $x$
can perform $w$ and reach a deadlock state $y$ or, equivalently,
\[(\exists y \in X)\,.\,x \xrightarrow{w} y \land I(\delta(y))= \emptyset.\]
The difference between trace and complete semantics is
that the latter enables an external observer to detect stagnation, or deadlock states of a system.

Formally, trace and complete trace equivalences are defined as follows.
\begin{definition}[Trace equivalence~\cite{Hoare:1978:CSP:359576.359585,Glabbeek01}]
\label{def:T-equiv}
Let $(X, \delta \colon X \rightarrow (\Powf X)^A)$ be an LTS and $x, y \in X$ two states. States $x$ and $y$ are \emph{trace equivalent} ($\Ts$-equivalent) if and only if $\Ts(x) = \Ts(y)$, where
\begin{equation}
\label{eq:trace}
\Ts(x) = \{w \in A^* \mid \exists x' \in X .\, x \xrightarrow{w} x'\}.
\end{equation}
\end{definition}

\begin{definition}[Complete trace equivalence~\cite{sos}]
\label{def:CT-equiv}
Consider an LTS
$(X, \delta \colon X \rightarrow (\Powf X)^A)$ and $x, y \in X$ two states.
States $x$ and $y$ are \emph{complete trace equivalent} ($\CTs$-equivalent) if and only if $\CTs(x) = \CTs(y)$, where
\[
\CTs(x) = \{w \in A^* \mid \exists x' \in X .\, x \xrightarrow{w} x' \land I(\delta(x')) = \emptyset\}\text{.}
\]
\end{definition}

In what follows we instantiate the constituents of Figure~\ref{fig:GenSettMoore} in order to provide the associated coalgebraic modellings.

For $\I \in \{\Ts, \CTs\}$, the output function $\overline{o}_{\I}\colon X \rightarrow 2$ is:
\[
\overline{o}_\Ts(x) = 1\qquad \quad
\overline{o}_{\CTs}(x) =
\left \{
\begin{array}{ r l}
1 & \textnormal{ if $I(\delta(x)) = \emptyset$}\\
0 & \textnormal{ otherwise}
\end{array}
\right.
\]
Note that, for trace semantics, one does not distinguish between traces and complete traces. Intuitively, all states are accepting, so they have the same observable behaviour ({\emph i.e.}, $\overline{o}_{\Ts}(\varphi) = 1$), no matter the transitions they perform. On the other hand, complete trace semantics distinguishes between deadlock states and states that can still execute actions $a \in A$.

Consider, for example, the following LTS:
\[
\xymatrix@C=1cm{
p_1 & p_0\ar[l]_{a}\ar@/^0.8pc/[r]|{a} & p_2\ar@/^0.8pc/[l]|{b}
}\]
Observe that, for each $n \geq 0$, $(ab)^{n}a$ is a complete trace of $p_0$, as
\begin{equation}
\label{eq:trace-comp}
p_0 \xrightarrow{a} p_2 \xrightarrow{b} p_0 \xrightarrow{a} p_2 \xrightarrow{b} \ldots \xrightarrow{b} p_0 \xrightarrow{a} p_1
\end{equation}
where $p_1$ cannot perform any further action. 

The above behaviour, described in terms of transitions between states of the Moore automaton
derived according to the generalised powerset construction, can be depicted as follows:
\[
\{p_0\} \xrightarrow{a} \{p_1, p_2\} \xrightarrow{b} \{p_0\} \xrightarrow{a} \{p_1, p_2\} \xrightarrow{b} \ldots \xrightarrow{b} \{p_0\} \xrightarrow{a} \{p_1, p_2\}
\]
where $p_1$ is a deadlock state and $p_2$ is not.

Intuitively, for $n \geq 0$, we can state that $(ab)^{n}a$ is
a complete trace of $\{p_0\}$, as the deadlock state $p_2 \in \{p_1, p_2\}$ can be reached from $\{p_0\}$ by performing $(ab)^{n}a$ (see~(\ref{eq:trace-comp})).%

Therefore, given $Y_1, Y_2 \subseteq X$ and $w \in A^*$ such that $Y_1 \xrightarrow{w} Y_2$,
we observe that $w$ is a complete trace of $Y_1$ whenever there exists a deadlock state $y \in Y_2$.
Otherwise, $w$ is not a complete trace of $Y_1$.

In the coalgebraic modelling, the above observations with respect to the\linebreak (non)stagnating states appear in the definition of
the function $o\colon \Powf(X)  \rightarrow 2$.
Note that, for example, $o(\{p_{1}, p_{2}\}) = 1$ and $o(\{p_{0}\}) = 0$ for the case of complete trace equivalence, as $p_{1}$ is a deadlock state and $p_{0}$ is not. For trace semantics we have $o(\{p_{1}, p_{2}\}) = o(\{p_{0}\}) = 1$.

Here, $\B_{\I} = 2$ and the final Moore coalgebra in Figure~\ref{fig:GenSettMoore} is the set of languages $2^{A^*}$ over $A$ (and the transition structure $<\epsilon, (-)_a>$ is simply given by Brzozowski derivatives). Therefore, we can state that the map into the final coalgebra associates to each  state $Y \in \Powf X$ the set of all traces corresponding to states $y \in Y$, namely, the language:
\[
L = \bigcup_{y \in Y}\{w \in A^* \mid (\exists y' \in X)\,.\, y \xrightarrow{w} y'\} .
\]

The set $\Pow(A^*)$ is isomorphic to the set of functions $2^{A^*}$ which enables us to represent the set $\I(x)$ in terms its characteristic function $\varphi^\I_x\colon A^* \rightarrow 2$ defined, for $\I \in \{\Ts, \CTs\}$, $w \in A^*$, as follows:
\[
\varphi^\Ts_x(w)  =  1 \textnormal{ if } \exists y \in X\,.\, x \xrightarrow{w} y\quad
\varphi^{\CTs}_x(w) =
\left \{
\begin{array}{ r l}
1 & \textnormal{ if $\exists y \in X\,.\, x \xrightarrow{w} y\,\land\,I(\delta(y)) =
\emptyset$} \\
0  & \textnormal{ otherwise}.
\end{array}
\right.
\]
Proving the equivalence between the coalgebraic and the classic definition of (complete) trace semantics reduces to showing that
\begin{equation}
\label{eq:trace-equality}
(\forall x \in X)\,.\,\llbracket \{x\} \rrbracket = \varphi^\I_{x}.
\end{equation}
\begin{theorem}
\label{thm:eqiv-trace}
Let $(X, \delta\colon X \rightarrow (\Powf X)^A)$ be an LTS. Then for all $x\in X$ and $w\in A^*$, $\llbracket \{x\} \rrbracket(w) = \varphi^\I_{x}(w)$.
\end{theorem}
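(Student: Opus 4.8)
The plan is to prove the identity $\llbracket \{x\} \rrbracket(w) = \varphi^\I_{x}(w)$ by induction on the length of $w \in A^*$, simultaneously for $\I \in \{\Ts, \CTs\}$, in exactly the style of the proof of Theorem~\ref{thm:eqiv-ready}. This is natural because $\llbracket - \rrbracket$, $o$ and $t$ are defined uniformly by the generalised powerset construction of Figure~\ref{fig:GenSettMoore}, and the only thing distinguishing the two semantics is the decoration $\overline{o}_\I$. Throughout I would rely on the two facts recorded in Figure~\ref{fig:GenSettMoore}: that $o(Y) = \bigsqcup_{y\in Y}\overline{o}_\I(y)$, and that $\llbracket - \rrbracket$ is a join-semilattice morphism, so that $\llbracket \{x\}\rrbracket(aw') = \llbracket t(\{x\})(a)\rrbracket(w') = \llbracket \bigsqcup_{x\tr{a}z}\{z\}\rrbracket(w') = \bigsqcup_{x\tr{a}z}\llbracket\{z\}\rrbracket(w')$. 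Here the join on $\B_\Ts = \B_{\CTs} = 2$ is Boolean or, and the convention $o(\emptyset) = \bigsqcup_{\emptyset} = 0$ takes care of words the state cannot perform.

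For the base case $w = \eps$ I would compute $\llbracket\{x\}\rrbracket(\eps) = o(\{x\}) = \overline{o}_\I(x)$ and compare it with $\varphi^\I_x(\eps)$. For $\I = \Ts$ both sides equal $1$, since $x \tr{\eps} x$ always holds. For $\I = \CTs$, one has $\overline{o}_{\CTs}(x) = 1$ exactly when $I(\delta(x)) = \emptyset$, and this is precisely the condition defining $\varphi^{\CTs}_x(\eps) = 1$, because the only $y$ with $x \tr{\eps} y$ is $x$ itself.

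For the induction step, with $w = a w'$ and $a\in A$, I would chain $\llbracket\{x\}\rrbracket(a w') = \bigsqcup_{x\tr{a}z}\llbracket\{z\}\rrbracket(w') = \bigsqcup_{x\tr{a}z}\varphi^\I_z(w')$, the last equality by the induction hypothesis. It then remains to check the purely ``semantic'' identity $\varphi^\I_x(a w') = \bigsqcup_{x\tr{a}z}\varphi^\I_z(w')$ directly from the definitions of $\varphi^\Ts$ and $\varphi^{\CTs}$: in both cases $x \xrightarrow{aw'} y$ holds iff there is some $z$ with $x \tr{a} z$ and $z \xrightarrow{w'} y$, and the side condition $I(\delta(y)) = \emptyset$ occurring only in the $\CTs$ case does not involve the splitting point $z$; hence the characteristic function of the union of the sets of (complete) traces distributes as a Boolean join. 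This closes the induction and, via~(\ref{eq:trace-equality}), yields the equivalence of the coalgebraic modelling with Definition~\ref{def:T-equiv} and Definition~\ref{def:CT-equiv}.

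I do not anticipate any genuine obstacle: the argument is a routine structural induction, essentially forced once one observes that $o$ and $\llbracket-\rrbracket$ are semilattice homomorphisms. The one point deserving a moment's care is the $\CTs$ case, where one must be sure that $o$ built by the powerset construction records ``\emph{some} reachable state deadlocks'' rather than ``\emph{all} reachable states deadlock''; but this is immediate from $o(Y) = \bigsqcup_{y\in Y}\overline{o}_{\CTs}(y)$ with $\bigsqcup$ on $2$ being Boolean or, together with $o(\emptyset) = 0$ handling the words $x$ cannot execute.
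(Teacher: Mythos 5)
Your proposal is correct and follows essentially the same route as the paper, which proves this theorem by induction on words exactly as in the proof of Theorem~\ref{thm:eqiv-ready}, using the base case $\llbracket\{x\}\rrbracket(\eps)=o(\{x\})=\overline{o}_\I(x)$ and the fact that $\llbracket-\rrbracket$ distributes over the join in the induction step. Your extra remarks on the $\CTs$ output recording ``some reachable state deadlocks'' and on $o(\emptyset)=0$ are sound and merely make explicit points the paper leaves implicit.
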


\begin{proof}
The proof is by induction on words ${w} \in A^{*}$ (similar to the proof of Theorem~\ref{thm:eqiv-ready}).
\end{proof}

\begin{example}
Consider the following two LTS's:
$$\xymatrix@C=0.5cm@R=0cm{
w_1& w_0\ar[l]_{a}\ar@(dr,ur)_{a}
& \hspace{20pt} & w'_0\ar@(dr,ur)_{a}.
}
$$
Observe that $w_0$ and $w'_0$ are trace equivalent (according to Definition~\ref{def:T-equiv}), as they output the same sets of traces
\[
\Ts(w_0) = \Ts(w'_0) = \{\eps\}\cup\{ a^n\mid  n\in \mathbb{N}, n\geq 1\}
\]
but they are not complete trace equivalent (according to Definition~\ref{def:CT-equiv}), as $w'_0$ can never reach a deadlock state, whereas $w_0$ can reach the stagnating state $w_1$.

The complete trace determinisation contains the sub-automata starting from states $\{w_0\}$ and $\{w'_0\}$ depicted in Figure~\ref{fig:compl-trace-proc2-det}:
\begin{figure}[ht]
\centering
$\xymatrix@C=0.25cm@R=0cm{
0&{\{w_0\}}\ar[rr]^-{a}\ar@{=>}[l]&
&{\{w_0,w_1\}}\ar@(dl,dr)|{a}\ar@{=>}[r] &1&&0& \{w'_0\}\ar@(dr,ur)_{a}\ar@{=>}[l]
}
$
\caption{Complete trace determinisation when starting from $\{w_0\}, \{w'_0\}$.}
\label{fig:compl-trace-proc2-det}
\end{figure}
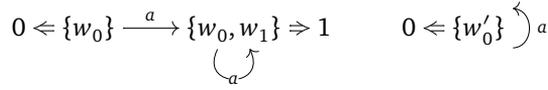
States $\{w_0\}$ and $\{w'_0\}$ are not behaviourally equivalent,
since $\{w_0, w_1\}$ outputs $1$, whereas $\{w'_0\}$ never reaches a state with this output. Hence, as expected, we will never be able to build a bisimulation containing states $\{w_0\}$ and $\{w'_0\}$.

On the other hand, in the setting of trace semantics, the determinised (Moore) automata associated with $w_{0}$ and $w_{0}'$, respectively, are similar to those depicted in Figure~\ref{fig:compl-trace-proc2-det}, with the difference that now all their states output value $1$. This makes the aforementioned automata bisimilar, hence providing a ``yes'' answer with respect to $\Ts$-equivalence of $w_{0}$ and $w_{0}'$, as anticipated.

\end{example}

\subsection{Possible-futures semantics}
\label{sec:poss-futures}
In what follows we provide a coalgebraic modelling of possible-futures semantics and show that it coincides with the original definition in~\cite{Glabbeek01}. We also give an example on how the generalised powerset construction and Moore bisimulations can be used in order to reason on possible-futures equivalence.

Let $(X, \delta \colon X \rightarrow (\Powf X)^{A})$ be an LTS. A \emph{possible future} of $x \in X$ is a pair $\langle w, T\rangle \in A^{*} \times \Pow(A^{*})$ such that $x \xrightarrow{w} y$ and $T = \Ts(y)$ (where $\Ts(y)$ is the set of traces of $y$, as in Section~\ref{sec:complete-tr}).

Possible-futures semantics identifies states that can trigger the same sets of traces $w \in A^{*}$ and moreover, by executing such $w$, they reach trace-equivalent states.

\begin{definition}[Possible-futures equivalence~\cite{DBLP:conf/focs/RoundsB81,Glabbeek01}]
\label{def:Pf-equiv}
Consider an LTS\linebreak $(X, \delta \colon X \rightarrow (\Powf X)^A)$ and $x, y \in X$ two states.
States $x$ and $y$ are \emph{possible-futures equivalent} ($\Pf$-equivalent) if and only if $\Pf(x) = \Pf(y)$, where
\[
\Pf(x) = \{\langle w, T \rangle \in A^{*} \times \Pow(A^{*}) \mid \exists x' \in X .\, x \xrightarrow{w} x' \land T = \Tr(x')\}\text{.}
\]
\end{definition}

The ingredients of Figure~\ref{fig:GenSettMoore} are instantiated as follows.

The output function $\bar{o}_{\Pf}\colon X \rightarrow \Pow(\Pow A^{*})$, which refers to the set of traces enabled by states $x \in X$ of the LTS, is defined as
\[
\bar{o}_{\Pf}(x) =  \{\Tr(x)\}.
\]
Here, $\B_{\I} = \B_{\Pf} = \Pow(\Pow A^{*})$ 
and the behaviour of a state $x \in X$ in the final coalgebra is given in terms of a function 
$\llbracket \{x\} \rrbracket \colon A^{*} \rightarrow \Pow(\Pow A^{*})^{A^*}$, 
which, intuitively, for each $w \in A^{*}$ returns the set of sets $T_y$ of traces corresponding to states $y \in X$ such that $x \xrightarrow{w} y$.

Next we want to show that for each $x \in X$, $\llbracket \{x\} \rrbracket$ and $\Pf(x)$ coincide.

First we choose to equivalently represent $\Pf(x) \in \Pow(A^{*} \times \Pow(A^{*}))$ -- the set of all possible futures of a state $x \in X$ -- in terms of $\varphi^{\Pf}_{x}\in (\Pow(\Pow A^{*}))^{A^{*}}$, where
\[
\varphi^{\Pf}_{x}(w)=\{\Tr(y) \mid x \xrightarrow{w} y\},
\]

Showing the equivalence between the coalgebraic and the original definition of possible-futures semantics reduces to proving that
\begin{equation}
\label{eq:poss-fut-iso}
(\forall x \in X)\,.\,\llbracket \{x\} \rrbracket = \varphi^\Pf_{x}.
\end{equation}
\begin{theorem}
\label{thm:eqiv-poss-fut}
Let $(X, \delta\colon X \rightarrow (\Powf X)^A)$ be an LTS. Then for all $x\in X$ and $w\in A^*$, $\llbracket \{x\} \rrbracket(w) = \varphi^\Pf_{x}(w)$.
\end{theorem}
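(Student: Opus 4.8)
The plan is to follow, almost verbatim, the pattern of the proofs of Theorem~\ref{thm:eqiv-ready} and Theorem~\ref{thm:eqiv-trace}: an induction on the length of the word $w \in A^{*}$, using the recursive clauses for $\llbracket - \rrbracket$ from Figure~\ref{fig:GenSettMoore}, the fact that $\llbracket - \rrbracket$ is a semilattice morphism from $\Powf X$ into $\Pow(\Pow A^{*})^{A^{*}}$ equipped with the pointwise-union structure, and the explicit shape of the determinised output map on the possible-futures instantiation, namely $o(Y) = \bigsqcup_{y \in Y} \bar{o}_{\Pf}(y) = \{\Tr(y) \mid y \in Y\}$, which is immediate from $\bar{o}_{\Pf}(x) = \{\Tr(x)\}$ and the fact that $\bigsqcup$ on $\Pow(\Pow A^{*})$ is ordinary union. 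Together with~(\ref{eq:poss-fut-iso}) this yields that the coalgebraic model of possible-futures semantics coincides with Definition~\ref{def:Pf-equiv}.

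For the base case $w = \eps$ I would compute both sides directly:
\[
\llbracket \{x\} \rrbracket(\eps) = o(\{x\}) = \{\Tr(x)\}, \qquad
\varphi^{\Pf}_{x}(\eps) = \{\Tr(y) \mid x \xrightarrow{\eps} y\} = \{\Tr(x)\},
\]
the last equality holding because $x \xrightarrow{\eps} y$ iff $y = x$. For the induction step, fix $a \in A$ and $w \in A^{*}$ and assume $\llbracket \{z\} \rrbracket(w) = \varphi^{\Pf}_{z}(w)$ for every $z \in X$. Writing $t(\{x\})(a) = \bigsqcup_{x \xrightarrow{a} z} \{z\}$ and pushing $\llbracket - \rrbracket$ through this finite join by the homomorphism property gives
\[
\llbracket \{x\} \rrbracket(aw) = \llbracket t(\{x\})(a) \rrbracket(w) = \bigcup\limits_{x \xrightarrow{a} z} \llbracket \{z\} \rrbracket(w) \stackrel{\mathrm{IH}}{=} \bigcup\limits_{x \xrightarrow{a} z} \varphi^{\Pf}_{z}(w),
\]
whereas, decomposing the $aw$-labelled transition on the semantic side,
\[
\varphi^{\Pf}_{x}(aw) = \{\Tr(y) \mid x \xrightarrow{aw} y\} = \bigcup\limits_{x \xrightarrow{a} z} \{\Tr(y) \mid z \xrightarrow{w} y\} = \bigcup\limits_{x \xrightarrow{a} z} \varphi^{\Pf}_{z}(w).
\]
Comparing the two displays closes the step; the case where $x$ has no $a$-successor is automatic, both sides collapsing to the bottom element $\emptyset$ of $\Pow(\Pow A^{*})$.

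The argument is routine and I anticipate no real obstacle. The only two points worth spelling out are that the output map $o$ obtained from the powerset construction really evaluates to $\{\Tr(y) \mid y \in Y\}$ on an arbitrary $Y \in \Powf X$ (so that $o(\{x\}) = \{\Tr(x)\}$ matches $\varphi^{\Pf}_{x}(\eps)$), and that $\llbracket - \rrbracket$ preserves joins — both already recorded in Section~\ref{prelim:gen-pow} (cf.\ Example~\ref{ex:pow-constr-NDA}) and exploited in the companion proof of Theorem~\ref{thm:eqiv-ready}. One feature specific to possible futures is that the decoration $\bar{o}_{\Pf}(x) = \{\Tr(x)\}$ records the trace set of the single state $x$ in the \emph{original} LTS, so the codomain of $\bar{o}_{\Pf}$ is $\Pow(\Pow A^{*})$ rather than $\Pow A^{*}$ — in the same spirit as $\bar{o}_{\Rs}$ in Remark~\ref{rem:pow-pow-ready} — and it is precisely this nesting that makes the final-coalgebra carrier $\Pow(\Pow A^{*})^{A^{*}}$ isomorphic to $\Pow(A^{*} \times \Pow A^{*})$, the type of $\Pf(x)$, thereby justifying the identification~(\ref{eq:poss-fut-iso}).
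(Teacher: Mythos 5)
Your proposal is correct and follows exactly the route the paper takes: the paper's proof of Theorem~\ref{thm:eqiv-poss-fut} is simply an induction on words mirroring the proof of Theorem~\ref{thm:eqiv-ready}, with the output map instantiated to $\bar{o}_{\Pf}(x)=\{\Tr(x)\}$, which is precisely what you have spelled out. Your base case and induction step, including the use of the semilattice-homomorphism property of $\llbracket - \rrbracket$ and the decomposition of $x \xrightarrow{aw} y$ through intermediate states $z$, coincide with the intended argument.
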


\begin{proof}
The proof is by induction on ${w} \in A^{*}$ (similar to the proof of Theorem~\ref{thm:eqiv-ready}).
\end{proof}

\begin{example}
\label{ex:poss-fut}
Consider the following LTS's.
\[\small
\xymatrix@C=.45cm@R=0.3cm{
& & p_{0}\ar[dl]_{a}\ar[dr]^{a} & & & &\\
& p_{1}\ar[dl]_{b}\ar[d]_{a}\ar[dr]^{a} & & p_{2}\ar[d]_{a}\ar[dr]^{a} & &\\
p_{3} & p_{4}\ar[dl]_{b}\ar[d]_{c} & p_{5}\ar[d]_{c} & p_{6}\ar[d]_{c} & p_{7}\ar[d]_{c}\ar[dr]^{b} &\\
p_{8} & p_{9}\ar[d]_{d} & p_{10}\ar[d]_{e} & p_{11}\ar[d]_{d} & p_{12}\ar[d]_{e} & p_{13}\\
 & p_{14} & p_{15} & p_{16} & p_{17} &}
\xymatrix@C=.45cm@R=0.3cm{
& & & q_{0}\ar[dl]_{a}\ar[dr]^{a} & &\\
& & q_{1}\ar[dl]_{a}\ar[d]^{a} & & q_{2}\ar[dl]_{a}\ar[d]_{a}\ar[dr]^{b} &\\
& q_{3}\ar[dl]_{b}\ar[d]^{c} & q_{4}\ar[d]_{c} & q_{5}\ar[d]_{c} & q_{6}\ar[d]_{c}\ar[dr]^{b} & q_{7}\\
q_{8} & q_{9}\ar[d]_{d} & q_{10}\ar[d]_{e} & q_{11}\ar[d]_{d} & q_{12}\ar[d]_{e} & q_{13}\\
& q_{14} & q_{15} & q_{16} & q_{17} &
}\]

Note that $p_{0}$ and $q_{0}$ are possible-futures equivalent, as the traces both can follow are sequences $w \in \{a, ab, aa, aab, aac, aacd, aace\}$ and moreover, by triggering the same $w$ they reach states with equal sets of traces. The equivalence between $p_{0}$ and $q_{0}$ can be formally captured in terms of a bisimulation relation $R$ on the associated Moore automata (generated according to the generalised powerset construction) depicted in Figure~\ref{fig:poss-fut-det}, where
\[
\begin{array}{@{}r@{}l}
R = \{&
(\{p_{0}\}, \{q_{0}\}),
(\{p_{1}, p_{2}\}, \{q_{1}, q_{2}\}),
(\{p_{3}\}, \{q_{7}\}), (\{p_{8}, p_{13}\}, \{q_{8}, q_{13}\}),\\
&
(\{p_{5}, p_{5}, p_{6}, p_{7}\}, \{q_{3}, q_{4}, q_{5}, q_{6}\}),
 (\{p_{9}, p_{10}, p_{11}, p_{12}\}, \{q_{9}, q_{10}, q_{11}, q_{12}\}),\\
&
(\{p_{14}, p_{16}\}, \{q_{14}, q_{16}\}),
(\{p_{15}, p_{17}\}, \{q_{15}, q_{17}\})\,\,
\}.
\end{array}
\]

It is easy to check that $R$ is a bisimulation, since both automata in Figure~\ref{fig:poss-fut-det} are isomorphic. (Note that equality of the outputs -- which are sets of traces -- can be established using the framework introduced in Section~\ref{sec:complete-tr}.)

\begin{figure}[H]
\[
\begin{array}{c}
{
\!\!\!\xymatrix@C=.1cm@R=0.35cm{
& & \{p_{0}\}\ar[d]_{a}\ar@{=>}[r] & \{\Tr(p_{0})\}\phantom{\Tr(p_{0})\}}\\
 \{\emptyset\} & \{p_{8}, p_{13}\}\ar@{=>}[l]& \!\!\!\!\!\!\!\!\{p_{1}, p_{2}\}\ar[d]_{b}\ar[dl]_{a}\ar@{=>}[r] & \{\Tr(p_{1}), \Tr(p_{2})\}\\
o_1 & \{p_{4}, p_{5}, p_{6}, p_{7}\}\ar[u]_{b}\ar[d]^{c}\ar@{=>}[l] & \{p_{3}\}\ar@{=>}[r] & \{\emptyset\}\phantom{\Tr(p_{0},\Tr(p_{0})}\\
o_2 & \{p_{9}, p_{10}, p_{11}, p_{12}\}\ar[d]_{d}\ar[dr]^{e}\ar@{=>}[l] &  & \\
  \{\emptyset\} & \{p_{14}, p_{16}\}\hspace{.7cm}\ar@{=>}[l] &\!\!\!\!\!\!\!\!\!\!\!\!\!\!\!\! \{p_{15}, p_{17}\}\ar@{=>}[r] &\{\emptyset\}\phantom{\Tr(p_{0}),\Tr(p_{0})} 
  }
}\\\\
{\xymatrix@C=.1cm@R=0.35cm{
& & \{q_{0}\}\ar[d]_{a}\ar@{=>}[r] & \{\Tr(q_{0})\}\phantom{\Tr(p_{0})\}}\\
 \{\emptyset\} & \{q_{8}, q_{13}\}\ar@{=>}[l] & \!\!\!\!\!\!\!\!\{q_{1}, q_{2}\}\ar[d]_{b}\ar[dl]_{a}\ar@{=>}[r] & \{\Tr(q_{1}), \Tr(q_{2})\}\\
o'_1 & \{q_{3}, q_{4}, q_{5}, q_{6}\}\ar[u]_{b}\ar[d]^{c}\ar@{=>}[l] & \{q_{7}\}\ar@{=>}[r] & \{\emptyset\}\phantom{\Tr(p_{0},\Tr(p_{0})}\\
o'_2& \{q_{9}, q_{10}, q_{11}, q_{12}\}\ar[d]_{d}\ar[dr]^{e}\ar@{=>}[l] &  & \\
  \{\emptyset\} & \{q_{14}, q_{16}\}\hspace{.7cm}\ar@{=>}[l] &\!\!\!\!\!\!\!\! \!\!\!\!\!\!\!\! \{q_{15}, q_{17}\}\ar@{=>}[r] &\{\emptyset\}\phantom{\Tr(p_{0}),\Tr(p_{0})} 
}
}
\end{array}
\]
\caption{Possible-futures determinisation when starting from $\{p_0\}, \{q_0\}$.
$o_1 = \{\Tr(p_{4}), \Tr(p_{5}), \Tr(p_{6}), \Tr(p_{7})\}, o_2 = \{\Tr(p_{9}), \Tr(p_{10}), \Tr(p_{11}), \Tr(p_{12})\}$, $o'_1 = \{\Tr(q_{3}), \Tr(q_{4}), \Tr(q_{5}), \Tr(q_{6})\}$, $o'_2 = \{\Tr(q_{9}), \Tr(q_{10}), \Tr(q_{11}), \Tr(q_{12})\}$.}
\label{fig:poss-fut-det}
\end{figure}
\end{example}

\subsection{Ready and failure trace semantics}
\label{sec:ready-fail-tr}

In this section we provide a coalgebraic modelling of ready and failure trace semantics by employing the generalised powerset construction. Similarly to the other semantics tackled so far, we show a) that the coalgebraic representation coincides with the original definition in~\cite{Glabbeek01} and b) how to apply the coalgebraic machinery in order to reason on the corresponding equivalences.


Intuitively, ready trace semantics identifies two states if and only if they can follow the same traces $w$, and moreover, the corresponding (pairwise-taken) states determined by such $w$'s have equivalent one-step behaviours. Failure trace semantics identifies states that can trigger the same traces $w$, and moreover, the (pairwise-taken) intermediate states occurring during the execution of a such $w$ fail triggering the same (sets of) actions.
Formally, the associated definitions are as follows:

\begin{definition}[Ready trace equivalence~\cite{DBLP:conf/icalp/Pnueli85,Glabbeek01}]
\label{def:Rtr-equiv}
Consider an LTS\linebreak $(X, \delta \colon X \rightarrow (\Powf X)^A)$ and $x, y \in X$ two states.
States $x$ and $y$ are \emph{ready trace equivalent} ($\Rtr$-equivalent) if and only if $\Rtr(x) = \Rtr(y)$, where
\[
\begin{array}{r l}
\Rtr(x) = \{& I_{0}a_{1}I_{1}a_{2}\ldots a_{n}I_{n} \in \Powf(A) \times (A \times \Powf(A))^{*} \mid\\
&
(\exists x_{1},\ldots, x_{n} \in X) \,.\,
x = x_{0} \xrightarrow{a_{1}} x_{1} \xrightarrow{a_{2}} \ldots \xrightarrow{a_{n}}x_{n}\, \land\\
&
(\forall i = 0,\,\ldots,\,n)\,.\, I_{i} = I(\delta(x_{i}))
\,\,\}\text{.}
\end{array}
\]
We call an element of $\Rtr(x)$ a \emph{ready trace} of $x$.
\end{definition}

\begin{definition}[Failure trace equivalence~\cite{DBLP:journals/tcs/Phillips87}]
\label{def:Ftr-equiv}
Let $(X, \delta \colon X \rightarrow (\Powf X)^A)$ be an LTS and $x, y \in X$ two states.
States $x$ and $y$ are \emph{failure trace equivalent} ($\Ftr$-equivalent) if and only if $\Ftr(x) = \Ftr(y)$, where
\[
\begin{array}{r l}
\Ftr(x) = \{& F_{0}a_{1}F_{1}a_{2}\ldots a_{n}F_{n} \in \Powf(A) \times (A \times \Powf(A))^{*} \mid\\
&
(\exists x_{1},\ldots, x_{n} \in X) \,.\,
x = x_{0} \xrightarrow{a_{1}} x_{1} \xrightarrow{a_{2}} \ldots \xrightarrow{a_{n}}x_{n}\, \land 
 F_{i} \in Fail(\delta(x_{i}))
\,\,\}\text{.}
\end{array}
\]
We call an element of $\Ftr(x)$ a \emph{failure trace} of $x$.
\end{definition}

In order to model these two equivalences coalgebraically we will have to apply the generalised powerset construction, from Figure~\ref{fig:GenSettMoore}, not only by adding the output function but also by changing the transitions of the LTS.

In particular, we have to add to transitions of shape $x \xrightarrow{a} y$ information regarding the sets of actions ready to be triggered by $x$. In the  new LTS we consider transitions of shape
$x \xrightarrow{<a, I(\delta(x))>} y$
therefore enabling the construction of Moore automata ``collecting'' states that have been reached not only via one-step transitions with the same label, but also from processes sharing the same initial behaviour. (Note that $F \in Fail(\delta(x))$ whenever $F \subseteq A - I(\delta(x))$.) 

We apply the generalised powerset construction to the decorated LTS:
\[
\xymatrix{X \ar[r]^-{<\overline{o}_\I, \overline\delta >}&  \Powf(\Powf(A))\times \Powf(X)^{A \times \Powf(A)}}
\]
where $\overline\delta$ is defined by first computing the set $I$ and then appending it to every successor of a state by using the strength of powerset:
\[
\small
\begin{array}{l}
\overline\delta = \xymatrix{X\ar[r]^-\delta & \ar[r]^-{<I,id>}\Powf(X)^A & \Powf(A)\times \Powf(X)^A \ar[r]^-{{\emph st}}& \Powf(\Powf(A)\times X)^A \to  \Powf(X)^{A \times \Powf(A)}}\\
\overline{\delta}(x)(\langle a, Z\rangle) =
\begin{cases}
    \delta(x)(a) & \text{ if } Z = I(\delta(x))\\
    \emptyset & \text{ otherwise. }
  \end{cases}
\end{array}
\] 
For $\I \in \{\Rtr,\, \Ftr\}$, the output function $\bar{o}_{\I}$ provides information with respect to the actions ready, respectively, failed to be triggered by a state $x \in X$ as a first step:
\[
\overline{o}_\Rtr(x)  =  \{I(\delta(x))\}\qquad\qquad
\overline{o}_\Ftr(x)  =  Fail(\delta(x)).
\]


%

We need to show that for $x_{0}\in X$, there is a one-to-one correspondence between $\llbracket \{x_{0}\}\rrbracket$ and $\I(x_{0})$.
Intuitively, for ready trace semantics, for example, each behaviour
\[
\begin{array}{ll}
\llbracket \{x_{0}\} \rrbracket(\bar{w}) = \{Z_{n}^{j}\mid x_a \xrightarrow{w} x_j\}, &\textnormal{ with } \bar{w} = \langle a_{1}, Z_{0}\rangle\,\ldots\,\langle a_{n}, Z_{n-1}\rangle \in (A \times \Powf({A}))^{*}\\
&\text{ and } w= a_1 \ldots a_n \in A^*
\end{array}
\]
corresponds to a set of sequences of shape
\[Z_{0}a_{1}Z_{1}a_{2}\ldots Z_{n-1}a_{n}Z^{j}_{n} \in \I(x_{0}).\]

Given $x \in X$, for $\I \in \{\Rtr,\,\Ftr\}$, we again represent $\I(x) \in \Pow(\Powf(A) \times (A \times \Powf(A))^{*})$ by a function $\varphi^\I_{x}$:
\[
\begin{array}{l}
\varphi^\Rtr_{x}(\bar{w})=\{Z \subseteq A \mid x \xrightarrow{\bar{w}} y \land Z = I(\delta(y))\}\\[1.2ex]
\varphi^\Ftr_{x}(\bar{w})=\{Z \subseteq A \mid  x \xrightarrow{\bar{w}} y \land Z \in Fail(\delta(y))\}
\end{array}
\]
Showing the equivalence between the coalgebraic and the original definition of ready and failure trace semantics consists in proving that
\begin{equation}
\label{eq:ready-tr-iso}
(\forall x \in X)\,.\,\llbracket \{x\} \rrbracket = \varphi^\I_{x}.
\end{equation}
\begin{theorem}
\label{thm:eqiv-ready-tr}
Let $(X, \delta\colon X \rightarrow (\Powf X)^A)$ be an LTS. Then for all $x \in X$ and $\bar{w}\in (A \times \Powf({A}))^{*}$, $\llbracket \{x\} \rrbracket(\bar{w}) = \varphi^\I_{x}(\bar{w})$.
\end{theorem}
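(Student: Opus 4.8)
The plan is to replay the inductive arguments used for Theorems~\ref{thm:eqiv-ready}, \ref{thm:eqiv-trace} and \ref{thm:eqiv-poss-fut}, but now inducting over words $\bar{w}$ in the enriched alphabet $\bar{A} = A \times \Powf(A)$ and carefully tracking the bookkeeping introduced by the modified transition structure $\overline{\delta}$. As in those theorems, establishing (\ref{eq:ready-tr-iso}) amounts exactly to the pointwise equality in the statement, and the two cases $\I \in \{\Rtr, \Ftr\}$ can be treated at once, since they differ only through $\overline{o}_\I$, which enters solely in the base case. Before the induction I would record two auxiliary facts. First, since $\llbracket - \rrbracket$ is a semilattice homomorphism and $\emptyset$ is the least element of $\Powf X$, the map $\llbracket\emptyset\rrbracket$ is constantly the least element of the relevant output semilattice $\B_\I$ (which is $\Powf(\Powf(A))$ in both cases), namely $\emptyset$. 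Second, unfolding the definition of $\overline{\delta}$ (obtained from $\delta$ via $\langle I, id\rangle$ and the strength) one gets that the extended transition relation over $\bar{A}$ satisfies $x \xrightarrow{\langle a, Z\rangle} z$ if and only if $Z = I(\delta(x))$ and $x \xrightarrow{a} z$; in particular a transition carrying a ``wrong'' ready set $Z \neq I(\delta(x))$ is dead.

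For the base case $\bar{w} = \varepsilon$ I would compute $\llbracket\{x\}\rrbracket(\varepsilon) = o(\{x\}) = \overline{o}_\I(x)$, while on the other side $\varphi^\I_x(\varepsilon)$ is the set of $Z$ with $x \xrightarrow{\varepsilon} y$ and the appropriate condition; since $x \xrightarrow{\varepsilon} y$ forces $y = x$, this is $\{I(\delta(x))\}$ for $\I = \Rtr$ and $Fail(\delta(x))$ for $\I = \Ftr$, i.e.\ exactly $\overline{o}_\Rtr(x)$, resp.\ $\overline{o}_\Ftr(x)$. For the induction step, take $\bar{w} = \langle a, Z\rangle\bar{w}'$ and compute
\[
\llbracket\{x\}\rrbracket(\langle a, Z\rangle\bar{w}') = \llbracket t(\{x\})(\langle a, Z\rangle)\rrbracket(\bar{w}') = \llbracket \overline{\delta}(x)(\langle a, Z\rangle)\rrbracket(\bar{w}').
\]
Then I would split on whether $Z = I(\delta(x))$. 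If $Z \neq I(\delta(x))$, the left-hand side equals $\llbracket\emptyset\rrbracket(\bar{w}') = \emptyset$ by the first auxiliary fact, and $\varphi^\I_x(\langle a, Z\rangle\bar{w}') = \emptyset$ by the second. If $Z = I(\delta(x))$, then (using that $t$ and $\llbracket-\rrbracket$ are semilattice morphisms) the left-hand side equals $\llbracket\delta(x)(a)\rrbracket(\bar{w}') = \bigsqcup_{x \xrightarrow{a} z}\llbracket\{z\}\rrbracket(\bar{w}')$, which by the induction hypothesis equals $\bigsqcup_{x \xrightarrow{a} z}\varphi^\I_z(\bar{w}')$; and, again by the second auxiliary fact, $\varphi^\I_x(\langle a, Z\rangle\bar{w}') = \bigcup_{x \xrightarrow{a} z}\varphi^\I_z(\bar{w}')$. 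Since the semilattice join of $\B_\I = \Powf(\Powf(A))$ is union, the two expressions coincide, closing the induction.

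The main obstacle is precisely the bookkeeping around $\overline{\delta}$: being rigorous about how the strength produces $\overline{\delta}$ from $\delta$, verifying that a transition carrying a ready set $Z \neq I(\delta(x))$ contributes nothing, and checking that in this dead case both sides collapse to $\emptyset$ --- on the coalgebraic side through $\llbracket\emptyset\rrbracket$ being constantly the bottom element, and on the syntactic side through the empty union. Once these points are pinned down, the rest is a routine variant of the earlier inductive proofs.
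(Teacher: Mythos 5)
Your proof is correct and takes essentially the same route as the paper, which simply states that the result follows by induction on words over $A \times \Powf(A)$ in the manner of Theorem~\ref{thm:eqiv-ready}. Your write-up supplies the details the paper omits --- in particular the case split on whether $Z = I(\delta(x))$ and the observation that both sides collapse to the bottom element on dead transitions --- and these details are all accurate.
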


\begin{proof}
The proof follows by induction on words ${w} \in (A \times \Powf(A))^{*}$ (similar to the proof of Theorem~\ref{thm:eqiv-ready}).
\end{proof}

\begin{example}
\label{ex:ready-tr-not}
Consider the following two systems:
\[\xymatrix@C=.6cm@R=0.35cm{
& & p_{0}\ar[dl]_{a}\ar[dr]^{a} & & & & & q_{0}\ar[dl]_{a}\ar[dr]^{a} & &\\
& p_{1}\ar[dl]_{b}\ar[d]_{c} & & p_{2}\ar[d]_{c}\ar[dr]^{f} & & & q_{1}\ar[dl]_{b}\ar[d]_{c} & & q_{2}\ar[d]_{c}\ar[dr]^{f} & \\
p_{3} & p_{4}\ar[d]_{d} & & p_{5}\ar[d]_{e} & p_{6} & q_{3} & q_{4}\ar[d]_{e} & & q_{5}\ar[d]_{d} & q_{6} \\
& p_{7} & & p_{8} & & & q_{7} & & q_{8} & \\  
}\]
Note that they are not ready trace equivalent as, for example, $\{a\}a\{c,f\}c\{e\}$ is a ready trace of $p_{0}$ but not of $q_{0}$. Moreover, they are not failure trace equivalent as, for example, $\{b,c,d,e,f\}a\{a,d,e,f\}c\{a,b,c,e,f\}d\{a,b,c,d,e,f\}$ is a failure trace of $p_{0}$ but not of $q_{0}$.

It is easy to check that by taking exactly the generalised powerset construction (starting with $\{p_{0}\}, \{q_{0}\}$) without changing the transition function, as in Section~\ref{sec:ready-fail},  one gets two bisimilar Moore automata (for both the case of ready and failure trace equivalence).
This would indicate that the initial LTS's are behavioural equivalent (which is not the case for ready and failure trace!).

The change in the transition function generates the automata (with labels in $A \times \Powf(A)$) in Figure~\ref{fig:preproc-rt}. Then, for both semantics studied in this section, the determinisation derives the two Moore automata in Figure~\ref{fig:det-preproc-rt}.

For ready trace semantics it holds that:
\[
\begin{array}{l}
o_{0} = \overline{o}_{0} = \{\{a\}\}\quad
o_{12} = \overline{o}_{12} = \{\{b,c\}, \{c,f\}\}\quad
o_{4} = \overline{o}_{5} = \{\{d\}\}\quad
o_{5} = \overline{o}_{4} = \{\{e\}\}\\
o_{3} = o_{6} = o_{7} = o_{8} = \overline{o}_{3} = \overline{o}_{6} = \overline{o}_{7} = \overline{o}_{8} = \{\emptyset\}.
\end{array}
\]

Hence, the systems in Figure~\ref{fig:det-preproc-rt} are not bisimilar as, for example, both states $\{p_{4}\}$ and $\{q_{4}\}$ can be reached via transitions labelled the same, but they output different sets of ready actions -- namely $\{\{d\}\}$ and $\{\{e\}\}$, respectively. Therefore, we conclude that $p_{0}$ and $q_{0}$ are not ready trace equivalent.

Similarly, for failure trace we have:
\[
\begin{array}{l}
o_{0} = \overline{o}_{0} = [bcdef]\quad
o_{12} = \overline{o}_{12} = [adef] \cup [abde]\quad\\
o_{4} = \overline{o}_{5} = [abcef]\quad
o_{5} = \overline{o}_{4} = [abcdf]\\
o_{3} = o_{6} = o_{7} = o_{8} = \overline{o}_{3} = \overline{o}_{6} = \overline{o}_{7} = \overline{o}_{8} = [abcdef].
\end{array}
\]

As before, the automata in Figure~\ref{fig:det-preproc-rt} are not bisimilar as, for example, both $\{p_{4}\}$ and $\{q_{4}\}$ are reached via transitions labelled the same, but have different outputs. Therefore we conclude that $p_{0}$ and $q_{0}$ are not failure trace equivalent.

The purpose of changing the transition labels with sets of ready actions is to collect in a Moore state only states of the initial LTS's that have been reached from ``parents'' with the same one-step (initial) behaviour. Or dually, to distinguish between states that have ``parents'' ready, respectively, failing to trigger different sets of actions. This way one avoids the unfortunate situation of encapsulating, for example, the states $p_{4}, p_{5}$, respectively $q_{4}, q_{5}$, fact which eventually would lead to providing a positive answer with respect to both ready and failure trace equivalence of $p_{0}$ and $q_{0}$.

In other words, the change in the transition function is needed in order to guarantee that whenever two states of an LTS are ready/failure trace equivalent, the (pairwise-taken) states determined by the executions of a given trace have the same initial behaviour.

\begin{figure}
\[\xymatrix@C=.8cm@R=0.5cm{
& & p_{0}\ar[dl]_{\langle a, \{a\} \rangle}\ar[dr]^{\langle a, \{a\} \rangle} & & & & & q_{0}\ar[dl]_{\langle a, \{a\} \rangle}\ar[dr]^{\langle a, \{a\} \rangle} & &\\
& p_{1}\ar[dl]_{\langle b, \{b, c\}\rangle}\ar[d]^{\langle c, \{b, c\}\rangle} & & p_{2}\ar[d]_{\langle c, \{c,f\}\rangle}\ar[dr]^{\langle f, \{c,f\}\rangle} & & & q_{1}\ar[dl]_{\langle b, \{b, c\}\rangle}\ar[d]^{\langle c, \{b, c\}\rangle} & & q_{2}\ar[d]_{\langle c, \{c,f\}\rangle}\ar[dr]^{\langle f, \{c,f\}\rangle} & \\
p_{3} & p_{4}\ar[d]_{\langle d, \{d\}\rangle} & & p_{5}\ar[d]^{\langle e, \{e\}\rangle} & p_{6} & q_{3} & q_{4}\ar[d]_{\langle e, \{e\}\rangle} & & q_{5}\ar[d]^{\langle d, \{d\}\rangle} & q_{6} \\
& p_{7} & & p_{8} & & & q_{7} & & q_{8} & \\  
}\]
\caption{Altered transition function before determinisation.}
\label{fig:preproc-rt}
\end{figure}
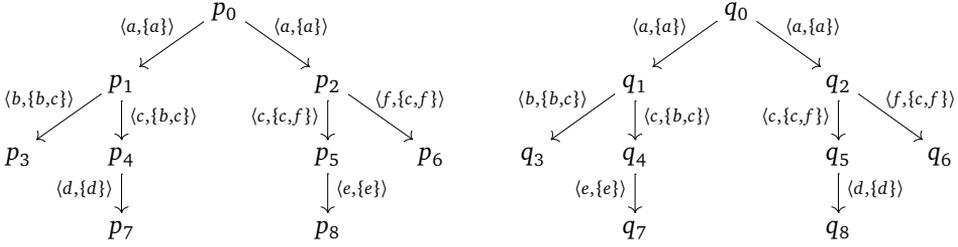

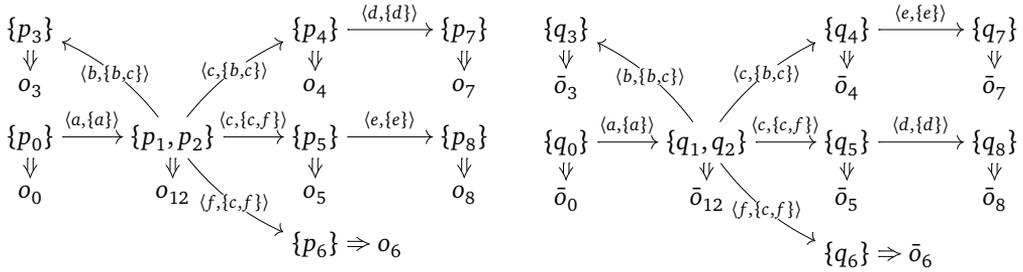
\begin{figure}
\[\xymatrix@C=.28cm@R=0.2cm{
 \{p_{3}\}\ar@{=>}[d]  &&&& \{p_{4}\}\ar[rr]^{\langle d, \{d\} \rangle}\ar@{=>}[d] && \{p_{7}\}\ar@{=>}[d] & \\
o_{3} & &&& o_{4} && o_{7} \\
\{p_{0}\}\ar[rr]^-{\langle a, \{a\} \rangle}\ar@{=>}[d] && \{p_{1}, p_{2}\}\ar@{=>}[d]
\ar@/_/[uull]|-{\langle b, \{b,c\}\rangle}
\ar@/^/[uurr]|{\langle c, \{b,c\}\rangle}
\ar[rr]^-{\langle c, \{c,f\}\rangle}
\ar@/_/[ddrr]|{\langle f, \{c,f\}\rangle}
&& \{p_{5}\}\ar@{=>}[d]\ar[rr]^{\langle e, \{e\} \rangle} && \{p_{8}\}\ar@{=>}[d]\\
 o_{0}&&o_{12} & & o_{5} && o_{8} &\\
& & &&\{p_{6}\}\ar@{=>}[r] & o_{6} &
}
\xymatrix@C=.28cm@R=0.2cm{
 \{q_{3}\}\ar@{=>}[d]  &&&& \{q_{4}\}\ar[rr]^{\langle e, \{e\} \rangle}\ar@{=>}[d] && \{q_{7}\}\ar@{=>}[d] & \\
\bar o_{3} & &&& \bar o_{4} && \bar o_{7} \\
\{q_{0}\}\ar[rr]^-{\langle a, \{a\} \rangle}\ar@{=>}[d] && \{q_{1}, q_{2}\}\ar@{=>}[d]
\ar@/_/[uull]|-{\langle b, \{b,c\}\rangle}
\ar@/^/[uurr]|{\langle c, \{b,c\}\rangle}
\ar[rr]^-{\langle c, \{c,f\}\rangle}
\ar@/_/[ddrr]|{\langle f, \{c,f\}\rangle}
&& \{q_{5}\}\ar@{=>}[d]\ar[rr]^{\langle d, \{d\} \rangle} && \{q_{8}\}\ar@{=>}[d]\\
 \bar o_{0}&&\bar o_{12} & & \bar o_{5} && \bar o_{8} &\\
& & &&\{q_{6}\}\ar@{=>}[r] & \bar o_{6} &
}
\]
\caption{Determinisation starting from $\{p_{0}\}, \{q_{0}\}$.}
\label{fig:det-preproc-rt}
\end{figure}
\end{example}

\section{Decorated trace semantics of GPS's}
\label{sec:dec-tr-gps}

In this section we show how the generalised powerset construction for coalgebras $f\colon  X \rightarrow \F T(X)$ for a functor $\F$ and a monad $T$ in~(\ref{F-final}), Section~\ref{prelim:gen-pow}, can be instantiated in order to provide coalgebraic modellings of decorated trace semantics for generative probabilistic systems (GPS's). More explicitly, we show how the determinisation procedure can be applied in order to derive coalgebraic representations of ready, (maximal) failure and (maximal) trace semantics, equivalent to their standard definitions in~\cite{GPS-Jou-Smolka}.

A GPS is similar to an LTS, but each transition is labelled by both an action and a probability $p$. More precisely, the transition dynamics is given by a \emph{probabilistic transition function} $\mu \colon  X \times A \times X \rightarrow [0,1]$ satisfying
\begin{equation}
\label{eq:prob-trans-fct}
(\forall x \in X)\,.\,\sum_{\substack{a \in A\\ y \in X}}\mu(x,a,y) \leq 1,
\end{equation}
where $X$ is the state space and $A$ is the alphabet of actions. For simplicity, we write $\mu_{a}(x,y)$ in lieu of $\mu(x,a,y)$ and 
we will use the notation $x \xrightarrow{a[v]} y$ for $\mu_a(x,y) =v$.  We extend $\mu$ to words $w \in A^{*}$:
\[
 \mu_{\varepsilon}(x,y) =
\begin{cases}
1 & \textnormal{if } x = y\\
0 & \textnormal{if } x \not= y
\end{cases}\qquad \mu_{aw}(x,y) = \sum_{\substack{x' \in X}} \mu_{a}(x,x') \times \mu_{w}(x',y)
\]
Intuitively, $\mu_{w}(x,y)$ represents the sum of the probabilities associated with all traces $w$ from $x$ to $y$. Moreover, we write
\[\mu_{\ZA}(x,\ZP) = 1 - \sum_{\substack{a \in A \\ y \in X}} \mu(x,a,y)\]
for the probability of $x$ to \emph{terminate}, where $\ZA$ is a special symbol not in $A$, called the \emph{zero action}, and $\ZP$ is the (deadlock-like) \emph{zero process} whose only transition is $\mu_{\ZA}(\ZP,\ZP) = 1$.

Similarly to the case of LTS's, the set of initial actions that can be triggered (with a probability greater than $0$) from $x \in X$ is given by
\[
I(x) = \{a \in A \mid (\exists y \in X)\,.\,\mu_{a}(x,y)\, \textnormal{\textgreater}\, 0 \},
\]
whereas failure sets $Z \in \pow A$ satisfy the condition $Z \cap I(x) = \emptyset$. We write $Fail(x)$ to represent the set of all failure sets of $x$.

The decorated trace semantics for GPS's considered in this paper can be intuitively described as follows.
Given two states $x, y \in X$, we say that $x$ and $y$ are equivalent whenever traces $w \in A^{*}$
\begin{itemize}
\item lead, with the same probability, $x$ and $y$ to processes that trigger (respectively, fail to execute) as a first step the same sets of actions, for the case of ready (respectively, failure) semantics. Note that maximal failure semantics takes into consideration only the largest sets of failure actions ({\emph i.e.}, $A - I(x),\,A - I(y)$).

\item can be executed with the same probability from both $x$ and $y$, for the case of trace semantics and, moreover, lead $x$ and $y$ to processes that have the same probability to terminate, for the case of maximal trace semantics.
\end{itemize}

To model GPS's, we consider $\PD(X)$ -- the (finitely supported sub)probability distribution functor defined on \textbf{Set}. $\PD$ maps a set $X$ to
\[
\PD(X) = \{\varphi\colon X \rightarrow [0,1] \mid supp(\varphi) \textnormal{ is finite and } \sum_{\substack{x\in X}} \varphi(x) \leq 1\},
\]
where $supp(\varphi) = \{x \in X \mid \varphi(x)\, \textnormal{\textgreater}\, 0\}$ is the \emph{support} of $\varphi$. Given a function $g\colon X \rightarrow Y$, ${\PD(g)} \colon  \PD(X) \rightarrow \PD(Y)$ is defined as
\[
\PD(g)(\varphi) = \lambda y\,.\,\sum_{\substack{g(x) = y}} \varphi(x).
\]


A GPS is a coalgebra
\[
(X, \delta\colon X \rightarrow (\PD(X))^{A})
\] such that $\delta(x)(a)(y) = \mu_a(x,y)$\footnote{Note that the coalgebraic type directly corresponds to reactive systems~\cite{DBLP:journals/tcs/BartelsSV04}. The embedding of generative into reactive is injective and poses no problems semantic-wise.
In the sequel, when we write ``Let $(X,\delta\colon X \to (\PD(X))^A)$ be a GPS'' we implicitly mean a coalgebra of this type originating from a GPS defined by a probabilistic function $\mu \colon X\times A\times X \to [0,1]$ as in~(\ref{eq:prob-trans-fct}).}.

To each GPS we associate a \emph{decorated} GPS's \[(X, <\overline{o}_{\I}, \delta> \colon X \rightarrow B_{\I} \times (\PD(X))^{A})\] ``parameterised'' by $\I$, depending on the semantics under consideration. 

Decorated GPS's can be determinised according to 
the generalised powerset construction as illustrated in Figure~\ref{fig:GenSettMooreGPS}, where $\F$ is $B_{\I} \times (-)^{A}$ and $T$ is instantiated with the probability distribution monad $(\PD, \eta, \mu)$:
\[
\begin{array}{l@{\hspace{1cm}}l}
\eta\colon X \rightarrow \PD(X) & \mu\colon \PD(\PD(X)) \rightarrow \PD(X)\\
\eta(x) = \lambda y\,.\,\left \{ \begin{array}{c l}1 & \textnormal{if } x = y\\ 0 & \textnormal{ otherwise} \end{array}\right. & \mu(\psi) = \lambda x\,.\,\sum\limits_{{\varphi \in supp(\psi)}} \varphi(x) \times \psi(\varphi)
\end{array}
\]
Algebras for this monad are the so-called positive convex structures~\cite{doberkat}.

Moreover, for each of the semantics of interest the observations set $B_{\I}$ has to carry a $\PD$-algebra structure, or, equivalently, there has to exist a morphism $h_{\I}$ such that $(B_{\I}, h_{\I}\colon \PD(B_{\I}) \rightarrow B_{\I})$ is a $\PD$-algebra (as introduced in Definition~\ref{def:alg-of-monad}, in Section~\ref{prelim:gen-pow}).
\begin{figure}[ht]\vspace{-.7cm}
\[
\xymatrix@C=1cm@R=.5cm{
X \ar[r]^{\eta}\ar[dd]_{<\overline{o}_\I,\delta>} & \PD(X) \ar@{-->}[rr]^{\llbracket - \rrbracket}\ar[ddl]^{<o,t>} && (\B_\I)^{A^*} \ar[dd]^{<\epsilon, (-)_a>}\\
&&\\
\B_\I \times (\PD(X))^A \ar@{-->}[rrr]_{id_{\B_\I} \times {\llbracket - \rrbracket}^A} & && \B_\I \times ((\B_\I)^{A^*})^A
}
\]
\[
{\small\begin{array}{l}
o = h_{\I} \circ \PD(\overline{o}_{\I})\\
t(\varphi)(a)(y) = \sum\limits_{\substack{x \in supp(\varphi)}} \delta(x)(a)(y) \times \varphi(x)
\end{array}\qquad\qquad
\begin{array}{l}
\bb{\varphi}(\varepsilon) = o(\varphi)\\
\bb{\varphi}(a w) = \bb{t(\varphi)(a)}(w)\\
\end{array}  }
\]
\caption{The powerset construction for decorated GPS's.}
\label{fig:GenSettMooreGPS}
\end{figure}

The ingredients $\overline{o}_{\I}, B_{\I}$ and $h_{\I}$ of Figure~\ref{fig:GenSettMooreGPS} are explicitly defined in the subsequent sections for each of the coalgebraic decorated trace semantics. The latter are also proven to be equivalent with their corresponding definitions in~\cite{GPS-Jou-Smolka}.

\subsection{Ready and (maximal) failure semantics}
\label{sec:r-f-mf}

In this section we provide the detailed coalgebraic modelling of ready and (maximal) failure semantics and show the equivalence with their counterparts defined in~\cite{GPS-Jou-Smolka}, as follows:

\begin{definition}[Ready equivalence~\cite{GPS-Jou-Smolka}]
\label{def:ready-gps}
The \emph{ready function} \[\PR \colon  X \rightarrow((A^{*} \times \pow A) \rightarrow [0,1])\] is given by
\[
\PR(x)((w,I)) = \sum_{\substack{I=I(y)}} \mu_{w}(x,y).
\]
We say that $x,x' \in X$ are \emph{ready equivalent} whenever $\PR(x) = \PR(x')$.
\end{definition}

\begin{definition}[Failure equivalence~\cite{GPS-Jou-Smolka}]
\label{def:failure-gps}
The \emph{failure function} \[\PF \colon  X \rightarrow((A^{*} \times \pow A) \rightarrow [0,1])\] is given by
\[
\PF(x)((w,Z)) = \sum_{\substack{Z\cap I(y) = \emptyset}} \mu_{w}(x,y).
\]
We say that $x,x' \in X$ are \emph{failure equivalent} whenever $\PF(x) = \PF(x')$.
\end{definition}

\begin{definition}[Maximal failure equivalence~\cite{GPS-Jou-Smolka}]
\label{def:max-failure-gps}
The \emph{maximal failure function} $\PMF \colon  X \rightarrow((A^{*} \times \pow A) \rightarrow [0,1])$ is given by
\[
\PMF(x)((w,Z)) = \sum_{\substack{Z=A-I(y)}} \mu_{w}(x,y).
\]
We say that $x,x' \in X$ are \emph{maximal failure equivalent} whenever $\PMF(x) = \PMF(x')$.
\end{definition}

Intuition:
\emph{ready} and \emph{(maximal) failure semantics}, respectively, identify states which have the same probability of reaching processes sharing the same sets of ready actions $I$, or (maximal) sets of failure actions $Z$, respectively, by executing the same traces $w \in A^{*}$. Consequently, appropriate modellings in the coalgebraic setting should capture sets of traces $w$, together with some notion of observations based on execution probabilities of such $w$'s and sets of ready/(maximal) failure actions.

As a first step we define $B_{\I}$, the observation set in Figure~\ref{fig:GenSettMooreGPS}, as $[0,1]^{\pow(A)}$, for ready, failure and maximal failure semantics (for which, for consistency of notation, $\I$ will be instantiated with $\PR$, $\PF$ and $\PMF$, respectively).

The associated ``decorating'' functions $\overline{o}_{\I}\colon X \rightarrow [0,1]^{\pow(A)}$ 
are defined for $x \in X$ as:
$$
\overline{o}_{\PR}(x)(I) = \begin{cases} 1 & \text{ if } I = I(x)\\
0 & \text{ otherwise.}
\end{cases}\qquad 
\overline{o}_{\PF}(x)(Z)  = \begin{cases} 1 & \text{ if } Z \cap I(x) = \emptyset\\
0 & \text{ otherwise.}
\end{cases}
$$
$$
\overline{o}_{\PMF}(x)(Z) =  \begin{cases} 1 & \text{ if } Z = A-I(x)\\ 0 & \text{ otherwise.}
\end{cases}
$$
For the generalised powerset construction for GPS's, $B_{\I} = [0,1]^{\pow(A)}$ is required to carry a $\PD$-algebra structure. This structure is given by the pointwise extension of the free algebra structure in $[0,1] = \PD(1)$:
\begin{align*}
& h_{\I} \colon \PD([0,1]^{\pow(A)}) \rightarrow [0,1]^{\pow(A)} \\
& h_{\I}(\varphi)(Z) = \sum_{f \in supp(\varphi)} \varphi(f) \times f(Z).
\end{align*}

It is easy to check that, for $\I \in \{\PR, \PF, \PMF\}$, the output function $o = h_{\I} \circ \PD(\overline{o}_{\I})$ is explicitly defined, for $\varphi \in \PD(X)$, as:
\[
o(\varphi)(S) = \sum_{x \in supp(\varphi)} \varphi(x) \times \overline{o}_{\I}(x)(S).
\]
This enables the modelling of the behaviour of GPS's in terms of (final) Moore machines with state space in $(B_{\I})^{A^{*}}$ and observations in $B_{\I}$. More explicitly, given a GPS $(X, \delta)$, the decorated trace behaviour of $x \in X$ is represented in the coalgebraic setting by $\bb{\eta(x)} \in (B_{\I})^{A^{*}} = ([0,1]^{\pow(A)})^{A^*} \cong [0,1]^{A^* \times {\pow(A)}}$, precisely the type of the functions in Definitions~\ref{def:ready-gps}--\ref{def:max-failure-gps}. This paves the way for reasoning on ready and (maximal) failure equivalence by coinduction, in terms of Moore bisimulations.

\begin{example}
\label{ex:ready-fail-gps}
Consider, for example, the following GPS's:
\[
\xymatrix@C=.6cm@R=.4cm{
& p'\ar[dl]_{a[x]}\ar[dr]^{a[1-x]} & & & u'\ar[d]^{a[1]}\\
q'\ar[d]_{a[1]} & & r'\ar[d]^{a[1]} & & v'\ar[dl]_{a[y]}\ar[dr]^{a[1-y]} &\\ 
s' & & t' & w' & & w''
}
\]
States $p'$ and $u'$ are ready equivalent, as their corresponding ready functions
in Definition~\ref{def:ready-gps} are equal:
\[
\begin{array}{rcl}
\PR(p')(\varepsilon, \emptyset) & = & 0\qquad\PR(p')(\varepsilon, \{a\})  =  1 \qquad \PR(p')(a, \emptyset) =  0 \\
\PR(p')(a, \{a\}) & = & \mu_{a}(p',q') + \mu_{a}(p',r')~~  = x + (1-x)  =  1\\
\PR(p')(aa, \emptyset) & = & \mu_{aa}(p',s') + \mu_{aa}(p',t')~~  =  x \times 1 + (1-x) \times 1  =  1\\
\PR(p')(aa, \{a\}) & = & 0 \qquad
\PR(u')(\varepsilon, \emptyset)  =  0\qquad
\PR(u')(\varepsilon, \{a\})  =  1\\
\PR(u')(a, \{a\}) & = & \mu_{a}(u',v')= 1 \qquad
\PR(u')(a, \emptyset)  =  0 \qquad \PR(u')(aa, \{a\})  =  0\\
\PR(u')(aa, \emptyset) & = & \mu_{aa}(u',w') + \mu_{aa}(u',w'') = 1 \times y + 1 \times(1-y) = 1
\end{array}
\]
Intuitively, $\PR(p')(\varepsilon, \emptyset) = 0$ states that from $p'$, by executing the empty trace $\varepsilon$, the probability to reach states that cannot further trigger any action is $0$. This is indeed the case, as $p'$ can always fire $a$ as a first step. Similarly, $\PR(u')(a, \{a\}) = 1$ states that the probability of performing $a$ from $u'$ and reaching states with the ready set $\{a\}$ is $1$. This because $u' \xrightarrow{a[1]} v'$ and $I(v') = \{a\}$.
Nevertheless, the aforementioned ready equivalence follows according to the hierarchy in the right-hand side of Figure~\ref{fig:lattice}, as $p'$ and $u'$  are probabilistic bisimilar as well.

The same answer with respect to the ready equivalence of $p'$ and $u'$ is obtained by applying the coalgebraic framework. As illustrated below, the corresponding Moore automata derived starting from $p'$ and $u'$, respectively, are bisimilar; they have the same branching structure and equal outputs:
\[
\xymatrix@C=.8cm@R=.2cm{
p':& \varphi_{1}\ar[r]^{a}\ar@{=>}[d] & \varphi_{2}\ar[r]^{a}\ar@{=>}[d] & \varphi_{3}\ar@{=>}[d] & & u': & \alpha_{1}\ar[r]^{a}\ar@{=>}[d] & \alpha_{2}\ar[r]^{a}\ar@{=>}[d] & \alpha_{3}\ar@{=>}[d]\\
& o_{\varphi_{1}} & o_{\varphi_{2}} & o_{\varphi_{3}} & & & o_{\alpha_{1}} & o_{\alpha_{2}} & o_{\alpha_{3}}
}
\]
The state spaces of the aforementioned Moore machines consist of the functions:
\[
\begin{array}{rcl}
\varphi_{1} & = & \eta(p') = \{p' \rightarrow 1,\, q' \rightarrow 0,\, r' \rightarrow 0,\, s' \rightarrow 0,\, t' \rightarrow 0\}\\
\varphi_{2} & = & \{p' \rightarrow 0,\, q' \rightarrow x,\, r' \rightarrow 1-x,\, s' \rightarrow 0,\, t' \rightarrow 0\}\\
\varphi_{3} & = & \{p' \rightarrow 0,\, q' \rightarrow 0,\, r' \rightarrow 0,\, s' \rightarrow 1,\, t' \rightarrow 1\}\\[1ex]
\alpha_{1} & = & \eta(u') = \{u' \rightarrow 1,\, v' \rightarrow 0,\, w' \rightarrow 0,\, w'' \rightarrow 0\}\\
\alpha_{2} & = & \{u' \rightarrow 0,\, v' \rightarrow 1,\, w' \rightarrow 0,\, w'' \rightarrow 0\}\\
\alpha_{3} & = & \{u' \rightarrow 0,\, v' \rightarrow 0,\, w' \rightarrow y,\, w'' \rightarrow 1-y\}.
\end{array}
\]
The associated observations are:
\[{\small
o_{\varphi_{1}}= o_{\alpha_{1}}  =  o_{\varphi_{2}}= o_{\alpha_{2}}  = (\emptyset \mapsto  0,  \{a\}\mapsto  1), 
o_{\varphi_{3}} = o_{\alpha_{3}}  =  (\emptyset\mapsto 1, \{a\}\mapsto 0.)}
\]
The functions $\varphi_{2}$, $\varphi_{3}$, $\alpha_{2}$ and $\alpha_{3}$ together with their outputs are easily determined based on the operations of the corresponding Moore coalgebra (as depicted in Figure~\ref{fig:GenSettMooreGPS}).

The connection between the behaviour, {\emph i.e.}, ready function of $p'$ (respectively, $u'$) and $\varphi_{i}$ (respectively, $\alpha_{i}$), for $i \in \{1,2,3\}$, is straightforward. Each of the functions $\varphi_{1}, \varphi_{2}$ and $\varphi_{3}$ captures the behaviour of the system starting from $p'$, after executing the traces $\varepsilon, a$ and $aa$, respectively. Note that, for example, the values of the ready function for trace $\varepsilon$ and ready sets $\emptyset$ and $\{a\}$, respectively, are in one to one correspondence with the assignments in $o_{\varphi_{1}}$. Similarly for the case of $u'$.

By following the same approach, the coalgebraic machinery provides an ``yes'' answer with respect to (maximal) failure equivalence of $p'$ and $u'$ as well. This is also in agreement with the results in~\cite{GPS-Jou-Smolka} stating that ready and (maximal) failure equivalence for GPS's have the same distinguishing power.
\end{example}

%
%

The equivalence between the coalgebraic and the original definitions of the decorated trace semantics $\I \in \{\PR, \PF, \PMF\}$ in~\cite{GPS-Jou-Smolka} consists in showing that, given a GPS $(X, \delta)$, $x \in X$, $w \in A^{*}$ and $S\subseteq A$, it holds that $\bb{\eta(x)}(w)(S) = I(x)(w,S)$.
\begin{theorem}
\label{thm:equiv-r-f-mf}
Let $(X, \delta\colon X \rightarrow (\PD(X))^{A})$ be a GPS and $(\PD(X), \langle o, t\rangle)$ be its associated determinisation as in Figure~\ref{fig:GenSettMooreGPS}. Then, for all $x\in X$, $w \in A^{*}$ and $S\subseteq A$, it holds
\[
\bb{\eta(x)}(w)(S) = {\I}(x)(w,S).
\]
\end{theorem}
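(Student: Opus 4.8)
The plan is to follow the template of the proofs of Theorems~\ref{thm:eqiv-ready}--\ref{thm:eqiv-ready-tr}, namely induction on the word $w \in A^{*}$, with two adaptations forced by the probabilistic setting: unions of reachable states are replaced by \emph{finite sums} of probabilities, and instead of fixing the starting state one works with an arbitrary distribution. Concretely, I would first strengthen the statement to: for every $\varphi \in \PD(X)$, every $w \in A^{*}$ and every $S \subseteq A$,
\[
\bb{\varphi}(w)(S) \;=\; \sum_{x \in supp(\varphi)} \varphi(x)\cdot \I(x)(w,S),
\]
for $\I \in \{\PR, \PF, \PMF\}$. The theorem is then the instance $\varphi = \eta(x)$, since $\eta(x)$ assigns weight $1$ to $x$ and $0$ elsewhere. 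All sums occurring are finite, because $\PD$ produces finitely supported distributions and the GPS is image finite, so no convergence issues arise.

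For the base case $w = \varepsilon$ I would use the already-recorded pointwise form of $o = h_{\I}\circ\PD(\overline{o}_{\I})$, namely $o(\varphi)(S) = \sum_{x \in supp(\varphi)} \varphi(x)\cdot \overline{o}_{\I}(x)(S)$, together with $\bb{\varphi}(\varepsilon) = o(\varphi)$, and observe that $\mu_{\varepsilon}(x,y) = 1$ iff $x = y$ makes the defining sums of $\PR(x)(\varepsilon,S)$, $\PF(x)(\varepsilon,S)$ and $\PMF(x)(\varepsilon,S)$ collapse to exactly $\overline{o}_{\PR}(x)(S)$, $\overline{o}_{\PF}(x)(S)$ and $\overline{o}_{\PMF}(x)(S)$, respectively. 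For the inductive step I would unfold $\bb{\varphi}(aw) = \bb{t(\varphi)(a)}(w)$, apply the induction hypothesis to the shorter word $w$ and the distribution $t(\varphi)(a)$, substitute $t(\varphi)(a)(y) = \sum_{x \in supp(\varphi)} \delta(x)(a)(y)\cdot\varphi(x) = \sum_{x} \mu_{a}(x,y)\cdot\varphi(x)$, reorder the finite double sum, and finally use the recursive clause $\mu_{aw}(x,y') = \sum_{y\in X} \mu_{a}(x,y)\times\mu_{w}(y,y')$ to recognise $\sum_{y} \mu_{a}(x,y)\cdot\I(y)(w,S)$ as $\I(x)(aw,S)$. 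This last rewriting handles the three semantics in one stroke, since each of $\PR$, $\PF$, $\PMF$ is a sum of $\mu_{w}(x,\cdot)$ restricted to a set of target states determined purely by their initials $I(\cdot)$.

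The argument is essentially book-keeping, so I do not expect a genuine obstacle; the most delicate points will be (i) making the passage from $o = h_{\I}\circ\PD(\overline{o}_{\I})$ to its pointwise-sum form explicit, and checking that $B_{\I} = [0,1]^{\pow(A)}$ indeed carries the stated $\PD$-algebra structure, so that the construction of Figure~\ref{fig:GenSettMooreGPS} is legitimate; and (ii) keeping track of supports when reindexing, in particular when swapping $\sum_{y}$ past $\sum_{x}$ and when restricting to target states $y'$ with $I(y') = S$ (resp.\ $S\cap I(y') = \emptyset$, resp.\ $S = A - I(y')$). A slightly more uniform presentation would introduce a single predicate $P_{\I}(S,y)$ on pairs (``$y$ witnesses $S$ for semantics $\I$''), write $\I(x)(w,S) = \sum_{y:\,P_{\I}(S,y)} \mu_{w}(x,y)$ and $\overline{o}_{\I}(y)(S) = 1$ iff $P_{\I}(S,y)$, after which the induction is literally identical for all three decorated trace semantics.
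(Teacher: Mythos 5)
Your proposal is correct, and at its core it is the same induction on words $w \in A^{*}$ that the paper uses; the difference lies in how the two arguments obtain linearity of $\bb{-}$ in its distribution argument, which is the one non-trivial ingredient of the inductive step. The paper keeps the statement for point distributions $\eta(x)$ only, so in the step $\bb{\eta(x)}(aw)(S) = \bb{\delta(x)(a)}(w)(S)$ the induction hypothesis is not directly applicable (since $\delta(x)(a)$ is generally not a point distribution); the paper bridges this by invoking, as an external fact, that the final map $\bb{-}$ is a $\PD$-algebra (positive convex algebra) homomorphism, which yields $\bb{\delta(x)(a)}(w)(S) = \sum_{y}\delta(x)(a)(y)\times\bb{\eta(y)}(w)(S)$ and then concludes exactly as you do via $\I(x)(aw,S)=\sum_{y}\mu_a(x,y)\times\I(y)(w,S)$. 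You instead strengthen the induction hypothesis to arbitrary $\varphi\in\PD(X)$, so that the step can apply the hypothesis to the distribution $t(\varphi)(a)$ directly and the convex-combination identity is established inline rather than imported. Your route is more self-contained and elementary (it does not presuppose that the final carrier is a $\PD$-algebra and that $\bb{-}$ is an algebra map, which in the paper rests on the distributive-law argument mentioned at the end of Example~\ref{ex:pow-constr-NDA}); the paper's route is shorter and reuses a general coalgebraic fact that it also needs elsewhere. Both correctly reduce the three semantics to the single predicate-on-targets pattern you describe, and the finiteness of all sums is indeed guaranteed by the finite support of $\PD$.
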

\begin{proof}
The proof is similar for all $\I$ in $\{\PR, \PF, \PMF\}$, by induction on $w \in A^{*}$. \begin{itemize}\itemsep6pt
\item \emph{Base case --} $w = \varepsilon$: $\bb{\eta(x)}(\varepsilon)(S) = \overline o_{\I}(x)(S) = \I(x)(\varepsilon,S)$.
\item \emph{Induction step.} Here, we will use the fact that the map into the final coalgebra is also an algebra map and the equality 
$$
\I(x)(aw,S) = \sum_{y\in Y}\mu_a(x, y)\times {\I}(x)(w)(S).
$$
Consider $aw \in A^{*}$ and assume $\bb{\eta(y)}(w)(S) = {\I}(y)(w,S)$, for all $y\in X$. We want to prove that $\bb{\eta(x)}(aw)(S) = {\I}(x)(aw)(S)$, for $a \in A$.
\begin{align*}
\bb{\eta(x)}(aw)(S) =~&\bb{\delta(x)(a)}(w)(S)\\
=~& \sum_{y\in Y}\delta(x)(a)(y)\times \bb{\eta(y)}(w)(S) \tag{$\bb{-}$ is an algebra map}\\
=~& \sum_{y\in Y}\delta(x)(a)(y)\times {\I}(x)(w)(S)\tag{IH}\\
=~& \sum_{y\in Y}\mu_a(x, y)\times {\I}(x)(w)(S) \tag{$\mu_{a}(x,x') = \delta(x)(a)(x')$}\\
=~& {\I}(x)(aw)(S)
\end{align*}
\end{itemize}\vspace{-.6cm}
\end{proof}

\subsection{(Maximal) trace semantics}

In this section we provide the coalgebraic modelling of (maximal) trace semantics for GPS's. The approach resembles the one in the previous section: we first recall the aforementioned semantics as introduced in~\cite{GPS-Jou-Smolka}, and then show how to instantiate the ingredients of Figure~\ref{fig:GenSettMooreGPS} in order to capture the corresponding behaviours in terms of (final) Moore coalgebras. As a last step, we prove the equivalence between the coalgebraic modellings and the original definitions in~\cite{GPS-Jou-Smolka}.

\begin{definition}[(Maximal) trace equivalence~\cite{GPS-Jou-Smolka}]
\label{def:trace-gps}
The \emph{trace function}\newline $\PT \colon  X \rightarrow(A^{*} \rightarrow [0,1])$ is given by
\[
\PT(x)(w) = \sum_{\substack{y \in X}} \mu_{w}(x,y).
\]
The \emph{maximal trace function} $\PMT \colon  X \rightarrow(A^{*} \rightarrow [0,1])$ is given by
\[
\PMT(x)(w) = \mu_{w0}(x,{\mathbf 0}).
\]
We say that $x,x' \in X$ are \emph{trace equivalent} whenever $\PT(x) = \PT(x')$. If $\PMT(x) = \PMT(x')$ holds as well, then we say that $x$ and $x'$ are \emph{maximal trace equivalent}.
\end{definition}

%

From the definition above, it can be easily seen at an intuitive level that trace equivalence identifies processes that can execute with the same probability the same sets of traces $w \in A^{*}$. Moreover, maximal trace equivalence takes into consideration the probability of not triggering any action after the performance of such $w$'s.

Therefore, we choose the set of observations $B_{\I}$ (where $\I = \PT$ for trace and $\I = \PMT$ for maximal trace semantics) to denote probabilities (of processes to execute $w \in A^{*}$, or stagnate after triggering such $w$'s) ranging over $[0,1]$.

We define the ``decorating'' functions, for $\I \in \{\PT, \PMT\}$, $\overline{o}_{\I}\colon X \rightarrow [0,1]$ by 
$$
\overline{o}_{\PT}(x) = 1 \qquad \overline{o}_{\PMT}(x) = \mu_{0}(x, {\mathbf 0})
$$
The (Moore) output function $o$ is given by, for all $\varphi \in \PD(X)$,
\[
o(\varphi) = \sum_{\substack{x \in supp(\varphi)}} \varphi(x) \times \overline{o}_{\I}(x).
\]
We can now show the equivalence between the coalgebraic and the original definition of (maximal) trace semantics.
\begin{theorem}
\label{thm:equiv-trace}
Let $(X, \delta\colon X \rightarrow (\PD(X))^{A})$ be a GPS and $(\PD(X), \langle o, t\rangle)$ be its associated determinisation as in Figure~\ref{fig:GenSettMooreGPS}. Then, for all $x \in X$ and $w \in A^{*}$:
\[
\bb{\eta(x)}(w) = \I(x)(w).
\]
\end{theorem}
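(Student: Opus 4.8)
The plan is to reuse, almost verbatim, the proof strategy of Theorem~\ref{thm:equiv-r-f-mf}: induction on the word $w \in A^{*}$, with the cases $\I = \PT$ and $\I = \PMT$ handled uniformly. Two ingredients will do all the work. First, the semantic map $\bb{-}$ of Figure~\ref{fig:GenSettMooreGPS} is a $\PD$-algebra homomorphism (as in~\cite{gen-pow}), once $B_{\I} = [0,1]$ carries its free positive convex structure; consequently $\bb{\varphi}(w) = \sum_{y}\varphi(y)\times\bb{\eta(y)}(w)$ for every $\varphi\in\PD(X)$. Second, the word-extension of $\mu$ obeys the splitting law $\mu_{uv}(x,z)=\sum_{y}\mu_{u}(x,y)\times\mu_{v}(y,z)$, which is immediate from the left-recursive clause $\mu_{aw}(x,y)=\sum_{x'}\mu_{a}(x,x')\times\mu_{w}(x',y)$ defining the extension.

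For the base case $w=\varepsilon$, I would compute $\bb{\eta(x)}(\varepsilon)=o(\eta(x))=\sum_{x'\in supp(\eta(x))}\eta(x)(x')\times\overline{o}_{\I}(x')=\overline{o}_{\I}(x)$, since $\eta(x)$ is the point distribution at $x$. For $\I=\PT$ this equals $1 = \sum_{y}\mu_{\varepsilon}(x,y)=\PT(x)(\varepsilon)$; for $\I=\PMT$ it equals $\mu_{0}(x,\ZP)=\mu_{\varepsilon 0}(x,\ZP)=\PMT(x)(\varepsilon)$.

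For the induction step, fix $a\in A$ and assume $\bb{\eta(y)}(w)=\I(y)(w)$ for all $y\in X$. Since $t(\eta(x))(a)(y)=\sum_{x'\in supp(\eta(x))}\delta(x')(a)(y)\times\eta(x)(x')=\delta(x)(a)(y)$, the Moore transition clause of Figure~\ref{fig:GenSettMooreGPS} together with linearity of $\bb{-}$ gives
\[
\bb{\eta(x)}(aw)=\bb{\delta(x)(a)}(w)=\sum_{y\in X}\delta(x)(a)(y)\times\bb{\eta(y)}(w)\stackrel{\mathrm{IH}}{=}\sum_{y\in X}\mu_{a}(x,y)\times\I(y)(w).
\]
It then remains to check $\I(x)(aw)=\sum_{y}\mu_{a}(x,y)\times\I(y)(w)$: for $\I=\PT$ this is $\sum_{z}\mu_{aw}(x,z)=\sum_{y}\mu_{a}(x,y)\sum_{z}\mu_{w}(y,z)=\sum_{y}\mu_{a}(x,y)\times\PT(y)(w)$ by the splitting law, and for $\I=\PMT$ it is $\mu_{aw0}(x,\ZP)=\sum_{y}\mu_{a}(x,y)\times\mu_{w0}(y,\ZP)=\sum_{y}\mu_{a}(x,y)\times\PMT(y)(w)$, directly from the defining clause applied to the leading letter $a$.

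I do not expect a genuine obstacle, as the argument is structurally identical to that of Theorem~\ref{thm:equiv-r-f-mf}. The only points requiring care are bookkeeping: that $([0,1],\text{ convex combination})$ is the free $\PD$-algebra on one generator, so that $o = h_{\I}\circ\PD(\overline{o}_{\I})$ collapses to the stated weighted sum $o(\varphi)=\sum_{x\in supp(\varphi)}\varphi(x)\times\overline{o}_{\I}(x)$; and that the zero action $\ZA\notin A$ is treated consistently when forming the word $w0$ in the maximal-trace case.
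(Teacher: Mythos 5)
Your proposal is correct and follows exactly the route the paper takes: the paper's proof of Theorem~\ref{thm:equiv-trace} is literally ``by induction on $w\in A^*$, similar to Theorem~\ref{thm:equiv-r-f-mf}'', and your argument is precisely that induction, using the algebra-map property of $\bb{-}$ for the step and the point-distribution computation for the base case. The only difference is that you spell out the splitting law for $\mu_{aw}$ and the two instantiations $\I\in\{\PT,\PMT\}$ explicitly, which the paper leaves implicit.
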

\begin{proof}
By induction on $w \in A^{*}$, similar to Theorem~\ref{thm:equiv-r-f-mf}.
\end{proof}
Consider, for instance, the systems $p'$ and $u'$ in Example~\ref{ex:ready-fail-gps}. They are trace equivalent as they both can execute traces $\varepsilon, a$ and $aa$ with total probability $1$. Consequently, they are maximal trace equivalent as well: for sequences $\varepsilon$ and $a$, their associated maximal trace functions compute value $0$, whereas for $aa$ the latter return value $1$.

The same answer with respect to (maximal) trace equivalence of $p'$ and $u'$ is obtained by reasoning on bisimilarity of their associated determinisations derived according to the powerset construction. It is easy to check that in the current setting, the Moore automata corresponding to $\varphi_{1}$ and $\alpha_{1}$ in Example~\ref{ex:ready-fail-gps} output
\begin{itemize}
\item for the case of trace semantics:
\[
(\forall i\in \{1,2,3\})\,.\,o_{\varphi_{i}} = o_{\alpha_{i}} = 1;
\]
\item for the case of maximal trace semantics:
\[
\begin{array}{c}
(\forall i\in \{1,2\})\,.\,o_{\varphi_{i}} = o_{\alpha_{i}} = 0 \text{ and }
o_{\varphi_{3}} = o_{\alpha_{3}} = 1.
\end{array}
\]
\end{itemize}
Therefore $\varphi_{1}$ and $\alpha_{1}$ are bisimilar. Hence, $p'$ and $u'$ are (maximal) trace equivalent.

\section{Decorated trace semantics in a nutshell}
\label{sec:nuthsell}
Next we provide a more compact overview on the coalgebraic machinery introduced in Section~\ref{sec:dec-tr-lts} and Section~\ref{sec:dec-tr-gps}. This also in order to emphasise on the generality and uniformity of our coalgebraic framework.

Recall that for each of the decorated trace semantics we first instantiate the constituents of Figure~\ref{fig:GenSettMoore} (summarising the generalised powerset construction). 
Moreover, for the case of LTS's,
the original definitions of the semantics under consideration are provided with equivalent representations in terms of functions $\varphi^{\I}_{Y}$, paving the way to their interpretation in terms of final Moore coalgebras.

All these are summarised in Figure~\ref{fig:nutshell}, for an arbitrary LTS $(X, \delta \colon X \rightarrow (\Powf X)^{A})$ and an arbitrary GPS $(X, \delta \colon X \rightarrow (\PD X)^{A})$.

Once the ingredients of Figure~\ref{fig:GenSettMoore} and, for LTS's, functions $\varphi^{\I}_{Y}$ are defined, we formalise the equivalence between the coalgebraic modelling of $\I$-semantics and its 
original definition.

For the case of LTS's, for $\I$ ranging over $\Tr,\Ctr,
\Fp, \Rp, \Pf, \Rtr$ and $\Ftr$, we show that the following result holds:

\begin{theorem}
\label{thm:eqiv-gen}
Let $(X, \delta\colon X \rightarrow (\Powf X)^A)$ be an LTS. 
For all $x\in X$, $\llbracket \{x\} \rrbracket = \varphi^\I_{x} \cong \I(x)$.
\end{theorem}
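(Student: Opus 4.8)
The plan is to prove Theorem~\ref{thm:eqiv-gen} by a straightforward case analysis on the semantics $\I$, reducing each case to the corresponding individual equivalence result already established in this section. For $\I \in \{\Rs, \Fs\}$ the equality $\llbracket \{x\} \rrbracket = \varphi^\I_{x}$ is exactly Theorem~\ref{thm:eqiv-ready}; for $\I \in \{\Ts, \CTs\}$ it is Theorem~\ref{thm:eqiv-trace}; for $\I = \Pf$ it is Theorem~\ref{thm:eqiv-poss-fut}; and for $\I \in \{\Rtr, \Ftr\}$ it is Theorem~\ref{thm:eqiv-ready-tr}. So the first part of the statement is nothing more than a bookkeeping collection of the seven cases. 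What should be made explicit, though, is that all of these were obtained by the \emph{same} argument, namely induction on words $w \in A^*$ (or on words $\bar w \in (A \times \Powf(A))^*$ in the ready/failure trace case), using the definitions of $o$ and $t$ from Figure~\ref{fig:GenSettMoore} at the base step and the clause $\llbracket Y \rrbracket(aw) = \llbracket \bigsqcup_{y \in Y} \delta(y)(a) \rrbracket(w)$ at the inductive step; it is this uniformity that Figure~\ref{fig:nutshell} is meant to record.

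Next I would address the isomorphism $\varphi^\I_{x} \cong \I(x)$. In each case this is not a theorem to be proved but a consequence of the way $\varphi^\I_{x}$ was \emph{defined}: the semantic object $\I(x)$ lives in a powerset $\mathscr P(A^* \times D_\I)$ (with $D_\I = \Powf A$ for $\Rs, \Fs$, $D_\I = \{*\}$ up to iso for $\Ts, \CTs$, $D_\I = \Pow(A^*)$ for $\Pf$, and $D_\I = \Powf(A) \times (A \times \Powf A)^*$ reindexed for $\Rtr, \Ftr$), and $\varphi^\I_{x}$ is its transpose under the standard natural isomorphism $\mathscr P(A^* \times D_\I) \cong (\mathscr P(D_\I))^{A^*}$ (currying, i.e.\ taking characteristic functions in the $D_\I$ component). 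This is precisely the identification discussed in Remark~\ref{rem:pow-pow-ready} and in the paragraphs preceding Theorems~\ref{thm:eqiv-trace} and~\ref{thm:eqiv-poss-fut} (where $\Pow(A^*) \cong 2^{A^*}$), so for the proof it suffices to point back to those isomorphisms and note they are independent of the LTS. Composing the two steps gives $\llbracket \{x\} \rrbracket = \varphi^\I_{x} \cong \I(x)$, which is the claim.

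The one place where a little care is needed — and the only genuinely non-routine point — is the ready trace and failure trace cases, where the determinisation of Figure~\ref{fig:GenSettMoore} is not applied to the original LTS but to the decorated LTS with relabelled transitions $x \xrightarrow{\langle a, I(\delta(x))\rangle} y$, so that the final coalgebra has carrier $(\B_\I)^{(A \times \Powf(A))^*}$ rather than $(\B_\I)^{A^*}$. Here the matching between a behaviour $\llbracket \{x\}\rrbracket(\bar w)$ and a ready/failure trace $Z_0 a_1 Z_1 \cdots a_n Z_n^j$ goes through the bijection between words $\bar w = \langle a_1, Z_0\rangle \cdots \langle a_n, Z_{n-1}\rangle$ over $A \times \Powf(A)$ and the interleaved sequences used in Definitions~\ref{def:Rtr-equiv} and~\ref{def:Ftr-equiv}, together with the extra output slot $Z_n$; this is exactly what is packaged into $\varphi^\I_x$ above Theorem~\ref{thm:eqiv-ready-tr}. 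Since Theorem~\ref{thm:eqiv-ready-tr} already incorporates this reindexing, in the proof of Theorem~\ref{thm:eqiv-gen} I would simply remark that the isomorphism $\varphi^\I_x \cong \I(x)$ in these two cases is this reindexing bijection, and otherwise the argument is identical to the other five cases. Hence no new obstacle arises, and the theorem follows.

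\begin{proof}[Theorem~\ref{thm:eqiv-gen}]
We argue by cases on $\I$. For $\I \in \{\Rs, \Fs\}$, $\I \in \{\Ts, \CTs\}$, $\I = \Pf$ and $\I \in \{\Rtr, \Ftr\}$, the equality $\llbracket \{x\} \rrbracket = \varphi^\I_{x}$ (for all $x \in X$) is Theorem~\ref{thm:eqiv-ready}, Theorem~\ref{thm:eqiv-trace}, Theorem~\ref{thm:eqiv-poss-fut} and Theorem~\ref{thm:eqiv-ready-tr}, respectively, each proved by induction on words using the definitions of $o$, $t$ and $\llbracket - \rrbracket$ in Figure~\ref{fig:GenSettMoore} (for $\Rtr, \Ftr$ the determinisation is applied to the decorated LTS with relabelled transitions, as in Section~\ref{sec:ready-fail-tr}). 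In every case $\varphi^\I_{x}$ was defined as the transpose of $\I(x)$ under the standard isomorphism between $\mathscr P(A^* \times D_\I)$ and $(\mathscr P(D_\I))^{A^*}$ (respectively, between $\mathscr P\bigl((A\times \Powf(A))^*\bigr)$-indexed data and the corresponding function space, for $\Rtr$ and $\Ftr$), so $\varphi^\I_{x} \cong \I(x)$ holds by construction. Composing the two facts yields $\llbracket \{x\} \rrbracket = \varphi^\I_{x} \cong \I(x)$ for all $x \in X$.
\qed
\end{proof}
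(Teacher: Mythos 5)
Your proposal is correct and matches the paper's own treatment: the paper proves Theorem~\ref{thm:eqiv-gen} exactly by observing that, for each choice of $\I$, the claim is the corresponding per-semantics theorem (Theorems~\ref{thm:eqiv-ready}, \ref{thm:eqiv-trace}, \ref{thm:eqiv-poss-fut}, \ref{thm:eqiv-ready-tr}), each established by induction on words over the relevant alphabet, with $\varphi^\I_x$ being by definition the curried transpose of $\I(x)$. Your additional care about the relabelled alphabet in the ready/failure trace cases is consistent with how the paper packages that reindexing into Theorem~\ref{thm:eqiv-ready-tr}.
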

Orthogonally, for the case of GPS's, for $\I$ ranging over $\PR, \PF, \PMF, \PT$ and $\PMT$, we prove the following:
\begin{theorem}
\label{thm:eqiv-gen-gps}
Let $(X, \delta\colon X \rightarrow (\PD X)^A)$ be a GPS.
For all $x \in X$,
$\bb{\eta(x)} = \I(x)$.
\end{theorem}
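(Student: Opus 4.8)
The plan is to obtain both statements as a bookkeeping corollary of the case analyses already carried out in Sections~\ref{sec:dec-tr-lts} and~\ref{sec:dec-tr-gps}, and to organise the write-up around the common shape shared by those proofs. For Theorem~\ref{thm:eqiv-gen} I would proceed by cases on $\I \in \{\Tr,\Ctr,\Rp,\Fp,\Pf,\Rtr,\Ftr\}$. In each case the three data $\overline{o}_\I$, $B_\I$, and (for $\I\in\{\Rtr,\Ftr\}$) the altered transition structure $\overline\delta$, have already been fixed so that the generalised powerset construction of Figure~\ref{fig:GenSettMoore} applies — one only needs to recall that each $B_\I$ is a join-semilattice and that product and exponentiation preserve this structure, which guarantees $\F_\I T(X)$ is a $T$-algebra and hence that $\langle o,t\rangle$ and $\llbracket-\rrbracket$ exist and are semilattice morphisms into the final coalgebra $(B_\I)^{A^*}$. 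The first equality $\llbracket\{x\}\rrbracket = \varphi^\I_x$ is then literally the content of Theorem~\ref{thm:eqiv-ready} (for $\I\in\{\Rp,\Fp\}$), Theorem~\ref{thm:eqiv-trace} (for $\I\in\{\Tr,\Ctr\}$), Theorem~\ref{thm:eqiv-poss-fut} (for $\I=\Pf$) and Theorem~\ref{thm:eqiv-ready-tr} (for $\I\in\{\Rtr,\Ftr\}$); the second part, the isomorphism $\varphi^\I_x \cong \I(x)$, is the currying isomorphism between sets of pairs and the corresponding function space recorded alongside each definition (and spelled out in Remark~\ref{rem:pow-pow-ready} for the ready/failure case, where the codomain is $\Powf(\Powf A)$ rather than $\Powf A$). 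Composing the two yields $\llbracket\{x\}\rrbracket \cong \I(x)$ uniformly, which is exactly Theorem~\ref{thm:eqiv-gen}.

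For Theorem~\ref{thm:eqiv-gen-gps} the plan is entirely analogous: it is the union of Theorem~\ref{thm:equiv-r-f-mf} (for $\I\in\{\PR,\PF,\PMF\}$) and Theorem~\ref{thm:equiv-trace} (for $\I\in\{\PT,\PMT\}$), where now the determinisation of Figure~\ref{fig:GenSettMooreGPS} is used, with $T$ the (sub)distribution monad $\PD$ and $B_\I$ required to carry the $\PD$-algebra $h_\I$ in the sense of Definition~\ref{def:alg-of-monad}. No type isomorphism is needed here, since the final coalgebra $(B_\I)^{A^*}$ already has precisely the type $(A^*\times\Pow A)\to[0,1]$ (resp.\ $A^*\to[0,1]$) of the functions $\PR,\PF,\PMF$ (resp.\ $\PT,\PMT$); the statement is directly $\bb{\eta(x)} = \I(x)$.

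If instead one wanted the statement to be self-contained rather than a chain of citations, I would give the uniform proof by induction on the length of the word $w$. The base case unfolds $\bb{\cdot}(\eps) = o(\cdot)$ against the definition of $\overline o_\I$ (for GPS's, against $o = h_\I\circ\PD(\overline o_\I)$), and the inductive step uses that $\bb{-}$ is an algebra morphism to push the head letter through the determinised transition: $\bb{\{x\}}(aw) = \bb{t(\{x\})(a)}(w) = \bigcup_{x\xrightarrow{a}z}\bb{\{z\}}(w)$ in the LTS case, and the convex-combination analogue in the GPS case, after which the induction hypothesis applies to each successor. This is exactly the skeleton of the proof of Theorem~\ref{thm:eqiv-ready}, and all the other LTS/GPS cases differ only in the concrete shape of $\overline o_\I$ and $B_\I$.

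The one genuinely non-routine point — where I would concentrate the care — is the ready/failure \emph{trace} case ($\I\in\{\Rtr,\Ftr\}$). There the induction runs over words in the enlarged alphabet $\overline A = A\times\Powf A$, and one must verify that the strength-based transition $\overline\delta$ correctly records at each step the set of initial actions of the \emph{source} state, so that a path $\langle a_1,Z_0\rangle\cdots\langle a_n,Z_{n-1}\rangle$ in the determinised Moore automaton is inhabited precisely when there is an LTS run $x_0\xrightarrow{a_1}\cdots\xrightarrow{a_n}x_n$ with $Z_i = I(\delta(x_i))$ for $0\le i<n$, and then that the Moore output $\overline o_\I$ of the reached state enumerates exactly the $I(\delta(x_n))$ (resp.\ $Fail(\delta(x_n))$). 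Once this correspondence is established the induction closes as before, and the resulting $\varphi^\I_x$ matches $\I(x)$ under the evident isomorphism $\Powf(A)\times(A\times\Powf(A))^* \cong (\Powf(\Powf A))^{\overline A^*}$. Everything outside this case is either a direct citation of an earlier theorem or the same routine induction, so I would keep the write-up short and point the reader to Figure~\ref{fig:nutshell} for the tabulated instantiations.
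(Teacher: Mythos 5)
Your proposal is correct and follows essentially the same route as the paper: Theorem~\ref{thm:eqiv-gen-gps} is stated there as a summary of Theorem~\ref{thm:equiv-r-f-mf} and Theorem~\ref{thm:equiv-trace}, with the proof deferred to the induction on words in the proof of Theorem~\ref{thm:equiv-r-f-mf} (base case unfolding $o=h_\I\circ\PD(\overline o_\I)$, inductive step using that $\bb{-}$ is a $\PD$-algebra map). The only nitpick is that the paper does invoke the currying isomorphism $([0,1]^{\pow(A)})^{A^*}\cong[0,1]^{A^*\times\pow(A)}$ to match types in the ready/failure case, but this is immaterial.
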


For each of the semantics under consideration, the proofs of Theorem~\ref{thm:eqiv-gen} and Theorem~\ref{thm:eqiv-gen-gps}, follow by induction on words over the corresponding action alphabet. For more details see the proof of Theorem~\ref{thm:eqiv-ready} in Section~\ref{sec:ready-fail} (for LTS's) and Theorem~\ref{thm:equiv-r-f-mf} in Section~\ref{sec:r-f-mf} (for GPS's), respectively.

\begin{remark}
It is worth observing that by instantiating $T$ with the identity functor, $\F$ with $\pow(-)^{A}$ and, respectively, $\PD(-)^{A}$ in~(\ref{F-final}), in Section~\ref{prelim:gen-pow}, one gets the coalgebraic modelling of the standard notion of bisimilarity for LTS's and, respectively, GPS's.
\end{remark}

Concrete examples on how to use the coalgebraic frameworks are provided for each of the decorated trace semantics. We show how to
derive determinisations of LTS's and GPS's in terms of Moore automata, which eventually are used to reason on the corresponding equivalences in terms of Moore bisimulations.

\begin{figure}[ht]
\[
\setlength{\arraycolsep}{3pt}
\begin{array}{||c|c|c||c|c|c}
\hline
\I &  \B_{\I} & \bar{o}_{\I}\colon X \to \B_\I\\
\hline\hline
\Rp &  \Powf(\Powf A) &
\overline{o}_\R(x)  =  \{I(\delta(x))\}\\
\hline
\Fp &  \Powf(\Powf A) &
\overline{o}_\Fs(x) = Fail(\delta(x))
\\
\hline
\Ts &  2 &
\overline{o}_\Ts(x) = 1\\
\hline
\CTs & 2 &
\bar{o}_{\CTs}(x) = \left \{
\begin{array}{ r l}
1 & \textnormal{ if $I(\delta(x)) = \emptyset$}\\
0 & \textnormal{ otherwise}
\end{array}
\right.\\
\hline
\Pf & \Pow(\Pow A^{*}) &
\bar{o}_{\Pf}(x) = \{\Ts(x)\}
\\
\hline
\Rtr &
\Powf(\Powf A) &
\overline{o}_\Rtr(x)  =  \{I(\delta(x))\}\\
\hline
\Ftr &
\Powf(\Powf A) &
\overline{o}_\Ftr(x)  =  Fail(\delta(x))\\
\hline
\PR & 
[0,1]^{\pow(A)} &
\overline{o}_{\PR}(x)(I) = \begin{cases} 1 & \text{ if } I = I(x)\\
0 & \text{ otherwise}\end{cases}
\\
\hline
\PF & [0,1]^{\pow(A)} &
\overline{o}_{\PF}(x)(Z)  = \begin{cases} 1 & \text{ if } Z \cap I(x) = \emptyset\\
0 & \text{ otherwise}
\end{cases}
\\
\hline
\PMF & 
[0,1]^{\pow(A)} &
\overline{o}_{\PMF}(x)(Z) =  \begin{cases} 1 & \text{ if } Z = A-I(x)\\ 0 & \text{ otherwise}
\end{cases}
\\
\hline
\PT & 
[0,1] &
\overline{o}_{\PT}(x) = 1\\
\hline
\PMT & 
[0,1] &
\overline{o}_{\PMT}(x) = \mu_{0}(x, {\mathbf 0})\\
\hline
\hline
\end{array}
\]
\caption{The coalgebraic framework in a nutshell.}
\label{fig:nutshell}
\end{figure}


\section{Canonical representatives}
\label{sec:canon}
Given a {\em decorated} system $(X, <\overline{o}_\I, \delta>)$, we showed in 
the previous sections how to construct a determinisation
$(T(X), <o,t> )$, with $T = \Powf$ for the case of LTS's, and $T = \PD$ for GPS's, respectively. 
The map $\llbracket - \rrbracket \colon T X \to \B_\I^{A^*}$ provides us with a  \emph{canonical representative}
of the behaviour of each state in $T X$. The image $(C,\delta')$ of $(T X, <o,t> )$, via the map $\llbracket  - \rrbracket$, can be viewed as the minimisation with respect to 
the equivalence $\I$. 

Recall that the states of the final Moore coalgebra $(\B_\I^{A^*}, <\epsilon, (-)_a>)$ are functions $\varphi \colon A^* \to \B_\I$ 
and that their decorations and transitions are given by the functions 
$\epsilon \colon \B_\I^{A^*} \to \B_\I$ and $(-)_a\colon  \B_\I^{A^*} \to  (\B_\I^{A^*})^A$, defined in Example~\ref{eg:Moore} in Section~\ref{prelim:coalg}.
The states of the canonical representative $(C,\delta')$ are also functions $\varphi \colon A^* \to \B_\I$, {\emph i.e.}, $C \subseteq \B_\I^{A^*}$. 
Moreover, the function ${\delta}' \colon C \to \B_\I \times C^A$ is simply the restriction of $<\epsilon, (-)_a>$ to $C$, 
that means ${\delta'}(\varphi)=<\varphi(\epsilon), (\varphi)_a>$ for all $\varphi \in C$.

Finally, it is interesting to observe that for LTS $\B_\I^{A^*}$ carries a semilattice structure (inherited from $\B_\I$) and that 
$\llbracket - \rrbracket \colon \Powf X \to \B_\I^{A^*}$ is a semilattice homomorphism. From this observation, it is immediate to conclude 
that also $C$ is a semilattice, but it is not necessarily freely generated, {\emph i.e.}, it is not necessarily a powerset. Similarly, for GPS $\B_\I^{A^*}$ carries a positive convex algebra structure (these are the $\PD$-algebras) and $\llbracket - \rrbracket \colon \PD X \to \B_\I^{A^*}$ is a positive convex algebra homomorphism. Again, from this observation, we know that also $C$ is a positive convex algebra (not necessarily freely generated).

\section{Recovering the spectrum}
\label{sec:recover-spectrum} 

We will briefly explain how to recover the spectrums from Figure~\ref{fig:lattice} from the coalgebraic modelling. First, we recall the following folklore result from coalgebra theory which is the key behind building the spectrum. 
Let $\mathbf{Coalg}_{f}(\F)$ denote the category of all $\F$-coalgebras with a free carrier (arising from a powerset construction) and $\F$-homomorphisms. That is, the objects are of the form $T(X) \to \F T(X)$. Given two functors $\F$ and $\mathscr G$, if one can construct a functor $\sigma \colon \mathbf{Coalg}_{f}(\F) \to \mathbf{Coalg}_{f}(\mathscr G)$ then $\sim_\F \subseteq \sim_{\mathscr G}$.

In the current setting, we apply this to the category $\mathbf{Coalg}_{f}(\F)$ of all $\F$-coalgebras with a free carrier (arising from a powerset construction) and $\F$-homomorphisms. That is, the objects are of the form $T(X) \to \F T(X)$.

For all the relations in the spectrum we can indeed define such $\sigma$. We illustrate here the case for failure and complete trace. 
$$
\xymatrix{(\Powf(X) \ar[r]^-{<o_{\F},t>}& \Powf(\Powf(A^*)) \times \Powf(X)^A}) \stackrel\sigma \mapsto 
\xymatrix{(\Powf(X) \ar[r]^-{<o_{\CTs},t>}& 2 \times \Powf(X)^A)} 
$$
In order to prove that $\sigma$ is a functor we need to show that it preserves homomorphisms.
\begin{lemma}
\label{lm:rec-fail=ct}
Consider $f\colon \Powf(X) \to  \Powf(Y)$ such that $o_\F = o_\F\circ f$. Then $o_\CTs = o_\CTs \circ f$. 
\end{lemma}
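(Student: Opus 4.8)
The plan is to show that the complete-trace output $o_\CTs$ \emph{factors through} the failure output $o_\F$, so that the hypothesis $o_\F = o_\F \circ f$ immediately forces $o_\CTs = o_\CTs \circ f$. Concretely, I would introduce the map $\vartheta \colon \Powf(\Powf A) \to 2$ given by $\vartheta(D) = 1$ if $A \in D$ and $\vartheta(D) = 0$ otherwise, and then prove the identity $o_\CTs = \vartheta \circ o_\F$. Once this identity is available the lemma is purely formal, so the real content is this factorisation.

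To establish $o_\CTs = \vartheta \circ o_\F$, recall from Figure~\ref{fig:GenSettMoore} that for $Y \in \Powf X$ the determinised outputs are $o_\F(Y) = \bigsqcup_{y\in Y}\overline{o}_\Fs(y) = \bigcup_{y\in Y} Fail(\delta(y))$ and $o_\CTs(Y) = \bigsqcup_{y\in Y}\overline{o}_\CTs(y)$, the latter being $1$ precisely when there exists $y \in Y$ with $I(\delta(y)) = \emptyset$. Now $A \in o_\F(Y)$ holds iff there is some $y\in Y$ with $A \in Fail(\delta(y))$, i.e.\ with $A \cap I(\delta(y)) = \emptyset$; since $I(\delta(y)) \subseteq A$, this is equivalent to $I(\delta(y)) = \emptyset$, i.e.\ to $y$ being a deadlock state. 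Hence $\vartheta(o_\F(Y)) = 1$ iff $Y$ contains a deadlock state iff $o_\CTs(Y) = 1$, which proves $o_\CTs = \vartheta \circ o_\F$. (The corner case $Y = \emptyset$ causes no trouble: both sides evaluate to $0$.)

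With the factorisation in hand, the lemma follows at once: $o_\CTs \circ f = \vartheta \circ o_\F \circ f = \vartheta \circ o_\F = o_\CTs$, where the middle equality uses the assumption $o_\F \circ f = o_\F$. Since the functor $\sigma$ leaves the transition component $t$ untouched, only the output condition needs checking for $f$ to be a $\mathscr G$-homomorphism, so this shows $\sigma$ preserves $\F$-homomorphisms and therefore, by the folklore result recalled above, $\sim_\Fs \subseteq \sim_\CTs$, i.e.\ the corresponding edge of the spectrum in Figure~\ref{fig:lattice}.

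I do not expect a genuine obstacle here; the only subtlety to get right is that the relevant characterisation of a deadlock state $y$ in terms of $\overline{o}_\Fs(y)$ is ``$A \in Fail(\delta(y))$'' (using $I(\delta(y)) \subseteq A$), rather than ``$Fail(\delta(y)) = \Pow A$'', since membership of a fixed element is preserved by the semilattice joins (unions) used in the determinisation while ``being the top element'' is not. The same factoring scheme — through an appropriate projection of the relevant observation set $\B_\I$ — is what I would reuse for the other inclusions in the spectrum (e.g.\ Lemma~\ref{lm:rec-ready-trace}), replacing $\vartheta$ by the map suited to the pair of semantics at hand.
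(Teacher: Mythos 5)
Your proof is correct, and it takes a genuinely different (and arguably cleaner) route than the paper's. The paper proves the lemma by directly unpacking the equality $o_\F(S)=o_\F(f(S))$ of the unions $\bigcup_{s}Fail(\delta(s))$, passing through an intermediate ``for all $s$ there exists $s'$ with matching failure sets'' condition, and then weakening to the statement about deadlock states --- which amounts to specialising the failure set $Z$ to the full alphabet $A$. You instead isolate exactly that specialisation as a factorisation $o_\CTs=\vartheta\circ o_\F$ through the predicate ``does the output contain $A$?'', after which the lemma is a one-line formal consequence of $o_\F\circ f=o_\F$. The two arguments rest on the same core fact (that $A\in Fail(\delta(y))$ iff $I(\delta(y))=\emptyset$, and that membership of a fixed element is preserved by the unions performed in the determinisation), but your packaging buys two things: it sidesteps the paper's intermediate pointwise-matching step, which as literally stated is stronger than what equality of unions actually yields (only the final one-directional implication is needed, and your argument never relies on the stronger claim); and, as you note, the factorisation-through-a-projection scheme transfers directly to the other inclusions in the spectrum, such as Lemma~\ref{lm:rec-ready-trace}, where the paper likewise argues by summing over all ready sets rather than by exhibiting an explicit map $\B_{\PR}\to\B_{\PT}$. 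Your remark about why one must test ``$A\in o_\F(Y)$'' rather than ``$o_\F(Y)=\Pow A$'' is exactly the right subtlety to flag.
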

\begin{proof}
$$
\begin{array}{lcl}
&& o_\F(S) = o_\F \circ f (S) \\
&\iff&  \{ Z \subseteq A \mid Z \cap I(\delta(s)) = \emptyset, s\in S\} = \{ Z  \subseteq A  \mid Z \cap I(\delta(s')) = \emptyset, s'\in f(S)\} \\
&\iff& \forall_{s\in S} \exists_{s'\in f(S)}\   Z \cap I(\delta(s)) = \emptyset \iff  Z \cap I(\delta(s')) = \emptyset \text{ and vice-versa.}\\
&\Rightarrow& \forall_{s\in S} \exists_{s'\in f(S)}\   I(\delta(s)) = \emptyset \iff  I(\delta(s')) = \emptyset \text{ and vice-versa.}\\
&\iff&  \bigvee_{s\in S} (I(\delta(s)) = \emptyset) =  \bigvee_{s'\in f(S)} (I(\delta(s')) = \emptyset) \\
&\iff& o_\CTs (S) = o_\CTs \circ f (S)
\end{array} 
$$
\end{proof}

Note that this is different from the technique used to recover a hierarchy of probabilistic systems in~\cite{DBLP:journals/tcs/BartelsSV04} where injective natural transformations were defined between functor types and then it was shown that bisimilarity was reflected by these transformations. Here, the situation is different and, for several different equivalences, we have the same functor ({\emph e.g.}, for $\CTs$ and $\Tr$).

In the case of the probabilistic spectrum similar proofs can be given. We illustrate it for the case of probabilistic ready and trace semantics. 
$$
\xymatrix{(\PD(X) \ar[r]^-{<o_{\PR},t>}& [0,1]^{\Powf(A^*)} \times \PD(X)^A}) \quad \stackrel\sigma \mapsto \quad
\xymatrix{(\PD(X) \ar[r]^-{<o_{\PT},t>}& [0,1] \times \PD(X)^A)} 
$$
Again, in order to prove that $\sigma$ is a functor we need to show that it preserves homomorphisms.
\begin{lemma}
\label{lm:rec-ready-trace}
Consider $f\colon \PD(X) \to  \PD(Y)$ such that $o_{\PR} = o_{\PR}\circ f$. Then $o_{\PT} = o_{\PT} \circ f$.
\end{lemma}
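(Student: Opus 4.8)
The plan is to exhibit a fixed function $\Sigma\colon [0,1]^{\pow(A)}\to[0,1]$ such that $o_{\PT}=\Sigma\circ o_{\PR}$, and then conclude the lemma purely formally from the hypothesis $o_{\PR}=o_{\PR}\circ f$. First I would unfold the two output maps using the formulas from Section~\ref{sec:r-f-mf}: since $\overline{o}_{\PR}(x)(S)=1$ exactly when $S=I(x)$, one gets, for $\varphi\in\PD(X)$ and $S\subseteq A$,
\[
o_{\PR}(\varphi)(S)=\sum_{\substack{x\in supp(\varphi)\\ I(x)=S}}\varphi(x),
\qquad
o_{\PT}(\varphi)=\sum_{x\in supp(\varphi)}\varphi(x),
\]
the latter because $\overline{o}_{\PT}\equiv 1$.

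The key step is the observation that the sets $\{x\in supp(\varphi)\mid I(x)=S\}$, for $S$ ranging over $\pow(A)$, form a partition of $supp(\varphi)$: $I(\cdot)$ is a total function $X\to\pow(A)$, so each $x$ lies in exactly one block. Since $supp(\varphi)$ is finite, only finitely many blocks are nonempty, hence only finitely many $S$ contribute a nonzero value to $o_{\PR}(\varphi)$, and summing over all of them recovers the total mass:
\[
\sum_{S\subseteq A}o_{\PR}(\varphi)(S)
=\sum_{S\subseteq A}\ \sum_{\substack{x\in supp(\varphi)\\ I(x)=S}}\varphi(x)
=\sum_{x\in supp(\varphi)}\varphi(x)
=o_{\PT}(\varphi).
\]
Thus $o_{\PT}=\Sigma\circ o_{\PR}$ where $\Sigma(g)=\sum_{S\subseteq A}g(S)$ (well defined on the image of $o_{\PR}$, whose elements are finitely supported).

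Finally I would chain the identities: for every $\varphi\in\PD(X)$,
\[
o_{\PT}(f(\varphi))=\Sigma\bigl(o_{\PR}(f(\varphi))\bigr)
=\Sigma\bigl(o_{\PR}(\varphi)\bigr)=o_{\PT}(\varphi),
\]
using the hypothesis $o_{\PR}=o_{\PR}\circ f$ in the middle equality, which gives $o_{\PT}=o_{\PT}\circ f$ as required. Alternatively, to match the style of the proof of Lemma~\ref{lm:rec-fail=ct}, the same argument can be presented as a chain of implications starting from ``$o_{\PR}(\varphi)=o_{\PR}\circ f(\varphi)$'' and passing to the sum over all $S\subseteq A$. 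There is no real obstacle here; the only point requiring a line of care is the finiteness/partition justification for the reindexing $\sum_{S}\sum_{I(x)=S}=\sum_{x}$, which is exactly what makes $\Sigma$ well defined on the relevant subset of $[0,1]^{\pow(A)}$.
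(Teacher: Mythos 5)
Your proof is correct and follows essentially the same route as the paper: the paper also sums the hypothesis $o_{\PR}(\varphi)(I) = o_{\PR}(f(\varphi))(I)$ over all $I \subseteq A$ and uses the fact that the fibres of $I(\cdot)$ partition the support to collapse the double sum into $\sum_x \varphi(x) = o_{\PT}(\varphi)$. Packaging the summation as a map $\Sigma$ with $o_{\PT} = \Sigma \circ o_{\PR}$ is a harmless reformulation of the same computation.
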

\begin{proof}
$$
\begin{array}{lcl}
 o_{\PR}(\varphi) = o_{\PR} \circ f (\varphi) 
&\iff&  \sum\limits_{\scriptsize\begin{array}{c}x\in X\\I = I(x)\end{array}} \!\!\!\!\!\varphi(x)\ \  = \sum\limits_{\scriptsize\begin{array}{c}y\in Y\\I = I(y)\end{array}}\!\!\!\! f(\varphi)(y), \quad \text{for all $I\subseteq A$.}\\\\
&\Rightarrow&  \sum\limits_{I\subseteq A}\sum\limits_{\scriptsize\begin{array}{c}x\in X\\I = I(x)\end{array}} \!\!\!\!\!\varphi(x)\ \  = \sum\limits_{I\subseteq A}\sum\limits_{\scriptsize\begin{array}{c}y\in Y\\I = I(y)\end{array}}\!\!\!\! f(\varphi)(y)\\\\
&\iff&  \sum\limits_{x\in X} \varphi(x)  = \sum\limits_{y\in Y} f(\varphi)(y) \\[1.9ex]
&\iff& o_{\PT} (\varphi)= o_{\PT} \circ f (\varphi)
\end{array} 
$$
\end{proof}

\section{Testing semantics}
\label{sec:testing-semantics}

In this section we show how must and may testing~\cite{CleavelandH89,DBLP:journals/tcs/NicolaH84,Hennessy:1988:ATP:50497} can be modelled coalgebraically by exploiting the generalised powerset construction in the context of LTS's with internal behaviour.
As we shall see, the modelling of may testing is derived based the coalgebraic characterisation of trace semantics in Section~\ref{sec:complete-tr}, in a straightforward fashion. The coalgebraic characterisation of must testing follows as an ``extension to divergence'' of failure semantics in Section~\ref{sec:ready-fail}.

In our approach we consider LTS's on an alphabet $A+ \set{\tau}$,
where $\tau$ is a special label representing \emph{internal
actions}.
We write
$\xRightarrow{\varepsilon}$ to represent $\xrightarrow{\tau}^{*}$ the
reflexive and transitive closure of $\xrightarrow{\tau}$ and, for
$a\in A$, by $\xRightarrow{a}$ we denote $\xrightarrow{\tau}^{*}
\xrightarrow{a} \xrightarrow{\tau}^{*}$. For $w\in A^*$,
$\xRightarrow{w}$ is defined inductively, in the obvious way.

\subsection{From traces to may testing}
\label{sec:trace-may}
In this section we show how may testing semantics can be modelled in the coalgebraic setting.

Intuitively, may testing relates processes in terms of the observable traces (consisting of actions different from $\tau$) they can execute, by ignoring (any number of) occurrences of the internal action $\tau$.

Let $L(p)$ represent the set of observable traces associated with a state $p$ of an LTS with actions in $A \cup \set{\tau}$, referred to as the \emph{language} of $p$:
\begin{equation}
\label{eq:lang-may}
L(p) = \set{w \in (A-\set{\tau})^{*} \mid (\exists p')\,.\,p \xRightarrow{w} p'\ }.
\end{equation}

In~\cite{CleavelandH89}, an alternative characterisation may testing semantics is defined as follows.

\begin{definition}[May semantics~\cite{CleavelandH89}]
\label{def:must-preord} Let $x$ and $y$ be two states of an LTS. We write
$x \may y$ iff $L(x) \subseteq L(y)$. We say that $x$ and $y$ are \emph{may-equivalent} ($x \sim_{may} y$) iff $x \may y$ and $y \may x$.
\end{definition}

The connection with trace semantics in Section~\ref{sec:complete-tr} is rather obvious: both may and trace distinguish processes depending on their languages. 
Hence, we further provide an extension of the coalgebraic modelling of trace semantics to the context of LTS's with internal behaviour, and show it corresponds precisely the may testing as given in Definition~\ref{def:must-preord}.

To begin with, we model LTS's with \emph{internal behaviour} as coalgebras $(S,t\colon S \rightarrow (\pow S)^{A})$, such that, for $x \in S$ and $a \in A$:
\begin{equation}
\label{eq:may-t}
t(x)(a) = \set{y \mid x \xRightarrow{a} y}.
\end{equation}

Then, we decorate LTS's by means of a function $o\ct S \rightarrow 2$ such that, for all $x\in S$
\[
o(x) = 1
\]
and apply the generalised powerset construction as depicted in Figure~\ref{fig:GenSettMoore} in Section~\ref{sec:dec-tr-lts}. Similarly to the case of trace semantics, the final Moore coalgebra is $2^{A^{*}}$ -- the set of languages over $A$. Therefore, by the definition of the transition function $t$ in~(\ref{eq:may-t}), it immediately follows that the behaviour map $\bb{-}$ captures precisely the languages of states in $S$. Namely, for all $x \in S$:
\[
\bb{\set{x}} \cong L(x).
\]

Note that $2^{A^{*}}$ carries a join semilattice structure, where identity is the empty language and join is the union of languages. Consider $\sqsubseteq$ the associated preorder. At this point, the coalgebraic modelling of may testing semantics is straightforward:

\begin{theorem}
\label{thm:may-tr}
Let $x$ and $y$ be two states of an LTS. Then
\[
x \may y \text{ iff } \bb{\set{x}} \sqsubseteq \bb{\set{y}} \text{ and } x \sim_{may} y \text{ iff } \bb{\set{x}} = \bb{\set{y}}.
\]
\end{theorem}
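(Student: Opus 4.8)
The plan is to derive the statement directly from the identification $\bb{\set{x}} \cong L(x)$ announced just above the theorem, together with the observation that, under the standard isomorphism $2^{A^*} \cong \pow(A^*)$, the semilattice preorder $\sqsubseteq$ on the carrier of the final Moore coalgebra is exactly language inclusion. Once both facts are in place, the two \emph{iff}'s are immediate.

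First I would spell out the isomorphism $2^{A^*} \cong \pow(A^*)$, sending a function $\varphi\colon A^* \to 2$ to the language $\set{w \in A^* \mid \varphi(w) = 1}$, and check that it is a join-semilattice isomorphism: the constantly-$0$ function is sent to the empty language, and the pointwise ``Boolean or'' of functions corresponds to union of languages. Hence it is an order-isomorphism from $(2^{A^*}, \sqsubseteq)$ onto $(\pow(A^*), \subseteq)$; in particular $\bb{\set{x}} \sqsubseteq \bb{\set{y}}$ holds iff the corresponding languages are included one in the other. Second, I would verify $\bb{\set{x}} \cong L(x)$, i.e.\ that $\bb{\set{x}}(w) = 1$ iff $w \in L(x)$, for every $x \in S$ and $w \in A^*$. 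This is an instance of the computation performed for (complete) trace semantics: it follows by induction on $w$ from the definitions of $o$, of $t$ in~(\ref{eq:may-t}), and of $\bb{-}$ in Figure~\ref{fig:GenSettMoore}. The base case uses $o(\set{x}) = \bigsqcup_{y \in \set x} o(y) = 1$ and $x \xRightarrow{\eps} x$, so that $\eps \in L(x)$ and $\bb{\set x}(\eps) = 1$ simultaneously; the inductive step uses $\bb{\set x}(aw) = \bb{t(\set x)(a)}(w) = \bb{\bigcup_{x \xRightarrow{a} z} \set z}(w)$ together with the fact that $\bb{-}$ is a semilattice homomorphism, which matches the recursion: for $w' = aw$, $w' \in L(x)$ iff $x \xRightarrow{a} z$ and $w \in L(z)$ for some $z$. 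This argument is literally the proof of Theorem~\ref{thm:eqiv-trace} with the weak transition $\xRightarrow{a}$ in place of $\xrightarrow{a}$; since $t$ in~(\ref{eq:may-t}) already bakes the $\tau$-closure into the transition function, the bookkeeping is unchanged, so I would either cite it or reproduce its induction verbatim.

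Putting the pieces together: by Definition~\ref{def:must-preord}, $x \may y$ holds iff $L(x) \subseteq L(y)$, which by the order-isomorphism and $\bb{\set{x}} \cong L(x)$, $\bb{\set{y}} \cong L(y)$ is equivalent to $\bb{\set{x}} \sqsubseteq \bb{\set{y}}$. For the second equivalence, $x \sim_{may} y$ means $x \may y$ and $y \may x$, hence $\bb{\set{x}} \sqsubseteq \bb{\set{y}}$ and $\bb{\set{y}} \sqsubseteq \bb{\set{x}}$; since $\sqsubseteq$ is in fact a partial order (language inclusion is antisymmetric), this holds iff $\bb{\set{x}} = \bb{\set{y}}$.

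I do not expect any genuine obstacle: the theorem is essentially a bookkeeping corollary of the determinisation set up earlier in the section. The only step with content is the verification of $\bb{\set{x}} \cong L(x)$, and that is a routine induction whose shape is identical to the one already carried out for trace semantics; the single difference — replacing ordinary transitions by weak ones — is invisible at the level of the induction, because it has been absorbed into the definition of the coalgebra structure $t$.
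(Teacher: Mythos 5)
Your proposal is correct and follows essentially the same route as the paper, which in fact gives no explicit proof: it simply asserts that $\bb{\set{x}}\cong L(x)$ "immediately follows" from the definition of $t$ in~(\ref{eq:may-t}) and that the theorem is then "straightforward" from the semilattice preorder on $2^{A^*}$ being language inclusion. You supply exactly the details the paper leaves implicit (the order-isomorphism with $(\pow(A^*),\subseteq)$ and the induction mirroring Theorem~\ref{thm:eqiv-trace} with weak transitions absorbed into $t$), so there is nothing to correct.
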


\subsection{From failures to must testing}
\label{sec:fail-must}

In what follows we provide a coalgebraic handling of must testing semantics~\cite{DBLP:journals/tcs/NicolaH84,Hennessy:1988:ATP:50497}, and show the connection between our approach and the framework used for the corresponding (alternative) modelling in~\cite{CleavelandH89}.

Intuitively, must testing relates processes based on the traces that do not lead to divergent states ({\it i.e.}, states that can engage into infinite internal computations), and a notion of non-determinism captured in terms of antichains of corresponding ready actions. By exploiting the isomorphism of antichains and downsets introduced in Section~\ref{sec:ready-fail}, it was easy to
observe that must testing coincides failure semantics for LTS's without internal behaviour (as formalised in Proposition~\ref{prop:must-fail-LTS} later on in this section).
With this intuition in mind, we provide an extension of failure semantics to the context of divergent LTS's and show it coincides with must testing semantics. The aforementioned coincidence is proven by employing a ``lifting'' of the isomorphism of downsets and antichains encompassing information on both the degree of non-determinism and divergence of processes.


We first recall some notations in~\cite{CleavelandH89}.
The
\emph{acceptance set of $x$ after $w$} is $A(x,w) = \set{\set{a \in A
    \mid x' \xrightarrow{a}} \mid x \xRightarrow{w} x' \land x' \not
  \xrightarrow{\tau}}$. Intuitively, it represents the set of actions
that can be fired after ``maximal'' executions of $w$ from $x$, those
that cannot be extended by some $\tau$-labelled transitions.

The possibility of an LTS to execute $\tau$-actions forever is referred to as
\emph{divergence}. We write $x \not \downarrow$ whenever $x$ diverges.
Dually, the convergence relation $x\downarrow w$ for a state $x$ and a
word $w\in A^*$ is inductively defined as follows: $x \downarrow
\varepsilon$ iff $x$ does not diverge and $x \downarrow aw'$ iff (a)
$x\downarrow \varepsilon$ and (b) if $x \xRightarrow{a} x'$, then
$x'\downarrow w'$.

Given two sets $B,C\in \pow(\pow(A))$, we write $B \subset \subset C$ iff for all $B_i \in B$, there exists $C_{i} \in C$ such that 
$C_{i} \subseteq B_{i}$.

With  these ingredients, it is possible to introduce must preorder and equivalence.
%
%
\begin{definition}[Must semantics~\cite{CleavelandH89}]
\label{def:must-preord} Let $x$ and $y$ be two states of an LTS. We write
$x \mst y$ iff for all words $ w \in A^{*}$, if $x\downarrow w$ then $y \downarrow w$ and $A(y, w) \inclAnti A(x,w)$. We say that $x$ and $y$ are \emph{must-equivalent} ($x \sim_{mst} y$) iff $x \mst y$ and $y \mst x$.
\end{definition}
As an example, consider the LTS's depicted below. States $x_4,x_5$ and $y_1$ are divergent. All the other states diverge for words containing the letter $b$ and converge for words on $a^*$. For these words and states $x, x_{1}, x_{2}, x_{3}$ and $y$, the corresponding acceptance sets equal $\set{\set{a,b}}$. In particular, note that $A(x_2,\varepsilon)$ is $\set{\set{a,b}}$ and not $\set{\set{b},\set{a,b}}$. It is therefore easy to conclude that $x,x_1,x_2, x_3$ and $y$ are all must equivalent.
\begin{equation}\label{eq:LTSMUST2}
\xymatrix@C=1.2cm@R=.8cm{
x \ar@/^1.2pc/[rr]|{\;b\;} \ar[r]^{a} \ar@/^/[d]^{a} & x_2 \ar[d]^{\tau} \ar[r]^b & x_4  \ar@(dr,ur)_{\tau}\\
x_1 \ar@/^/[u]^a \ar@/_1.1pc/[rr]|{\;b\;}  & x_3 \ar[l]_a \ar[r]^b & x_5 \ar[u]^{\tau} \\
}\;\;\;\hspace{15pt}
\xymatrix@C=.7cm@R=.4cm{
y\ar@(dl,ul)^{a}\ar[r]^{b} & y_{1}\ar@(dr,ur)_{\tau}
}
\end{equation}

\paragraph{Coalgebraic characterisation of must semantics.}
In what follows we show how must testing semantics can be captured in terms of coalgebras.

In order to proceed, we have to properly tackle \emph{internal behaviour} and \emph{divergence}.
We model LTS's on $A+\set{\tau}$ in terms of coalgebras $(S, t\colon S \rightarrow (1 + \pow S)^{A})$, where $1 = \set{\top}$ is the singleton set, and for $x \in S$,
\begin{align}
\label{eq:div-LTS-trans}
  t(x)(a) =    \top, \textnormal{ if } x \not \downarrow a \qquad  t(x)(a) =
    \set{y \mid x \xRightarrow{a} y},  \text{ otherwise}.
\end{align}
Note that we use $x \not \downarrow a$ as a shorthand for $x \not \downarrow a\varepsilon$.
Intuitively, a state $x \in S$ that displays divergent behaviour with respect to an action $a \in A$ is mapped to $\top$. Otherwise $t$ computes the set of states that can be reached from $x$ through $a$ (by possibly performing a finite number of $\tau$-transitions).

Similarly to failure equivalence in Section~\ref{sec:ready-fail}, we decorate the states of the LTS by means of a function $o\colon S \to 1+\pow(\pow(A))$
defined as follows:
\begin{align}
\label{eq:div-LTS-dec}
  o(x) &=
  \begin{cases}
    \top& \textnormal{ if } x \not \downarrow\\
    \bigcup_{x\xrightarrow{\tau}x'} o(x')& \textnormal{ if } x
    \xrightarrow{\tau}\\
    \Fail(t(x)) & \text{ otherwise}.
  \end{cases} 
\end{align}

Note that $(S,<o,t>)$ is an $FT$-coalgebra 
for the functor $F(S) = (1 + \pow(\pow A)) \times S^{A}$ and the monad $T(S)=1 + \pow S$.
$T$-algebras are semilattice with bottom and an extra element $\top$
acting as \emph{top} ({\emph i.e.}, such that $x \sqcup \top = \top$ for
all $x$). For any set $U$, $1+ \pow (U)$ carries a semilattice with
bottom and top: bottom is the empty set; top is the element $\top\in
1$; $X \sqcup Y$ is defined as the union for arbitrary subsets
$X,Y\in\pow(U)$ and as $\top$ otherwise. Consequently, $1 + \pow(\pow
A)$ and $FT(S)$ carry a $T$-algebra structure as well.
This enables the application of the generalised powerset construction (Section~\ref{prelim:gen-pow}) associating to each $FT$-coalgebra $(S, <o, t>)$  
the $F$-coalgebra $(1+ \pow S,  <o^{\sharp}, t^{\sharp}>)$ defined for all $X \in 1+\pow S$ as expected:
\begin{align*}
  o^{\sharp} (X) &=
  \begin{cases}
    \top & \text{ if } X = \top\\
    \bigsqcup_{x\in X} o(x) & \text{ if } X \in \pow(S)\\
  \end{cases}&
  t^{\sharp} (X)(a) &= 
  \begin{cases}
    \top & \text{ if } X = \top\\
    \bigsqcup_{x\in X} t(x)(a) & \text{ if } X \in \pow(S)\\
  \end{cases}
\end{align*}
Note that in the above definitions, $\sqcup$ is not simply the union of subsets (as it was the case for failure), but it is the join operation in $1+\pow (\pow(A))$ and $1+\pow( \pow(S))$.  
Moreover, $(1+ \pow S,  <o^{\sharp}, t^{\sharp})>$ is a Moore machine with output in $1 + \pow(\pow A)$ and, therefore induces a function $\bb{-}\colon (1+\pow(S))\to (1 + \pow(\pow A))^{A^*}$.
The semilattice structure of $1+\pow(\pow(A))$ can be easily lifted to $(1 + \pow(\pow A))^{A^*}$: 
bottom, top and $\sqcup$ are defined pointwise on $A^*$.
We denote by $\tauFailFin$ 
the preorder on $(1 + \pow(\pow A))^{A^*}$ induced by this semilattice.

A result (based on the isomorphism between downsets and antichains) similar to the one for failures, in Section~\ref{sec:ready-fail}, can also be derived in a modular fashion, for the case of LTS's decorated with outputs in $1 + \pow(\pow A)$.

As shown in Section~\ref{sec:ready-fail}, both the set of downsets $\overline{\mathcal{D}}(\pow(A))$, and the set of antichains $\overline{\mathcal{A}}(\pow(A))$ carry join-semillatice structures. It is easy to see that the corresponding extensions to $1 + (-)$ are join-semilattices with bottom as $0$, top as $\top$ (which, intuitively, plays the role of the greatest element) and $\sqcup$ extended as $\top \sqcup C = \top$
for $C \in 1 + \overline{\mathcal{D}}(\pow(A))$, or $C \in 1+\overline{\mathcal{A}}(\pow(A))$, respectively.

The isomorphism $1+i \colon 1 + \overline{\mathcal{D}}(\pow(A)) \rightarrow 1 + \overline{\mathcal{A}}(\pow(A))$ 
follows immediately from the isomorphism $i \,:\, \overline{\mathcal{D}}(\pow(A)) \rightarrow \overline{\mathcal{A}}(\pow(A))$ in~(\ref{eq:iso-i}) in Section~\ref{sec:ready-fail},
by defining
$$
  (1+i)(\top) = \top\enspace \qquad\qquad
  (1+i)(F) = i(F),\quad F\neq \top.
$$

In the sequel, we will exploit $1+i$ to define a ``more efficient'' characterisation of the function $o^{\sharp}\colon 1 + \pow(S) \rightarrow 1 + \pow(\pow(A))$, also useful to prove the soundness of the coalgebraic modelling of must testing semantics (formalised in Theorem~\ref{thm:fail-mst-conn}).

As a first step, observe that the function $o \colon S \to 1+\pow(\pow(A))$ can be restricted to 
$o \colon S \to 1+ \overline{\mathcal{D}}(\pow(A))$ (since if $x\downarrow$ then $o(x)$ is a downset and the union of downsets is a downset, otherwise $o(x)=\top$). 
In analogy with Section~\ref{sec:ready-fail}, we define $o_{2}\colon S \to 1 + \overline{\mathcal{A}}(\pow(A))$
\begin{align*}
  o_{2}(x) &=
  \begin{cases}
    \top & \textnormal{if } x \not \downarrow\\
    min(\cup_{x\xrightarrow{\tau}x'} \overline{o}(x')) & \textnormal{if } x \xrightarrow{\tau}\\[0.5ex]
    \set{I(t(x))} & \text{otherwise}.\\
  \end{cases}
\end{align*}
and $o_{2}^{\sharp} \,:\, 1 + \pow(S) \rightarrow 1 + \overline{\mathcal{A}}(\pow(A))$ as
\begin{align*}
o_{2}^{\sharp} (X) &=
  \begin{cases}
    o_{2}(x) & \text{ if } X = \set{x}\text{ with } x\in S\\
    \top & \text{ if } X = \top\\
    0 & \text{ if } X= 0\\
    min(o_{2}(X_1) \sqcup o_{2}(X_2)) & \text{ if } X= X_1 \sqcup X_2\\
  \end{cases}
\end{align*}

Proposition~\ref{prop:iso-must-ini} states that it is equivalent computing $o^{\sharp}$ or computing $o_{2}^{\sharp}$. To this aim, we need the following lemmas.

\begin{lemma}\label{lemma:ioverlineo}
$(1+i) \circ o = o_{2}$
\end{lemma}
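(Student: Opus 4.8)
The plan is to establish the identity pointwise, i.e.\ to show $(1+i)(o(x)) = o_{2}(x)$ for every state $x \in S$, by well-founded induction. I would induct along the $\tau$-transition relation restricted to \emph{convergent} states: if $x\downarrow$ then, by definition of convergence, there is no infinite $\tau$-path out of $x$, so this relation has no infinite descending chains and the recursion defining $o$ (and likewise $o_{2}$) on convergent states is well founded; the divergent states are the remaining base cases, on which neither $o$ nor $o_{2}$ recurses. A small preliminary fact is needed to feed the induction: a $\tau$-successor of a convergent state is again convergent (otherwise prefixing its diverging $\tau$-path with the transition $x \xrightarrow{\tau} x'$ would exhibit a divergence of $x$), so in the clause $o(x) = \bigcup_{x \xrightarrow{\tau} x'} o(x')$ the inductive hypothesis is available for every $x'$ with $x \xrightarrow{\tau} x'$. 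Note also that, since $A$ is finite, $\pow(A)$ and hence $\pow(\pow(A))$ are finite, so all the unions and applications of $\min$ occurring below are harmless.

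The induction then splits along the three cases of the definition. If $x\not\downarrow$, both $o(x)$ and $o_{2}(x)$ are $\top$ and $(1+i)(\top) = \top$ by definition of $1+i$. If $x$ is convergent and stable ($x\not\xrightarrow{\tau}$), then $o(x) = \Fail(t(x))$ is a downset and $o_{2}(x) = \set{I(t(x))}$, so it remains to verify $i(\Fail(\varphi)) = \set{I(\varphi)}$ with $\varphi = t(x)$; this is the computational core of the lemma. Writing $J = I(\varphi)$, one has $\Fail(\varphi) = \set{Z \subseteq A \mid Z \cap J = \emptyset} = \set{Z \mid Z \subseteq A - J}$, so $\set{A - F \mid F \in \Fail(\varphi)} = \set{Y \subseteq A \mid Y \supseteq J}$ is the up-closure of $J$ in $\pow(A)$, and taking $\min$ of an up-closure returns exactly its bottom, namely $\set{J}$. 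Finally, if $x$ is convergent and $x \xrightarrow{\tau}$, then by the preliminary fact each $o(x')$ with $x \xrightarrow{\tau} x'$ is a genuine downset, not $\top$, so $1+i$ acts on it (and on their union) as $i$; since $i \colon \overline{\mathcal{D}}(\pow(A)) \to \overline{\mathcal{A}}(\pow(A))$ is a semilattice homomorphism (already recorded in the excerpt, with $\sqcup$ on downsets being union and $\sqcup$ on antichains being $I_{1} \sqcup I_{2} = \min(I_{1} \cup I_{2})$), and using the elementary identity $\min(\min(X) \cup Y) = \min(X \cup Y)$ to turn the binary join into the finitary one, we get $i(\bigcup_{x \xrightarrow{\tau} x'} o(x')) = \min(\bigcup_{x \xrightarrow{\tau} x'} i(o(x'))) = \min(\bigcup_{x \xrightarrow{\tau} x'} o_{2}(x')) = o_{2}(x)$, the middle step being the inductive hypothesis.

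I expect the main obstacle to be organising the induction rather than any individual computation: one must argue that the recursive definitions of $o$ and $o_{2}$ make sense (well-foundedness of $\tau$ on convergent states) and that convergence propagates along $\tau$-transitions, so that the inductive hypothesis genuinely applies inside the recursive clause. Once that scaffolding is in place the three cases are routine, the only genuinely new ingredient being the identity $i(\Fail(\varphi)) = \set{I(\varphi)}$ --- precisely the downsets/antichains bridge between the ``failure'' and the ``ready set'' presentations --- together with the homomorphism property of $i$, which the excerpt has already supplied.
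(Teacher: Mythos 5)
Your proof is correct and follows essentially the same route as the paper's: the same three-way case split on divergence and $\tau$-stability, the same key identity $i(\Fail(t(x)))=\set{I(t(x))}$ for the stable case, and the same appeal to $i$ being a semilattice homomorphism for the $\tau$-unstable case. The only difference is presentational — you organise the recursive case as a well-founded induction along immediate $\tau$-successors (and justify its well-foundedness), whereas the paper unfolds the recursion directly to the stable $\tau$-descendants; your version makes explicit a termination assumption the paper leaves implicit.
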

\begin{proof}
If $x\not \downarrow$, then  $o_{2}(x)=\top = (1+i) \circ o(x)$.
If $x\downarrow$ and $x \not \tr{\tau}$, then $o_{2}(x)= \set{I(t(x))} = (1+i) (\Fail(t(x)))= (1+i) \circ o(x)$.
If $x\downarrow$ and $x \tr{\tau}$, then observe that $o(x) = \bigcup \{\Fail(t(x'))\mid x \xrightarrow{\tau}^{*} x'\text{ and } x' \not\xrightarrow{\tau}\}$ and that $$o_{2}(x)=    min\left(\bigcup \set{I(t(x'))\mid x \xrightarrow{\tau}^{*} x'\text{ and }x' \not\xrightarrow{\tau}}\right).$$ We obtain the conclusion by the previous case and by the fact that $i$ is a homomorphism of semilattices.
\end{proof}

\begin{lemma}\label{lemma:io}
$(1+i) \circ o^{\sharp} = o_{2}^{\sharp}$
\end{lemma}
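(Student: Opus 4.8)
The plan is to deduce the identity from a uniqueness argument: both $(1+i)\circ o^{\sharp}$ and $o_{2}^{\sharp}$ are $T$\nobreakdash-algebra homomorphisms out of the free $T$\nobreakdash-algebra $(1+\pow(S),\mu)$ (with $T = 1+\pow(-)$) into the $T$\nobreakdash-algebra $1+\overline{\mathcal{A}}(\pow(A))$, and they agree on the generators; by the uniqueness clause of the generalised powerset construction recalled in Section~\ref{prelim:gen-pow} (where $f^{\sharp}$ is characterised as \emph{the} algebra map with $f^{\sharp}=h\circ Tf$) the two homomorphisms must then coincide.

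Concretely, I would first check that $1+i \colon 1+\overline{\mathcal{D}}(\pow(A)) \to 1+\overline{\mathcal{A}}(\pow(A))$ is a $T$\nobreakdash-algebra homomorphism, i.e.\ a morphism of semilattices-with-top: it sends $0$ to $0$ and $\top$ to $\top$ by the very definition of $1+i$, and it preserves $\sqcup$ --- on the $\top$\nobreakdash-branch because $(1+i)(\top\sqcup C)=\top=(1+i)(\top)\sqcup(1+i)(C)$, and on subsets because $i$ is a semilattice homomorphism on $\overline{\mathcal{D}}(\pow(A))$ (this was already used in the proof of Lemma~\ref{lemma:ioverlineo}, and $i,j$ are mutually inverse semilattice isomorphisms). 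Next, $o^{\sharp}$ is, by construction, the first component of $\langle o,t\rangle^{\sharp}$ obtained via the generalised powerset construction; since the algebra structure on $FT(S)$ is the product of the structures on $1+\pow(\pow A)$ and $S^{A}$, projecting shows $o^{\sharp}$ is itself a $T$\nobreakdash-algebra homomorphism $(1+\pow(S),\mu)\to(1+\pow(\pow(A)),h)$ with $o^{\sharp}\circ\eta = o$. Composing, $(1+i)\circ o^{\sharp}$ is a $T$\nobreakdash-algebra homomorphism, and its restriction to generators is $(1+i)\circ o = o_{2}$ by Lemma~\ref{lemma:ioverlineo}. Finally, $o_{2}^{\sharp}$ is by its defining case split precisely the $T$\nobreakdash-algebra homomorphism extending $o_{2}$ (the clauses for $\top$, $0$ and $X_{1}\sqcup X_{2}$ are the instantiation of $f^{\sharp}=h\circ Tf$ for the algebra on $1+\overline{\mathcal{A}}(\pow(A))$ that computes $0$, $\top$, and $min$ of unions). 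Two $T$\nobreakdash-algebra homomorphisms out of the free algebra agreeing on generators are equal, giving $(1+i)\circ o^{\sharp}=o_{2}^{\sharp}$.

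For readers who prefer a hands-on argument, I would also spell out the case analysis on $X\in 1+\pow(S)$ mirroring the defining clauses of $o^{\sharp}$ and $o_{2}^{\sharp}$: $X=\top$ and $X=0$ are immediate since $1+i$ fixes $\top$ and $0$; $X=\{x\}$ is exactly Lemma~\ref{lemma:ioverlineo}; and $X=X_{1}\sqcup X_{2}$ follows because $1+i$ preserves $\sqcup$ and $min$ is idempotent under the join of $\overline{\mathcal{A}}(\pow(A))$, combined with the hypothesis on $X_{1},X_{2}$. The main obstacle is precisely that this last "inductive" step is not over a well-founded decomposition (the join presentation of an element of $1+\pow(S)$ is not unique), so to make the elementary version rigorous one still has to invoke the uniqueness of algebra homomorphisms; hence the algebraic route above is the cleaner one, and its only genuinely new ingredient is the (routine) verification that $1+i$ is a $T$\nobreakdash-algebra map.
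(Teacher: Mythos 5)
Your proof is correct and is essentially the paper's own argument: the paper disposes of this lemma in one line by invoking Lemma~\ref{lemma:ioverlineo} together with the fact that $(1+i)$ is a homomorphism of semilattices, which is precisely the "agree on generators, both are algebra homomorphisms out of the free algebra, hence equal" reasoning you spell out. Your additional remark that the case split defining $o_{2}^{\sharp}$ is not a well-founded induction and genuinely requires uniqueness of the homomorphic extension is a fair point that the paper leaves implicit.
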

\begin{proof}
Follows immediately by Lemma~\ref{lemma:ioverlineo} and the fact that $(1+i)$ is a homomorphism of semilattices.
\end{proof}

\begin{proposition}
\label{prop:iso-must-ini}
 For all $X,Y\in \pow(S)$, $o^{\sharp}(X)=o^{\sharp}(Y)$ iff $o_{2}^{\sharp}(X)=o_{2}^{\sharp}(Y)$.
\end{proposition}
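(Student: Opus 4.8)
The plan is to reduce the statement entirely to Lemma~\ref{lemma:io} together with the fact that $1+i$ is an isomorphism; this is the exact analogue of the proof of Proposition~\ref{prop:iso-fail-ini} for failures. First I would recall that by Lemma~\ref{lemma:io} we have $o_{2}^{\sharp} = (1+i) \circ o^{\sharp}$, where $1+i \colon 1 + \overline{\mathcal{D}}(\pow(A)) \to 1 + \overline{\mathcal{A}}(\pow(A))$ is the extension to $1+(-)$ of the isomorphism $i$ of~(\ref{eq:iso-i}), and is therefore itself a bijection, with inverse the analogous extension $1+j$ of $j$ from~(\ref{eq:iso-j}). For this composite to typecheck one uses the observation recorded just before Lemma~\ref{lemma:ioverlineo}: $o(x)$ is a downset whenever $x\downarrow$ (since then $o(x)=\Fail(t(x))$ or a union of such, and unions of downsets are downsets), while divergent states are sent to $\top$; hence the codomain of $o^{\sharp}$ may safely be restricted to $1 + \overline{\mathcal{D}}(\pow(A))$.

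For the ``only if'' direction I would assume $o^{\sharp}(X) = o^{\sharp}(Y)$, apply $1+i$ to both sides, and invoke Lemma~\ref{lemma:io} to obtain $o_{2}^{\sharp}(X) = (1+i)(o^{\sharp}(X)) = (1+i)(o^{\sharp}(Y)) = o_{2}^{\sharp}(Y)$. For the ``if'' direction I would assume $o_{2}^{\sharp}(X) = o_{2}^{\sharp}(Y)$, rewrite both sides via Lemma~\ref{lemma:io} as $(1+i)(o^{\sharp}(X)) = (1+i)(o^{\sharp}(Y))$, and use injectivity of $1+i$ to conclude $o^{\sharp}(X) = o^{\sharp}(Y)$.

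There is essentially no genuine obstacle here: the content has already been discharged in Lemmas~\ref{lemma:ioverlineo} and~\ref{lemma:io}, and the only point requiring a word of care is the codomain restriction mentioned above, which is immediate. I would therefore keep the written proof to one or two lines, in the same terse style as Proposition~\ref{prop:iso-fail-ini}.
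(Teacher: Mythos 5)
Your proposal is correct and matches the paper's own argument, which likewise derives the proposition from Lemma~\ref{lemma:io} together with the fact that $1+i$ is an isomorphism of semilattices. The extra remark about restricting the codomain of $o^{\sharp}$ to $1+\overline{\mathcal{D}}(\pow(A))$ is exactly the observation the paper records just before Lemma~\ref{lemma:ioverlineo}, so nothing is missing.
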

\begin{proof}
 Follows from Lemma \ref{lemma:io} and the fact that $(1+i)$ is an  isomorphism of semilattices.
\end{proof}

\begin{remark}
\label{rm:conn-fail-acc}
Note that the relation $\subset \subset$ used for defining $\mst$:
\begin{equation}
\label{def:sub-sub-antichain}
B \inclAnti C \texttt{ iff } (\forall B_{i} \in B).(\exists C_{i} \in C). C_{i} \subseteq B_{i}
\end{equation}
is the ordering induced by $\sqcup$ in $\overline{\mathcal{A}}(\pow(A))$:
\begin{align*}
& min(B \sqcup C) = C\\
~\texttt{ iff }& min(B \sqcup C) = min(C) \tag{as $C \in \overline{\mathcal{A}}(\pow(A)$}\\
~\texttt{ iff }& (\forall B_{i} \in B).(\exists C_{i} \in C). C_{i} \subseteq B_{i} \tag{by definition of $min$}\\
~\texttt{ iff }& B \inclAnti C \tag{by definition of $\inclAnti$}
\end{align*}
\end{remark}

We formalise the coalgebraic modelling of must semantics in the following theorem.
\begin{theorem}
\label{thm:fail-mst-conn}
Let $x$ and $y$ be two states of an LTS. Then

\[
x \mst y \textnormal{ iff } \bb{\set y} \tauFailFin \bb{\set x} \textnormal{ and } x\sim_{mst}y \textnormal{ iff } \bb{\set{x}}=\bb{\set{y}}.
\]
\end{theorem}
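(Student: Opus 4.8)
The plan is to reduce the whole statement to one inductive description of the behaviour map $\bb{-}$, and then to read off both equivalences from the semilattice order on the final Moore coalgebra.

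First I would replace the output set $1+\pow(\pow A)$ by its antichain counterpart $1+\overline{\mathcal{A}}(\pow A)$. Since $1+i$ is an isomorphism of semilattices (in particular it preserves and reflects the induced order), applying it pointwise gives an isomorphism between the final Moore coalgebras $(1+\pow(\pow A))^{A^*}$ and $(1+\overline{\mathcal{A}}(\pow A))^{A^*}$; combined with Lemma~\ref{lemma:io} ($o_2^\sharp=(1+i)\circ o^\sharp$) and finality, this yields $\bb{\set x}_2=(1+i)^{A^*}\circ\bb{\set x}$, where $\bb{-}_2$ denotes the behaviour map of the coalgebra $\langle o_2^\sharp,t^\sharp\rangle$. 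Working with $\bb{-}_2$ is convenient because, by Remark~\ref{rm:conn-fail-acc}, its output set is ordered exactly by $\inclAnti$, and because $\bb{\set y}\tauFailFin\bb{\set x}$ iff $\bb{\set y}_2(w)\le\bb{\set x}_2(w)$ for all $w\in A^*$.

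The heart of the proof is the claim: for every state $x$ and every $w\in A^*$,
\[
\bb{\set x}_2(w)=
\begin{cases}
\top & \text{if } x\not\downarrow w,\\
\min A(x,w) & \text{if } x\downarrow w,
\end{cases}
\]
which I would prove by induction on $w$. For $w=\varepsilon$ it is immediate from the definition of $o_2$ after unfolding the $\tau$-closure as in Lemma~\ref{lemma:ioverlineo}: if $x$ diverges then $o_2(x)=\top$, otherwise $o_2(x)=\min\{I(t(x'))\mid x\xRightarrow{\varepsilon}x'\text{ and }x'\not\xrightarrow{\tau}\}=\min A(x,\varepsilon)$. For $w=aw'$ I would use $\bb{\set x}_2(aw')=\bb{t^\sharp(\set x)(a)}_2(w')=\bb{t(x)(a)}_2(w')$. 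If $x\not\downarrow a$ then $t(x)(a)=\top$, $\bb{\top}_2$ is the constant-$\top$ function (as $o_2^\sharp(\top)=\top$ and $t^\sharp(\top)=\top$), and $x\not\downarrow aw'$, so both sides equal $\top$. If $x\downarrow a$ then $t(x)(a)=\{y\mid x\xRightarrow{a}y\}$, and since $\bb{-}$ --- hence $\bb{-}_2$ --- is a $T$-algebra homomorphism (which holds because there is a distributive law of $T=1+\pow(-)$ over $F=(1+\pow(\pow A))\times(-)^A$, in the spirit of the discussion around Example~\ref{ex:pow-constr-NDA}), we get $\bb{\set x}_2(aw')=\bigsqcup_{x\xRightarrow{a}y}\bb{\set y}_2(w')$. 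Feeding in the induction hypothesis together with the elementary facts $A(x,aw')=\bigcup\{A(y,w')\mid x\xRightarrow{a}y\}$ and ``$x\downarrow aw'$ iff $x\downarrow a$ and $y\downarrow w'$ for all $x\xRightarrow{a}y$'', the join is $\top$ precisely when $x\not\downarrow aw'$, and otherwise it equals $\min(\bigcup_{x\xRightarrow{a}y}\min A(y,w'))=\min A(x,aw')$.

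Given the claim, the theorem drops out. Fix $w$. If $x\not\downarrow w$, then $\bb{\set x}_2(w)=\top$ is the top element, so $\bb{\set y}_2(w)\le\bb{\set x}_2(w)$ holds automatically --- exactly the vacuous clause of Definition~\ref{def:must-preord}. If $x\downarrow w$, then $\bb{\set x}_2(w)=\min A(x,w)\neq\top$, so $\bb{\set y}_2(w)\le\bb{\set x}_2(w)$ forces $\bb{\set y}_2(w)\neq\top$, i.e. $y\downarrow w$, and then by Remark~\ref{rm:conn-fail-acc} the inequality is equivalent to $A(y,w)\inclAnti A(x,w)$. Quantifying over $w$ gives precisely $x\mst y$; the equivalence-level statement then follows from antisymmetry of the semilattice order $\tauFailFin$: $x\sim_{mst}y$ iff $x\mst y$ and $y\mst x$ iff $\bb{\set y}\tauFailFin\bb{\set x}$ and $\bb{\set x}\tauFailFin\bb{\set y}$ iff $\bb{\set x}=\bb{\set y}$. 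The main obstacle is the inductive claim, where one must keep three things synchronized at once: the propagation of $\top$ along divergent paths versus the negation of $x\downarrow w$, the commutation of $\bb{-}$ with joins, and the $\min$/antichain bookkeeping underlying $A(x,aw)=\bigcup\{A(y,w)\mid x\xRightarrow{a}y\}$.
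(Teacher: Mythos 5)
Your proof is correct and follows essentially the same route as the paper: pass to the antichain-valued machine via the isomorphism $1+i$ (Lemma~\ref{lemma:io} and the pointwise extension $(1+i)^{A^*}$), characterise $\bb{\set{x}}_2(w)$ as $\top$ on divergence and as the (minimised) acceptance set $A(x,w)$ otherwise, and then match the order $\tauFailFind$ against Definition~\ref{def:must-preord} by the same case analysis on $w$. The only difference is one of detail, not of method: you prove by induction on words the key characterisation that the paper merely states as an observation before its final proposition.
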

The morphism $o_{2}^{\sharp} \colon 1 + \pow(S) \rightarrow 1 + \overline{\mathcal{A}}(\pow(A))$ is useful to prove that the preorders $\tauFailFin$ and $\mst$ coincide.
Indeed, the Moore machine $(1+\pow S, <o_{2}^{\sharp},t^{\sharp}>)$ induces the morphism $\bb{-}_{2} \colon 1+ \pow S \to (1+ \overline{\mathcal{A}}(\pow(A)))^{A^*}$ defined for all $X \in 1+\pow(S)$ as 
\[
\begin{array}{lr}
\bb{X}_{2}(\varepsilon) = o_{2}^{\sharp}(X) &\hspace{10pt} \bb{X}_{2}(a w) = \bb{t^{\sharp}(X)(a)}_{2}(w).
\end{array}
\] 
The isomorphism $(1+i)\colon 1+\overline{\mathcal{D}}(\pow(A)) \to 1+\overline{\mathcal{A}}(\pow(A))$ can be extended to the isomorphism $(1+i)^{A^*}\colon (1+\overline{\mathcal{D}}(\pow(A)))^{A^*} \to (1+\overline{\mathcal{A}}(\pow(A)))^{A^*}$, defined for every function $\phi \in (1+\overline{\mathcal{D}}(\pow(A)))^{A^*}$ and word $w\in A^{*}$ as
\[(1+i)^{A^*}(\phi)(w)=(i+1)(\phi(w)).\]
Note that the function $\bb{-} \colon 1+\pow(S) \to (1+\pow\pow(A))^{A^*}$ can be restricted to $\bb{-} \colon 1+\pow(S) \to (1+\overline{\mathcal{D}}(\pow(A)))^{A^*}$.
\begin{proposition}
 $(1+i)^{A^*} \circ \bb{-} = \bb{-}_{2}$
\end{proposition}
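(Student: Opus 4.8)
The plan is to prove the identity by induction on words $w\in A^{*}$, pointwise in $X\in 1+\pow(S)$, relying on Lemma~\ref{lemma:io} for the base case and on the recursive (Moore-)definitions of $\bb{-}$ and $\bb{-}_{2}$ for the inductive step. Unfolding the definition of $(1+i)^{A^*}$, the claim $(1+i)^{A^*}\circ\bb{-}=\bb{-}_{2}$ amounts to showing $(1+i)\bigl(\bb{X}(w)\bigr)=\bb{X}_{2}(w)$ for every $X$ and every $w$. I would first note that this even makes sense: as observed just before the statement, $o^{\sharp}$ and $t^{\sharp}$ take values whose output components are downsets or $\top$ (the union of downsets is a downset, and $\top\sqcup C=\top$), so $\bb{-}$ restricts to a map $1+\pow(S)\to(1+\overline{\mathcal{D}}(\pow(A)))^{A^*}$ on which $(1+i)^{A^*}$ is indeed defined.

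For the base case $w=\varepsilon$, one has $\bb{X}(\varepsilon)=o^{\sharp}(X)$ and $\bb{X}_{2}(\varepsilon)=o_{2}^{\sharp}(X)$ by the defining equations of the two Moore homomorphisms (cf.\ Example~\ref{eg:Moore}), so $(1+i)\bigl(\bb{X}(\varepsilon)\bigr)=(1+i)\bigl(o^{\sharp}(X)\bigr)=o_{2}^{\sharp}(X)=\bb{X}_{2}(\varepsilon)$, where the middle equality is exactly Lemma~\ref{lemma:io}. For the inductive step, write $w=aw'$ with $a\in A$. Since the two Moore coalgebras share the same transition map $t^{\sharp}$, the recursive clauses give $\bb{X}(aw')=\bb{t^{\sharp}(X)(a)}(w')$ and $\bb{X}_{2}(aw')=\bb{t^{\sharp}(X)(a)}_{2}(w')$; applying the induction hypothesis to the state $t^{\sharp}(X)(a)$ and the shorter word $w'$ then yields $(1+i)\bigl(\bb{X}(aw')\bigr)=(1+i)\bigl(\bb{t^{\sharp}(X)(a)}(w')\bigr)=\bb{t^{\sharp}(X)(a)}_{2}(w')=\bb{X}_{2}(aw')$, completing the induction.

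Alternatively, and more conceptually, one can observe that post-composing the output function of a Moore automaton with a fixed map $g$ makes $g^{A^*}$ the (necessarily unique) mediating map into the corresponding final coalgebra; applied to $g=1+i$ and $o_{2}^{\sharp}=(1+i)\circ o^{\sharp}$ (Lemma~\ref{lemma:io}) this gives the identity at once. Either way there is essentially no real obstacle: the only point requiring a little care is the remark above that $\bb{-}$ may legitimately be viewed as valued in downset-valued functions, so that $(1+i)^{A^*}$ can be applied; once that is in place the statement follows from the routine induction, with Lemma~\ref{lemma:io} doing all the work. This proposition is the final ingredient needed to transfer the preorder computations from $\bb{-}$ (downsets) to $\bb{-}_{2}$ (antichains), and hence to establish the coincidence of $\tauFailFin$ and $\mst$ stated in Theorem~\ref{thm:fail-mst-conn}.
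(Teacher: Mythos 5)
Your proof is correct and follows essentially the same route as the paper's: an ordinary induction on words, with Lemma~\ref{lemma:io} handling the base case and the shared transition map $t^{\sharp}$ making the inductive step immediate. The extra observations you add (that $\bb{-}$ legitimately takes values in downset-valued functions so $(1+i)^{A^*}$ applies, and the alternative finality argument) are sound refinements of the same idea rather than a different approach.
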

\begin{proof}
This can be proved by ordinary induction on words, exploiting Lemma \ref{lemma:io} for the base case.
\end{proof}
The ordering $\tauFailFind$ induced by the semilattice structure of $(1+\overline{\mathcal{A}}(\pow(A)))^{A^*}$ is given as follows: for all $\phi,\psi \in (1+\overline{\mathcal{A}}(\pow(A)))^{A^*}$, $\phi \tauFailFind \psi$ iff for all $w\in A^*$
\begin{enumerate}
\item  if $\phi(w)=\top$ then $\psi(w)= \top$ and
\item  if $\phi(w) \neq \top$ then either $\psi(w)= \top$ or $\phi(w) \inclAnti \psi(w)$.
\end{enumerate}

Observe that $\bb{X}_{2}(w)=\top$ iff $X \not \downarrow w$. Furthermore, whenever $X=\set{x}$ and $x \downarrow w$, $\bb{X}_{2}(w)=A(x,w)$. As a consequence, the following proposition holds.
\begin{proposition}
 $\bb{\set{x}}_{2} \tauFailFind \bb{\set{y}}_{2}$ iff $y\mst x$
\end{proposition}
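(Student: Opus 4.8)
The plan is to prove the proposition by simply unfolding both sides of the claimed equivalence and observing that, word by word, they reduce to logically equivalent statements. The only ingredients needed are the two facts recorded just above the statement: for any state $x$ and word $w$, $\bb{\set{x}}_{2}(w)=\top$ precisely when $x\not\downarrow w$, and when $x\downarrow w$ we have $\bb{\set{x}}_{2}(w)=A(x,w)$. First I would use these together with the explicit description of $\tauFailFind$ given above to rewrite $\bb{\set{x}}_{2}\tauFailFind\bb{\set{y}}_{2}$ as the conjunction, over all $w\in A^{*}$, of (i) if $x\not\downarrow w$ then $y\not\downarrow w$; and (ii) if $x\downarrow w$ then either $y\not\downarrow w$ or $A(x,w)\inclAnti A(y,w)$. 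Condition (ii) comes from the second clause of the definition of $\tauFailFind$: when $x\downarrow w$ and $y\downarrow w$ the inequality $\bb{\set{x}}_{2}(w)\inclAnti\bb{\set{y}}_{2}(w)$ is exactly $A(x,w)\inclAnti A(y,w)$, while when $y\not\downarrow w$ the disjunct $\bb{\set{y}}_{2}(w)=\top$ holds. On the other side, unfolding Definition~\ref{def:must-preord} directly, $y\mst x$ says that for every $w\in A^{*}$, if $y\downarrow w$ then $x\downarrow w$ and $A(x,w)\inclAnti A(y,w)$.

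Then I would check, for a fixed $w$, that ``(i) and (ii)'' is equivalent to ``$y\downarrow w$ implies $x\downarrow w$ and $A(x,w)\inclAnti A(y,w)$''. For the forward direction: assume (i) and (ii) and suppose $y\downarrow w$; the contrapositive of (i) gives $x\downarrow w$, and then (ii) forces $A(x,w)\inclAnti A(y,w)$ because the disjunct $y\not\downarrow w$ is excluded. For the converse: assume the right-hand condition and fix $w$; if $x\not\downarrow w$, then $y\downarrow w$ would force $x\downarrow w$, a contradiction, so $y\not\downarrow w$, proving (i); and if $x\downarrow w$, then either $y\not\downarrow w$, giving (ii) directly, or $y\downarrow w$, in which case the right-hand condition yields $A(x,w)\inclAnti A(y,w)$, again giving (ii). Since $w$ is arbitrary, this proves the proposition.

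The main obstacle, such as it is, is purely bookkeeping: lining up the disjunction in the description of $\tauFailFind$ against the conjunction in Definition~\ref{def:must-preord}, and keeping the swap of $x$ and $y$ straight (note $\bb{\set{x}}_{2}\tauFailFind\bb{\set{y}}_{2}$ matches $y\mst x$, consistent with Theorem~\ref{thm:fail-mst-conn}); no genuinely hard argument is involved. Finally, I would remark that combining this proposition with the previous one, $(1+i)^{A^{*}}\circ\bb{-}=\bb{-}_{2}$, together with the observation that $(1+i)^{A^{*}}$ is an isomorphism of semilattices and hence an order isomorphism between $\tauFailFin$ and $\tauFailFind$, completes the proof of Theorem~\ref{thm:fail-mst-conn}: $x\mst y$ iff $\bb{\set{y}}\tauFailFin\bb{\set{x}}$, and $x\sim_{mst}y$ iff $\bb{\set{x}}=\bb{\set{y}}$.
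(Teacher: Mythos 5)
Your proof is correct and follows essentially the same route as the paper's: both arguments simply unfold the two observations ($\bb{X}_{2}(w)=\top$ iff $X\not\downarrow w$, and $\bb{\set{x}}_{2}(w)=A(x,w)$ when $x\downarrow w$) together with the explicit descriptions of $\tauFailFind$ and $\mst$, and check the resulting per-word conditions against each other; the paper just presents the two implications directly rather than first rewriting both sides. Your closing remark about combining this with the preceding proposition to obtain Theorem~\ref{thm:fail-mst-conn} also matches the paper.
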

\begin{proof}
Suppose that $\bb{\set{x}}_{2} \tauFailFind \bb{\set{y}}_{2}$ and take one word $w\in A^*$. If $y \downarrow w$, then $\bb{\set{y}}_{2}(w)\neq \top$ and also $\bb{\set{x}}_{2}(w)\neq \top$, that is $x \downarrow w$. This means that $\bb{\set{x}}_{2}(w) \inclAnti \bb{\set{y}}_{2}(w)$, that is $A(x,w)\inclAnti A(y,w)$. summarising $y\mst x$.
 
Now suppose that $y\mst x$ and take one word $w\in A^*$. If $\bb{\set{x}}_{2}(w)=\top$, then $x\not \downarrow w$. This implies that also $y\not \downarrow w$ (and thus $\bb{\set{y}}_{2}(w)=\top$) because otherwise the hypothesis $y\mst x$ would be violated. If $\bb{\set{x}}_{2}(w) \neq \top$, then we have two possibilities: (a) $y  \downarrow w$ or (b) $y \not \downarrow w$. For (a), we have that $A(x,w) \inclAnti A(y,w)$, that is $\bb{\set{x}}_{2}(w) \inclAnti \bb{\set{y}}_{2}(w)$. For (b), we immediately have that $\bb{\set{y}}_{2}= \top$.  
\end{proof}

From the two above propositions, Theorem \ref{thm:fail-mst-conn} follows immediately.

Note that in absence of divergence, the ``decorating'' function in~(\ref{eq:div-LTS-dec}) and the transition function in~(\ref{eq:div-LTS-trans}) correspond precisely to $\overline{o}_{\Fs}$ and $\delta$ in Section~\ref{sec:ready-fail}, for the case of failure semantics.
Hence, by Theorem~\ref{thm:fail-mst-conn}, Definition~\ref{def:must-preord} and Remark~\ref{rm:conn-fail-acc} it follows immediately that must and failure semantics coincide in the context of LTS's without internal behaviour.
\begin{proposition}
\label{prop:must-fail-LTS}
Consider two states $x,y$ of an LTS without internal behaviour. Then
\[
\begin{array}{l}
x \mst y \text{ iff } \Fs(y) \subseteq \Fs(x)\\
x \sim_{mst} y  \text{ iff } \Fs(x) = \Fs(y).
\end{array}
\]
\label{prop:must-fail-LTS}
\end{proposition}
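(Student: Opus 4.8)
The plan is to show that, once internal transitions are absent, the coalgebraic apparatus of Section~\ref{sec:fail-must} degenerates into the one of Section~\ref{sec:ready-fail}, so that the statement follows from Theorem~\ref{thm:fail-mst-conn}, Theorem~\ref{thm:eqiv-ready} (instantiated at $\I=\Fs$) and Remark~\ref{rm:conn-fail-acc}. First I would record the degeneration of the coalgebra. If the LTS has no $\tau$-labels then no state diverges, so $x\downarrow w$ holds for every state $x$ and every $w\in A^{*}$. Hence the clause $x\not\downarrow a$ in~(\ref{eq:div-LTS-trans}) never fires and $t(x)(a)=\set{y\mid x\xrightarrow{a}y}=\delta(x)(a)$, while the first two clauses of~(\ref{eq:div-LTS-dec}) never apply and $o(x)=\Fail(t(x))=\Fail(\delta(x))=\overline{o}_{\Fs}(x)$. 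The extra element $\top$ is never produced, so on its reachable part the $FT$-coalgebra $(S,\langle o,t\rangle)$ for $F(S)=(1+\pow(\pow A))\times S^{A}$ and $T(S)=1+\pow S$ is precisely the decorated LTS $(S,\langle\overline{o}_{\Fs},\delta\rangle)$ of Section~\ref{sec:ready-fail}, an $\F_{\Fs}$-coalgebra over the powerset monad $\pow$. Since the generalised powerset construction is functorial in the coalgebra, the behaviour map $\bb{-}$ of Section~\ref{sec:fail-must} restricts to the $\bb{-}$ of Figure~\ref{fig:GenSettMoore} for failure semantics, and Theorem~\ref{thm:eqiv-ready} yields, for every state $x$ and every $w\in A^{*}$,
\[
\bb{\set{x}}(w)=\varphi^{\Fs}_{x}(w)=\set{Z\subseteq A\mid\exists x'.\ x\xrightarrow{w}x'\ \land\ Z\in\Fail(\delta(x'))}.
\]

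Next I would handle the degeneration of the order. Each $\Fail(\delta(x'))$ is a downset of $\pow(A)$ and unions of downsets are downsets, so $\bb{\set{x}}(w)$ is a downset for all $w$; consequently the semilattice on $1+\pow(\pow A)$ restricts, along the functions $\bb{\set{\cdot}}$, to the semilattice of downsets of $\pow(A)$ with $\sqcup=\cup$, whose induced order is ordinary inclusion. Therefore the preorder $\tauFailFin$ on $(1+\pow(\pow A))^{A^{*}}$ restricts to pointwise inclusion of downset-valued functions. Moreover, under the order-isomorphism $\Pow(A^{*}\times\Powf A)\cong(\Powf\Powf A)^{A^{*}}$, with both sides ordered by $\subseteq$, the function $\varphi^{\Fs}_{x}$ represents exactly $\Fs(x)$, so $\varphi^{\Fs}_{y}(w)\subseteq\varphi^{\Fs}_{x}(w)$ for all $w$ iff $\Fs(y)\subseteq\Fs(x)$.

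Combining the two steps with Theorem~\ref{thm:fail-mst-conn} finishes the argument: for states $x,y$ of a $\tau$-free LTS,
\[
x\mst y\iff\bb{\set{y}}\tauFailFin\bb{\set{x}}\iff\forall w\in A^{*}.\ \varphi^{\Fs}_{y}(w)\subseteq\varphi^{\Fs}_{x}(w)\iff\Fs(y)\subseteq\Fs(x),
\]
which is the first identity; the second follows at once, since $x\sim_{mst}y$ iff $x\mst y$ and $y\mst x$ --- equivalently $\bb{\set{x}}=\bb{\set{y}}$ by the equality part of Theorem~\ref{thm:fail-mst-conn} together with the representation above --- equivalently $\Fs(x)=\Fs(y)$.

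The only genuinely delicate point, and the one I would be most careful about, is that the reduction above must be faithful at the level of \emph{orderings} and not merely of carriers: one has to verify that $\Fail$ always produces downsets, that forming unions inside $o^{\sharp}$ stays within downsets so that the antichain detour via $o_{2}^{\sharp}$ is unnecessary here, and that the isomorphism $\Pow(A^{*}\times\Powf A)\cong(\Powf\Powf A)^{A^{*}}$ is monotone, so that inclusions of $\Fs(\cdot)$ correspond to the pointwise comparison governing $\tauFailFin$. Once this bookkeeping is in place the Proposition is immediate. Alternatively one can bypass the order discussion and argue directly via Remark~\ref{rm:conn-fail-acc}: for a $\tau$-free state $x$ one has $A(x,w)=\set{I(\delta(x'))\mid x\xrightarrow{w}x'}$, and since $I$ and $\Fail$ are interchanged by the downset/antichain isomorphism, $A(y,w)\inclAnti A(x,w)$ for all $w$ iff $\Fs(y)\subseteq\Fs(x)$, whence Definition~\ref{def:must-preord} gives the claim.
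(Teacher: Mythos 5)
Your proof is correct and follows essentially the same route as the paper: the paper's own argument is precisely the observation that without $\tau$ the decorating and transition functions of the must-testing construction degenerate to $\overline{o}_{\Fs}$ and $\delta$, after which it invokes Theorem~\ref{thm:fail-mst-conn}, Definition~\ref{def:must-preord} and Remark~\ref{rm:conn-fail-acc}. You merely spell out the order-theoretic bookkeeping (downsets, the restriction of $\tauFailFin$ to pointwise inclusion, and the isomorphism identifying $\varphi^{\Fs}_{x}$ with $\Fs(x)$) that the paper leaves implicit, and your closing alternative via Remark~\ref{rm:conn-fail-acc} is exactly the paper's stated justification.
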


\begin{remark}
Note that according to the definition of $\tauFailFin$, $\bb{\set y} \tauFailFin \bb{\set x}$ iff $\bb{\set y}\sqcup \bb{\set x} = \bb{\set x}$, and since 
 $\bb{-}$ is a $T$-homomorphism (namely it preserves bottom, top and $\sqcup$), the latter equality holds iff $\bb{\set{y, x}} = \bb{\set x}$. Summarising,
\[x \mst y \text{ iff } \bb{\set{x, y}} = \bb{\set{x}}.\]
\end{remark}

Consider, once more, the LTS in~\eqref{eq:LTSMUST2}. The part of the Moore machine $(1+\pow(S),<o^{\sharp},t^{\sharp}>)$ which is reachable from $\set{x}$ and $\set{y}$ is depicted below (the output function $o^{\sharp}$ maps $\top$ to $\top$ and the other states to $\set{0}$). 
\begin{equation}\label{eq:DETMUST}
{\xymatrix@C=1cm@R=.3cm{
{\top} \ar@(ul,ur)^-{a,b} \ar@{--}[d]& {\set{x}} \ar@{--}[d] \ar[l]|{\;b\;} \ar[r]^-{a} & {\set{x_{1},x_{2},x_{3}}} \ar@{--}[dl] \ar@/_{1.0pc}/[ll]|{\;b\;}\ar[r]^-{a} & {\set{x,x_1}} \ar@{--}[dll] \ar@/_{1.6pc}/[lll]|{\;b\;} \ar[r]^-{a} & {\{x,x_{1},x_{2},x_{3}\}} \ar@{--}[dlll] \ar@/_{2.2pc}/[llll]|{\;b\;}\ar@(ur,ul)_{a} \\
{\top} \ar@(dl,dr)|-{a,b} & {\set{y}} \ar@(dl,dr)|{\;a\;}\ar[l]^{b}
}}
\end{equation}
The relation consisting of dashed lines is a bisimulation proving that $\bb{\set{x}}=\bb{\set{y}}$, i.e., that $x \sim_{mst} y$.

Our construction is closely related to the one
in~\cite{CleavelandH89}, that transforms LTS's into (deterministic)
acceptance graphs. We further provide more details on the connection between the coalgebraic machinery for reasoning on must preorder and the corresponding framework in~\cite{CleavelandH89}.

\paragraph{Moore machines and acceptance graphs.}
As previously introduced in this section, the behaviour of an LTS with divergence ${\mathcal L} = (S, t \colon S \rightarrow (1 + \pow S)^{A})$ 
can be captured in terms of a Moore machine
\[{\mathcal M} = (1+\pow S, \langle o^{\sharp}, t^{\sharp} \rangle \colon 1 + \pow S \rightarrow (1 + \pow(\pow A)) \times (1 + \pow S)^{A})\]
derived according to the powerset construction, and that reasoning on must preorder is equivalent to reasoning on the preorder $\sqsubseteq_{\mathcal M}$ on the final Moore coalgebra, as stated in Theorem~\ref{thm:fail-mst-conn}.

In~\cite{CleavelandH89} must preorder is established in terms of a notion of prebisimulation ($\sqsubseteq_{\langle \Pi, 0\rangle}$) on the so-called ``acceptance graphs'' generated from such $\mathcal L$'s, denoted by $ST({\mathcal L})$.
Intuitively, an acceptance graph $ST({\mathcal L})$ consists of a set of nodes $p$ of shape $\langle Q, b\rangle \in \pow S \times \set{\true, \false}$, where $Q$ is a set of states in $S$, and $b$ is associated the boolean value ${\true}$ whenever all states in $Q$ converge (written $Q \downarrow$) and ${\false}$ otherwise.

Orthogonally to the Moore machines with output in $1 + \pow(\pow A)$, for a node $p = \langle Q, b \rangle$ in $ST({\mathcal L})$, the information representing the divergence of (states in) $Q$ is given by $p.\emph{closed}\,(= b)$, and the corresponding (minimised) acceptance set consisting of visible actions that can be triggered as a first step from the states in $Q$  is represented by $p.\emph{acc}$ (defined later on in this section). Moreover, (deterministic) transitions in $ST(\mathcal L)$ are of shape $\langle Q_{1}, b_{1} \rangle \xrightarrow{a} \langle Q_{2}, b_{2} \rangle$, where $a \in A$ and $Q_{2}$ is the set of $a$-successors of states in $Q_{1}$, computed with respect to $\xRightarrow{a}$.

Based on the resemblance between the aforementioned Moore machines and acceptance graphs, we consider worth investigating to what extent these constructions and the corresponding ``alternative'' semantics used for reasoning on must preorder are connected.

In what follows we recall the formal definition of acceptance graphs as introduced in\linebreak \cite{CleavelandH89}, show they are isomorphic (up-to divergent behaviours) with the Moore machines used for the coalgebraic modelling of must semantics.

We proceed by first providing the basic ingredients needed for the definition of acceptance graphs.

Consider $Q \in \pow S$. The \emph{$\varepsilon$-closure} of a $Q$ is $Q^{\varepsilon} = \set{p \mid q \xRightarrow{\varepsilon} p \land q \in Q}$. 
The set of \emph{direct $a$-successors} of states $q \in Q$ is $D(Q, a) = \set{q' \mid q \xrightarrow{a} q' \land q \in Q}$, 
where $a \in A \cup \set{\tau}$.

\begin{definition}[Acceptance graphs~\cite{CleavelandH89}]
\label{def:acc-graphs}
Consider ${\mathcal L}$ an LTS with divergence, with state space $S$ and visible actions labelled in $A$. The corresponding \emph{acceptance graph} $ST({\mathcal L}) = (T, A \cup \set{\tau}, \rightarrow)$ is defined as follows.
\begin{enumerate}
\item $T = \set{\langle Q, b\rangle \in \pow S \times \set{\true, \false} \mid Q = Q^{\varepsilon} \land (b = \true \Rightarrow Q \downarrow)}$.

\item For $p = \langle Q, b\rangle \in T$ define $p.{\emph closed} = b$ and
\begin{align*}
p.{\emph acc} &=
  \begin{cases}
    0 & \text{ if } p.{\emph closed} = \false\\
    min(\set{\set{a \in A \mid q \xrightarrow{a}} \mid q \in Q \land q \not \xrightarrow{\tau}}) & \text{otherwise}.
  \end{cases}
\end{align*}
(We refer to~(\ref{eq:min}) in Section~\ref{sec:ready-fail} for the definition of $min$.)

\item A transition $\langle Q_{1}, b_{1} \rangle \xrightarrow{a}_{T} \langle Q_{2}, b_{2} \rangle$ is performed exactly when the following hold:
$a \not = \tau$, $D(Q_{1}, a)^{\varepsilon} = Q_{2}$, and $b_{1} = \true \land (Q_{2} \downarrow\,\, \Rightarrow b_{2} = \true)$.
\end{enumerate}
\end{definition}

It is worth observing that, according to Definition~\ref{def:acc-graphs}, acceptance graphs are deterministic, and moreover, there are no outgoing transitions from divergent states (see (c) above). Considering graphs satisfying the latter property comes as a natural consequence of the fact that the must preorder considers divergence catastrophic, as can be inferred from Definition~\ref{def:must-preord}.

Given an LTS with divergence $\mathcal L$, and $q$ a state of $\mathcal L$, the node in $ST({\mathcal L})$ corresponding to $q$ is $\langle \set{q}^{\varepsilon}, q \downarrow \varepsilon\rangle$. Orthogonally, the state corresponding to $q$ in the Moore machine derived according to the powerset construction is $\set{q}$.

For an example, consider the following LTS:
\[
\xymatrix@C=1.3cm@R=.4cm{
q_{4}\ar@(dl,dr)_{\tau}\ar@/_1.5pc/[rr]_{c}  & q_{2}\ar@{->}[l]_{b}\ar@{->}[dr]^{c}  & q_{1}\ar@{->}[r]^{\tau}\ar@{->}[l]_{a}\ar@/_1.5pc/[rr]^{a} & q_{10} & q_{3}\ar@{->}[d]_{\tau}\ar@{->}[r]^{b}& q_{7}\\
& &q_{5}  & q_{8} & q_{6}\ar@{->}[l]_{a}\ar@{->}[r]^{c} & q_{9}
}
\]
The associated Moore determinisation ${\mathcal M}$ when starting from $q_{1}$ and the corresponding acceptance graph $ST({\mathcal L})$, respectively, are illustrated as follows.
\[
\xymatrix@C=.1cm@R=.3cm{
{\mathcal M}:&  \set{q_{1}}\ar@{=>}[r]\ar@{->}[d]^{a} & [abc] & & ST({\mathcal L}):& \langle \set{q_{1}, q_{10}}, \true \rangle\ar@{-->}[r]\ar@{->}[d]^{a} & 0\\
 & \set{q_{2}, q_{3}, q_{6}}\ar@{=>}[r]\ar@{->}[dl]_{a}\ar@{->}[d]^{b}\ar@{->}[dr]^{c} & [a]\cup [b] & &  & \langle\set{q_{2}, q_{3}, q_{6}}, \true\rangle\ar@{-->}[r]\ar@{->}[dl]_{a}\ar@{->}[d]^{b}\ar@{->}[dr]^{c} & \set{\set{b,c}, \set{a,c}}\\
\set{q_{8}}\ar@{=>}[d] & \top\ar@{=>}[d]\ar@(dr,ur)_{a,b,c} & ~~~\set{q_{5}, q_{9}}\ar@{=>}[d] & & \langle \set{q_{8}}, \true\rangle\ar@{-->}[d] & \langle \set{q_{4}, q_{7}}, \false\rangle\ar@{-->}[d] & \langle \set{q_{5}, q_{9}}, \true\rangle\ar@{-->}[d]\\
[abc] & \top & [abc] & &  0& 0 & 0\\
}\]

Recall from Section~\ref{sec:ready-fail} that for the simplicity of notation we write, for example, $[abc]$ in order to denote the powerset of $\set{a,b,c}$.
In $ST({\mathcal L})$, the notation $\langle Q, b\rangle \dashrightarrow B$ represents a node $p = \langle Q, b\rangle$ such that $p.{\emph acc} = B$ and $p.{\emph closed} = b$.

Observe that: both $\mathcal M$ and $ST({\mathcal L})$ are deterministic, transitions starting from divergent states $\top$ in $\mathcal M$ always produce output $\top$, whereas in $ST({\mathcal L})$ divergent nodes $p = \langle Q^{\varepsilon}, \false \rangle$ are deadlock-like and, moreover, $p.{\emph acc} = 0$.

Given an LTS ${\mathcal L}$ with state space $S$,
the connection between non-divergent nodes $Q$ in the corresponding Moore machine ${\mathcal M} = (1 + \pow S, \langle o^{\sharp}, t^{\sharp}\rangle)$ and those in the associated acceptance graph $ST({\mathcal L})$ is obvious. Each such Moore state $Q$ corresponds to a node $p=\langle Q^{\varepsilon}, \true\rangle$ in the acceptance graph such that $p.{\emph acc} = i(o^{\sharp}(Q))$, where $i$ is the isomorphism between downsets and antichains defined in Section~\ref{sec:ready-fail}.

For example, state $\set{q_{1}}$ in $\mathcal M$ is in one to one correspondence with $p = \langle \set{q_{1}^{\varepsilon}} = \set{q_{1}, q_{10}}, \true\rangle$ in $ST({\mathcal L})$, and, moreover:
\begin{align*}
 i(o^{\sharp}(\set{q_{1}}))=~& i(F = \set{0, \set{a}, \set{b}, \set{c}, \set{a,b}, \set{a,c}, \set{b,c}, \set{a,b,c}})\\
=~& min(\cup_{F_{i} \in F} \set{A - F_{i}})= 0 = p.{\emph acc}.
\end{align*}

As already hinted, a divergent set of states $Q$ is represented by $\top$ in the Moore machine derived from an LTS, and it corresponds to a node $p = \langle Q^{\varepsilon}, \false\rangle$ in the associated acceptance graph, such that $p$ has no outgoing transitions and $p.{\emph acc} = 0$. For this case we refer to the states $Q = \set{q_{4}, q_{7}}$ in $\mathcal L$.

An important remark is that divergent nodes and their successors in the Moore machines can safely be ignored when reasoning on $\sqsubseteq_{\mathcal M}$. This follows as a consequence of:
\begin{equation}
\label{eq:mst-no-div}
\begin{array}{rl}
& \bb{X} \sqsubseteq_{\mathcal M} \bb{Y}\\
\text{iff}& (\forall w \in A^{*})\,.\, \bb{X}(w) \sqsubseteq \bb{Y}(w)\\
\text{iff}& (\forall w \in A^{*})\,.\, Y \downarrow~ \Rightarrow (X \downarrow \land~ \bb{X}(w) \sqsubseteq \bb{Y}(w))\\
& \text{(as if $X \not \downarrow w$ then $\llbracket X \rrbracket(w) = \top$, which follows by induction on $w \in A^{*}$)}
\end{array}
\end{equation}
for all $X, Y \in 1 + \pow S$, where $S$ is the state space of the LTS and $A$ is the corresponding action alphabet.

Hence, the corresponding subsequent transitions
$\xymatrix@C=.3cm@R=.4cm{\top & \top\ar@{=>}[l]\ar@(dr,ur)_{a}}$
can be ignored as well, for all $a \in A$.

As a last ingredient in showing the connection between the Moore machine and the acceptance graph associated with an LTS with divergence, we make the following observations. Transitions $\xymatrix@C=.4cm@R=.4cm{o_{1} & Q_{1}\ar@{=>}[l]\ar@{->}[r]^{a} & Q_{2}\ar@{=>}[r] & o_{2}}$ between non-divergent states $Q_{1}, Q_{2}$ correspond to transitions $\xymatrix@C=.4cm@R=.4cm{p_{1} = \langle Q_{1}^{\varepsilon}, \true\rangle\ar@{->}[r]^{a} & \langle Q_{2}^{\varepsilon}, \true\rangle = p_{2}}$ such that $p_{i}.{\emph acc} = i(o_{i})$, for $i \in \set{1,2}$.
Each transition  $\xymatrix@C=.4cm@R=.4cm{o_{1} & Q_{1}\ar@{=>}[l]\ar@{->}[r]^{a} & \top\ar@{=>}[r] & \top}$ with $Q_{1}$ a non-divergent state matches a transition ${p_{1} = \langle Q_{1}^{\varepsilon}, \true\rangle} \xrightarrow{a} p_{2}$ such that $p_{1}.{\emph acc} = i(o_{1})$, $p_{2}.{\emph closed} = \false$ and $p_{2}.{\emph acc} = 0$.

At this point we conclude that, given an LTS with divergence $\mathcal L$, the Moore machine derived according to the powerset construction and the corresponding acceptance graph $ST({\mathcal L})$ are isomorphic up-to divergent behaviours.

\section{Discussion}
\label{sec:concl-traces}

In this chapter, we have proved that the  coalgebraic characterisations of decorated trace semantics for labelled transition systems and generative probabilistic systems, respectively, are equivalent with the corresponding standard definitions in~\cite{Glabbeek01} and~\cite{GPS-Jou-Smolka}. More precisely, we have shown that for a state $x$, the coalgebraic canonical representative
$\llbracket \{x\}\rrbracket$,
given by determinisation and finality,
coincides with the classical semantics $\I(x)$, for $\I$ ranging over $\Tr,\Ctr,
\Rp, \Fp, \Pf, \Rtr$ and $\Ftr$,
representing the traces, complete traces, ready pairs, failure pairs, possible futures, ready traces and, respectively, failure traces of $x$ in a labelled transition system. Similar equivalences have been proven for $\I$ ranging over $\PR, \PF, \PMF, \PT$ and $\PMT$ representing the ready, failure, maximal failure, trace and maximal trace functions for the case of probabilistic systems.

We also showed that the spectrum of decorated trace semantics can be recovered from the coalgebraic modelling.

Moreover, we provided an extension of trace and failure semantics to the context of labelled transition systems with internal behaviour, which further enabled the coalgebraic modelling of may and must testing semantics in~\cite{CleavelandH89} via the generalised powerset construction. A similar idea of system determinisation was also applied in~\cite{CleavelandH89}, in a non-coalgebraic setting where, in the absence of internal actions and divergence, respectively, may testing coincides with trace and
{must testing} coincides with failure semantics, respectively. The connection with this work is also studied in this chapter.

In addition, we have illustrated 
how to reason about decorated trace and testing semantics using coinduction,
by constructing suitable Moore bisimulations. This is a sound and complete proof technique, and represents an important step towards automated reasoning,
as it opens the way for the use of, for instance, coinductive theorem provers such as CIRC~\cite{rosu-lucanu-2009-calco}.


\chapter{Algorithms for decorated trace and testing semantics}
\label{ch:algorithms}

In Chapter~\ref{ch:dec-trace-testing} we provided a coalgebraic handling of a suite of semantics for different types of systems. These consist of decorated trace semantics for labelled transition systems and generative probabilistic systems, and may/must testing semantics for labelled transition systems with internal behaviour. In this chapter we focus on deriving algorithms for reasoning on failure and must testing, but our considerations hold also for the other decorated trace semantics for LTS's in Chapter~\ref{ch:dec-trace-testing}, and for may testing semantics.

The problem of automatically checking these notions of behavioural equivalence is usually
reduced to the problem of checking bisimilarity, as implemented in
several
tools~\cite{cleaveland1993concurrency,DBLP:conf/cav/CleavelandS96,DBLP:journals/topnoc/CalzolaiNLT08,DBLP:conf/tacas/CranenGKSVWW13}
and proposed in~\cite{CleavelandH89} which introduces a procedure for
checking testing equivalences. The idea is the following.
First, non-deterministic systems, represented by labelled transition
systems (LTS's), are transformed into deterministic ``acceptance
graphs'' with a construction which is reminiscent of the
\emph{determinisation} of non-deterministic automata (NDA's). Then,
since bisimilarity in acceptance graphs coincides with testing
equivalence in the original LTS's, one checks bisimilarity via the
so-called \emph{partition refinement} 
algorithm~\cite{Kanellakis:1983:CEF:800221.806724,Paige:87:SIAM}. Such
algorithm, which is the best-known for minimising LTS's with respect to
bisimilarity, is analogous to Hopcroft's minimisation
algorithm~\cite{minimisation} for deterministic automata (DA's)
with respect to language equivalence. In both, a partition of the state space
is iteratively refined until the largest fixed-point is reached. In a
nutshell, the procedure for checking testing semantics adopted
in~\cite{CleavelandH89} is in essence the same as the classical
procedure for checking language equivalence of non-deterministic
automata: first determinise and then compute a largest fixed-point.


This observation led us to experiment with applying other interesting language equivalence algorithms, not available for bisimilarity, to solve the problem of checking must and failure semantics. In order to achieve this, we took a coalgebraic perspective of the problem at hand, which allowed us to study the constructions and the semantics in a uniform fashion. The abstract coalgebraic framework enabled a unified study of different kinds of state based systems: (a) both the determinisation of NDA's and 
 the construction of acceptance graphs in~\cite{CleavelandH89} are instances of the generalised powerset construction~\cite{CancilaHL03,Lenisa19992,fsttcs}, 
and (b) the iterations of both the Hopcroft and the partition refinement algorithms are in one-to-one correspondence with the so-called construction of the terminal sequence~\cite{ak:fix-point-set-functor,worrel-final}. While (b) is well-known in the community of coalgebras~\cite{ABHKMS12,FMT2005,k:logics-coalgebras,s:relating-coalgebraic-bisimulation-journal}, (a) is the key observation of this work, which enabled us to devise other algorithms for must and failure semantics (introduced in Section~\ref{sec:ready-fail} and Section~\ref{sec:fail-must}, respectively).

First, we consider \emph{Brzozowski's algorithm}~\cite{Brzozowski}
which transforms an NDA into the minimal deterministic automaton
accepting the same language: the input automaton is reversed (by
swapping final and initial states and reversing its transitions),
determinised, reversed and determinised once more. This somewhat
intriguing algorithm can be explained in terms of duality and
coalgebras~\cite{Bonchi:2012:BAA:2340820.2340823,BKP}. In particular,
the approach in~\cite{Bonchi:2012:BAA:2340820.2340823} allows us to
extend it to Moore machines, which paves the way to adapt Brzozowski's
algorithm for checking testing semantics.

Next, we consider several more efficient algorithms that have been recently introduced in a series of papers~\cite{tacas10,BONCHI:2012:HAL-00639716:4,DoyenR10,CAV06}. 
These algorithms rely on different kinds of \emph{(bi)simulations up-to}, which are proof techniques originally proposed for process calculi~\cite{milner89,MilnerPW92,San98MFCS}. 
From these algorithms, we choose the one in~\cite{BONCHI:2012:HAL-00639716:4} (\HKC), which can be easily proved correct using coalgebraic techniques. \HKC\ can be easily adapted to check must testing, once a coalgebraic characterisation of must equivalence is given.

Comparing the efficiency of these three families of algorithms (partition
refinement\linebreak \cite{CleavelandH89}, Brzozowski and bisimulations up-to)
is not a trivial task. Both the problems of checking language and
testing equivalence are PSPACE-complete as shown in~\cite{MS73}
and~\cite{Kanellakis:1983:CEF:800221.806724}, respectively. However,
in both cases, the theoretical complexity appears not to be
problematic in practice, so that an empirical evaluation is more
desirable. In~\cite{TabakovVardi05,Wat95,Wat01}, experiments have
shown that Brzozowski's algorithm performs better than Hopcroft's one  for ``high-density'' NDA's, while Hopcroft's algorithm is more efficient for generic
NDA's. Both algorithms appear to be rather inefficient compared to
those of the new
generation~\cite{tacas10,BONCHI:2012:HAL-00639716:4,DoyenR10,CAV06}. It
is out of the scope of this work to present an experimental
comparison of the adaptation of these algorithms for must equivalence;
we confine our results to showing that each approach can be more
efficient than the others on concrete examples.

%
summarising, the main contributions of this chapter are:
\begin{itemize}\itemsep0pt
\item The adaptation of \HKC\ and Brzozowski's algorithm for failure
  and must semantics. For the latter, this includes an optimisation
  which avoids an expensive determinisation step. All the
  observations for failure can be used for various other decorated
  trace semantics, such as ready and ready trace.
  \item An interactive applet\footnote{\url{http://perso.ens-lyon.fr/damien.pous/brz}} allowing one to experiment with these
  algorithms.
  \item Experiments checking the equivalence of an ideal and a distributed multiway synchronisation protocol~\cite{ParrowS96}.
  \item At a more conceptual level, the present work also shows that the
coalgebraic analysis of systems yields not only a good mathematical
theory of their semantics but also a rich playground to devise
algorithms.
\end{itemize}


\textit{Organisation of the chapter.}
We first recall the word automata, the algorithms we will start with,
and their coalgebraic description (Sect.~\ref{CONC-sec:language}). We adapt
these algorithms to failure semantics (Sections~\ref{ssec:hkcFailure},~\ref{app:proofofcorrectness},~\ref{sec:brzozowski-failure},~\ref{sec:proof:Brzozowski}), and then
to must semantics (Sections~\ref{ssec:HKCmust},~\ref{app:proofofcorrectness},~\ref{sec:brzoz-must},~\ref{sec:proof:Brzozowski-must}) for finite machines: although failure
semantics can be seen as a special case of must semantics, the first
generalisation is important for the sake of clarity. We finally give
examples illustrating the relative behaviour of the various algorithms
(Sections~\ref{sec:exp},~\ref{sec:concrete-tests}), before concluding (Section~\ref{sec:concl-concur}).

\section{Language equivalence}
\label{CONC-sec:language}

The core of this chapter is about the problem of checking whether two
states in a finite transition system are behavioural equivalent, for a certain
notion of equivalence. More explicitly, we will reduce the problem of reasoning on failure and must testing semantics, respectively,
to the classical problem of checking language
equivalence.

We proceed by first providing a short overview on deterministic automata (DA's), Moore machines and non-deterministic automata (NDA's), and the problem of recovering language semantics of NDA's, in the coalgebraic setting.

We recall again that a \emph{deterministic automaton} over the input alphabet $A$ is a pair
$(S,<o,t>)$, where $S$ is a set of states and $<o,t> \colon S \to
2\times S^A$ is a function with two components: $o$, the output
function, determines whether a state $x$ is final ($o(x) = 1$) or not
($o(x) = 0$); and $t$, the transition function, returns for each
state and each input letter, the next state.

From any DA, there exists a function $\bb{-}\colon S \to
2^{A^*}$ mapping states to formal languages, defined as follows, for
all $x\in S$:
\begin{equation}\label{def:lang}
 \bb{x}(\varepsilon) = o(x) \hspace{1cm}
 \bb{x}(a\cdot w) = \bb{t(x)(a)}(w)
\end{equation}
The language $\bb{x}$ is called the language accepted by $x$, and it consists of all words $w \in A^{*}$ which, if executed from $x$, lead to a final (or accepting) state. Given an
automaton $(S,<o,t>)$, the states $x,y\in S$ are said to be
\emph{language equivalent} iff they accept they same language.

Throughout this chapter, we will use \emph{Moore machines} which are coalgebras for the functor $F(S)=B\times S^A$. These are very similar to DA's, but with outputs in any (fixed) set $B$. The unique $F$-homomorphism to the final coalgebra $\bb{-}\colon S \to B^{A^*}$  is defined exactly as for DA's by the equations in~\eqref{def:lang}. Note that the behaviours of Moore machines are functions $\varphi \colon A^* \to B$, rather than subsets of $A^*$. For each behaviour $\varphi \in B^{A^*}$, there exists a minimal Moore machine realising it.

A \emph{non-deterministic automaton} is similar to a DA but the
transition function returns a set of next-states 
instead of a single state. Thus, an NDA over the input alphabet $A$
is a pair $(S,<o,t>)$, where $S$ is a set of states and
$<o,t> \colon S \to 2\times (\pow(S))^A$. An example is depicted below
(final states are overlined, labelled edges represent transitions).
\begin{equation}\label{ex:automata}
  \nfa{\xymatrix @R=.4cm@C=2.3em {%
    \state{x} \ar@/_1.6em/[rr]|{\,a\,} & %
    \state{z} \ar[l]|{\,a\,}\ar@/^/[r]|{\,a\,} & %
    \fstate{y} \ar@/^/[l]|{\,a\,} &&
    \state{u} \ar@/^/[r]|{\,a\,}\ar@/_1.6em/[rr]|{\,a\,}& %
    \state{w} \ar@/^/[l]|{\,a\,} & %
    \fstate{v} \ar[l]|{\,a\,} }}
\end{equation}

Classically, in order to recover language semantics of NDA, one uses
the \emph{powerset construction} (see Section~\ref{prelim:gen-pow} for a reminder), transforming every NDA
$(S,<o,t>)$ into the DA $(\pow(S), <o^{\sharp}, t^{\sharp}>)$ where
$o^{\sharp} \colon \pow(S) \to 2$ and $t^{\sharp} \colon \pow(S) \to
\pow(S)^A$ are defined for all $X\in \pow(S)$ as
\begin{align*}
  o^{\sharp} (X) &= \bigsqcup_{x\in X}o(x) &
  t^{\sharp} (X)(a) &= \bigsqcup_{x\in X}t(x)(a)\enspace.
\end{align*}

Note that we use $\sqcup$ to denote the ``Boolean or'' in
$2$, the union of languages in $2^{A^*}$ and the union of sets in
$\pow(S)$.

For instance with the NDA from~\eqref{ex:automata},
$o^{\sharp}(\set{x,y})=0\sqcup1 =1$ (i.e., the state $\set{x,y}$ is
final) and $t^{\sharp}(\set{x,y})(a)=\set{y}\sqcup\set{z} =\set{y,z}$
(i.e., $\set{x,y}\tr{a}\set{y,z}$).

Since $(\pow(S),<o^{\sharp}, t^{\sharp}>)$ is a deterministic
automaton, we can now apply the language semantics above, yielding a
function $\bb{-}\colon \pow(S) \to 2^{A^*}$ mapping \emph{sets} of
states to languages. Given two states $x$ and $y$, we say that they
are language equivalent iff $\bb{\{x\}}=\bb{\{y\}}$. More generally,
for two sets of states $X,Y\subseteq S$, we say that $X$ and $Y$ are
language equivalent iff
$\bb{X}=\bb{Y}$. 

In order to introduce the algorithms in full generality, it is important to recall here that the sets $2$, $\pow(S)$, $\pow(S)^A$, $2\times\pow(S)^A$ and $2^{A^*}$ carry 
a semilattice with bottom structure $(X,\sqcup, 0)$ and that the functions $<o^{\sharp},t^{\sharp}>\colon \pow(S) \to 2 \times \pow(S)^A$
and $\bb{-}\colon \pow(S) \to 2^{A^*}$ are homomorphisms of semilattices with bottom. In the rest of the chapter we will indiscriminately use $0$ to denote
the element $0\in 2$, the empty language in $2^{A^*}$ and the empty
set in $\pow(S)$. 

\subsection{Language equivalence via bisimulation up-to: \HKC\ }
\label{CONC-sec:via-bisim-up-to}
We recall the algorithm \HKC\ from~\cite{BONCHI:2012:HAL-00639716:4}. 
We first provide a notion of bisimulation on sets of states, underlying the notion of progression. Note that this is equivalent to the bisimulation introduced in Section~\ref{prelim:coalg}, but more appropriate for the proofs in this chapter.

\begin{definition}[Progression, Bisimulation]\label{def:bisimulation}
  Given two relations $R,R'\subseteq \pow(S) \times \pow(S)$, $R$
  \emph{progresses to} $R'$, denoted $R\prog R'$, if whenever
  $X\mathrel R Y$ then
  \vspace{-.2em}
  \begin{center} 1.\ $o^{\sharp}(X) = o^{\sharp}(Y)$ \quad and
    \quad 2.\ for all $a\in A$, $t^{\sharp}(X)(a) \mathrel{R'}
    t^{\sharp}(Y)(a)$. 
  \end{center}
  A \emph{bisimulation} is a relation $R$ such that $R\prog R$.
\end{definition}
This definition considers the states, the transitions and the outputs
of the \emph{determinised} NDA. For this reason, the bisimulation
proof technique is sound and complete for language equivalence.
%
\newcommand\bisimulationlangequivalence{%
  For all $X,Y\in \pow(S)$, $\bb{X}=\bb{Y}$ iff there exists a bisimulation
  that relates $X$ and $Y$.}

Consequently, the coinduction proof principle is stated as follows.  

\begin{proposition}[Coinduction]
  \label{prop:bisimulation-langequivalence}
  \bisimulationlangequivalence
\end{proposition}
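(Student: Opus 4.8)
The plan is to prove Proposition~\ref{prop:bisimulation-langequivalence} as a direct consequence of the fact that the determinised automaton $(\pow(S),<o^{\sharp},t^{\sharp}>)$ is an ordinary deterministic automaton (a Moore machine with output set $2$), together with the finality property of $2^{A^*}$ recalled in the preliminaries. The two directions are handled separately. For the ``if'' direction, suppose $R$ is a bisimulation relating $X$ and $Y$. I would show by induction on the length of an arbitrary word $w\in A^*$ that $X\mathrel R Y$ implies $\bb{X}(w)=\bb{Y}(w)$. The base case $w=\varepsilon$ is exactly clause~1 of Definition~\ref{def:bisimulation} combined with the defining equation $\bb{X}(\varepsilon)=o^{\sharp}(X)$. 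For the inductive step, write $w=a\cdot w'$; clause~2 gives $t^{\sharp}(X)(a)\mathrel R t^{\sharp}(Y)(a)$, so the induction hypothesis applies to this pair and yields $\bb{t^{\sharp}(X)(a)}(w')=\bb{t^{\sharp}(Y)(a)}(w')$, which by the equation $\bb{X}(a\cdot w')=\bb{t^{\sharp}(X)(a)}(w')$ is precisely $\bb{X}(w)=\bb{Y}(w)$. Since $w$ was arbitrary, $\bb{X}=\bb{Y}$.

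For the ``only if'' direction, suppose $\bb{X}=\bb{Y}$. The natural candidate bisimulation is the kernel of the semantic map,
\[
R=\{(X',Y')\in\pow(S)\times\pow(S)\mid \bb{X'}=\bb{Y'}\},
\]
which obviously contains $(X,Y)$. It remains to check $R\prog R$. If $X'\mathrel R Y'$, then $o^{\sharp}(X')=\bb{X'}(\varepsilon)=\bb{Y'}(\varepsilon)=o^{\sharp}(Y')$, giving clause~1; and for any $a\in A$ and any $w'\in A^*$, $\bb{t^{\sharp}(X')(a)}(w')=\bb{X'}(a\cdot w')=\bb{Y'}(a\cdot w')=\bb{t^{\sharp}(Y')(a)}(w')$, so $\bb{t^{\sharp}(X')(a)}=\bb{t^{\sharp}(Y')(a)}$, i.e.\ $t^{\sharp}(X')(a)\mathrel R t^{\sharp}(Y')(a)$, giving clause~2. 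Hence $R$ is a bisimulation relating $X$ and $Y$.

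Alternatively, and more in the spirit of the coalgebraic development in Chapter~\ref{ch:dec-bisim} and the preliminaries, one can simply invoke the general coinduction principle~(\ref{eq:bisim-behEquiv}): Definition~\ref{def:bisimulation} is exactly the $F$-bisimulation notion for the functor $F(S)=2\times S^A$ instantiated at the carrier $\pow(S)$ (as remarked in the text, it agrees with the bisimulation of Section~\ref{prelim:coalg}, using the alternative characterisation~(\ref{eq:bisim-alt-def}) for the deterministic functor), and behavioural equivalence for this functor coincides with equality of the images under the final map $\bb{-}\colon\pow(S)\to 2^{A^*}$, which is language equivalence as defined in Example~\ref{eg:det-aut}. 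So the statement is an instance of~(\ref{eq:bisim-behEquiv}) together with~(\ref{eq:beh-equiv}). I would probably present the elementary inductive argument explicitly, since it is short and self-contained, and mention the coalgebraic reading as a remark.

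I do not expect any real obstacle here: the only point requiring a little care is making sure the induction in the ``if'' direction is set up over \emph{all} pairs in $R$ simultaneously (so that the induction hypothesis can be applied to the successor pair $(t^{\sharp}(X)(a),t^{\sharp}(Y)(a))$, not just to the original $(X,Y)$), and noting that $\bb{-}$ is well-defined because $(\pow(S),<o^{\sharp},t^{\sharp}>)$ genuinely is a deterministic automaton, so the defining recursion~(\ref{def:lang}) applies verbatim. Everything else is routine unfolding of definitions.
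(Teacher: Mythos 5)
Your proof is correct, but it is more explicit than what the paper actually does: the paper offers no proof of Proposition~\ref{prop:bisimulation-langequivalence} at all, merely observing that Definition~\ref{def:bisimulation} is phrased on the \emph{determinised} automaton $(\pow(S),<o^{\sharp},t^{\sharp}>)$ and then appealing implicitly to the general coinduction principle~(\ref{eq:bisim-behEquiv}) for coalgebras of the (Moore/DA) functor, exactly as in your closing ``alternative'' paragraph. Your main argument instead proves the statement from scratch: soundness by induction on word length over \emph{all} pairs of the bisimulation simultaneously, and completeness by exhibiting the kernel of $\bb{-}$ as a bisimulation. Both steps are carried out correctly, and you rightly flag the one subtlety (strengthening the induction hypothesis to the whole relation so it applies to the successor pair $(t^{\sharp}(X)(a),t^{\sharp}(Y)(a))$). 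What your elementary route buys is self-containedness — no need to verify the side conditions under which~(\ref{eq:bisim-behEquiv}) holds for the functor at hand — at the cost of redoing an argument the paper wants to reuse uniformly: the same appeal to the general principle is what later justifies the analogous statements for failure and must semantics, where the output object $\B$ changes but nothing in the argument does. Either presentation is acceptable; if you give the explicit induction, it would be worth adding the one-line remark that the identical argument works for any Moore machine $(\pow(S),<o^{\sharp},t^{\sharp}>)$ with outputs in an arbitrary $B$, since that is the form in which the proposition is reused in Sections~\ref{ssec:hkcFailure} and~\ref{ssec:HKCmust}.
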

For an example, suppose that we want to prove the equivalence of $\set{x}$ and $\set{u}$ of the NDA  
in \eqref{ex:automata}. The part of the determinised NDA that is reachable from 
$\set{x}$ and $\set{u}$ is depicted below. The relation consisting of dashed and dotted lines is a bisimulation which proves that $\bb{\set{x}}=\bb{\set{u}}$.
\begin{equation}\label{executionNFA}
 \begin{tabular}{c}
$
  \dfa{\xymatrix @C=1em @R=1.7em {%
    \state{\set x}\ar[r]^a\ar@{--}[d]_1 & %
    \fstate{\set y}\ar[r]^a\ar@{--}[d]_2 & %
    \state{\set z}\ar[r]^(.45)a\ar@{--}[d]_3 & %
    \fstate{\set{x,y}}\ar[r]^a\ar@{.}[d] & %
    \fstate{\set{y,z}}\ar[r]^(.45)a\ar@{.}[ld] & %
    \fstate{\set{x,y,z}}\ar@(dr,dl)^a\ar@{.}[lld] \\
    \state{\set u}\ar[r]_(.45)a & %
    \fstate{\set{v,w}}\ar[r]_a & %
    \state{\set{u,w}}\ar[r]_(.45)a & %
    \fstate{\set{u,v,w}}\save !R(.9)\ar@(dr,ur)_a\restore }}
$
\end{tabular}
\end{equation}

The dashed lines (numbered by $1$, $2$, $3$) form a smaller relation
which is not a bisimulation, but a \emph{bisimulation up-to congruence}: the
equivalence of $\set{x,y}$ and $\set{u,v,w}$ can be
immediately deduced from the fact that $\set x$ is related to $\set u$
and $\set y$ to $\set{v,w}$.
In order to formally introduce bisimulations up-to congruence, we need
to define first the \emph{congruence closure} $c(R)$ of a relation
$R\subseteq\pow(S)\times\pow(S)$. This is done inductively, by the
following rules: 
\begin{equation}\label{eq:congclos}
{\small
\dfrac{X \mathrel R Y}{X \mathrel{c(R)} Y} \;
\dfrac{ }{X \mathrel{c(R)} X} \;
  \dfrac{X \mathrel{c(R)} Y}{Y \mathrel{c(R)} X} \;
  \dfrac{X \mathrel{c(R)} Y \; Y \mathrel{c(R)} Z}{X \mathrel{c(R)} Z} \;  
  \dfrac{X_1 \mathrel{c(R)} Y_1 \,\,\,\,\, X_2 \mathrel{c(R)}
    Y_2}{X_1 \sqcup X_2 \mathrel{c(R)} Y_1 \sqcup Y_2}
    }
\end{equation}
Note that the term ``congruence'' here is intended with respect to the semilattice structure carried by the state space $\pow(S)$ of the determinised automaton. Intuitively, $c(R)$ is the smallest equivalence relation containing $R$ and which is
closed w.r.t $\sqcup$.
\begin{definition}[Bisimulation up-to congruence]\label{def:bisimulationuptocongruence}
  A relation $R\subseteq\pow(S)\times \pow(S)$ is a \emph{bisimulation up-to $c$}
  if $R\prog c(R)$, i.e., whenever $X\mathrel R Y$ then
  \vspace{-.2em}
  \begin{center} 1.\ $o^{\sharp}(X) = o^{\sharp}(Y)$ \quad and
    \quad 2.\ for all $a\in A$, $t^{\sharp}(X)(a) \mathrel{c(R)}
    t^{\sharp}(Y)(a)$. 
  \end{center}
\end{definition}
\begin{theorem}[\cite{BONCHI:2012:HAL-00639716:4}]\label{thm:upto:c}
  Any bisimulation up-to $c$ is contained in a bisimulation.
\end{theorem}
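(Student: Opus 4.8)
The plan is to show that every bisimulation up-to congruence $R$ is contained in an ordinary bisimulation, and the natural candidate is the congruence closure $c(R)$ itself. First I would recall the standard fact (which holds because $c(-)$ is a closure operator that is monotone, idempotent, extensive, and compatible with the functor in the sense needed here) that it suffices to prove: if $R \prog c(R)$, then $c(R) \prog c(R)$, i.e.\ $c(R)$ is a genuine bisimulation. Given that, since $R \subseteq c(R)$ by the first rule in~\eqref{eq:congclos}, we are done.

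So the core of the argument is the implication $R\prog c(R) \Rightarrow c(R)\prog c(R)$. I would prove this by induction on the derivation of $X \mathrel{c(R)} Y$ using the five rules in~\eqref{eq:congclos}, checking in each case the two progression conditions of Definition~\ref{def:bisimulationuptocongruence}: (1) $o^{\sharp}(X)=o^{\sharp}(Y)$ and (2) for all $a\in A$, $t^{\sharp}(X)(a)\mathrel{c(R)} t^{\sharp}(Y)(a)$. For the base rule $X\mathrel R Y$: condition (1) holds because $R\prog c(R)$ gives $o^{\sharp}(X)=o^{\sharp}(Y)$ directly, and condition (2) holds since $R\prog c(R)$ yields $t^{\sharp}(X)(a)\mathrel{c(R)} t^{\sharp}(Y)(a)$. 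The reflexivity, symmetry, and transitivity cases are immediate once one observes that $c(R)$ is by construction an equivalence relation and that $o^{\sharp}$ equality and $c(R)$-relatedness are themselves reflexive/symmetric/transitive (for transitivity of (2) one uses transitivity of $c(R)$). The only genuinely structural case is the last rule: assume $X_1 \mathrel{c(R)} Y_1$ and $X_2 \mathrel{c(R)} Y_2$ with both conditions holding by the induction hypothesis, and we must verify them for $X_1\sqcup X_2$ and $Y_1 \sqcup Y_2$. Here I would use crucially that $o^{\sharp}$ and $t^{\sharp}(-)(a)$ are homomorphisms of semilattices with bottom (as emphasised in the text right before Section~\ref{CONC-sec:via-bisim-up-to}): $o^{\sharp}(X_1\sqcup X_2) = o^{\sharp}(X_1)\sqcup o^{\sharp}(X_2) = o^{\sharp}(Y_1)\sqcup o^{\sharp}(Y_2) = o^{\sharp}(Y_1\sqcup Y_2)$, giving (1); and $t^{\sharp}(X_1\sqcup X_2)(a) = t^{\sharp}(X_1)(a)\sqcup t^{\sharp}(X_2)(a)$, which is $c(R)$-related to $t^{\sharp}(Y_1)(a)\sqcup t^{\sharp}(Y_2)(a) = t^{\sharp}(Y_1\sqcup Y_2)(a)$ by the induction hypotheses for the two components together with the congruence rule (closure of $c(R)$ under $\sqcup$), giving (2).

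I expect the main obstacle to be purely a matter of bookkeeping rather than depth: one must be careful that the induction is on the \emph{structure of the derivation} of membership in $c(R)$ and not on anything else, and that in the transitivity and congruence cases the appeal to the induction hypothesis is legitimate (the sub-derivations are strictly smaller). A secondary point worth stating explicitly is why it is enough to exhibit \emph{some} bisimulation containing $R$: by Proposition~\ref{prop:bisimulation-langequivalence}, a bisimulation relating $X$ and $Y$ witnesses $\bb{X}=\bb{Y}$, so showing $c(R)$ is a bisimulation immediately yields that all pairs in $R$ (hence in $c(R)$) are language equivalent. I would close by remarking that this is precisely the instance for Moore/deterministic automata of the general soundness theorem for up-to techniques w.r.t.\ compatible closure operators, so no new ideas beyond the semilattice-homomorphism property of $\langle o^{\sharp}, t^{\sharp}\rangle$ are needed.
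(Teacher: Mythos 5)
Your proof is correct, and the structural induction goes through exactly as you describe: the only case that needs anything beyond the closure rules themselves is the $\sqcup$-rule, where the fact that $\langle o^{\sharp},t^{\sharp}\rangle$ is a homomorphism of semilattices (with bottom) is precisely what makes the two progression conditions compose. Note, however, that the paper itself takes a different route: Theorem~\ref{thm:upto:c} is stated as a citation to \cite{BONCHI:2012:HAL-00639716:4}, and in Section~\ref{app:proofofcorrectness} its validity is justified abstractly — soundness of bisimulation up-to congruence holds in any bialgebra (Corollary~6.6 of \cite{DBLP:conf/sofsem/RotBR13}), and $(\pow(S),\langle o^{\sharp},t^{\sharp}\rangle)$ is a bialgebra by the generalised powerset construction. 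The abstract argument buys uniformity: the same citation covers language, failure and must semantics at once (including the $1+\pow(S)$ case), with no need to redo the induction when $\sqcup$ or the output semilattice changes. Your concrete argument buys self-containedness and makes visible exactly which property of the determinised system is used (the semilattice-homomorphism property), which is the same compatibility fact the bialgebraic proof packages abstractly; it also needs to be checked once per instantiation of $\sqcup$, $o^{\sharp}$, $t^{\sharp}$, but the check is identical each time. One tiny presentational point: in the symmetry and transitivity cases you should say explicitly that the induction hypothesis is applied to the premise derivations (which are strictly smaller), from which condition (1) follows by symmetry/transitivity of equality and condition (2) by the corresponding closure rules for $c(R)$ — you gesture at this, and it is indeed only bookkeeping.
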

Figure~\ref{fig:hkc} shows the corresponding algorithm, parametric on
$o^\sharp$, $t^\sharp$, and $c$. Starting from an NDA $(S,<o,t>)$ and
considering the determinised automaton $(S,<o^\sharp,t^\sharp>)$, it
can be used to check language equivalence of two sets of states $X$
and $Y$. Starting from the pair $(X,Y)$, the algorithm builds a
relation $R$ that, in case of success, is a bisimulation up-to
congruence. In order to do that, it employs the set $todo$ which,
intuitively, at any step of the execution, contains the pairs
$(X',Y')$ that must be checked: if $(X',Y')$ already belongs to
$c(R\cup todo)$, then it does not need to be checked. Otherwise, the
algorithm checks if $X'$ and $Y'$ have the same outputs. If
$o^{\sharp}(X')\neq o^{\sharp}(Y')$ then $X$ and $Y$ are different,
otherwise the algorithm inserts $(X',Y')$ in $R$ and, for all $a\in
A$, the pairs $(t^{\sharp}(X')(a), t^{\sharp}(Y')(a))$ in $todo$. The
check $(X',Y')\in c(R\cup todo)$ at step \texttt{2.2} is done with the
rewriting algorithm of~\cite[Section 3.4]{BONCHI:2012:HAL-00639716:4}.
\vspace{-.1cm}
\begin{proposition}
  \label{prop:naive:correct}
  For all $X,Y\in \pow(S)$, $\bb{X}= \bb{Y}$ iff
  \emph{\hkc X Y}.
\end{proposition}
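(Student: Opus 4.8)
The plan is to establish the two implications of the biconditional separately, drawing on the coinduction principle (Proposition~\ref{prop:bisimulation-langequivalence}) and the soundness of bisimulation up-to congruence (Theorem~\ref{thm:upto:c}), together with a loop-invariant analysis of the algorithm in Figure~\ref{fig:hkc}.

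For the ``$\Leftarrow$'' (soundness) direction, suppose \hkc X Y\ succeeds, and let $R$ be the relation it has built upon termination, i.e.\ when $todo$ is empty. I would show that $R$ is a bisimulation up-to congruence in the sense of Definition~\ref{def:bisimulationuptocongruence}. For every pair $(X',Y')$ that was inserted into $R$ at step \texttt{2.3}, the algorithm has already verified $o^\sharp(X')=o^\sharp(Y')$ (step \texttt{2.3}'s guard) and has pushed each successor pair $(t^\sharp(X')(a),t^\sharp(Y')(a))$ into $todo$. Since at termination $todo$ is empty, each such successor pair was later processed and either inserted into $R$ or discarded because it already lay in $c(R''\cup todo'')$ for the then-current $R'',todo''$. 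The key lemma here is a monotonicity statement: because $R$ only grows and the pairs sitting in $todo$ are themselves, eventually, either absorbed into $R$ or discarded for the same reason, one has $c(R''\cup todo'')\subseteq c(R_{\text{final}})$ throughout the run; this is proved by induction on the (well-founded) order in which $todo$ pairs are consumed. Hence every successor pair lies in $c(R_{\text{final}})$, so $R_{\text{final}}\prog c(R_{\text{final}})$. By Theorem~\ref{thm:upto:c}, $R_{\text{final}}$ is contained in a bisimulation, and since $(X,Y)\in R_{\text{final}}$, Proposition~\ref{prop:bisimulation-langequivalence} gives $\bb{X}=\bb{Y}$.

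For the ``$\Rightarrow$'' (completeness and termination) direction, assume $\bb{X}=\bb{Y}$. I would first prove the invariant that every pair $(X',Y')$ ever placed in $todo$ satisfies $X'=t^\sharp(X)(w)$ and $Y'=t^\sharp(Y)(w)$ for some $w\in A^*$: the initial pair $(X,Y)$ has $w=\varepsilon$, and the successor pairs inserted at step \texttt{2.3} append a single letter, so the claim follows by induction on the execution. From the definition of $\bb{-}$ in~\eqref{def:lang} applied to the determinised automaton we get $\bb{t^\sharp(X)(w)}(v)=\bb{X}(wv)$, hence $\bb{X'}=\bb{Y'}$ and in particular $o^\sharp(X')=\bb{X'}(\varepsilon)=\bb{Y'}(\varepsilon)=o^\sharp(Y')$. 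Therefore the \texttt{return false} branch is never taken, so if the algorithm halts, it halts with success. Termination is then the remaining point: each main-loop iteration removes one pair from $todo$; a pair is added to $todo$ only when a genuinely new pair (not already in $c(R\cup todo)$) is inserted into $R$; since $R\subseteq\pow(S)\times\pow(S)$, which is finite, only finitely many insertions into $R$ occur, hence only finitely many pairs are ever added to $todo$, and $todo$ is eventually exhausted. Thus the algorithm terminates, necessarily with success, i.e.\ \hkc X Y.

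The main obstacle I anticipate is precisely the bookkeeping around the congruence-closure test at step \texttt{2.2}: the test is performed against the \emph{current} relation $c(R\cup todo)$, whereas what soundness needs is membership in $c$ of the \emph{final} relation. Making the monotonicity argument $c(R''\cup todo'')\subseteq c(R_{\text{final}})$ rigorous — so that a pair legitimately skipped early remains ``justified'' at the end — is the delicate step; the inference rules in~\eqref{eq:congclos} together with the fact that $R$ grows and each pending pair is eventually either absorbed or re-justified make this go through, but it must be phrased carefully as an induction over the processing order. Everything else is routine: the commutation of $\bb{-}$ with $t^\sharp$ is immediate from~\eqref{def:lang}, and finiteness of $\pow(S)\times\pow(S)$ handles termination.
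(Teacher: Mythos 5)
Your proof is correct and follows essentially the same route as the paper's: soundness by showing that the relation $R$ built by a successful run is a bisimulation up-to congruence and then invoking Theorem~\ref{thm:upto:c} together with Proposition~\ref{prop:bisimulation-langequivalence}, and completeness by observing that any pair reaching step \texttt{2.3} is of the form $(t^{\sharp}(X)(w),t^{\sharp}(Y)(w))$, so a failed output check yields a word $w$ with $\bb{X}(w)\neq\bb{Y}(w)$. The only differences are presentational: the paper packages your backwards induction on the consumption order as the forward loop invariant $R \prog c(R\cup todo)$, which makes the bookkeeping you flag as delicate immediate, and it leaves the (easy, finiteness-based) termination argument implicit.
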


The iterations corresponding to the execution of {\hkc {\set{x}}
  {\set{u}}} on the NDA in~\eqref{ex:automata} are concisely described
by the numbered dashed lines in \eqref{executionNFA}.
%
Observe that only a small portion of the
determinised automaton is explored; this fact usually makes \HKC\ more
efficient than the algorithms based on minimisation, that need to
build the whole reachable part of the determinised automaton.

\begin{figure}[t]
\begin{tabular}{l@{\quad}p{5cm}}
\underline{\hkc X Y}:\\
$\begin{array}{l}
\texttt{(1) $R$ is empty; $todo$ is $\set{(X,Y)}$;}\\[0.5ex]
\texttt{(2) while $todo$ is not empty, do} \\[0.5ex]
\hspace{10pt}\texttt{(2.1) extract $(X',Y')$ from $todo$;}\\[0.5ex]
\hspace{10pt}\texttt{(2.2) if $(X',Y')\in c(R\cup todo)$ then continue;}\\[0.5ex]
\hspace{10pt}\texttt{(2.3) if $o^{\sharp}(X') \neq o^{\sharp}(Y')$ then return false;}\\[0.5ex]
\hspace{10pt}\texttt{(2.4) for all $a\in A$, insert $(t^{\sharp}(X')(a),\,t^{\sharp}(Y')(a))$ in $todo$;}\\[0.5ex]
\hspace{10pt}\texttt{(2.5) insert $(X',Y')$ in $R$;}\\[0.5ex]
\texttt{(3) return true;}
\end{array}$\\\\
\underline{\texttt{\large{Brzozowski}}}:\\
$\begin{array}{l}
\texttt{(1) reverse and determinise;}\\[0.5ex]
\texttt{(2) take the reachable part;}\\[0.5ex]
\texttt{(3) reverse and determinise;}\\[0.5ex]
\texttt{(4) take the reachable part.}
\end{array}$
\end{tabular}
\caption{Generic \HKC\ algorithm, parametric on $o^\sharp$, $t^\sharp$ and $c$. Generic Brzozowski's algorithm, parametric on {{\texttt reverse} and {\texttt determinise}}. Instantiation to language/failure/must equivalence.}
\label{fig:hkc}
\end{figure}

\subsection{Language equivalence via Brzozowski's algorithm}
\label{brzo}
%
The problem of checking language equivalence of two sets of states $X$
and $Y$ of a non-deterministic finite automaton can be reduced to that
of building the minimal DA for $\bb{X}$ and $\bb{Y}$ and checking
whether they are the same (up to isomorphism). The most well-known procedure consists in first determinising the NDA and then minimising it with Hopcroft's algorithm~\cite{minimisation}. Another interesting solution is Brzozowski's algorithm~\cite{Brzozowski}.

To explain the latter, it is convenient to consider a set of \emph{initial states} $I$.
Given an NDA $(S, <o,t>)$ and a set of states $I$, Brzozowski's algorithm computes the minimal automaton for the language $\bb{I}$ by performing the 4 steps in Figure~\ref{fig:hkc}. 

The operation ``reverse and determinise'' takes as input an NDA $(S, <o,t>)$ and returns a DA 
$(\pow(S),<\overline{o}_{R}, \overline{t}_{R}>)$ where the functions $\overline{o}_{R} \colon \pow(S) \to 2$ and $\overline{t}_{R} \colon \pow(S) \to \pow(S)^A$ are defined for all $X\in \pow(S)$ as
\begin{align*}
\overline{o}_{R}(X)  &= 1 \text{ iff } X\cap I \neq 0 &
\overline{t}_R(X)(a)  &= \set{x \in S \mid t(x)(a) \cap X \neq 0 }
\end{align*}
and the new initial state is the old set of final states:
$\overline{I}_{R} = \set{x \mid o(x)=1}$. The second step consists in
taking the part of $(\pow(S),<\overline{o}_{R}, \overline{t}_{R}>)$
which is reachable from $\overline{I}_{R}$. The third and the fourth
steps perform this procedure once more.

As an example, consider the NDA in \eqref{ex:automata} with the set of initial states $I=\set{x}$.
Brzozowski's algorithm builds the minimal DA accepting $\bb{\set{x}}$ as follows. After the first two steps, it returns the following DA where the initial state is $\set{y}$.
$$
  \dfa{\xymatrix @C=1em @R=1.7em {%
    \state{\set y}\ar[r]^(0.4)a & %
    \fstate{\set{x,z}}\ar[r]^a & %
    \state{\set{z,y}}\ar[r]^(.5)a & %
    \fstate{\set{x,y,z}}\hspace*{-.6cm}& {\ }\ar@(dr,ur)_a }}
$$
After steps \texttt{3} and \texttt{4}, it returns the DA below with initial state $\set{\set{x,z}\set{x,y,z}}$.
$$
{\small
\!\!\dfa{\xymatrix @C=1em @R=1.7em {%
    \state{\set{\set{x,z}\set{x,y,z}  }}\ar[r]^(0.45)a & %
    \fstate{\set{\set{y}\set{z,y}\set{x,y,z}}}\ar[dl]_(0.47)a\\ %
    \state{\set{\set{x,z}\set{z,y}\set{x,y,z}}}\ar[r]^(.51)a & %
    \fstate{\set{\set{y}\set{x,z}\set{z,y}\set{x,y,z}}}\hspace*{-.6cm}& {\ \ }\ar@(dr,ur)_a } }
}$$
Computing the minimal NDA in \eqref{ex:automata} with the set of initial
states $I=\set{u}$ results in an isomorphic automaton, showing the
equivalence of $x$ and $u$.

\section{Algorithms for failure and must testing semantics}
\label{sec:fail}
In this section we show how the algorithms \HKC\ and Brzozowski can be adapted for reasoning on failure and must testing semantics.
Next we briefly summarise the coalgebraic modelling of these semantics via the generalised powerset construction, as introduced in Chapter~\ref{ch:dec-trace-testing}.

An LTS over the 
alphabet $A$ is
a pair $(S,t)$ with $t\colon S \to \pow(S)^A$.
For a function $\varphi\in \pow (S)^A$, $I(\varphi)$ denotes the set
of all labels ``enabled'' by $\varphi$, given by $I(\varphi) =
\{a\in A \mid \varphi(a)\neq 0\}$, while $\Fail(\varphi)$ denotes
the set $\set{Z \subseteq A \mid Z \cap I(\varphi)=0}$.
A \emph{failure pair} of a state $x\in S$ is a
pair $(w, Z)\in A^*\times \pow (A)$ such that $x\tr{w}y$ and $Z\in
\Fail(t(y))$. The set of failure pairs of $x$ is denoted by
${\Fs}(x)$.
Given two states $x,y\in S$, $x$ is \emph{failure equivalent} to $y$ ($x\sim_{{\Fs}}y$) 
if and only if ${\Fs}(x)= {\Fs}(y)$.

In short, the coalgebraic modelling of failure semantics in Section~\ref{sec:ready-fail} is as follows.
First, the states of an LTS $(S,t)$ are decorated by means of the output function $o\colon S \to \pow(\pow(A))$ defined as 
\begin{equation}
\label{eq:output-fail-2}
o(x)=\Fail(t(x)).
\end{equation} 
Then, the decorated LTS $(S,<o,t>)$ is translated, using the
generalised powerset construction from Section~\ref{prelim:gen-pow}, into a
Moore machine $(\pow(S),<o^{\sharp},t^{\sharp}>)$ with
$o^{\sharp}\colon \pow(S) \to \pow(\pow(A))$ and $t^{\sharp} \colon
\pow(S) \to \pow(S)^A$ defined for all $X\in \pow(S)$ as
\begin{align} 
\label{def:Moore-fail}
  o^{\sharp} (X) &= \bigsqcup_{x\in X} o(x) &
  t^{\sharp} (X)(a) &= \bigsqcup_{x\in X}t(x)(a)
\end{align} 
where, in the left equation, $\sqcup$ denotes the union
of subsets in $\pow(\pow (A))$. Note that
$(\pow(S),<o^{\sharp},t^{\sharp}>)$ is a Moore machine with outputs in
$\pow(\pow(A))$. The map into the final Moore coalgebra $\bb{-}\colon \pow(S) \to
(\pow(\pow(A)))^{A^\star}$ associates to a set of states their
``behaviours''. The latter are in one-to-one correspondence with
failure pairs. More explicitly, for all $x\in S$, $Z\in \pow(\pow(A))$ and $w\in A^*$:
\begin{equation}
\label{lemma:corr}
Z\in \bb{\set{x} }(w) \text{ iff } (w,Z)\in {\Fs}(x).
\end{equation}

Hence,
\begin{equation}
\label{prop:fail-beh-equiv}
 x\sim_{\Fs} y \text{ iff } \bb{\set{x}} = \bb{ \set{y}}.
\end{equation}

The trace-based characterisation of must testing in~\cite{CleavelandH89} leads to a similar coalgebraic representation via the generalised powerset construction. In Section~\ref{sec:fail-must} we modelled LTS's with internal behaviour and divergence as coalgebras $(S, t\colon S \rightarrow (1 + \pow(S))^{A})$ such that, for all $x\in S$ and $a\in A$
\begin{align}
\label{def:div-LTS-trans}
  t(x)(a) =    \top, \textnormal{ if } x \not \downarrow a \qquad  t(x)(a) =
    \set{y \mid x \xRightarrow{a} y},  \text{ otherwise}.
\end{align}
Recall that $\xRightarrow{a}$ denotes the execution of an action $a\in A$, possibly preceded or followed by (any number of) internal steps $\tau$, $\downarrow$ is the convergence predicate, and $1 = \set{\top}$ is used to coalgebraically ``capture'' divergent behaviours.

Then, we decorate such LTS's by means of a function $o\ct S \rightarrow 1 + \pow(\pow A)$ such that, for all $x \in S$ and $a \in A$
\begin{align}
\label{def:div-LTS-dec}
  o(x) &=
  \begin{cases}
    \top& \textnormal{ if } x \not \downarrow\\
    \bigcup_{x\xrightarrow{\tau}x'} o(x')& \textnormal{ if } x
    \xrightarrow{\tau}\\
    \Fail(t(x)) & \text{ otherwise}.
  \end{cases} 
\end{align}

Finally, we apply the generalised powerset construction and derive a Moore machine $(1+\pow(S), \langle o^{\sharp}, t^{\sharp}\rangle)$ defined for all $x \in 1 + \pow(S)$ and $a \in A$ as
\begin{align}
\label{def:Moore-must}
  o^{\sharp} (X) &=
  \begin{cases}
    \top & \text{ if } X = \top\\
    \bigsqcup_{x\in X} o(x) & \text{ if } X \in \pow(S)\\
  \end{cases}&
  t^{\sharp} (X)(a) &= 
  \begin{cases}
    \top & \text{ if } X = \top\\
    \bigsqcup_{x\in X} t(x)(a) & \text{ if } X \in \pow(S)\\
  \end{cases}
\end{align}

The state space $(1+\pow(\pow(A))^{A^{*}}$ of the final Moore coalgebra carries the structure of a join semilattice with top, inducing a partial order $\sqsubseteq_{\mathcal M}$. This, together with the behaviour map $\bb{-}\ct 1 + \pow(S) \rightarrow (1+\pow(\pow(A))^{A^{*}}$ further enabled formalising must testing preorder and equivalence, respectively, as follows:
\begin{equation}
\label{eq:formalis-must}
\begin{array}{l}
x \mst y \text{ iff } \bb{\set{y}} \sqsubseteq_{\mathcal M} \bb{\set{x}}\\
x \sim_{mst} y  \text{ iff } \bb{\set{x}} = \bb{\set{y}}.
\end{array}
\end{equation}

\subsection{\HKC\ for failure semantics}\label{ssec:hkcFailure}
The algorithm \HKC\ in Figure~\ref{fig:hkc} can be used to check failure equivalence on an LTS $(S,t)$ by taking $o^{\sharp}$ and $t^{\sharp}$ as defined in~(\ref{def:Moore-fail}). Then, the congruence closure $c$ is defined as for
language equivalence in~(\ref{eq:congclos}). The analogue of
Proposition~\ref{prop:naive:correct} can be proved in exactly the same
way (check Section~\ref{app:proofofcorrectness}): in particular, soundness of bisimulation up-to-congruence is guaranteed from the fact that $(\pow(S),<o^{\sharp},t^{\sharp}>)$ is a bialgebra.

We provide an example of using bisimulation up-to congruence for reasoning on failure semantics.
Consider the following systems, where $n$ is an arbitrary natural number:
\[
\xymatrix@C=.8cm@R=.1cm{
& v_{1}\ar@(ul,ur)^{a,b}\ar[r]^{b}\ar@/^.7pc/[dd]^{a} & v_{2}\ar[r]^{b}\ar@(ul,ur)^{a,b}\ar@/^.7pc/[dd]^{a} & \ldots\ar[r]^{b} & v_{n}\ar@(ul,ur)^{a,b}\ar@/^.7pc/[dd]^{a} & &&\\
x\ar@(ul,ur)^{a,b}\ar[ur]^{b}\ar[dr]_{a} & &  & & & & y\ar@(ul,ur)^{a,b}\\
& u_{1}\ar@(dl,dr)_{a,b}\ar[r]_{a}\ar@/^.7pc/[uu]^{b} & u_{2}\ar@(dl,dr)_{a,b}\ar[r]_{a}\ar@/^.7pc/[uu]^{b} & \ldots\ar[r]_{a} & u_{n}\ar@(dl,dr)_{a,b}\ar@/^.7pc/[uu]^{b} & &&
}
\]
It is easy to see that $x$ and $y$ are bisimilar: intuitively, all the states of the automata depicted above can trigger actions $a$ and $b$ as a first step and, moreover, all their subsequent transitions lead to states with the same behaviour. Therefore $x$ and $y$ are also $\Fs$-equivalent, according to van Glabbeek's lattice of semantic equivalences~\cite{Glabbeek01} (partially) illustrated in Figure~\ref{fig:lattice} in Chapter~\ref{ch:dec-trace-testing}. 

The coalgebraic machinery provides a ``yes'' answer with respect to\linebreak $\Fs$-equivalence of the two LTS's as well. After determinisation, $\{x\}$ can reach all states of shape: $\{x\} \cup \overline{u}_{i}$, $\{x\} \cup \overline{v}_{i}$, $\{x\} \cup \overline{u}_{i} \cup \overline{v}_{i}$, for $i \in \{1,\ldots, n\}$ and $\{x\} \cup \overline{u}_{j} \cup \{v_{1}\}$, $\{x\} \cup \overline{v}_{j} \cup \{u_{1}\}$, respectively, for $j \in \{2,\ldots, n\}$. (We write, for example, $\overline{u}_{i}$ in order to represent the set $\{u_{1}, u_{2}, \ldots, u_{i}\}$.)

Consequently, the generalised powerset construction associates to $x$ a Moore automaton consisting of $5n -1$ states, whereas the determinisation of $y$ has only one state.
Hence, the (Moore) bisimulation relation $R$ including $(\{x\}, \{y\})$ consists of $5n-1$ pairs as follows:
\begin{equation}
\label{eq:bisim-up-to-LTS}
\begin{array}{rcl}
R & = & \{(\{x\}, \{y\})\} \cup\\
&& \{(\{x\} \cup \overline{u}_{i} \cup \{v_{1}\}, \{y\}),\, (\{x\} \cup \overline{v}_{i} \cup \{u_{1}\}, \{y\}) \mid i \in \{2, \ldots, n\}\}\, \cup \\
&& \{(\{x\} \cup \overline{u}_{i}, \{y\}),\, (\{x\} \cup \overline{v}_{i}, \{y\}),\, (\{x\} \cup \overline{u}_{i} \cup \overline{v}_{i}, \{y\}) \mid i \in \{1, \ldots, n\}\}.
\end{array}
\end{equation}
%
For a better intuition, we illustrate bellow the determinisations starting from $x$ and $y$, for the case $n=3$:
\[
\xymatrix@C=.3cm@R=.1cm{
& & \{x\}\ar[dl]_{a}\ar[dr]^{b} & &\{y\}\ar@(dr,ur)_{a,b}  \\
& \{x, u_{1}\}\ar[dl]_{a}\ar[dr]^{b} & & \{x, v_{1}\}\ar[dl]_{a}\ar[dr]^{b} &\\
\{x,u_{1},u_{2}\}\ar@(dl,dr)_{a}\ar@/^-2.5pc/[ddrr]_{b} & & \{x,u_{1},v_{1}\}\ar[dl]_{a}\ar[dr]^{b} & & \{x,v_{1},v_{2}\}\ar@(dl,dr)_{b}\ar@/^2.7pc/[ddll]^{a}\\
& \{x,u_{1},u_{2},v_{1}\}\ar@(dl,dr)_{a}\ar[dr]^{b} & & \{x,u_{1},v_{1},v_{2}\}\ar@(dl,dr)_{b}\ar[dl]_{a} &\\
&  & \{x, u_{1},u_{2},v_{1},v_{2}\}\ar@(dl,dr)_{a,b} & &
}
\]

It is easy to see that the bisimulation relating $\{x\}$ and $\{y\}$ consists of all pairs $(X, \{y\})$, with $X$ ranging over the state space of the Moore automaton derived according to the generalised powerset construction, starting with $\{x\}$.

Observe that all the pairs in $R$ in~(\ref{eq:bisim-up-to-LTS}) can be ``generated'' from $(\{x\}, \{y\})$, $(\{x\} \cup \overline{u}_{i}, \{y\})$ and $(\{x\} \cup \overline{v}_{i}, \{y\})$ by iteratively applying the rules in~(\ref{eq:congclos}). Therefore, for an arbitrary natural number $n$, the bisimulation up-to congruence stating the equivalence of $x$ and $y$ is:
\[
R^{c}  = \{(\{x\}, \{y\})\}\, \cup \{(\{x\} \cup \overline{u}_{i}, \{y\}),\, (\{x\} \cup \overline{v}_{i}, \{y\}) \mid i \in \{1, \ldots, n\}\}
\]
and consists of only $2n+1$ pairs. The latter represent exactly the states explored by {\HKC}.

\subsection{\HKC\ for must semantics}\label{ssec:HKCmust}
The coalgebraic characterisation of must testing
guarantees soundness and completeness of bisimulation up-to congruence
for the associated equivalence. Bisimulations are now relations $R\subseteq (1
+\pow(S))\times (1+\pow(S))$ on the state space $1+\pow(S)$ where
$o^{\sharp}$ and $t^{\sharp}$ are defined as in
(\ref{def:Moore-must}). Now, the congruence closure $c(R)$ of
a relation $R\subseteq (1 +\pow(S))\times (1+\pow(S))$ is defined by
the rules in~\eqref{eq:congclos} where $\sqcup$ is the join in
$(1+\pow(S))$ (rather than the union in $\pow(S)$). By simply
redefining $o^{\sharp}$, $t^{\sharp}$ and $c(R)$, the algorithm in
Figure~\ref{fig:hkc} can be used to check must equivalence and
preorder (the detailed proof is in Section~\ref{app:proofofcorrectness}).

Consider, for an example, the LTS's in Section~\ref{sec:fail-must}:
\begin{equation}
\label{eq:LTSMUST23}
\xymatrix@C=.7cm@R=.4cm{
x \ar@/^1pc/[rr]|{\;b\;} \ar[r]^{a} \ar@/^/[d]^{a} & x_2 \ar[d]^{\tau} \ar[r]^b & x_4  \ar@(dr,ur)_{\tau}\\
x_1 \ar@/^/[u]^a \ar@/_.9pc/[rr]|{\;b\;}  & x_3 \ar[l]_a \ar[r]^b & x_5 \ar[u]^{\tau} \\
}\;\;\;
\xymatrix@C=.7cm@R=.4cm{
y\ar@(dl,ul)^{a}\ar[r]^{b} & y_{1}\ar@(dr,ur)_{\tau}
}
\end{equation}

In Section~\ref{sec:fail-must} we showed that the states $x$ and $y$ are must equivalent, by identifying a bisimulation relating $\set{x}$ and $\set{y}$.
This time however, we depict by the dashed lines in~(\ref{eq:DETMUST2}) the relation
$R=\set{(\set{x},\set{y}), (\set{x_{1}, x_{2},x_{3}}, \set{y})}$
which is not a
bisimulation, but a bisimulation up-to congruence, since both
$(\top,\top)\in c(R)$ and $(\set{x,x_1},\set{y})\in c(R)$. For the
latter, observe that
\[\set{x,x_1} \; c(R) \; \set{y,x_1} \; c(R) \; \set{x_1,x_2,x_3} \; c(R) \; \set{y}.\]
\begin{equation}\label{eq:DETMUST2}
{\xymatrix@C=1cm@R=.3cm{
{\top} \ar@(ul,ur)^-{a,b} & {\set{x}} \ar@{--}[d] \ar[l]|{\;b\;} \ar[r]^-{a} & {\set{x_{1},x_{2},x_{3}}} \ar@{--}[dl] \ar@/_{1.0pc}/[ll]|{\;b\;}\ar[r]^-{a} & {\set{x,x_1}}  \ar@/_{1.6pc}/[lll]|{\;b\;} \ar[r]^-{a} & {\{x,x_{1},x_{2},x_{3}\}}  \ar@/_{2.2pc}/[llll]|{\;b\;}\ar@(ur,ul)_{a} \\
{\top} \ar@(dl,dr)|-{a,b} & {\set{y}} \ar@(dl,dr)|{\;a\;}\ar[l]^{b}
}}
\end{equation}
It is important to remark here that \HKC\ computes this relation without the need of exploring all the reachable part of the Moore machine $(1+\pow(S),<o^{\sharp},t^{\sharp}>)$. So, amongst all the states in~\eqref{eq:DETMUST2}, \HKC\ only explores $\set{x}$, $\set{y}$ and $\set{x_{1}, x_{2},x_{3}}$.

\subsection{Correctness of \HKC\ }
\label{app:proofofcorrectness}
We provide a uniform proof of correctness of \HKC\ in Figure~\ref{fig:hkc} for language, failure and must semantics (Proposition~\ref{prop:naive:correct}).
The key step is (the analogue) of Theorem~\ref{thm:upto:c} stating that bisimulation up-to congruence is a sound proof technique. This holds for any bialgebra (see e.g. Corollary 6.6 in~\cite{DBLP:conf/sofsem/RotBR13}) and, in particular, for $(\pow(S),<o^{\sharp},t^{\sharp}>)$ (or $(1+\pow(S),<o^{\sharp},t^{\sharp}>)$) which is guaranteed to be a bialgebra by the generalised powerset construction (we refer the interested reader to~\cite{DBLP:journals/tcs/Klin11} for a nice introduction on this topic). 

  We first observe that if \hkc X Y returns \texttt{true}
  then the relation $R$ that is built before arriving to step
  \texttt{3} is a bisimulation up-to congruence. Indeed, the following proposition is
  an invariant for the loop corresponding to step~\texttt{2}:
  \begin{align*}
    R\prog c(R\cup todo)
  \end{align*}
  This invariant is preserved since at any iteration of the algorithm,
  a pair $(X',Y')$ is removed from $todo$ and inserted in $R$ after
  checking that $o^{\sharp}(X')=o^{\sharp}(Y')$ and adding $(t^{\sharp}(X')(a),t^{\sharp}(Y')(a))$ for all
  $a\in A$ in $todo$.  Since $todo$ is empty at the end of the loop,
  we eventually have $R\prog c(R)$, i.e., $R$ is a bisimulation up-to congruence.  

  We now prove that if \hkc X Y returns \texttt{false}, then
  $\bb{X}\neq \bb{Y}$. Note that for all $(X',Y')$ inserted in $todo$,
  there exists a word $w\in A^\star$ such that, in the determinised NDA, $X \tr{w}X'$ and $Y \tr{w}Y'$. 
  Since $o^{\sharp}(X')\neq o^{\sharp}(Y')$, then $\bb{X}(w)=o^{\sharp}(X')\neq o^{\sharp}(Y')
  = \bb{Y}(w)$.

\subsection{Brzozowski's algorithm for failure semantics}
\label{sec:brzozowski-failure}
A variation of Brzozowski's algorithm for Moore machines is given
in~\cite{Bonchi:2012:BAA:2340820.2340823}. We could apply such
algorithm to the Moore machine $(\pow(S),<o^{\sharp},t^{\sharp}>)$
which is induced by a decorated LTS $(S,<o,t>)$, with $o$ defined as in~(\ref{eq:output-fail-2}). However, we propose a
more efficient variation that skips the first determinisation from
$(S,<o,t>)$ to $(\pow(S),<o^{\sharp},t^{\sharp}>)$.

The novel algorithm consists of the four steps described in Section~\ref{brzo}, where the procedure ``reverse and determinise'' is modified as follows: $(S,<o,t>)$ with initial state $I$ is transformed into
$(\pow(\pow(A))^S, \overline{o}_R, \overline{t}_R)$ 
where 
\[\overline{o}_{R} \colon \pow(\pow(A))^S \to \pow(\pow(A))\] and \[\overline{t}_{R} \colon \pow(\pow(A))^S \to (\pow(\pow(A))^S)^A\] are defined for all functions $\psi\in \pow(\pow(A))^S$ as
\begin{align}\label{eq:BrzoFailMoore}
  \overline{o}_{R}(\psi)  &=  \bigsqcup_{x\in I}\psi(x) &
  \overline{t}_R(\psi)(a)(x)  &=  \bigsqcup_{y \in t(x)(a)} \psi(y)
\end{align}
and the new initial state is 
$\overline{I}_{R}  = o$. 

Note that the result of this procedure is a Moore machine. Brzozowski's algorithm in Figure~\ref{fig:hkc} transforms an NDA $(S,<o,t>)$ with initial state $I$ into the minimal DA for $\bb{I}$. Analogously, our novel algorithm transforms an LTS into the minimal Moore machine for $\bb{I}$. 

Let us illustrate the minimisation procedure by means of an example.
Consider the LTS $(S, t)$ on the alphabet $A=\set{a,b,c}$ depicted
below. \[
 \xymatrix@C=1.5cm@R=.00cm{
  & q & u & o(p) =  \set{0}\quad o(s)  = \set{0} \\
  p\ar@(dl,ul)^{a}\ar@/^.6pc/[ur]^{b}\ar@/^-.8pc/[dr]_{c}\ar@/^.6pc/[r]|{\;a\;} & s\ar@/^.6pc/[l]|{\;a\;}\ar@{->}[ur]^{b}\ar@{->}[dr]_{c} &&o(q) =  \pow (A) \quad o(u)  =  \pow (A)\\
 & r & v & o(r)  =  \pow (A)\quad o(v)  =  \pow (A)
}
\]
The function $o \colon S \to \pow (\pow (A))$ assigning to each state
$x$ the set $\Fail(t(x))$ is given on the right.
Suppose we want to build the minimal Moore machine for the behaviour
$\bb{\set{p}}$, i.e., the set of failure pairs of $p$:
$${\Fs}(p) = \set{
 (a^{*}, \set{0}),
 \,(a^{*}b, \pow(A)),
 \,(a^{*}c, \pow(A))}.$$
 By applying our algorithm to the decorated LTS $(S,<o,t>)$, we first
 obtain the intermediate Moore machine on the left below,
 where a double arrow $\psi \Rightarrow Z$ means that the output of
 $\psi$ is the set $Z$.
 The new initial state is $\psi_1 \colon S \rightarrow \pow(\pow A)$
 which, by definition, is the output function of the original LTS
 mapping $p, s$ to $\set{0}$ and $q, r, u$ and $v$ to $\pow(A)$. The
 explicit definitions of the other functions $\psi_i$ can be easily
 computed according to the definition of $\overline{t}_R$~\eqref{eq:BrzoFailMoore}.
\begin{align*}
  \xymatrix@C=.4cm@R=.7cm{
    \set{0}& \psi_{1}\ar@{->}[rr]^{a}\ar@{->}[d]_{b, c}\ar@{=>}[l] && \psi_{2}\ar@{->}[d]^{b, c}\ar@<-.3ex>@(l,d)|{\;a\;}\ar@{=>}[r] &\set{0}\\
    \pow(A) & \psi_{3}\ar@(dl,dr)|{\;a\;}\ar@{->}[rr]_{b,
      c}\ar@{=>}[l] && \psi_{4}\ar@(dl,dr)|{a,b,c}\ar@{=>}[r] & 0 }
  &&
  \xymatrix@C=.5cm@R=.3cm{
    & \alpha_{1}\ar@(dl,ul)^{a}\ar@{->}[r]_{b,c}\ar@{=>}[d] & \alpha_{2}\ar@{->}[r]_-{a,b,c}\ar@{=>}[d] & \alpha_{3}\ar@(dr,ur)|{a,b,c}\ar@{=>}[d]\\
    & \set{0} & \pow(A) & 0 } 
\end{align*}
Observe that $\bb{\psi_1}$ is the ``reverse'' of $\bb{\set{p}}$. For
instance, triggering a sequence in the language denoted by $ba^{*}$ from $\psi_1$ leads to
$\psi_{3}$ with output $\pow(A)$; this is the same output we get by
executing $a^{*}b$ from $p$, according to ${
\Fs}(p)$.
Executing ``reverse and determinise'' once more (step \texttt{3}) and
taking the reachable part (step \texttt{4}), we obtain the minimal
Moore machine depicted on the right, with initial state $\alpha_1$.

The correctness of this algorithm is established in
Section~\ref{sec:proof:Brzozowski}; it builds on the coalgebraic
perspective on Brzozowski's algorithm given
in~\cite{Bonchi:2012:BAA:2340820.2340823}.

\subsection{Correctness of Brzozowski for failure semantics}\label{sec:proof:Brzozowski}
The main intuition behind Brzozowski algorithm is that the procedure \texttt{reverse and determinise} transforms a system into one having the ``reversed'' behaviour. Moreover, if the original system is \emph{reachable} (that is all the states are reachable from the initial state), then the resulting system is \emph{observable} (that is, all the states have different behaviours). Therefore, after performing the first two steps of Brzozowski's algorithm, one obtains a system which is reachable and that has a ``reversed'' behaviour. After the third step, the system has the original behaviour and, moreover, it is observable. After the fourth step, it is observable and reachable, that is, it is \emph{minimal}.

There are two key steps for our proof: (a) showing that the procedure \texttt{reverse and determinise} introduced in Section \ref{sec:brzozowski-failure} transforms a decorated LTS into a Moore machine having ``reversed'' behaviour; (b) showing that at the third step the algorithm  transforms a reachable Moore machine into an observable one.

Point (b) follows immediately from Section 5 in \cite{Bonchi:2012:BAA:2340820.2340823}, where a variation of Brzozowski's algorithm for Moore machines is introduced: 
when restricted to Moore machines, the operations of reversing and determinising in the our algorithm coincide with those in \cite{Bonchi:2012:BAA:2340820.2340823}. 

In the sequel, we prove (a) by relying on \cite{Bonchi:2012:BAA:2340820.2340823}.
Let $(S,t)$ be an LTS with the initial set of initial states $i\colon 1 \to \pow(S)$ (we prefer to use this functional notation, rather than $I\in \pow(S)$, because it is more convenient for the proof).
Let $(\pow(S),<o^{\sharp},t^{\sharp}>)$ be the corresponding Moore machine (as defined in~(\ref{def:Moore-fail})) and 
let $\bb{-}\colon \pow(S) \to (\pow(\pow(A)))^{A^\star}$ be the induced semantics map.

\medskip

By reversing and determinising as in \cite{Bonchi:2012:BAA:2340820.2340823}, we obtain the Moore machine
\[(\pow(\pow(A))^{\pow S}, <o_R^{\sharp} t^{\sharp}_R>)\]
with initial states $i_R$, defined as
\begin{equation}
\label{eq:def-reversed-Moore}
i_{R} =  o^{\sharp} \qquad \qquad
o_{R}^{\sharp}(\varphi)  =  \varphi \circ i \qquad\qquad
t_{R}^{\sharp}(\varphi)(a)(X)  =  \varphi(t^{\sharp}(X)(a)). 
\end{equation}
According to \cite{Bonchi:2012:BAA:2340820.2340823}, we know that this machine has ``reversed'' behaviour, \textit{i.e}, 
\begin{equation}
\label{eq:acc-rev-formalise}
(\forall w \in A^{*})\,.\, \bb{i_{R}}_{1}(w) = \bb{i}(w^{R})
\end{equation}
where 
\[\bb{-}_{1}\colon \pow(\pow(A))^{\pow S} \to (\pow(\pow(A)))^{A^\star}\] is the semantic map, and $w^{R}$ denotes the reverse word $w$ inductively defined as $\varepsilon^{R}  =  \varepsilon$ and 
$(aw')^{R}  =  w'^{R}a$. 

Our algorithm performs the determinisation and the ``reverse and determinise'' at once.
For a Moore machine defined as in~(\ref{eq:BrzoFailMoore},)
the map to the final coalgebra 
\[\bb{-}_{2}\colon \pow(\pow(A))^S \to (\pow(\pow(A)))^{A^\star}\]  satisfies the following lemma.

\begin{lemma}
\label{lm:lemma-psi-phi}
Let $\psi \in (\pow(\pow A))^{S}$ and $\varphi \in (\pow(\pow A))^{\pow S}$ be such that, for all $X \in \pow S$ 
\[
\varphi(X) = \sqcup_{x \in X} \psi(x).~~~~~~(\star)
\]
Then, $\bb{\psi}_{2} = \bb{\varphi}_{1}$.
\end{lemma}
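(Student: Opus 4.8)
The plan is to prove Lemma~\ref{lm:lemma-psi-phi} by coinduction, showing that the relation
\[
R = \{(\bb{\psi}_{2}, \bb{\varphi}_{1}) \mid \psi\in (\pow(\pow A))^{S},\ \varphi\in (\pow(\pow A))^{\pow S},\ (\star)\text{ holds}\}
\]
is a Moore bisimulation on the final coalgebra $(\pow(\pow(A)))^{A^\star}$, hence contained in the diagonal. Equivalently, and more directly, I would prove by induction on words $w\in A^{*}$ that for all $\psi$ and $\varphi$ satisfying $(\star)$ we have $\bb{\psi}_{2}(w) = \bb{\varphi}_{1}(w)$. The key observation to set up is that the condition $(\star)$ is \emph{invariant under taking derivatives}: if $\varphi(X)=\sqcup_{x\in X}\psi(x)$ for all $X\in\pow S$, then for each $a\in A$ the pair $((\psi)_a, (\varphi)_a)$ — where $(\psi)_a = \overline{t}_R(\psi)(a)$ computed via~(\ref{eq:BrzoFailMoore}) and $(\varphi)_a = t_R^{\sharp}(\varphi)(a)$ computed via~(\ref{eq:def-reversed-Moore}) — again satisfies $(\star)$.

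Concretely, first I would establish the \emph{base case}: by the definition of the semantic map of a Moore machine (the equations in Example~\ref{eg:Moore}), $\bb{\psi}_{2}(\varepsilon) = \overline{o}_R(\psi) = \bigsqcup_{x\in I}\psi(x)$ and $\bb{\varphi}_{1}(\varepsilon) = o_R^{\sharp}(\varphi) = \varphi\circ i = \varphi(I) = \bigsqcup_{x\in I}\psi(x)$, the last equality by $(\star)$ applied to $X = I$; these coincide. For the \emph{induction step}, fix $aw'$ with $a\in A$; then $\bb{\psi}_{2}(aw') = \bb{\overline{t}_R(\psi)(a)}_{2}(w')$ and $\bb{\varphi}_{1}(aw') = \bb{t_R^{\sharp}(\varphi)(a)}_{1}(w')$, so it suffices to check that $\psi' := \overline{t}_R(\psi)(a)$ and $\varphi' := t_R^{\sharp}(\varphi)(a)$ again satisfy $(\star)$, and then invoke the induction hypothesis. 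This amounts to the computation
\[
\varphi'(X) = t_R^{\sharp}(\varphi)(a)(X) = \varphi(t^{\sharp}(X)(a)) = \varphi\Bigl(\bigsqcup_{x\in X} t(x)(a)\Bigr) = \bigsqcup_{x\in X}\bigsqcup_{y\in t(x)(a)}\psi(y) = \bigsqcup_{x\in X}\overline{t}_R(\psi)(a)(x) = \bigsqcup_{x\in X}\psi'(x),
\]
using the definition of $t^{\sharp}$ from~(\ref{def:Moore-fail}), the fact that $\varphi(\bigsqcup_i X_i) = \bigsqcup_i \varphi(X_i)$ (which follows from $(\star)$, since $\varphi$ is the join-extension of $\psi$ along union of subsets), the hypothesis $(\star)$, and the definition of $\overline{t}_R$ in~(\ref{eq:BrzoFailMoore}). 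This closes the induction.

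I do not expect a genuinely hard obstacle here; the one point requiring a little care is making sure that the ``reverse-and-determinise'' operations referred to in the two displays~(\ref{eq:def-reversed-Moore}) and~(\ref{eq:BrzoFailMoore}) are being applied to the right objects — in~(\ref{eq:def-reversed-Moore}) one first determinises the LTS to $(\pow S, \langle o^\sharp, t^\sharp\rangle)$ and then reverses, whereas~(\ref{eq:BrzoFailMoore}) performs both at once on $(S,\langle o,t\rangle)$ — so that the linking condition $(\star)$ is exactly the statement that the ``skipped'' determinisation step is absorbed. Once this bookkeeping is made explicit, the argument is the routine induction above; as a corollary, combining Lemma~\ref{lm:lemma-psi-phi} with~(\ref{eq:acc-rev-formalise}) and the known Moore-machine case from~\cite{Bonchi:2012:BAA:2340820.2340823} gives that our optimised ``reverse and determinise'' for decorated LTS's produces a Moore machine with the reversed behaviour, which is point (a) needed for the correctness of Brzozowski's algorithm for failure semantics.
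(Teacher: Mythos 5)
Your proposal is correct and is essentially the paper's own proof: an induction on words $w\in A^{*}$, with the base case computed via $(\star)$ applied to the initial set, and the inductive step resting on the observation that $(\star)$ is preserved by the $a$-derivatives $\overline{t}_R(\psi)(a)$ and $t_R^{\sharp}(\varphi)(a)$ (the paper calls these $\overline{\psi}_a$ and $\overline{\varphi}_a$), established by exactly the computation you give. The only cosmetic difference is your initial coinductive framing, which you immediately replace by the same word induction the paper uses.
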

\begin{proof}
 The proof is by induction on $w \in A^{*}$. For the base case, $w = \varepsilon$, we have:
$$
\bb{\psi}_{2}(\varepsilon)      =  \overline{o}_{R}(\psi)  
= \bigsqcup_{x\in i} \psi (x)  \stackrel{(\star)}=  \varphi (i) = o^{\sharp}_{R}(\varphi)    =  \bb{\varphi}_{1}(\varepsilon) 
$$

For the inductive step, consider $w \in A^{*}$ and assume that $\bb{\psi}_{2}(w) = \bb{\varphi}_{1}(w)$ 
holds for all $\psi, \varphi$ satisfying $(\star)$.

We want to prove that $\bb{\psi}_{2}(aw) = \bb{\varphi}_{1}(aw)$ holds for $a \in A$. We first define $
\overline{\varphi}_{a}(X)  =  \varphi(t^{\sharp}(X)(a))$ and $
\overline{\psi}_{a}(x)  =  \sqcup_{y \in t(x)(a)} \psi(y)$, 
where $X \in \pow S$ and $x \in S$ (which, as an intuition, will further be used when applying the induction hypothesis in our proof).

Note that $(\star)$ is satisfied by $\overline{\varphi}_{a}$ and $\overline{\psi}_{a}$:
$\overline{\varphi}_{a}(X) = \bigsqcup\limits_{x \in X} \overline{\psi}_{a}(x)$, 
because
$$
 \overline{\varphi}_{a}(X)
= \varphi(t^{\sharp}(X)(a))
= \varphi(\bigsqcup_{x \in X}t(x)(a))t
\stackrel{(\star)}= \bigsqcup_{x \in X} \bigsqcup_{y \in t(x)(a)} \psi(y)
= \bigsqcup_{x \in X} \overline{\psi}_{a}(x).
$$
At this point it is easy to see that $\bb{\psi}_{2}(aw) = \bb{\varphi}_{1}(aw)$:
\begin{align*}
 \bb{\varphi}_{1}(aw)
=~& \bb{\lambda X \,.\, \varphi(t^{\sharp}(X)(a))}_{1}(w) &\text{(by definition of $t_{R}^{\sharp}$)}\\
=~& \bb{\lambda X \,.\, \overline{\varphi}_{a}(X)}_{1}(w)\\
=~& \bb{\lambda x\,.\, \overline{\psi}_{a}(x)}_{2}(w) &\text{(by the induction hypothesis)}\\
=~& \bb{\lambda x\,.\, \sqcup_{y \in t(x)(a)} \psi(y)}_{2}(w)\\
=~& \bb{\overline{t}_{R}(\psi)(a)}_{2}(w) = \bb{\psi}_{2}(aw).  &&\ \  \qed
\end{align*}
\end{proof}

In particular, if we take $\psi= \overline{I}_{R}$ and $\varphi  =  o^{\sharp}$, we have that 
$\bb{\overline{I}_{R}}_{2} = \bb{o^{\sharp}}_{1}$.
By~(\ref{eq:acc-rev-formalise}) and the fact that and $i_{R} = o^{\sharp}$ the following holds:
\[
(\forall w \in A^{*})\,.\,\bb{o^{\sharp}}_{1}(w) = \bb{i}(w^{R}).
\]
summarising, for all $w\in A^*$, $\bb{\overline{I}_{R}}_{2}(w) = \bb{i}(w^R)$.

For an example of this fact, observe that $p$ and $\psi_1$ in Section \ref{sec:brzozowski-failure} have reversed behaviours.

\subsection{Brzozowski's algorithm for must semantics}
\label{sec:brzoz-must}
The Brzozowski algorithm introduced in Section~\ref{sec:brzozowski-failure} for failure equivalence can be used also for checking $\sim_{mst}$ and $\mst$. Now, the procedure ``reverse and determinise'' returns the Moore machine $((1+\pow(\pow(A)))^S, \overline{o}_R, \overline{t}_R)$. The initial state $\overline{I}_{R}$, the outputs 
\[\overline{o}_{R} \colon (1+\pow(\pow(A)))^S \to 1+\pow(\pow(A))\]  and the transitions 
\[\overline{t}_{R} \colon 
(1+\pow(\pow(A)))^S \to ((1+\pow(\pow(A)))^S)^A\] are defined as in~\eqref{eq:BrzoFailMoore}, plus the case 
\[\overline{t}_R(\psi)(a)(x)  =  \top \text{ if } t(x)(a) = \top,\]
by replacing $o$ and $t$ with those defined in~(\ref{def:div-LTS-dec}) and~(\ref{def:div-LTS-trans}), and by considering the join operation $\sqcup$ in $1+\pow(\pow (A))$ (rather than in $\pow(\pow (A))$).
      
In what follows, we illustrate Brzozowski's algorithm for must testing, by means of an example. Consider the divergent LTS $(S,t)$ below:
\[
\xymatrix@C=.9cm@R=.26cm{
x_{1}\ar@/^.6pc/[dd]^{a}\ar[dr]^{a} & & & &\\
& x_{3}\ar[r]^{b} & x_{4}\ar[r]^{a,b} & x_{5}\ar@(dr,ur)_{\tau}\\
x_{2}\ar@/^.6pc/[uu]^{a}\ar[ur]_{a} & & & &\\
}
\]
and $o \colon S \rightarrow 1 + \pow(\pow(A))$ the decoration function
\[
\begin{array}{rclcl}
o(x_{1}) &=& o(x_{2}) &=& \set{b}\\
o(x_{3}) &=& \set{a} &&\\
o(x_{4}) &=& \set{0} &&\\
o(x_{5}) &=& \top. &&
\end{array}
\]

Assume we want to build the minimal Moore machine for the behaviour of $x_{1}$, which is must testing equivalent with $x_{2}$. By applying our algorithm to the decorated LTS $(S, \langle o,t\rangle)$ we obtain the following intermediate Moore machine:

\[
\xymatrix@C=.2cm@R=.5cm{
\set{0, \set{b}} & \set{0, \set{a}, \set{b}} & 0\\
\psi_{1}\ar@{=>}[u]\ar[r]^{a}\ar[d]_{b} & \psi_{2}\ar@{=>}[u]\ar[r]^{b}\ar@(dl,dr)^{a} & \psi_{4}\ar@(dr,ur)_{b}\ar@{=>}[u]\ar[d]_{a} \\
\psi_{3}\ar@{=>}[d]\ar@/_1.5pc/[urr]_{b}\ar@(dl,ul)^{a} &  & \psi_{5}\ar@/_1.5pc/[u]_{b}\ar@(dr,ur)_{a}\ar@{=>}[d]\\
0 &  & \top
}
\]

Observe that $\bb{\psi_{1}}$ is the ``reverse'' of $\bb{\set{x_{1}}}$. For instance, each sequence $w$ in the language denoted by $ba^{*}$ determines, when triggered from $\psi_{1}$, the output $0$, which coincides with the (empty) set of actions that the automaton can fail to execute after performing $w$. Finally, we execute \texttt{reverse} and \texttt{determinise} and get the following minimal Moore automaton (with initial state $\alpha_{1}$):

\[
\xymatrix@C=.2cm@R=.5cm{
\set{0, \set{b}} & \set{0, \set{a}, \set{b}} & 0\\
\alpha_{1}\ar@{=>}[u]\ar[r]^{a}\ar[d]_{b} & \alpha_{2}\ar@{=>}[u]\ar[r]^{b}\ar@(dl,dr)^{a} & \alpha_{4}\ar@{=>}[u]\ar[d]^{a,b}\\
\alpha_{3}\ar@{=>}[d]\ar@(dl,ul)^{a,b} & & \alpha_{5}\ar@{=>}[d]\ar@(dr,ur)_{a,b}\\
0 & & \top
}
\]

Remark that the behaviours of the must equivalent states $x_{1}$ and $x_{2}$ have been ``collapsed'' into $\alpha_{1}$.

\subsection{Correctness of Brzozowski's algorithm for must semantics}\label{sec:proof:Brzozowski-must}
In this section we show the correctness of Brzozowski's algorithm for must equivalence. The approach is similar to the one described in Section~\ref{sec:proof:Brzozowski}; the slight differences which are consequences of the divergence-sensitive nature of must semantics are summarised as follows.

Consider an LTS with divergence $(S,t\colon S\rightarrow (1+\pow S)^{A})$, with the initial set of initial states $i\colon 1 \to \pow(S)$. 
As recalled in the beginning of this chapter, the corresponding coalgebraic ingredients are extended to $1 + \pow(-)$ (see \ref{def:Moore-must}): 
the associated Moore machine has the state space in $1+\pow S$ and observations in $1+\pow(\pow A)$, whereas the
induced semantic map becomes $\bb{-}\colon 1+\pow(S) \to (1+\pow(\pow(A)))^{A^\star}$.
Consequently, the current approach considers the join operation $\sqcup$ in $1+\pow(-)$, rather than in $\pow (-)$, as for failure semantics.

\medskip

By reversing and determinising as in \cite{Bonchi:2012:BAA:2340820.2340823}, we obtain the Moore machine
\[
{\mathcal M_{R}} = ((1+\pow(\pow(A)))^{1+\pow S}, <o_R^{\sharp} t^{\sharp}_R>)
\]
for which the initial set of states $i_R$, $o_R^{\sharp}$ and $t_R^{\sharp}$ are defined as in~(\ref{eq:def-reversed-Moore}), in Section~\ref{sec:proof:Brzozowski}.

Equivalently to the statement in~(\ref{eq:acc-rev-formalise}), this machine has the ``reversed'' behaviour of the initial LTS.

The novel algorithm performing the determinisation and the ``reverse and determinise'' at once returns, for the case of must semantics, the Moore machine
\[{\mathcal{\overline{M}}_{R}} = ((1+\pow(\pow(A)))^S, \overline{o}_R, \overline{t}_R)\]
for which the corresponding initial state $\overline{I}_{R}$, the outputs $\overline{o}_{R} \colon (1+\pow(\pow(A)))^S \to 1+\pow(\pow(A))$  and the transitions $\overline{t}_{R} \colon 
(1+\pow(\pow(A)))^S \to ((1+\pow(\pow(A)))^S)^A$ are defined as in~\eqref{eq:BrzoFailMoore} (plus the case $\overline{t}_R(\psi)(a)(x)  =  \top$ if $t(x)(a) = \top$), by replacing $o$ and $t$ with those defined in~(\ref{def:div-LTS-dec}) and~(\ref{def:div-LTS-trans}), in the beginning of this chapter.
 
The fact that $\overline{\mathcal M}_{R}$ has the reverse behaviour of the original LTS follows according to a statement similar to the one in Lemma~\ref{lm:lemma-psi-phi}, by taking $\bb{-}_{2}\colon (1+\pow(\pow(A)))^{S} \to (1+\pow(\pow(A)))^{A^\star}$, $\psi= \overline{I}_{R}$ and $\varphi  =  o^{\sharp}$ (satisfying $(\star)$), and the fact that ${\mathcal M}_{R}$ has reversed behaviour:
\[(\forall w \in A^{*})\,.\, \bb{i_{R}}_{1}(w) = \bb{i}(w^{R}).\]

To conclude, the soundness of our algorithm follows by:
\[(\forall w\in A^*)\,.\, \bb{\overline{I}_{R}}_{2}(w) = \bb{i}(w^R).\]

\section{Three families of examples}\label{sec:exp}
As discussed in the beginning of this chapter, the theoretical complexity is not
informative about the behaviour of these algorithms on concrete cases.
In this section, we compare \HKC, Brzozowski and partition
refinement~\cite{CleavelandH89} on three families of examples. First,
we need some tools to measure their behaviours. For \HKC, we take
$|R|$, the size of the produced relation $R$: indeed cycle \texttt{2}
of \HKC\ is repeated at most $1+|A|{\cdot}|R|$ times (where $|A|$ is
the size of the alphabet). For~\cite{CleavelandH89}, we consider the
size $n$ of the reachable part of determinised system: the main loop
of the partition refinement is iterated at most $n$ times. Finally,
the cost of Brzozowski algorithm is related to the size of both the
intermediate Moore machine (built after steps \texttt{1},\texttt{2})
and the minimal one (built after steps \texttt{3},\texttt{4}).


\medskip

First consider the following LTS, where $n$ is an arbitrary natural
number. After the determinisation, $\set{x}$ can reach all the states of the
shape $\set{x}\cup X_N$, where $X_N=\set{x_i \mid i \in N}$ for any $N\subseteq \set{1,\ldots,n}$.
More precisely, a trace $w \in \set{a,b}^{*}$ of length $k$ which leads $\set{x}$ to $\set{x}\cup X_N$ can be generally defined as a word whose $k-i+1$'st letter is $b$ if and only if $i \in N$.
For
instance for $n=2$, $\set{x}\tr{aa}\set{x}$, $\set{x} \tr{ab}\set{x,
  x_1}$, $\set{x} \tr{ba}\set{x, x_2}$ and $\set{x} \tr{bb} \set{x,
  x_1, x_2}$. All those states are distinguished by must testing; for
instance, $\bb{\set{x,x1,x2}}(a)=\set{a}$ while
$\bb{\set{x,x2}}(a)=\set{0}$. Therefore, the minimal Moore
machine for $\bb{\set{x}}$ has at least $2^n$ states.

\begin{align*}
  \xymatrix@R=.3cm{
    x\ar@(dl,ul)^{a,b}\ar@{->}[r]^-{b} & x_{1}\ar@{->}[r]^-{a,b} & \dots
    \ar@{->}[r]^-{a,b} & x_{n}\ar[r]^b & u\ar@(dr,ur)_{\tau}\\
    y\ar@(dl,ul)^{a,b}\ar@{->}[r]^-{b}\ar@/^.5pc/[dr]_{a,b} &
    y_{1}\ar@{->}[r]^-{a,b} & \dots \ar@{->}[r]^-{a,b} & y_{n}\ar[r]^b & v\ar@(dr,ur)_{\tau}\\
    & z\ar@/^1pc/[ul]^{a}\ar@/^-.9pc/[u]_{b} }  
\end{align*}
One can prove that $x$ and $y$ are must equivalent by showing that relation $R=$ 
\begin{align*}
  \set{(\set{x},\set{y}), (\set{x}, \set{y , z}), (\top,\top)} \cup 
  \set{(\set{x}\cup X_N,\, \set{y , z} \cup Y_N) \mid N \subseteq \set{1,\ldots, n}}
\end{align*} 
is a bisimulation (here $Y_N=\set{y_i \mid i \in N}$). Note that $R$ contains $2^n+3$ pairs.

In order to check $\bb{\set{x}}{=}\bb{\set{y}}$, \HKC\ builds the following relation,
\begin{align*}
  R'=\set{(\set{x},\set{y}), (\set{x}, \set{y , z})} \cup \set{(\set{x ,
      x_{i}},\, \set{y , z , y_{i}}) \mid i \in \set{1,\ldots, n}}
\end{align*}
which is a bisimulation up-to and which contains only
$n+2$ pairs. It is worth to observe that $R'$ is like a ``basis'' of $R$: all the pairs $(X,Y)\in R$ can be generated by those in 
$R'$ by iteratively applying the rules in \eqref{eq:congclos}.
Therefore, \HKC\ proves $\bb{\set{x}}{=}\bb{\set{y}}$ in polynomial
time, while minimisation-based algorithms (such
as~\cite{CleavelandH89} or Brzozowski's algorithm) require exponential
time.

\smallskip For the following family of LTS's, the algorithm
from~\cite{CleavelandH89} is efficient (the LTS is already
deterministic) while Brzozowski's algorithm is not: the intermediate
Moore machine built after steps~\texttt{1,2} has exponentially many
states (for similar reasons as in the previous example, the automaton
being reversed first).
\begin{align*}
  \xymatrix@R=1em {x_n\ar[r]^{a,b}&\dots\ar[r]^{a,b}&x_1\ar[r]^{b}&x\ar@(dr,ur)_{a,b}}
\end{align*}


\medskip

Finally consider the family of LTS's on $A=\set{a}$, consisting
in $n$ disjoint cycles of increasing lengths. The case $n=5$ is
depicted on the left below. Suppose that we want to show that the
superposition of states $x^1_0,\dots,x^n_0$ is equivalent to $u$ given on the right.
\[
\begin{array}{cccccc}
  {\xymatrix @C=0pt @R=3em{%
      {x^1_0}\ar@(ld,rd)_a  \\
    }}
&
  {\xymatrix @C=0pt @R=1.4em { %
      {x^2_0}\ar@/_/[d]_a \\ 
      {x^2_1}\ar@/_/[u]_a }} %
&
  {\xymatrix @C=0pt @R=2.1em { & %
      {x^3_0}\ar[ld]_a \\ 
      {x^3_1}\ar[rr]_a && %
      {x^3_2}\ar[lu]_a }} %
&
  {\xymatrix @C=0pt @R=.6em { & %
      {x^4_0}\ar[ld]_a \\ %
      {x^4_1}\ar[rd]_a && %
      {x^4_3}\ar[lu]_a \\ &
      {x^4_2}\ar[ru]_a }} %
&
  {\xymatrix @C=0pt @R=.5em { && %
      {x^5_0}\ar[lld]_a \\ %
      {x^5_1}\ar[rd]_a &&&& %
      {x^5_4}\ar[llu]_a &\\ & %
      {x^5_2}\ar[rr]_a && %
      {x^5_3}\ar[ru]_a }} %
  \qquad\qquad
  {\xymatrix @C=0pt @R=4em{%
      {u}\ar@(ld,rd)_a  \\
    }}
\end{array}
\]

The states reachable from the set $\set{x^1_0,\dots x^n_0}$ in the
determinised system are of the shape $X_k=\set{x^i_{k\mod i}\mid
  i\le n}$. There are $p$ such sets, where $p={\emph lcm}[1..n]$ is the least common multiple of the first $n$ natural numbers (this number is
greater than $2^n$ for $n\ge 8$). 
With~\cite{CleavelandH89}, one would start by constructing all those
sets, and one can show that \HKC\ actually produces a relation of size
$p$. Therefore, those two methods need exponentially many steps.
On the other hand, Brzozowski's algorithm is extremely efficient on this
family of examples: the output of any state is always $\set{0}$, so
that the only reachable state in the intermediate Moore machines (built after steps \texttt{1} and \texttt{2}) 
is the function mapping all the states to $\set{0}$. Therefore we obtain the minimal
realisation immediately.

\section{Concrete tests on a synchronisation protocol}
\label{sec:concrete-tests}

We implemented the presented algorithms (Brzozowski minimisation and
\HKC) for ready, failure, and must semantics. Moreover, we tested our implementation
and compared the various algorithms, by analyzing some instances of a
multiway synchronisation protocol (MSP) due to Parrow and
Sj\"odin~\cite{ParrowS96}.

The scenario is the following: there are several clients, denoted by
$1,2,\dots$, trying to synchronise on communication channels, denoted
by $a,b\dots$. Each channel comes with a fixed subset of clients, all
of which must agree to participate for the action to take place.  For
instance, in a configuration denoted by $a(1,2),b(1,2,3)$, with three
clients and two channels; clients $1$ and $2$ have to synchronise to
perform action $a$, and the three clients have to synchronise to
perform action $b$. Parrow and Sj\"odin study protocols allowing to
schedule clients requests, so as to enforce the synchronisation
constraints. They propose an ideal and centralized scheduler as a
specification, and a distributed and more realistic scheduler. They
prove them equivalent, using a notion of equivalence called
``cs-equivalence'' which entails must-testing equivalence in the
considered case. Both schedulers are presented as finite LTS.

We computed those LTS for some small configurations, checked them for
must equivalence, and minimised the ideal scheduler
with respect to must semantics. For each configuration, we give various size
indications in Figure~\ref{fig:concrete-tests}: the first column is the
configuration; the second one gives the number of states of the
minimal Moore machine;
the third and fourth column give the number of states of the ideal
and distributed schedulers, respectively. One can notice that the
ideal schedulers are almost minimal, while the distributed ones are
huge, comparatively. The fifth column gives the number of reachable
states, after determinisation along weak transitions (i.e., the number
of states one would start with with a partition-refinement algorithm);
this number is usually smaller than the size of the distributed
scheduler since the later contains lots of intermediate states that
are removed by determinisation. The sixth column gives the size of the
intermediate automaton, after performing half of Brzozowski's
minimisation algorithm; notice that this intermediate automaton is
usually much smaller than the distributed scheduler, but also much
larger than the ideal and minimal ones. The last column gives the
number of pairs required by HKC to prove the equivalence between the
ideal and the distributed scheduler; it is systematically much less
than the size of the determinised automaton.

\begin{figure}[ht]
\centering
\[
  \begin{tabular}{|l|rrrrr|r|}
    \hline
    config.   & min. & ideal & distr.  & determ. & interm. &  HKC \\
    \hline                                                 
    a(1,2)	    &   9 &     9 &     34 &    12 &     88 &   12 \\
    a(1,2,3)	    &  27 &    27 &    304 &    84 &   1110 &   82 \\
    \hline
    a(1,2),b(1)	    &  15 &    18 &   6089 &  1074 &    189 &  294 \\
    a(1,2),b(3)	    &  17 &    27 &   1057 &   303 &    436 &  225 \\
    a(1,2),b(1,2)   &  28 &    34 & 101532 & 18608 &    389 & 2236 \\
    a(1,2),b(1,3)   &  49 &    54 &  38288 & 11024 &   2568 & 5462 \\
    a(1,2),b(3,4)   &  65 &    81 &   8666 &  3230 &   7570 & 1806 \\
    a(1,2,3),b(1)   &  45 &    54 &  54090 &  8644 &   2207 & 2207 \\
    a(1,2,3),b(4)   &  53 &    81 &  12053 &  3330 &   5546 & 2116 \\
    a(1,2,3),b(1,4) &   - &   162 & 259890 &     - &      - &    - \\
    \hline
    a(1),b(2),c(3)  &   9 &    27 &   5917 &  1594 &    126 &  830 \\
    a(1),b(1),c(2)  &   9 &    18 &  37380 &  7984 &     66 & 2351 \\
    a(1),b(1),c(1)  &   9 &    11 & 149267 & 41444 &     34 & 2685 \\
    a(1,2),b(3),c(4)&  33 &    81 &  50844 & 20526 &   2176 & 6642 \\
    \hline
  \end{tabular}
\]
\caption{Concrete tests.}
\label{fig:concrete-tests}
\end{figure}

\section{Discussion}
\label{sec:concl-concur}
In Chapter~\ref{ch:dec-trace-testing} we have introduced coalgebraic characterisations of decorated trace and must testing
semantics by means of the \emph{generalised powerset construction}~\cite{fsttcs}. This allowed us to adapt proof
techniques and algorithms that have been developed for language
equivalence to must semantics. In particular, in this chapter, we showed that
\emph{bisimulations up-to congruence} (that were recently introduced in~\cite{BONCHI:2012:HAL-00639716:4} for NDA's) are sound also for must semantics. This fact guarantees the correctness of a generalisation of \HKC~\cite{BONCHI:2012:HAL-00639716:4} for checking must equivalence and preorder and suggests that the \emph{antichains}-based algorithms~\cite{tacas10,DoyenR10,CAV06} can be adapted in a similar way. We have also proposed a variation of Brzozowski's algorithm~\cite{Brzozowski} to check must semantics, by exploiting the abstract  theory in~\cite{Bonchi:2012:BAA:2340820.2340823}. Our contribution is not a simple instantiation of~\cite{Bonchi:2012:BAA:2340820.2340823}, but developing our algorithm has required some ingenuity to avoid the preliminary determinisation that would be needed to directly apply~\cite{Bonchi:2012:BAA:2340820.2340823}. We implemented these algorithms together with an interactive applet available online.

Beyond must semantics, one can use such algorithms to check the
\emph{decorated trace equivalences}~\cite{Glabbeek01} that have been
studied in~\cite{dec-tr-MFPS12}: like failure,
these are obtained by decorating the states of an LTS with a function
$o\colon S \to B$. The key of our approach is that $B$ needs to be a
semi-lattice with bottom (for must, a semi-lattice with bottom and
top); this is required by the generalised powerset construction so
that decorated LTS's can be determinised into Moore machines.

\chapter{Future work}
We provide an overview of the possible theoretical and practical further developments of the work in this thesis.

\bigskip
With respect to the contributions on generalised regular expressions modelling non-deterministic coalgebras introduced in Chapter~\ref{ch:dec-bisim}, we consider:

\smallskip
\textbf{Extensions to quantitative coalgebras.}
In the future, we would like to extend the class of systems to include quantitative coalgebras.
In~\cite{bbrs_ic}, the approach for handling non-deterministic coalgebras was extended to a large class of quantitative systems encompassing weighted automata, simple Segala, stratified and Pnueli-Zuck systems, by considering a functor type that allows the transitions of systems to take values in a monoid structure of quantitative values.

The challenge in this respect arises from the fact that computing bisimulation relations in a quantitative setting will involve matrix manipulations, hence requiring linear algebra techniques of which it is not clear how to implement in \CIRC.

\smallskip
\textbf{Tool enhancements and complexity studies.}
To improve usability, building a graphical interface for the tool is an obvious next step. The graphical interface should ideally allow the specification of expressions by means of  systems of equations (which are then solved internally) or even by means of an automaton, which would then be translated to an expression using Kleene's theorem.

We also would like to explore how adding more axioms than ACI to the prover (that is, each step of the bisimulation checking is performed modulo more equations) improves the performance of the tool. Our experience so far shows that by adding the axioms describing the interplay between $\emp$ and the other constructs, \textit{i.e.} $\emp \oplus \eps = \eps$, the prover works significantly faster.

We have not yet studied complexity bounds for the algorithms presented in this paper. We conjecture however that the bounds will be very similar to the already known ones for classical regular expressions~\cite{K08a,worthington}.

\bigskip

In connection with the coalgebraic handling of decorated trace, may and must testing semantics in Chapter~\ref{ch:dec-trace-testing} and Chapter~\ref{ch:algorithms}, we consider:

\smallskip
\textbf{Coalgebraic handling of other semantics.}
In the future, we want to derive a new representation of possible-futures semantics. This is motivated by the current drawback of storing for each state of the LTS's the corresponding set of traces. In this context, it might be more appropriate considering the definition of possible-futures semantics given in terms of nested bisimulations~\cite{Hennessy:1985:ALN:2455.2460}, rather than the set-theoretic one in~\cite{Glabbeek01}.

Moreover, we aim at providing coalgebraic modellings for the remaining semantics of the spectrum in~\cite{Glabbeek01}. Amongst these, we mention possible-worlds semantics, whose path-based characterisation shifts the problem of reasoning on the corresponding equivalence to a setting close to possible-futures semantics. The coalgebraic modelling of possible-futures semantics still requires an efficient handling of the traces associated with a process, as mentioned above. Orthogonally, the challenge in deriving a straightforward modelling of simulation semantics via the generalised powerset construction~\cite{gen-pow} originates from the absence of an equivalent trace-based definition.

We would also like to understand how our approach can be combined with the results in~\cite{DBLP:journals/tcs/BorealeG06} to obtain a coinductive approach 
to denotational (linear-time) semantics of different kinds of processes calculi.  The work in~\cite{DBLP:journals/tcs/BorealeG06} presents a fully abstract model of must testing for CSP by 
turning the set of processes into a (partial) Moore automaton with output on a certain semiring $K$ and 
input from a set of actions $A$. The final semantics of this automaton is then given as a 
powerserie in $K^{A^*}$. The approach can be easily extended to trace equivalence and other 
calculi, such as CCS, but no other decorated trace equivalences are further considered.
Our work is similar in spirit of the above as we also construct a Moore automaton from 
a transition system but, in general, we do not need a semiring structure, making the entire
framework much simpler. For example, for the must testing, our Moore automata
have outputs in the set $1 + \Powf(\Powf(A))$. The framework is even
simpler for the case of trace semantics, where our Moore automata have outputs in the 
two elements set $2$.

Furthermore, we think it is promising to investigate whether our approach can be extended to the testing semantics of probabilistic and non-deterministic processes\linebreak \cite{journals/corr/abs-1107-1201,DBLP:conf/pstv/YiL92,DBLP:conf/concur/Segala96}.

\smallskip
\textbf{More algorithms.} An interesting topic to investigate in the future is adapting the Brzozowski and \HKC\ algorithms to check \emph{fair testing}~\cite{Rensink:2007:FT:1223922.1224110}. In~\cite{Rensink:2007:FT:1223922.1224110}, fair testing is defined in terms of the so-called failure trees. While the corresponding coalgebraic modelling can be easily derived via the powerset construction, we do not know how to model fair testing equivalence and preorder.

We would also like to study whether Brzozowski and \HKC\ can be adapted and effectively applied to reason on decorated trace semantics of generative probabilistic systems.

\smallskip
\textbf{Rule formats for compositionality.} In the future we consider worth studying to what extent the modal characterisations of decorated trace semantics in~\cite{Glabbeek01} can be exploited 
in order to develop a systematic study of their compositionality for languages defined by SOS-like rules~\cite{Plotkin81astructural} satisfying specific formats.

In this respect, we refer to the work in~\cite{Klin:2009:BMM:1512997.1513245}, where both the rule formats and decorated trace equivalences are ``massaged'' into a bialgebraic setting, by means of logical distributive laws defined in terms of notions of syntax and logical formulae.
However, applying the machinery in~\cite{Klin:2009:BMM:1512997.1513245} requires a certain amount of ingenuity for identifying the right logical behaviour. Therefore, one of the challenges (also mentioned as pointer to future work in~\cite{Klin:2009:BMM:1512997.1513245}) consists in (partially) automating the whole procedure or, at least, in gaining more insight on how this could be achieved in a rather algorithmic fashion.

\backmatter

\chapter{Samenvatting (Dutch summary)}
\label{ch:samenvatting}

Het bestuderen van de semantiek van reactieve systemen (\emph{reactive systems})
is een belangrijke richting binnen de informatica.
Reactieve systemen voeren berekeningen uit middels interactie met hun omgeving,
en zijn over het algemeen samengesteld uit meerdere parallelle componenten
die simultaan taken uitvoeren en met elkaar communiceren.
Toepassingen bevinden zich in relatief simpele systemen als rekenmachines en
verkoopautomaten, tot programma's die mechanische apparaten zoals auto's,
metro's of ruimtevaartuigen aansturen.
Aangezien dit soort systemen veel gebruikt worden, en vaak erg complex zijn, is het
gebruik van rigoureuze methodes voor specificatie, ontwikkeling, en redenatie
over het gedrag van deze systemen een grote uitdaging.
Een mogelijke aanpak om reactieve systemen formeel te beschouwen is het gebruik
van een gemeenschappelijke taal voor de beschrijving van zowel de implementatie als
de specificatie.
In dit geval correspondeert verificatie van de implementatie met betrekking tot
de specificatie van een reactief systeem met het bewijzen van een vorm van
equivalentie/ordening tussen de beschrijvingen in de formele taal.

De doelstelling van dit proefschrift is het benutten van de krachten van een
algebra\" isch--coalgebra\" isch raamwerk voor het modelleren van reactieve systemen en het
redeneren over verschillende soorten bijbehorende semantieken op een formele
wijze.
Daarnaast richt dit proefschrift zich op het afleiden van een aantal
verificatie algoritmes die geschikt zijn voor implementatie in geautomatiseerde
systemen.

In Hoofdstuk 3 presenteren wij een beslissingsprocedure voor bisimilariteit van
een klasse van expressies die oneindige rijen (\emph{streams}), Mealy automaten,
en gelabelde transitie systemen, kan beschrijven.
Deze procedure is ge\"implementeerd in de automatische stellingbewijzer CIRC.
Hoofdstuk 4 beschrijft een uniforme coalgebra\" ische aanpak voor een collectie
van semantieken voor transitiesystemen.
Hiervoor gebruiken we een uitbreiding van de klassieke machtsverzameling
constructie.
In het bijzonder beschouwen we ``decorated trace'' equivalenties voor gelabelde
transitie-, en probabilistische systemen, en (de zogenaamde ``must'' en
``may'') ``testing''-semantieken voor divergente niet deterministische
systemen.
De coalgebra\" ische aanpak stelt ons in staat te redeneren over de eerdergenoemde
begrippen van gedrag equivalentie/ordening in termen van bisimulaties.
Verder faciliteert ons raamwerk de constructie van geverifieerde algoritmes die
niet aanwezig zijn voor bisimulariteit, zoals beschreven in Hoofdstuk 5.
In dit hoofdstuk beschrijven we een variatie van Brzozowski's algoritme om
eindige automaten te minimaliseren, en een optimalisatie van Hopcroft en Karp's
algoritme voor taal semantieken.
Beide algoritmes zijn succesvol toegepast voor het redeneren over ``decorated
trace'' en ``testing'' semantieken.
De bijbehorende implementaties kunnen online uitgeprobeerd worden:\linebreak
\url{http://perso.ens-lyon.fr/damien.pous/brz/}

\chapter{Summary}
\label{ch:summary}

One of the research areas of great importance in Computer Science is the study of the semantics of concurrent reactive systems. 
These are systems that compute by interacting with their environment, and typically consist of several parallel components, which execute simultaneously and potentially communicate with each other.
Examples of such systems range from rather simple devices such as calculators and vending machines, to programs controlling mechanical devices such as cars, subways or spaceships. In light of their widespread deployment and complexity, the application of rigorous methods for the specification, design and reasoning on the behaviour of reactive systems has always been a great challenge.
One possible approach to formally handle reactive systems is to use a ``common language" for describing both the actual implementations and their specifications. When following this technique, verifying whether an implementation and its specification describe the same behaviour reduces to proving some notion of equivalence/preorder between their corresponding descriptions over the chosen language.

The aim of this thesis is to exploit the strengths of a (co)algebraic framework in modelling reactive systems and reasoning on several types of associated semantics, in a uniform fashion.
Moreover, we derive a suite of corresponding verification algorithms suitable for implementation in automated tools.

In Chapter 3 we present a decision procedure for bisimilarity of a class of expressions defining systems such as infinite streams, deterministic automata, Mealy machines and labelled transition systems. The procedure is implemented in the automatic theorem prover CIRC.
Chapter 4 provides a uniform coalgebraic handling of a series of semantics on transition systems. This is achieved by employing a generalisation of the classical powerset construction for determinising non-deterministic automata. In particular, we deal with decorated trace equivalences for labelled transition systems and probabilistic systems and, (the so-called ``must'' and ``may'') testing semantics for divergent non-deterministic systems.
The coalgebraic approach enabled reasoning on the aforementioned notions of behavioural equivalence/preorder in terms of bisimulations. Moreover, our framework facilitated  the construction of verification algorithms which are not available for bisimilarity, as shown in Chapter 5.
There we provide a variation of Brzozowski's algorithm to minimise finite automata and an optimisation of Hopcroft and Karp's algorithm for language semantics. Both algorithms were successfully applied to reason on decorated trace and testing semantics.
The corresponding implementations can be tested online at:
\url{http://perso.ens-lyon.fr/damien.pous/brz/}.

\newpage\null\thispagestyle{empty}\newpage
\chapter{Curriculum vitae}
\label{ch:cv}

{\bf 1984}~~Born on 20 April, Suceava, Romania\\[0.5ex]

{\bf 2003 -- 2007}~~BSc in Computer Science, Alexandru Ioan Cuza University, Ia\c si, Romania\\
Final thesis title: ``The Implementation of a Programming Language in Maude, using\linebreak Denotational Semantics'', supervised by prof. dr. Dorel Lucanu\\[0.5ex]

{\bf 2007 -- 2009}~~MSc in Computer Science, Alexandru Ioan Cuza University, Ia\c si, Romania\\
Final thesis title: ``CIRC: A Behavioural Verification Tool Based on Circular Coinduction -- extensions --'', supervised by prof. dr. Dorel Lucanu\\[0.5ex]

{\bf 2010 -- 2013}~~PhD student at Reykjavik University, Iceland, and Radboud University, Nijmegen, the Netherlands\\
Final thesis title: ``Coalgebraic Tools for Bisimilarity and Decorated Trace Semantics'',\linebreak (co-)supervised by prof. dr. Jan Rutten, prof. dr. Anna Ing\'olfsd\'ottir, dr. Alexandra Silva and dr. Marcello Bonsangue

\end{document}